\newcommand{\Enc}{\mathsf{Enc}}
\newcommand{\Auth}{\mathsf{Auth}}
\newcommand{\Dec}{\mathsf{Dec}}
\newcommand{\Aux}{\mathsf{Aux}}
\newcommand{\Anc}{\mathsf{Anc}}
\newcommand{\pair}{\mathsf{PI}}
\newcommand{\lr}{\mathsf{LR}}
\newcommand{\td}{\mathsf{TD}}
\newcommand{\NM}{\mathsf{NM}}
\newcommand{\QNM}{\mathsf{QNM}}
\newcommand{\sh}{\mathsf{SS}}
\newcommand{\nmss}{\mathsf{NMSS}}
\newcommand{\lrss}{\mathsf{LRSS}}
\newcommand{\tdss}{\mathsf{TDSS}}
\newcommand{\share}{\mathsf{Share}}
\newcommand{\rec}{\mathsf{Rec}}
\newcommand{\Copy}{\mathsf{Copy}}
\newcommand{\hyb}{\mathsf{Hyb}}
\newcommand{\sdshare}{\mathsf{SdShare}}
\newcommand{\sdrec}{\mathsf{SdRec}}
\newcommand{\mshare}{\mathsf{MShare}}
\newcommand{\mrec}{\mathsf{MRec}}
\newcommand{\Acc}{\mathsf{Acc}}
\newcommand{\Rej}{\mathsf{Rej}}
\newcommand{\locc}{\mathsf{LOCC}}
\newcommand{\lo}{\mathsf{LO}}
\newcommand{\sI}{\mathsf{I}}
\newcommand{\tDec}{\widetilde{\mathsf{Dec}}}
\newcommand{\Id}{\mathbb{I}}
\newtheorem{fact}{Fact}[section]
\definecolor{darkred}  {rgb}{0.5,0,0}
\definecolor{darkblue} {rgb}{0,0,0.5}
\definecolor{darkgreen}{rgb}{0,0.5,0}
\colorlet{fxnote}{violet} 
\newcommand{\thiago}{\thiagonote}
\DeclareMathOperator*{\argmin}{argmin}
\newcommand{\States}{\mathcal{D}}
\newcommand{\from}{\leftarrow}
\newcommand{\RSS}{\mathsf{RSS}}
\newcommand{\Share}{\mathsf{Share}}
\newcommand{\Reconstruct}{\mathsf{Reconstruct}}
\renewcommand{\epsilon}{\varepsilon}
\newcommand{\mycomment}[1]{\textup{{\color{red}#1}}}
\providecommand{\main}{.}
\newcommand{\bits}{\{0,1\}}
\newcommand{\thetatamp}{\theta^{\mathsf{tamp}}}
\newcommand{\thetasame}{\theta^{\mathsf{same}}}
\newcommand{\rhotamp}{\rho^{\mathsf{tamp}}}
\newcommand{\rhosame}{\rho^{\mathsf{same}}}
\newcommand{\Supp}{\operatorname{Supp}}
\newcommand{\rk}{\textsf{rank}}
\newcommand{\beq}{\begin{equation}}
	\newcommand{\enq}{\end{equation}}
\newcommand{\bet}{\begin{theorem}}
	\newcommand{\ent}{\end{theorem}}
\newcommand{\ten}{\textnormal}
\newcommand{\tbf}{\textbf}
\newcommand{\vr}{\mathrm{Var}}
\newcommand{\myexp}{{\mathrm{e}}}
\newcommand{\rv}[1]{\mathbf{#1}}
\newcommand{\hilbertspace}[1]{\mathcal{#1}}
\newcommand{\universe}[1]{\mathcal{#1}}
\newcommand{\bound}{\mathsf{bound}}
\newcommand{\lab}{\mathrm{lab}}
\newcommand{\girth}{\mathrm{girth}}
\newcommand{\pseudoedge}[1]{\stackrel{#1}{\longrightarrow}}
\newcommand{\myexponent}{\frac{1}{\floor{n/4}}}
\newcommand{\err}{\mathrm{err}}
\newcommand{\eps}{\varepsilon}
\newcommand{\dinfty}{\ensuremath{\mathrm{D}_{\infty}}}
\newcommand{\dinftyeps}{\ensuremath{\mathrm{D}_{\infty}^{\eps}}}
\newcommand{\dseps}[3]{\ensuremath{\mathrm{D}_s^{#3}\left(#1\|#2\right)}}
\newcommand{\dzeroeps}{\ensuremath{\mathrm{D}_{\mathrm{H}}^{\eps}}}
\newcommand{\mmod}[1]{\hspace{1mm}\ensuremath{(\text{mod }#1)}}
\newcommand{\fid}{\mathsf{F}}
\newcommand{\swp}{\ensuremath{\mathrm{Swap}}}
\newcommand{\bell}{\ensuremath{\textbf{e}}}
\newcommand{\bfk}{\ensuremath{\textbf{k}}}
\newcommand{\brc}{\ensuremath{G}}
\newcommand{\brce}{\ensuremath{F}}
\newcommand{\qbit}{\ensuremath{Q}}
\newcommand*{\cC}{\mathcal{C}}
\newcommand*{\cA}{\mathcal{A}}
\newcommand*{\cR}{\mathcal{R}}
\newcommand*{\cH}{\mathcal{H}}
\newcommand*{\cM}{\mathcal{M}}
\newcommand*{\cF}{\mathcal{F}}
\newcommand*{\cB}{\mathcal{B}}
\newcommand*{\cD}{\mathcal{D}}
\newcommand*{\cG}{\mathcal{G}}
\newcommand*{\cO}{\mathcal{O}}
\newcommand*{\cK}{\mathcal{K}}
\newcommand*{\cN}{\mathcal{N}}
\newcommand*{\chS}{\hat{\mathcal{S}}}
\newcommand*{\cT}{\mathcal{T}}
\newcommand*{\chT}{\hat{\mathcal{T}}}
\newcommand*{\cX}{\mathcal{X}}
\newcommand*{\cW}{\mathcal{W}}
\newcommand*{\cZ}{\mathcal{Z}}
\newcommand*{\cE}{\mathcal{E}}
\newcommand*{\cU}{\mathcal{U}}
\newcommand*{\bP}{\mathbf{P}}
\newcommand*{\bq}{\mathbf{q}}
\newcommand*{\Ib}{\bar{I}}
\newcommand{\Br}{\mathrm{BR}}
\newcommand{\Ext}{\mathsf{Ext}}
\newcommand{\iExt}{\mathsf{IExt}}
\newcommand{\pre}{\mathsf{Crop}}
\newcommand{\advc}{\mathsf{AdvGen}}
\newcommand{\advcb}{\mathsf{AdvCB}}
\newcommand{\ff}{\mathsf{FF}}
\newcommand{\ecc}{\mathsf{ECC}}
\newcommand{\Q}{\mathcal{Q}}
\newcommand{\supp}{\mathrm{supp}}
\newcommand{\suppress}[1]{}
\newcommand{\drawn}{\leftarrow}
\newcommand{\defeq}{\ensuremath{ \stackrel{\mathrm{def}}{=} }}
\newcommand{\F}{\mathbb{F}}
\newcommand{\B}{\mathrm{B}}
\newcommand{\Pur}{\mathrm{P}}
\newcommand {\br} [1] {\ensuremath{ \left( #1 \right) }}
\newcommand {\cbr} [1] {\ensuremath{ \left\lbrace #1 \right\rbrace }}
\newcommand {\minusspace} {\: \! \!}
\newcommand {\smallspace} {\: \!}
\newcommand {\fn} [2] {\ensuremath{ #1 \minusspace \br{ #2 } }}
\newcommand {\ball} [2] {\fn{\mathcal{B}^{#1}}{#2}}
\newcommand {\relent} [2] {\fn{\mathrm{D}}{#1 \middle\| #2}}
\newcommand {\relentalpha} [3] {\fn{\mathrm{D}_{#3}}{#1 \middle\| #2}}
\newcommand {\dmax} [2] {\fn{\mathrm{D}_{\max}}{#1 \middle\| #2}}
\newcommand {\dmaxeps} [3] {\fn{\mathrm{D}_{\max}^{#3}}{#1 \middle\| #2}}
\newcommand {\mutinf} [2] {\fn{\mathrm{I}}{#1 \smallspace : \smallspace #2}}
\newcommand {\imax}{\ensuremath{\mathrm{I}_{\max}}}
\newcommand {\imaxeps}[3]{\ensuremath{\mathrm{I}^{#3}_{\max}(#1:#2)}}
\newcommand {\condmutinf} [3] {\mutinf{#1}{#2 \smallspace \middle\vert \smallspace #3}}
\newcommand {\hminone} [1] {\fn{ \mathrm{H }_{\min}}{#1}}
\newcommand {\hminn} [2] {\fn{ \mathrm{ \tilde{H} }_{\min}}{#1 \middle | #2}}
\newcommand {\hmin} [2] {\fn{ \mathrm{H }_{\min}}{#1 \middle | #2}}
\newcommand {\hmineps} [3] {\fn{\mathrm{H}^{#3}_{\min}}{#1 \middle | #2}}
\newcommand {\hmax} [2] {\fn{\mathrm{H}_{\max}}{#1 \middle | #2}}
\newcommand {\hmaxeps} [3] {\fn{\mathrm{H}^{#3}_{\max}}{#1 \middle | #2}}
\newcommand {\dheps} [3] {\ensuremath{\mathrm{D}_{\mathrm{H}}^{#3}\left(#1 \| #2\right)}}
\newcommand {\id} {\ensuremath{\mathbb{I}}}
\newcommand {\entsp}[2]{\ensuremath{\Delta_{#2}\left(#1\right)}}
\newcommand {\chnl}[1]{\ensuremath{\cN_{A\to B}{\left(#1\right)}}}
\newcommand {\wal}{\ensuremath{W^{alice}}}
\newcommand {\wbob}{\ensuremath{W^{bob}}}
\newcommand {\Hmin}{\mathrm{H}_{\min}}
\DeclareMathOperator{\QIC}{QIC}
\DeclareMathOperator{\dom}{{dom}}
\newcommand{\QCC}{\mathrm{QCC}}
\newcommand{\nmextc}{\mathsf{2nmext \mhyphen c}}
\newcommand{\newnmextc}{\mathsf{new \mhyphen 2nmext \mhyphen c}}
\newcommand{\nmextq}{\mathsf{2nmext \mhyphen q}}
\newcommand{\newnmextq}{\mathsf{new \mhyphen 2nmext \mhyphen q}}
\newcommand{\IExt}{\mathsf{IExt}}
\newcommand{\Trev}{\mathsf{Trev}}
\newcommand*{\eff}{ \tilde{\textbf{eff}}}
\newcommand*{\ef}{\textbf{eff}}
\newcommand*{\acc}{\mathsf{acc}}
\newcommand*{\rej}{\mathsf{rej}}
\newcommand{\samp}{\mathsf{Samp}}
\newcommand{\ind}{\mathcal{S}}
\newcommand{\indbar}{\overline{\mathcal{S}}}
\newcommand{\auth}{\mathrm{Auth}}
\newcommand{\ver}{\mathrm{Ver}}
\newcommand{\epspriv}{\eps_{\mathsf{priv}}}
\newcommand{\epsnm}{\eps_{\mathsf{nm}}}
\newcommand{\epslk}{\eps_{\mathsf{leak}}}
\newcommand*{\sm}{\mathsf{same}}
\newcommand*{\qpas}{\mathsf{qpa\mhyphen state}}
\newcommand*{\qmas}{\mathsf{qma\mhyphen state}}
\newcommand*{\nmas}{\mathsf{qnm\mhyphen state}}
\newcommand*{\qma}{\mathsf{qm\mhyphen adv}}
\newcommand*{\qca}{\mathsf{qc\mhyphen adv}}
\newcommand*{\qbsa}{\mathsf{qbs\mhyphen adv}}
\newcommand*{\qmra}{\mathsf{qmar\mhyphen adv}}
\newcommand*{\qia}{\mathsf{qi\mhyphen adv}}
\newcommand*{\gea}{\mathsf{ge\mhyphen adv}}
\newcommand*{\nma}{\mathsf{qnm\mhyphen adv}}
\newcommand*{\twonma}{\mathsf{2qnm\mhyphen adv}}
\newcommand*{\nmext}{\mathsf{nmExt}}
\newcommand*{\tnmext}{\mathsf{t\mhyphen nmExt}}
\newcommand*{\laext}{\mathsf{laExt}}
\newcommand{\oldmac}{\mathsf{MAC}}
\newcommand{\mac}{\mathsf{laMAC}}
\newcommand{\epr}{\mathsf{EPR}}
\newcommand{\swap}{\mathsf{Rep}}
\newcommand{\X}{\mathcal{X}}
\newcommand{\Y}{\mathcal{Y}}
\newcommand{\Z}{\mathcal{Z}}
\newcommand*{\cV}{\mathcal{V}}
\newcommand*{\sfal}{\mathsf{Alice}}
\newcommand*{\sfbo}{\mathsf{Bob}}
\newcommand*{\cY}{\mathcal{Y}}
\newcommand*{\hy}{\hat{y}}
\newcommand*{\hx}{\hat{x}}
\newcommand*{\bbN}{\mathbb{N}}
\newcommand*{\bbR}{\mathbb{R}}
\newcommand*{\bbC}{\mathbb{C}}
\newcommand*{\bX}{\mathbf{X}}
\newcommand*{\bU}{\mathbf{U}}
\newcommand*{\bY}{\mathbf{Y}}
\newcommand*{\bQ}{\mathbf{Q}}
\newcommand*{\cl}{\mathcal{l}}
\newcommand*{\Xb}{\bar{X}}
\newcommand*{\Yb}{\bar{Y}}
\newcommand*{\mximax}[1]{\Xi^\delta_{\max}(P_{#1})}
\newcommand*{\mximin}[1]{\Xi^\delta_{\min}(P_{#1})}
\newcommand*{\bx}{\mathbf{x}}
\newcommand*{\by}{\mathbf{y}}
\mathchardef\mhyphen="2D
\newcommand*{\uI}{\underline{I}}
\newcommand*{\uH}{\underline{H}}
\newcommand*{\oH}{\overline{H}}
\newcommand*{\renyi}{R\'{e}nyi }
\newcommand{\argmax}{\operatornamewithlimits{arg\ max}}
\newcommand*{\prank}{\mathrm{psdrank}}
\newcommand*{\enc}{\mathrm{Enc}}
\newcommand*{\qenc}{\mathrm{2NMC}}
\newcommand*{\qdec}{\mathrm{2NMD}}
\newcommand*{\lrenc}{\mathrm{lrShare}}
\newcommand*{\qshare}{\mathrm{qShare}}
\newcommand*{\qrec}{\mathrm{qRec}}
\newcommand*{\tssenc}{\mathrm{t \mhyphen SS}}
\newcommand*{\tssencq}{\mathrm{t \mhyphen QS}}
\newcommand*{\twossenc}{\mathrm{2 \mhyphen SS}}
\newcommand*{\tssdec}{\mathrm{t \mhyphen Rec}}
\newcommand*{\tssdecq}{\mathrm{t \mhyphen QR}}
\newcommand*{\twossdec}{\mathrm{2 \mhyphen Rec}}
\newcommand*{\nmenc}{\mathrm{nmShare}}
\newcommand*{\dec}{\mathrm{Dec}}
\newcommand*{\lrdec}{\mathrm{lrRec}}
\newcommand*{\nmdec}{\mathrm{nmRec}}
\newcommand*{\cenc}{\mathrm{cEnc}}
\newcommand*{\cdec}{\mathrm{cDec}}
\newcommand*{\nmcenc}{\mathrm{\nmext}}
\newcommand*{\nmcdec}{\mathrm{\nmext}}
\newcommand{\nmshare}{\mathrm{nmShare}}
\newcommand{\nmrec}{\mathrm{nmRec}}
\newcommand*{\cSC}{\mathcal{SC}}
\newcommand*{\denc}{\mathrm{d-enc}}
\newcommand*{\rom}[1]{\expandafter\@slowromancap\romannumeral #1@}
\mathchardef\mhyphen="2D
\newcommand\llrightarrow[2][]{\RightExtendSymbol{-}{-}{\rightarrow}{#1}{#2}}
\newcommand\llrightleftarrow[2][]{\RightExtendSymbol{\leftarrow}{-}{\rightarrow}{#1}{#2}}
\newcommand\llleftarrow[2][]{\RightExtendSymbol{\leftarrow}{-}{-}{#1}{#2}}
\newcommand\llRightarrow[2][]{\RightExtendSymbol{=}{=}{\Rightarrow}{#1}{#2}}
\newcommand\llRightLeftarrow[2][]{\RightExtendSymbol{\Leftarrow}{=}{\Rightarrow}{#1}{#2}}
\newcommand\llLeftarrow[2][]{\RightExtendSymbol{\Leftarrow}{=}{=}{#1}{#2}}
\begin{document}
\title{On Split-State Quantum Tamper Detection and Non-Malleability}

\author{Thiago Bergamaschi\inst{1} 
\and Naresh Goud Boddu\inst{2}~\thanks{The entire work was carried out while NGB was a postdoctoral research fellow at NTT Research, Sunnyvale, USA.}}
\institute{UC Berkeley, 
\email{thiagob@berkeley.edu} \and NTT Research, 
\email{naresh.boddu@ntt-research.com}}

%
%

%
%

\maketitle

\begin{abstract}
Tamper-detection codes (TDCs) are fundamental objects at the intersection of cryptography and coding theory. A TDC encodes messages in such a manner that tampering the codeword causes the decoder to either output the original message, or reject it. In this work, we study quantum analogs of one of the most well-studied adversarial tampering models: the so-called $t$-split-state tampering model, where the codeword is divided into $t$ shares, and each share is tampered with ``locally". \\

It is impossible to achieve tamper detection in the split-state model using classical codewords. Nevertheless, we demonstrate that the situation changes significantly if the message can be encoded into a multipartite quantum state, entangled across the $t$ shares. Concretely, we define a family of quantum TDCs defined on any $t\geq 3$ shares, which can detect arbitrary split-state tampering so long as the adversaries are unentangled, or even limited to a finite amount of pre-shared entanglement. Previously, this was only known in the limit of asymptotically large $t$.  \\

As our flagship application, we show how to augment threshold secret sharing schemes with similar tamper-detecting guarantees. We complement our results by establishing connections between quantum TDCs and quantum encryption schemes.



\end{abstract}

\pagenumbering{arabic}
\newpage

\setcounter{tocdepth}{2}
{\small\tableofcontents}
\newpage

\section{Introduction}

The split-state tampering model, originally studied by Dziembowski, Pietrzak, and Wichs (ICS 2010) \cite{DPW10} and Liu and Lysyanskaya (Crypto 2012) \cite{LL12}, is by now a staple in information-theoretic cryptography. It corresponds to an adversarial coding scenario in which a message is encoded into a codeword divided into $t$ \textit{shares}, $c = (c_1, c_2, \cdots, c_t)$, and each part is sent to one of $t$ different, non-communicating adversaries. The goal of the adversaries is to independently corrupt the codeword, transforming $(c_1, c_2, \cdots, c_t)$ into $(f_1(c_1), f_2(c_2), \cdots , f_t(c_t))$, in order to change the underlying message. While apriori somewhat abstract, this model connects naturally to many other cryptographic settings, including secret sharing \cite{GK16}, commitment schemes \cite{GPR16}, multi-source extractors \cite{Cheraghchi2013NonmalleableCA}, etc. 

Arguably, what makes the split-state model interesting is that simple goals such as error-correction (where the message is recovered) or tamper-detection (where the receiver can additionally reject) are clearly impossible, at least with classical codewords (bits).\footnote{After all, the adversaries can always replace the ciphertext with a valid codeword of a pre-agreed message. The so-called substitution attack.} Nevertheless, \cite{DPW10,LL12} showed that a weaker, but related goal known as \textit{non-malleability} is possible, wherein the receiver is allowed to output the original message, or a completely uncorrelated message. This spurred a prolific line of work in the last decade on the design of split-state non-malleable codes with several desiderata, including smaller number of shares $t$, high coding rate, small security error, and even post-quantum security (see \cref{subsection:related}). However, only recently have coding schemes with quantum capabilities have been considered \cite{Bergamaschi2023PauliMD,ABJ22,Boddu2023SplitState}.\\

\noindent \textbf{Quantum Tamper Detection in the Split-State Model.} In this work, we consider a quantum analog of the split-state model, where the adversaries are non-communicating, nor are they allowed to share entanglement. However, the message is encoded into a quantum state entangled across the $t$ shares. We show, perhaps surprisingly, that \textit{tamper detection} is possible in this model, and arises naturally as a consequence of (classical or quantum) non-malleability \cite{Boddu2023SplitState}. Concretely, we design a family of codes defined on any $t\geq 3$ shares, which can detect arbitrary adversarial unentangled tampering. Previously, this was only known in an assymptotically large limit of $t$ \cite{Bergamaschi2023PauliMD} (Eurocrypt 2024).

At first glance, it may seem that this is only possible by giving the (entangled) decoder much more power over the (unentangled) adversaries.\footnote{To some extent, an asymmetry between the decoder and adversaries is inherent even to the classical split-state setting.} However, our conclusions are robust to a finite amount of pre-shared entanglement between the adversaries, up to a constant fraction of the blocklength $n$ in qubits of the code. This is near-optimal, in the sense that tamper-detection is impossible if the adversaries have $n$ qubits of entanglement (by the substitution attack!). Thus, we offer an alternative interpretation: The adversaries can only tamper with the message unnoticeably, if they share a macroscopic $\Omega(n)$ amount of entanglement. \\

\begin{figure}[t]
    \centering
        \includegraphics[width = 11cm]{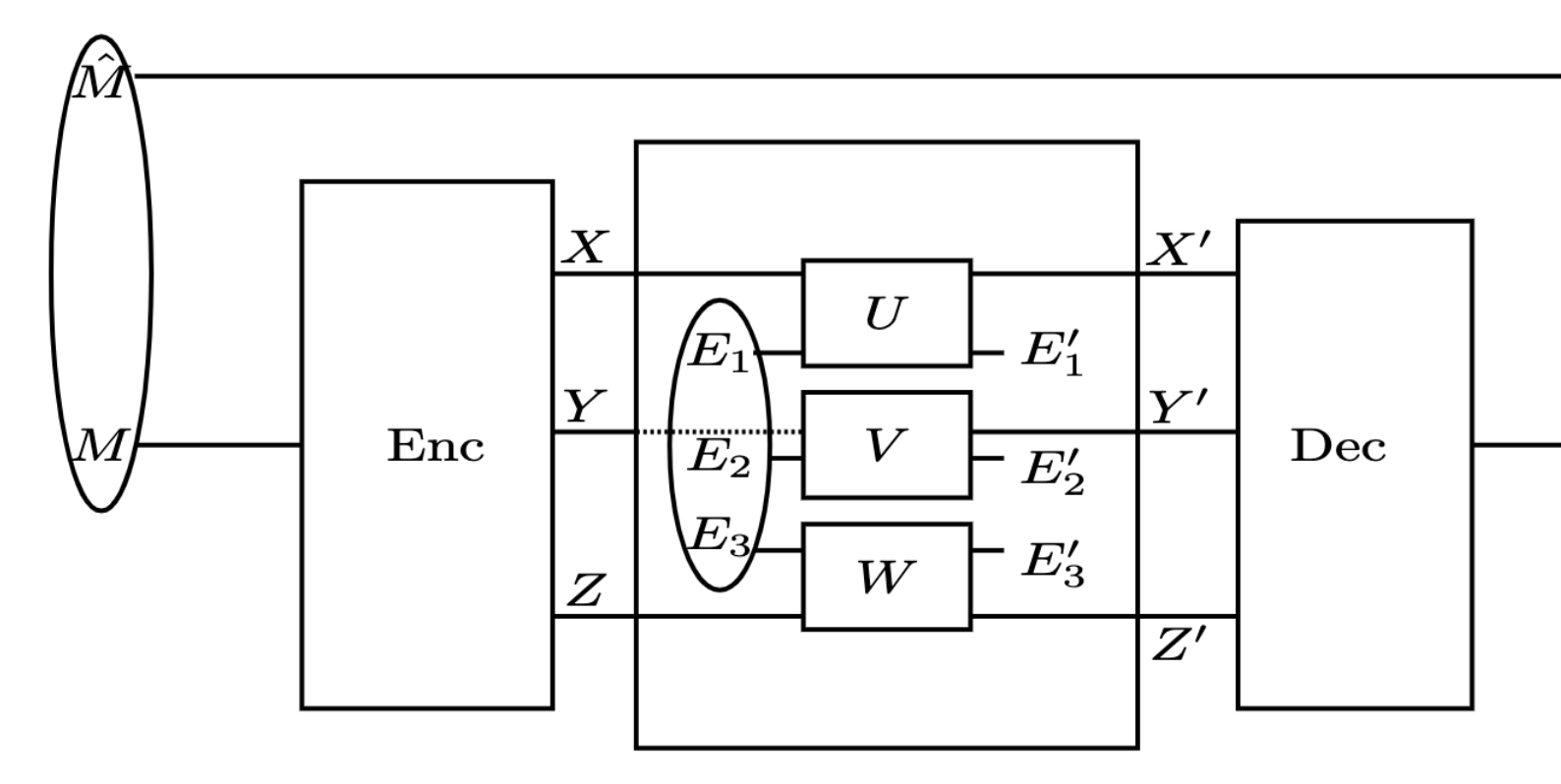}
        \caption{ The Quantum Split-State Tampering Model, $t=3$ \cite{Boddu2023SplitState}. A (possibly entangled) message $M$ is encoded, and subsequently tampered with using quantum channels $U, V, W$, jointly with an entangled state on registers $E_1, E_2, E_3$.}
        \label{fig:intro_diagram}
\end{figure}

\noindent \textbf{Applications.} The inspiration for our work lies in the study of non-locality in quantum systems, and their consequences to quantum cryptography. However, we find more concrete cryptographic motivation by complementing our results in two directions. First, we initiate the study of a new type of ``tamper-detecting" quantum secret sharing scheme, which always detects unentangled adversarial attacks. The most technical part of our work lies in constructing such a secret sharing scheme from our new 3-split TDCs, combined with techniques from a long literature of non-malleable and leakage-resilient secret-sharing schemes~\cite{GK16,GK18,ADNOP19,CKOS22}. Second, we draw connections between split-state quantum TDCs, quantum non-malleable codes, and quantum encryption schemes.



\subsection{Summary of contributions}
To begin, we first summarize the split-state models considered in this work. See \cref{fig:intro_diagram} for a diagram of the tampering setup. The ``amount" of pre-shared entanglement between the adversaries simply refers to the size of registers $E_1E_2E_3$.
\begin{itemize}
    \item[-] $\lo^t$: Local (unentangled) quantum operations on $t$ registers.   \\

    \item[-] $\lo_*^t$: Local quantum operations on $t$ registers with arbitrary pre-shared entanglement. \\
    
     \item[-] $\lo^t_{a}$: Local quantum operations on $t$ registers, with at most $\leq a$ qubits of pre-shared entanglement.  

\end{itemize}

While we discuss rigorous definitions shortly (\cref{subsection:overview}), the underlying intuition behind a Tamper-Detection code \cite{Jafargholi2015TamperDA} is that after the decoding operation $\Dec$, the receiver obtains a mixture (i.e a convex combination) of the original message (while preserving entanglement with $\hat{M}$) \textit{or} an abort flag $\bot$.

\subsubsection{Main Code Construction.}

Our main result is a construction showing that tamper detection is possible in the quantum split-state model $\lo^t_{a}$, even with a constant number of shares $t\geq 3$. Moreover, our construction has constant rate, and inverse sub-exponential error. 

\begin{theorem}[3-Split Tamper Detection Codes]\label{theorem:results-main}
For every $n\in \mathbb{N}$, $\gamma\in (0, \frac{1}{20})$, there exists a quantum tamper detection code secure against the 3-split-state tampering model $\lo_{\Theta(\gamma n)}^3$, of blocklength $n$, rate $\frac{1}{11} - \gamma$, and error $2^{-n^{\Omega(1)}}$.
\end{theorem}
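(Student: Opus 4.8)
The plan is to build the tamper-detection code against $\lo_a^3$ with $a=\Theta(\gamma n)$ by combining three ingredients: (i) a quantum authentication-type primitive (a quantum MAC, or "check state") that detects \emph{any} deviation from the identity on a register, (ii) an entangled reference/anchor state whose structure cannot be reproduced without communication or large entanglement, and (iii) an outer layer that splits the resulting code-state into three shares so that no local channel can both pass the authentication check and alter the logical message. Concretely, I would encode the message $m$ by first applying a quantum authentication code $\mathrm{Auth}_k(m)$ under a uniformly random classical key $k$, then secret-share both the authenticated state and the key across the three registers, padding with enough EPR-type entanglement between the shares so that the honest reconstruction is exact but a $\lo^3$ adversary sees effectively independent, near-maximally-mixed local views. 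The decoder reconstructs, reads off $k$, runs the verification $\mathrm{Ver}_k$, and outputs $m$ on accept and $\bot$ on reject.

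The core steps, in order, are as follows. \textbf{Step 1:} Fix the parameters — choose the authentication code and the inner $3$-split secret-sharing scheme so that the honest rate multiplies out to $\tfrac{1}{11}-\gamma$ and correctness is perfect (this is where the constant $11$ comes from: it is the product of the authentication overhead, the sharing overhead, and the entanglement padding, each a small constant). \textbf{Step 2:} Reduce security against $\lo_a^3$ to a statement about a \emph{single} effective local operation — since there is no classical communication, the joint tampering channel $\Lambda = \Lambda_1\otimes\Lambda_2\otimes\Lambda_3$ acting on the three shares plus $\le a$ qubits of pre-shared entanglement is a tensor product, so by Stinespring dilation and the structure of the sharing, the induced action on the authenticated state is a local quantum channel controlled by the adversary's side information, of "size" bounded by $a$. \textbf{Step 3:} Invoke the authentication guarantee: a quantum MAC ensures that for \emph{every} channel $\Phi$ acting on the authenticated register (and any entangled ancilla), the effective map on the message, conditioned on $\mathrm{Ver}_k$ accepting, is $\epsilon$-close to either the identity or a map that ignores the input, averaged over the key $k$ — and crucially the key is hidden from each local party because it is itself secret-shared with entanglement padding. \textbf{Step 4:} Bound the probability that the adversary both forces $\mathrm{Ver}_k$ to accept and changes $m$; the only way to "simulate" the missing share's contribution to the key and to the authenticated state is to hold $\Omega(n)$ bits of correlated quantum side information, which the $\le a = \Theta(\gamma n)$ entanglement budget forbids once $\gamma<\tfrac{1}{20}$. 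The error $2^{-n^{\Omega(1)}}$ comes out of the authentication soundness and the entropic gap $n - O(\gamma n)$, using (a purified-distance / trace-distance version of) the leftover-hash or non-malleable-extractor-style argument that underlies the sharing of $k$.

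The main obstacle I anticipate is \textbf{Step 3–4}: making precise the claim that a $\lo^3$ adversary with only $a=\Theta(\gamma n)$ qubits of entanglement cannot "know enough" about the secret-shared authentication key to fool $\mathrm{Ver}_k$. This is a quantum leakage-resilience statement — the adversary's three local operations jointly leak some information about $k$ through the shares, and one must show this leakage is bounded by $O(a)$ qubits' worth via a chain-rule / decoupling argument, so that the conditional min-entropy of $k$ given the adversary's view stays $\Omega(n)$, keeping the authentication check sound. Getting the quantitative tradeoff to close with the constant $\tfrac{1}{20}$ on $\gamma$, while keeping correctness exact and the rate at $\tfrac{1}{11}-\gamma$, will require carefully tracking where entanglement is "spent": honest padding versus adversarial budget. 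A secondary technical point is handling the purification/external-register issue properly, since tamper detection for quantum messages (if the theorem is to cover quantum messages, or even just to compose cleanly) demands the effective channel be close to identity \emph{including} on entanglement with a reference, which forces the authentication step to be a genuine quantum-authentication scheme rather than a classical MAC on a measured message.
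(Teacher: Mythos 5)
Your overall framework — Clifford authentication under a classical key that is itself protected across the shares, plus EPR padding — is in the right family, and you correctly identify the external-reference issue that forces genuine quantum authentication. But the proposal has a real gap: it never identifies the actual tamper-detection \emph{mechanism}, and the leakage-resilience argument you lean on in Steps 3--4 does not close it.

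The paper's construction encodes $R=\nmext(X,Y)$ from two classical sources held by parties 1 and 2, prepares $\lambda$ EPR pairs $\Phi^{\otimes\lambda}_{E\hat E}$, puts $E$ with party 2, and authenticates the register $\hat E M$ into $Z=C_R(\hat E M)C_R^\dagger$ for party 3. Crucially, the decoder's ``$\mathrm{Ver}_k$'' is a Bell-basis projection onto $\Phi^{\otimes\lambda}$ on the pair $(E',\hat E')$, not a separate tag check. This is the linchpin: if the adversary alters $(X,Y)$, the \emph{augmented} non-malleability of $\nmext$ makes $R$ uniform and independent of $R'$ and of the adversary's residual side information, so $C_{R'}^\dagger Z'$ is Pauli-scrambled and $\hat E'$ decouples from $E'$; if the adversary leaves $(X,Y)$ alone, the Clifford twirl (with side information) says the tampering on $Z$ either preserves $\hat E M$ or decouples $\hat E'$ from $E'$. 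In both cases the Bell test rejects unless the message is intact. Your proposal attributes detection to ``$\mathrm{Ver}_k$ accepting'' abstractly and tries to argue the adversary cannot \emph{learn} $k$. That is the wrong threat: the adversary does not need to learn $k$, they need only change it (or substitute the authenticated register) while keeping some verification consistent, and no amount of min-entropy of $k$ against a local view prevents this. Without the EPR trap there is nothing that catches a consistent key swap.

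The second missing piece is the quantitative heart of the bounded-storage claim: the reason $a=\Theta(\gamma n)$ qubits of pre-shared entanglement is tolerable is that the decoder's Bell test is a rank test. Any state of Schmidt number at most $2^a$ across the $E'/\hat E'$ cut passes the $\Phi^{\otimes\lambda}$ projector with probability at most $2^{a-\lambda}$, and all the ``scrambled'' branches above have Schmidt number bounded by the adversary's entanglement budget (\cref{claim:tss-bellbasis}, \cref{prop:schimidt}). Your chain-rule/min-entropy route tries to bound what the adversary \emph{knows}; the paper instead bounds the \emph{entanglement} it can inject across the $E/\hat E$ cut, which is what the decoder actually measures. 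Relatedly, the constant $1/11$ is not a generic product of small overheads — it arises specifically from the Clifford-subgroup key length ($5$ bits of key per authenticated qubit, \cref{lem:subclifford}) and the output fraction $(1/2-\delta)$ of the explicit quantum-secure two-source non-malleable extractor (\cref{lem:qnmcodesfromnmext}); without pinning those two constants the rate claim has no derivation. To repair the proposal you should (i) make the Bell-basis/EPR test the explicit verification step, (ii) replace the leakage argument with the augmented non-malleability of the extractor plus the Clifford twirl with side information, and (iii) replace the min-entropy bound with the Schmidt-number bound against the adversary's entanglement register.
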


Again, we emphasize that such a tamper-detection code is fundamentally impossible if the message were encoded into a classical codeword. Nonetheless, we achieve tamper detection (at high rate) by appealing to a modern line of work on quantum and post-quantum non-malleable codes and extractors \cite{ABJ22,Boddu2023SplitState,Bergamaschi2023PauliMD}, in combination with quantum authentication schemes. 

Unfortunately, our TDCs do not offer any security when composed in parallel. This is somewhat unsurprising: For instance, our TDCs against local adversaries $\lo^3$ involve entangled code-states. Given two TDCs, shared among the same set of parties, can the entanglement in one of them be used to break the security of the other?\footnote{Similar contraints arise in the parallel composition of (entangled) quantum data-hiding schemes against $\locc$ (unentangled) adversaries.} Nevertheless, the main application (and technical challenge) in our work shows how to augment secret sharing schemes with tamper-detection guarantees, by carefully reasoning about composition in combination with new ideas in \textit{leakage-resilient} secret sharing.

\subsubsection{Applications to Tamper-Resilient Secret Sharing.}
In a $t$-out-of-$p$ secret sharing scheme, a dealer encodes a secret into $p$ shares and distributes them among $p$ parties, satisfying the following privacy guarantee: the shares of any (unauthorized) subset of $< t$ parties don't reveal any information about the secret, while the shares of any (authorized) subset of $\geq t$ parties are enough to uniquely reconstruct it.

As we discuss shortly, our construction from \cref{theorem:results-main} is both a tamper-detection code and a 3-out-of-3 secret sharing scheme. In a natural generalization, we introduce the notion of a \textit{tamper-detecting secret sharing scheme} or $\tdss$s, strengthening the \textit{non-malleable} secret sharing schemes ($\nmss$s) introduced by Goyal and Kumar 
 (STOC 2018) \cite{GK16}. $\tdss$ simulatenously strengthens both quantum secret sharing and tamper-detection codes: In a $\tdss$s, even if all the $p$ parties tamper with each share (locally, but without entanglement), the tampered shares of \textit{an authorized set} are always enough to either recover the secret or reject, declaring adversaries have tampered with the shares.

 Here, we present a construction of a $\tdss$s with a \textit{ramp} access structure, a generalization of threshold secret sharing where the privacy and robust reconstruction thresholds differ:

\suppress{
\textit{Non-malleable} secret sharing schemes ($\nmss$s), introduced by Goyal and Kumar 
 (STOC 2018) \cite{GK16}, simultaneously strengthen both secret sharing schemes and non-malleable codes. In $\nmss$s, even if all the $p$ parties tamper with each share (independently), the tampered shares of \textit{an authorized set} are always enough to either recover the secret, or a completely unrelated secret. The design of $\nmss$s directly from split-state NMCs has seen a flurry of recent interest \cite{GK16,ADNOP19,BS19,CKOS22}, including extensions to quantum non-malleable secret sharing \cite{Boddu2023SplitState}.

 In this work, we introduce and construct a new type of (quantum) secret sharing scheme, which encodes classical messages into multipartite quantum states, offering threshold-type secret sharing guarantees, and \textit{tamper-detection} against unentangled adversaries. Our ``tamper-detecting" secret sharing schemes are in fact \textit{ramp} secret sharing schemes, a standard generalization of threshold secret sharing where the privacy and robust reconstruction thresholds differ:

 }

\begin{theorem}[Tamper-Detecting Quantum Secret Sharing]\label{theorem:results-lo-ss}
        For every $p, t$ s.t. $4\leq t\leq p-2$, there exists a $t$-out-of-$p$ quantum secret sharing scheme for $k$ bit secrets, of share size $k\cdot \poly(p)$ qubits, where even if all the shares are tampered with an $\lo^p$ channel, an authorized set of size $\geq (t+2)$ either reconstructs the secret or rejects with error $2^{-(kp)^{\Omega(1)}}$.
\end{theorem}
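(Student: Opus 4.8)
The plan is to use the by-now standard ``outer code / inner sharing'' compiler from the non-malleable secret sharing literature~\cite{GK16,ADNOP19,CKOS22} and its quantum incarnation~\cite{Boddu2023SplitState}, instantiated with the $3$-split quantum tamper-detection code of \cref{theorem:results-main} as the outer code and a quantum \emph{leakage-resilient} threshold secret sharing scheme as the inner layer. Concretely: on input a classical $k$-bit secret $m$, run the TDC encoder to obtain a tripartite code-state on registers $(C_1, C_2, C_3)$ of total blocklength $n = O(k)$ that is secure against $\lo^3_{a}$ for $a = \Theta(\gamma n)$; then, independently for each $j$, secret-share $C_j$ among the $p$ parties with a quantum leakage-resilient $t$-out-of-$p$ scheme whose per-share leakage budget is set to $\Theta(a)$. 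Party $i$ receives the bundle $(S_1^{(i)}, S_2^{(i)}, S_3^{(i)})$, of total size $k\cdot\poly(p)$ qubits. To reconstruct from an authorized set $T$ with $|T| \ge t+2$: for each $j$ reconstruct $\widehat C_j$ from $\{S_j^{(i)}\}_{i \in T}$, apply the TDC decoder to $(\widehat C_1, \widehat C_2, \widehat C_3)$, and output its verdict (a message or $\bot$).

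\emph{Privacy and correctness} are the easy parts. For any set of $< t$ parties, the privacy of each of the three (independently randomized) inner sharings implies that its shares of $C_j$ are statistically decoupled from $C_j$, so the whole bundle is independent of $(C_1, C_2, C_3)$ and hence of $m$; note that this does not use any hiding property of the TDC. Correctness in the untampered case is immediate from reconstruction of the inner scheme composed with correctness of the TDC decoder on untampered code-states.

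\emph{Tamper detection} is the crux. Fix an $\lo^p$ adversary $\{\Lambda_i\}_{i}$ and a reconstruction set $T$ with $|T| \ge t+2$; we must show $\tssdec_T \circ \left(\bigotimes_{i \in T} \Lambda_i\right) \circ \tssenc$ is $2^{-(kp)^{\Omega(1)}}$-close, as a channel on $m$, to one that outputs $m$ or $\bot$. The obstruction is that $\Lambda_i$ acts jointly on the entire bundle $(S_1^{(i)}, S_2^{(i)}, S_3^{(i)})$, so the map it induces on $\widehat C_1$ depends a priori on $C_2$ and $C_3$ through the cross-block shares $S_2^{(i)}, S_3^{(i)}$, and the three code-registers are themselves jointly entangled. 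The strategy is to factor the effective channel through an $\lo^3_a$ tampering of $(C_1, C_2, C_3)$ and then invoke \cref{theorem:results-main}. This is done by a hybrid argument: using leakage resilience of the sharings of the two ``other'' blocks, the cross-block shares that each $\Lambda_i$ sees are replaced, one party and one block at a time, by shares generated from a fixed value (a sharing of $0$), so that after all replacements the map induced on $C_j$ becomes a channel on $C_j$ assisted only by a bounded, message-independent register -- of size within the $a = \Theta(\gamma n)$ pre-shared-entanglement budget of \cref{theorem:results-main} -- correlated with the analogous registers of the other two blocks. The two extra parties beyond the privacy threshold (the requirement $|T| \ge t+2$ rather than $|T|\ge t$) are exactly the slack the hybrid needs: they let us carry out the simulation for each of the \emph{two} ``other'' blocks while the remaining cross-block shares stay below those blocks' privacy thresholds. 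Once the composite is shown $2^{-(kp)^{\Omega(1)}}$-close to some $\Lambda_1' \otimes \Lambda_2' \otimes \Lambda_3'$ in $\lo^3_a$, the tamper-detection guarantee of \cref{theorem:results-main} yields the ``recover-or-reject'' behaviour, and composing back through the inner reconstruction gives the stated error.

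\emph{Main obstacle.} The delicate work is entirely in the hybrid of the previous paragraph and in setting up the inner layer to support it: one must exhibit a quantum leakage-resilient $t$-out-of-$p$ secret sharing scheme with per-share leakage budget $\Theta(\gamma n)$ and share size $k\cdot\poly(p)$; and one must carefully account, in each step of the hybrid, for the total amount of message-correlated ``pre-shared entanglement'' between the three blocks that the replacement leaves behind, verifying it never exceeds the budget $a$ tolerated by \cref{theorem:results-main}. Everything else -- privacy, correctness, share-size bookkeeping, and checking the regime $4 \le t \le p-2$ (which guarantees both a nontrivial privacy threshold and room for the $+2$) -- is routine once the inner scheme and its parameters are pinned down.
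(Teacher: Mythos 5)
Your proposal inverts the composition order of the paper: you propose ``TDC outer, secret sharing inner'' (encode $m$ into $(C_1,C_2,C_3)$ via the TDC, then share each $C_j$ among the $p$ parties), whereas the paper does ``secret sharing outer, TDC gadgets inner'' (classically share $m$ via an LRSS, then wrap each party's classical share into small tripartite TDC gadgets distributed across triplets of parties). These are genuinely different constructions, and the direction you chose runs into two gaps that the paper's ordering is specifically engineered to avoid.

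First, the inner layer in your scheme must secret-share the \emph{quantum} registers $C_1,C_2,C_3$, and must do so with a quantum form of leakage resilience for quantum messages. The paper explicitly flags this as an open problem (``Extending \cite{GK16} or \cite{BS19} approach to tamper-detecting secret sharing schemes for quantum messages seems to require completely new ideas, including at the very least a definition of leakage resilient quantum secret sharing''). Worse, the three registers $C_1,C_2,C_3$ output by the TDC encoder are jointly entangled by design; ``independently'' secret-sharing each $C_j$ with a separate sharing scheme is not a well-defined operation on a tripartite entangled state without first destroying the very entanglement the TDC relies on. In the paper's ordering, only \emph{classical} strings are secret-shared, so off-the-shelf classical LRSS machinery (extended in \Cref{section:lrss} to quantum leakage) suffices.

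Second, and more fundamentally, your hybrid cannot produce an effective $\lo^3_a$ channel on $(C_1,C_2,C_3)$. The $\lo^p$ adversary factors as $\bigotimes_{i\in T}\Lambda_i$ \emph{across parties}, but each party $i$ holds pieces $(S_1^{(i)},S_2^{(i)},S_3^{(i)})$ of \emph{all three} TDC registers. Even after the hybrid fixes the cross-block inputs to dummy values, $\Lambda_i(\,\cdot\,,0,0)$ is a single channel on $S_1^{(i)}$ that jointly produces all three outputs $(S_1'^{(i)},S_2'^{(i)},S_3'^{(i)})$; its output across blocks can be arbitrarily entangled and message-dependent. The $\lo^3_a$ model requires the tampering of $C_1$, $C_2$, $C_3$ to be \emph{separate} local channels, correlated only through a bounded pre-shared state of $a$ qubits, and such channel-generated cross-block output correlations cannot in general be pushed into pre-shared entanglement of size $O(\gamma n)$; the relevant register sizes scale with the share size $k\cdot\poly(p)$, far exceeding $a$. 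The paper's triangle gadget sidesteps this by ensuring each TDC's three registers land with three \emph{distinct} parties $a<b<c$, so the per-party tensor factorization of $\lo^p$ directly induces the per-register factorization of $\lo^3$ on each gadget; the residual cross-gadget entanglement held by a single party is what shows up as the bounded $\lo^3_{(a,*,*)}$ budget, and the staggered choice $\lambda_i=(i+1)\lambda$ guarantees two of the three shares per gadget stay within budget. The selective-bot problem is then handled by the bipartite decoder in \Cref{alg:td-ss-dec} combined with the joint quantum leakage model $\mathcal{F}^{p,t}_{3,\mu}$, none of which has an analogue in your construction.
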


    Our secret sharing schemes are inspired by a well-known result by Aggarwal et al. \cite{ADNOP19} (Crypto 2019), who showed how to compile generic secret sharing schemes into a leakage-resilient and non-malleable secret sharing schemes. Informally, their approach to $\nmss$ can be viewed as ``concatenating" secret sharing schemes with 2-split NMCs. Here, we revisit their compiler by combining it with our new (3-split) tamper-detection code. As previously mentioned, proving the security of our compilers is the main challenge in this work and requires a number of new ideas, including relying on new quantum secure forms of leakage-resilient secret sharing schemes (see \cref{subsection:overview}). 

\subsubsection{Connections to Quantum Encryption Schemes.}

The notion of tamper detection studied in this work closely resembles the guarantees of quantum authentication schemes (QAS) \cite{Barnum2001AuthenticationOQ,Aharonov2008InteractivePF,BW16,garg2016new,Hayden2016TheUC}. In a QAS, a sender and receiver (who share a secret key), are able to communicate a private quantum state while detecting arbitrary man-in-the-middle attacks. A remarkable feature of \textit{quantum} authentication schemes, in contrast to their classical counterparts, is that they encrypt the message from the view of adversaries who do not know the key. Here, we similarly show that split-state TDCs must encrypt their shares:

    \begin{theorem}[Tamper-Detection Codes Encrypt their Shares]\label{theorem:results-encryption}
    The reduced density matrix of each share of a quantum tamper-detection code against $\lo^t$ with error $\epsilon$, is $4\sqrt{\epsilon}$ close to a state which doesn't depend on the message.
    \end{theorem}

    This follows from the simple but fundamental idea that adversaries that can distinguish between code-states of two known orthogonal messages, $\ket{x}, \ket{y}$, can also map between their superpositions $(\ket{x}\pm\ket{y})/\sqrt{2}$ \cite{Barnum2001AuthenticationOQ,Aaronson2020OnTH,Gunn2022CommitmentsTQ}, breaking tamper-detection. We refer the reader to \cref{section:connections-encryption} for more details on this.

\suppress{

In a seminal result Dziembowski, Pietrzak, and Wichs \cite{DPW10} introduced the notion of a \textit{non-malleable code} (NMC), as a natural relaxation to error-correcting and error-detecting codes. Informally, non-malleable codes encode messages into (randomized) codewords, in such a way that adversaries which tamper with the codeword in some restricted way, cannot change it into the encoding of a related message\footnote{For instance, if the message is a number $m$, there should be no way change it to $m'=m+1$ with non-neglible probability. More generally, $m'$ should be either exactly equal to $m$, or completely uncorrelated with $m$, see \Cref{section:prelim}.}. Their definition imposes that the “effective channel” $\Dec\circ \Lambda\circ \Enc$ of any adversary $\Lambda$ amounts to either the identity map on the message, or a complete replacement. While this relaxation may seem a bit arbitrary at first, it has natural applications in tamper-resilient hardware, where side-channel attacks to physical cryptographic devices (e.g. Smart Cards) such exposing them to heat or EM radiation, could leak information about the underlying secret and/or modify its value.


It is naturally impossible to design codes which protect against arbitrary tampering models. This is since coding schemes $(\Enc, \Dec)$ with public encoding and decoding channels (no secret key), are always subject to adversaries which can decode the code, flip a bit of the message, and then re-encode the tampered message. Since their conception, NMCs under stronger tampering models and with other desiderata like high coding rate and small security error have been the subject of extensive research. However, only recently have adversaries, and coding schemes, with \textit{quantum capabilities} been considered. 

In this work, we pay particular interest to quantum analogs of the well-studied $t$ split-state model \cite{LL12}. In the original model, the (classical) codeword is divided into $t$ parts, $(c_1, c_2, \ldots, c_t)$, and $t$ different adversaries are permitted to tamper with each part independently, transforming $(c_1, c_2, \ldots, c_t)$ into $(f_1(c_1), f_2(c_2), \ldots, f_t(c_t))$. This abstraction is motivated by a multi-server setting, where each split or ``share" is stored by different non-communicating servers, and arises naturally in multi-partite settings such as secret sharing schemes. 

In recent work, \cite{ABJ22} considered a quantum secure version of the split-state model, where the $t$ adversaries are allowed to tamper with the classical ciphertext using pre-shared entangled states. Even more recently, \cite{Boddu2023SplitState} considered a quantum version of split-state non-malleable codes, where even the messages and ciphertexts are quantum states. In our work, we further the study of the split-state model in the quantum setting in three directions. First, we show how to use quantum ciphertexts to achieve stronger security guarantees against new models of split-state tampering, in particular, when the split-state adversaries are unentangled, or limited to local operations and classical communication $(\locc)$. Second, we present applications of our techniques to design secret sharing schemes with similar security guarantees. Finally, we study connections between the split-state model and quantum encryption, and leverage them to analyze the capacity of split-state quantum non-malleable codes.


\subsection{Summary of contributions}
\label{subsection:contributions}

\noindent \textbf{Tamper Detection against Local Operations $(\lo)$.} Our main result focuses on the unentangled split-state model, where the $t$ adversaries are restricted to local quantum operations, but without pre-shared entanglement. We show that in this model, with even just a constant number $t\geq 3$ of shares, one can construct codes with a security guarantee known as \textit{tamper detection}: That is, after encoding a message $\psi$ into the code, tampering with it using a tensor product channel $\Lambda = \otimes_i^t \Lambda_i$, and decoding it, we either recover the original message $\psi$ or we reject $\bot$. Previously, this result was only known over an assymptotically large number $t=n$ of shares, where each share is a single qubit \cite{Bergamaschi2023PauliMD}. 

Tamper detection is a strict strengthening of non-malleability, and we emphasize that it is fundamentally impossible with classical ciphertexts (in the split-state model). Indeed, if the ciphertext is a bitstring, then a split-state adversary could always replace the cipher with a pre-agreed encoding of a fixed message $\hat{m}$, breaking the tamper detection guarantee. In this sense, non-malleability is essentially ``the best one can do" with classical communication. What is more, this same impossibility directly extends to the setting where the ciphers are quantum states, but the adversaries are allowed unbounded pre-shared entanglement.

Nevertheless, one the main messages of this work is that tamper detection arises naturally as a consequence of non-malleability against \textit{unentangled adversaries}. We show that any classical or quantum non-malleable code in the split state model \cite{Boddu2023SplitState}, actually directly implies a quantum tamper-detection code with one extra share/split. Moreover, this ``reduction" can be made to tolerate a finite amount of pre-shared entanglement between the adversaries, suggesting that tamper detection is a robust feature of the tamper-resilience of multi-partite quantum states.

\begin{theorem}
For every blocklength $n\in \mathbb{N}$, there exists a 3 split-state tamper-detection code of rate approaching $1/11$, which is $2^{-n^{\Omega(1)}}$-secure against local adversaries with at most some $\Theta(n)$ qubits of pre-shared entanglement.
\end{theorem}

We point out that the resilience to pre-shared entanglement in our construction is near-optimal, since, by the ciphertext substitution attack, $n$ (the blocklength) qubits already suffice to break tamper detection. \\


\noindent \textbf{Non-Malleability against Local Operations and Classical Communication ($\locc$).} $\locc$ channels are the quantum processes on multipartite quantum states which are implementable via local quantum operations and classical communication. They are motivated mostly by current technological difficulties in communicating quantum data, as well as the study of quantum correlations and resources, but also in part due to the intricate phenomenon of quantum data hiding \cite{Terhal2000HidingBI, DivincenzoLT02, Eggeling2002HidingCD}. 

In a quantum data hiding scheme, a message is encoded into a bipartite state such that Alice and Bob have arbitrarily small accessible information about the message when restricted to $\locc$, however, the data can be perfectly retrieved when the pair performs an entangled measurement on the state. For instance, \cite{Terhal2000HidingBI} showed how to hide a single bit $b$ using $n$ random Bell states\footnote{ ($\frac{\ket{00}\pm\ket{11}}{\sqrt{2}}, \frac{\ket{01}\pm\ket{10}}{\sqrt{2}}$)}, where the \textit{parity} of the number of singlets is the message bit $b$. Since $\locc$ doesn't permit Bell basis measurements, they reason that the bit $b$ remains hidden to Alice and Bob.

Here we ask, what if Alice and Bob's goal wasn't to learn information about $b$, but instead, to flip it?

Indeed, by applying a single Pauli $Z$ operator to any one of Alice's qubits, the parity of the number of singlets is flipped with probability $1/2$.

In this work we leverage these connections to construct non-malleable codes, which encode classical messages into multi-partite quantum states, and are non-malleable against arbitrary $\locc$ tampering. Our constructions are based on combinations of data hiding schemes and classical split-state non-malleable codes, and achieve security against an ``$\locc$ split-state model", whenever the number of splits $\geq 4$. 

\begin{theorem}
        For every $n, \lambda=\Omega(\log n)$, there exists an efficient 4 split-state non-malleable code against local operations and classical communication, of blocklength $n$ qubits, rate $\lambda^{-2}$ and error $n\cdot 2^{-\Omega(\lambda)}$.
    \end{theorem}

\noindent \textbf{Leakage-Resilient, Non-Malleable and Tamper-Detecting Secret Sharing Schemes.} The flagship application of our techniques is to the design of secret sharing schemes. In a secret sharing scheme, a dealer encodes a secret into $n$ shares and distributes them among $n$ parties, which satisfy an access structure $\mathcal{A}$. An access structure is a family of subsets of $[n]$, where the shares of any (unauthorized) subset of parties $T\notin \mathcal{A}$ don't reveal any information about the secret, but the shares of any (authorized) subset $T\in \mathcal{A}$ are enough to uniquely reconstruct it.

The notions of leakage-resilient and non-malleable secret sharing schemes ($\lrss$ and $\nmss$) generalize secret sharing in two complementary directions. $\lrss$ strengthens the notion of privacy in a secret sharing scheme. In particular, even if the parties of an unauthorized subset $T\notin \mathcal{A}$ receive a small amount of information (or leakage) from the other shares $[n]\setminus T$, they still can't learn anything about the secret. The most common leakage model is known as the \textit{local leakage} model, where $T$ receives information from each individual share of $[n]\setminus T$ independently. 

While, in a secret sharing scheme, the goal of the adversary is typically to learn information about the secret, Goyal and Kumar \cite{GK16} asked what if instead their goal was to tamper with it? To address this question they introduced the notion of a non-malleable secret sharing ($\nmss$) scheme. In an $\nmss$, even if $n$ adversaries tamper with the shares independently or under some restricted tampering model, an authorized set of parties will always either recover the secret or a completely unrelated secret. Non-malleable secret sharing schemes are intimately related with (and strictly strengthen) split-state non-malleable codes\footnote{In fact, 2 split NMCs are also 2-out-of-2 secret sharing \cite{Aggarwal2015LeakageResilientNC}, but $3$ split codes are not secret sharing.}. \cite{GK16} furthered this connection by constructing $\nmss$s secure against individual and joint tampering models via a reduction to 2-split NMCs. Their result spurred a flurry of efforts in designing $\lrss$ and $\nmss$ schemes under smaller share sizes and stronger tampering guarantees, as well as recent extensions to quantum secret sharing \cite{Boddu2023SplitState}.

The main technical part of this work is to show how to extend both the guarantees we derived for tamper detection and non-malleability in the unentangled split state model, to secret sharing schemes. Our constructions encode classical messages into quantum states, and offer threshold-type secret sharing guarantees, in addition to tamper detection against unentangled adversaries/non-malleability against $\locc$. A more comprehensive statement of contributions can be found in \Cref{subsection:contributions}. \\

\noindent \textbf{Quantum Authentication and Encryption Schemes.} An astute reader may realize that the notion of tamper detection raised here is essentially a keyless, and adversary-restricted, version of a quantum authentication scheme \cite{Barnum2001AuthenticationOQ}. In a quantum authentication scheme, a sender and receiver (who share a secret key), are able to communicate a private quantum state while detecting arbitrary man-in-the-middle attacks.

A remarkable feature of quantum authentication schemes \cite{Barnum2001AuthenticationOQ}, in contrast to their classical counterparts, is that they encrypt the message from the view of adversaries who do not know the secret key. This is since an adversary that could distinguish between the code-states of two known orthogonal messages, $\ket{x}$ and $\ket{y}$, is able to map between their superpositions $(\ket{x}\pm\ket{y})/\sqrt{2}$ \cite{Barnum2001AuthenticationOQ, Aaronson2020OnTH, Gunn2022CommitmentsTQ}. In our work, we similarly draw a series of connections between split-state tamper detection and non-malleable codes to quantum encryption schemes, to show that these codes actually inherit a secrecy property for the message in a number of scenarios.

\newpage

\subsection{Summary of contributions}
\label{subsection:contributions}

Our work is motivated by quantum analogs of the split-state model. We point out the obvious by saying if the $t$ adversaries had access to (unbounded) quantum communication, \textit{or} both classical communication and pre-shared entanglement (via teleportation), then they would be able to implement an arbitrary quantum channel and thereby break tamper detection/non-malleability. \textit{But what if any of these resources is restricted?} Here we consider the following models of split-state tampering channels\footnote{A quantum channel is a Completely Positive and Trace Preserving (CPTP) map on density matrices. See \Cref{subsection:formal-tampering} for formal definitions these channels.}:

\begin{figure}[h]
    
\noindent\fbox{%
    \parbox{\textwidth}{%
        \begin{itemize}
    \item[-] $\lo^t$: Local quantum operations on $t$ registers.   
    \item[-] $\locc^t$: Local quantum operations and classical communication between $t$ registers.
    \item[-] $\lo_*^t$: Local operations on $t$ registers with arbitrary pre-shared entanglement. If at most $a$ qubits of pre-shared entanglement are allowed, we refer to these channels as $\lo_a^t$ or the ``bounded storage model".
\end{itemize}
    }

}

    \caption{A glossary of the split-state tampering models studied in this work.}
    \label{fig:enter-label}
\end{figure}

We emphasize that standard techniques in quantum error-correction, such as stabilizer codes, are fundamentally unsuited even to simplest local tampering model $\lo$. This is since the logical operators of stabilizer codes are strings (tensor products) of Pauli operators, which can easily be implemented in $\lo$. We are now in a position to formally state our main results. We refer the reader to the overview or preliminaries (\Cref{subsection:overview} and \Cref{section:prelim}) for formal definitions of the relevant coding and secret sharing schemes.\\

\noindent \textbf{Tamper Detection against $\lo$ Channels}

\begin{enumerate}[1.]
    \item \textbf{Quantum Non-Malleable codes against $\lo$ imply Tamper Detection}. The intuition behind our constructions of tamper-detection codes in the split-state model lies in a simple reduction. 
    
    \begin{theorem}
        [Informal] Any $t$ split quantum non-malleable code against $\lo$ encoding a sufficiently large number $k$ of qubits with error $\epsilon_\NM$, implies a $(t+1)$ split tamper-detection code against $\lo$ encoding $k\cdot (1-o(1))$ qubits with error $\epsilon_\NM+2^{-\Tilde{\Omega}(k)}$.
    \end{theorem}

    By combining our reduction with a construction of a quantum non-malleable code against 2 split-state $\lo$ adversaries with constant rate and inverse-exponential error (which we present in \Cref{section:2ss-nm}), we obtain constructions of tamper-detection codes against 3 split-state $\lo$ adversaries with similar parameters. 

    \item \textbf{Tamper-Detection Codes against $\lo_{\Theta(n)}$}. We show that our reduction can be made robust to the bounded storage model $\lo_{a}$, comprised of local channels with at most $a$ qubits of pre-shared entanglement. To do so, we define a notion of \textit{augmented} quantum non-malleable codes (\Cref{definition:augmented-quantum-nm}), which could be of independent interest. However, we do not yet know how to construct such augmented codes with good rate -  and thus resort to ``opening up the black-box" to improve our results in the bounded storage model.

    \begin{theorem}For every sufficiently large blocklength $n\in \mathbb{N}$, there exists a quantum tamper-detection code against $\lo_{a}^3$ for some $a=\Theta(n)$ with rate $1/5$ and error $2^{-\Omega(n)}$.
    \end{theorem}

    We remark that our tolerance to pre-shared entanglement is within a constant factor of optimal, in the sense that $n$ qubits already suffice to break the tamper detection guarantee. Our construction combines quantum secure non-malleable randomness encoders \textbf{cite} with quantum authentication schemes \cite{Barnum2001AuthenticationOQ}, and resembles constructions by \cite{Bergamaschi2023PauliMD} and \cite{Boddu2023SplitState}.

    \item \textbf{Tamper-Detecting Secret Sharing Schemes against $\lo$.} We design a quantum \emph{ramp} secret sharing scheme for classical messages, which inherits both privacy and tamper-detecting guarantees. 
    
    \begin{theorem}
        For every $n, t$ s.t. $3\leq t\leq n-3$, there exists a quantum secret sharing scheme for classical messages, where no $t$ shares reveal any information about the message, but even if any $(t+3)$ shares are individually tampered with using an $\lo$ channel, one always either reconstructs the message or rejects.
    \end{theorem}
    
    Our construction is inspired a well-known compiler by \cite{ADNOP19}, who showed how to compile generic secret sharing schemes into leakage-resilient and non-malleable secret sharing schemes. The most technical part of our work lies in revisit their compiler and combining it with our new tamper-detection codes in the 3-split-state model to construct tamper-detecting secret sharing schemes. 
\end{enumerate}

\noindent \textbf{Non-Malleability against $\locc$ Channels} Our constructions against $\locc$ channels encode classical messages into quantum states, and combine features of $\locc$ data hiding schemes with non-malleable codes.

\begin{enumerate}[1.]
    \item \textbf{Non-Malleable Codes against $\locc$}. We present constructions of non-malleable codes for classical messages in the 4 split-state model, where the 4 parties have arbitrary access to classical communication. 

    \begin{theorem}
        [Informal] For every sufficiently large $n, \lambda=\Omega(\log n)$, there exists an efficient non-malleable code against $\locc^4$ of blocklength $n$ qubits, rate $\lambda^{-2}$ and error $n\cdot 2^{-\Omega(\lambda)}$.
    \end{theorem}

    Our construction is essentially a ``concatenation" of an outer 2 split-state non-malleable code, with inner data hiding schemes. To prove security, we prove what is known as a ``non-malleable reduction" from the $\locc$ tampering on the concatenated code, to split-state tampering on the outer code.  Moreover, we show that the rate-error trade-off of our construction is quite far from tight: non-constructively we prove the existence of 4 split-state non-malleable codes against $\locc$ of rate $1/4-o(1)$ and error $2^{-n^{\Omega(1)}}$.

    \item \textbf{Non-Malleable Secret Sharing Schemes against $\locc$.} We apply our proof techniques to devise threshold secret sharing schemes which are non-malleable against $\locc$. 

    \begin{theorem}
        [Informal] For every $n, t\geq 5$, there exists a $t$-out-of-$n$ secret sharing scheme for classical messages which is non-malleable against $\locc$.
    \end{theorem}

    This construction is also based on the compiler by \cite{ADNOP19}, which (recall) used it to devise leakage-resilient secret sharing schemes. As a consequence, our secret sharing scheme inherits quite strong data hiding and leakage-resilience properties: even if $t-1$ parties have access to quantum communication, and are allowed unbounded classical communication with the remaining $n-t+1$ parties, they still learn a negligible amount about the message.
\end{enumerate}

\noindent \textbf{Connections to Quantum Encryption.} Since tamper detection and non-malleable codes can be understood as relaxations of authentication schemes, it may not come at a surprise that they share similar properties. We show that 

\begin{enumerate}[1.]
    \item \textbf{Quantum codes in the split-state model encrypt their shares.} 

    \begin{theorem}\label{theorem:results-tdss}
    The reduced density matrix (marginal) of each share of quantum tamper-detection code against $\lo^t$ with error $\epsilon$, is $\sqrt{\epsilon}$ close to a state which doesn't depend on the message.
    \end{theorem}

    We emphasize that classical authentication schemes do \textit{not} necessarily encrypt their shares. In general, neither do classical non-malleable codes in the split-state model either\footnote{2-split-state non-malleable codes do encrypt their shares \cite{Aggarwal2015LeakageResilientNC}, however, 3-split codes not necessarily. Indeed, encoding a message into a 2-split-state code and placing a copy of the message (in the clear) into the third register is non-malleable in the 3 split model, but trivially doesn't encrypt its shares.}. However, the question of whether \textit{quantum} non-malleable codes encrypt their shares remains open.

    \item \textbf{The Capacity of Separable Non-Malleable Codes.} We prove singleton-type upper bounds on the rate of non-malleable codes in the split-state model, by generalizing (and simplifying) classical information-theoretic arguments by Cheraghchi and Guruswami \cite{Cheraghchi2013CapacityON}. Our capacity results hold for non-malleable codes which are \textit{separable}, that is, unentangled across the splits:

    \begin{theorem}\label{theorem:results-lower-bound}
    Let $(\Enc, \Dec)$ be a quantum non-malleable code against $\lo^t$ with error $\epsilon_\NM$ and blocklength $n\in \mathbb{N}$, and assume that $\Enc(\cdot)$ is separable state (across each share). If the rate of $\Enc$ exceeds $1-\frac{1}{t} + \delta$ for any $\delta \geq 2\cdot \log n /n$, then the error must exceed $\epsilon_\NM = \Omega(\delta^2)$.
\end{theorem}


Unfortunately, our techniques fall short of proving the capacity of \textit{entangled} non-malleable codes. Nevertheless, prior to this work, the only known constructions of quantum non-malleable codes were separable \cite{Boddu2023SplitState,BBJ23}, and thereby we find \Cref{theorem:results-lower-bound} to be a valuable addition. 
\end{enumerate}

}

\subsection{Related Work}
\label{subsection:related}

\noindent \textbf{Comparison to Prior Work}. Closely related to our results on tamper detection is work by \cite{Bergamaschi2023PauliMD}, who introduced the notion of a Pauli Manipulation Detection code, in a quantum analog to the well-studied classical algebraic manipulation detection (AMD) codes \cite{Cramer2008DetectionOA}. By combining these codes with stabilizer codes, he constructed tamper-detection codes in a ``qubit-wise" setting, a model which can be understood as the unentangled $t$-split-state model $\lo^t$ for assymptotically large $t$ (and each share is a single qubit). However, to achieve tamper detection (and high rate), \cite{Bergamaschi2023PauliMD} relied on strong quantum circuit lower bounds for stabilizer codes \cite{Anshu2020CircuitLB}. In contrast, as we will see shortly, we simply leverage Bell basis measurements. 

Even though TDCs are infeasible in entangled split-state model ($\lo^t_*$), NMCs are still feasible. \cite{ABJ22} studied \textit{quantum secure} versions of classical non-malleable codes in the $\lo^2_*$ model. \cite{Boddu2023SplitState} introduced quantum non-malleable codes, and constructed codes with inverse-polynomial rate in the $\lo^2_*$ model. \cite{BBJ23}, leveraging constructions of quantum secure non-malleable \textit{randomness encoders}, showed how to construct constant rate non-malleable codes in the $\lo^3_*$ model. \\

\suppress{

\begin{table}[h]
\centering
\begin{tabular}{||c c c c c c||} 
 \hline
 \textbf{Work by} & \textbf{Rate} & \textbf{Messages} & \textbf{Adversary} & \textbf{NM} & \textbf{TD}  \\ [1ex] 
 \hline\hline 
 \cite{ABJ22} & $\frac{1}{\poly(n)}$ & classical & $\lo^2_*$ & Yes  & No \\[0.5ex] 
 \hline
  \cite{BBJ23} & $1/3$ & {classical} & $\lo^3_*$ & Yes  & No \\[0.5ex] 
 \hline
  \cite{BBJ23} & $1/11$ & {quantum} & $\lo^3_*$ & Yes  & No \\[0.5ex] 
 \hline
 \cite{Boddu2023SplitState} & $\frac{1}{\poly(n)}$ & quantum & $\lo^2_*$  & Yes  & No \\ [1ex] 
 \hline
 \cite{Bergamaschi2023PauliMD} & $\approx 1$ & quantum & $\lo^n$  & Yes  & $\mathbf{Yes}$ \\ [1ex] 
 \hline
 \textbf{This Work} & $1/11$ & quantum & $\lo^3_{(\infty, \infty, \Omega(n))}$  & Yes  & $\mathbf{Yes}$ \\ [1ex] 
 \hline
\end{tabular}

\vspace{0.5cm}

\caption{Comparison between known explicit constructions of split-state NMCs and TDCs. Here, $n$ denotes the blocklength.}
\label{table:nmcs}
\end{table}

}

\noindent \textbf{Other Classical Non-Malleable Codes and Secret Sharing Schemes.} NMCs were originally introduced by Dziembowski, Pietrzak and Wichs \cite{DPW10}, in the context of algebraic and side-channel attacks to tamper-resilient hardware. In the past decade, the split-state and closely related tampering models for NMCs and secret sharing schemes have witnessed a flurry of work \cite{LL12,DKO13,CG14a,CG14b,Jafargholi2015TamperDA,ADL17,CGL15,li15} \cite{Li17,GK16,GK18,ADNOP19,BS19,FV19,Li19,AO20,BFOSV20,BFV21,GSZ21,AKOOS22,CKOS22,Li23}, culminating in recent explicit constructions of classical split-state NMCs of explicit constant rate $1/3$~\cite{AKOOS22} and constructions with (small) constant rate but also smaller error~\cite{Li23}.  \\

\suppress{
\noindent \textbf{Data Hiding Schemes and Leakage Resilience.} Peres and Wootters \cite{Peres1991OptimalDO} are credited with the first observations of data hiding features of bipartite quantum states. They showed that one could encode classical random variables into mixtures of bipartite product states, such that two parties restricted to $\locc$ operations would learn significantly less about the message than parties with generic entangling operations. This lead to a series of results  constructing and characterizing collections of states with different hiding attributes \cite{MassarP00,Terhal2000HidingBI,DiVincenzo2001QuantumDH,DiVincenzo2002HidingQD,Hayden2003RandomizingQS,Matthews2008DistinguishabilityOQ,Chitambar2012EverythingYA,Lami2017UltimateDH}, as well as studying connections to secret sharing \cite{Hayden2004MultipartyDH,Wang2017LocalDO,akan2023UnboundedLA}, with still with fascinating open questions \cite{winter_reflections2017}. 

Most related to our work is the recent result by \cite{akan2023UnboundedLA}, who designed cryptographic primitives (e.g. public key encryption,  digital signatures, secret sharing schemes, etc.) with quantum secret keys, for classical messages, which are leakage-resilient to unbounded $\locc$. In particular, their secret sharing schemes are robust only to non-adaptive (one-way) leakage, albeit have polynomially smaller share size compared to ours.\\

}

\noindent \textbf{Non-Malleable Quantum Encryption.} Other notions of non-malleability for quantum messages have previously been studied in the context of quantum encryption schemes. Ambainis, Bouda, and Winter \cite{ABW09}, and later Alagic and Majenz \cite{AM17}, introduced related notions of quantum non-malleable encryption schemes which (informally) can be understood as a keyed version of the NMCs we study here. \cite{AM17} showed that in the keyed setting, non-malleability actually implies encryption in a strong sense. We view the connections we raise to encryption schemes as analogs to these results in a coding-theoretic setting. 

\section{Our Techniques}
\label{subsection:overview}

We begin in \cref{subsection:overview-reductions} by presenting definitions of the various tampering guarantees we consider, and a reduction from non-malleability to quantum tamper-detection, which highlights the intuition behind our results. We proceed in \cref{subsection:overview-bounded-storage} by describing how to construct our 3-split tamper-detection codes of \cref{theorem:results-main} from classical, quantum-secure non-malleable codes. Finally, in \cref{subsection:overview-td-ss} we describe our constructions of tamper-detection secret sharing schemes (\cref{theorem:results-lo-ss}) based on our 3-split TDCs.

\subsection{Non-Malleability implies Quantum Tamper Detection}
\label{subsection:overview-reductions}

Consider the following tampering experiment. Given some message $M$, encode it into a coding scheme $c\leftarrow \Enc(M)$, tamper with it using a function $f$, and finally decode it to $M' = \Dec(f(c))$. Dziembowski, Pietrzak, and Wichs \cite{DPW10} formalized a coding-theoretic notion of non-malleability using a simulation paradigm: $M'$ can be simulated by a distribution $\mathcal{D}_f$ that depends only on $f$:
\begin{equation}
    M' \approx_\eps p_f\cdot M + (1-p_f)\cdot \mathcal{D}_f, \text{ for some }p_f\in [0, 1].
\end{equation}

\noindent In words, the distribution over $M'$ is a convex combination of the original message $M$, and a completely uncorrelated message.

Recently, \cite{Boddu2023SplitState} introduced the notion of a \textit{quantum non-malleable code}, where both the messages and the codewords can be quantum states, and showed how to construct them from certain \textit{quantum-secure} versions of classical non-malleable codes. The main difference to their classical counterparts lies in the notion of an ``uncorrelated message": which must either preserve, or completely break entanglement with any side-information (a property also known as \textit{decoupling}). 

\begin{definition}
    [Quantum Non-Malleable Codes \cite{Boddu2023SplitState}] A pair of CPTP channels $(\Enc, \Dec)$ is a \emph{quantum non-malleable code} against a tampering channel $\Lambda$ with error $\epsilon$ if the ``effective channel" satisfies
    \begin{equation}
        \forall \psi\in \cD(\mathcal{H}_M\otimes \mathcal{H}_{\hat{M}}): \big(\Dec\circ\Lambda \circ \Enc\otimes \mathbb{I}_{\hat{M}}\big)(\psi_{M\hat{M}}) \approx_\epsilon p\cdot \psi_{M\hat{M}}+(1-p)\cdot \sigma\otimes \psi_{\hat{M}},
    \end{equation}

    \noindent where $p\in [0, 1]$ and $\sigma \in \cD(\mathcal{H}_M)$ depend only on $\Lambda$, and are independent of $\psi$.
\end{definition}

 \noindent If instead of the arbitrary state $\sigma$, the decoder rejected by outputting a fixed flag $\sigma = \bot$, then we would refer to $(\Enc, \Dec)$ as a \emph{tamper-detection code} for $\Lambda$. \\

\noindent \textbf{Tamper Detection in the unentangled Split-State Model.} The starting point of our work lies in the observation that the decoupling nature of non-malleability actually implies strong tamper-detecting guarantees in the split-state model. Here, we sketch a reduction which converts any quantum non-malleable code $(\Enc^t, \Dec^t)$ for the unentangled $t$-split model, $\lo^t$, into a tamper-detection code for the unentangled $(t+1)$-split model $\lo^{t+1}$.

Given a message state $\psi$ and some integer parameter $\lambda$, the channel $\Enc_\lambda(\psi)$ simply prepares $\lambda$ EPR pairs, $\Phi^{\otimes \lambda}_{E\hat{E}}$, and encodes $\psi$ and half of the EPR pairs into $\Enc^t$ (see \cref{fig:tsplitstatenoentanglment}):
\begin{equation}
    \Enc_\lambda (\psi) = [\Enc^t_{ME}\otimes \mathbb{I}_{\hat{E}}](\psi\otimes \Phi^{\otimes \lambda}_{E\hat{E}}).
\end{equation}

\noindent The remaining EPR halves $\hat{E}$ are placed in the additional $(t+1)$st share.

\begin{figure}
    \centering
    \includegraphics[width = 9cm]{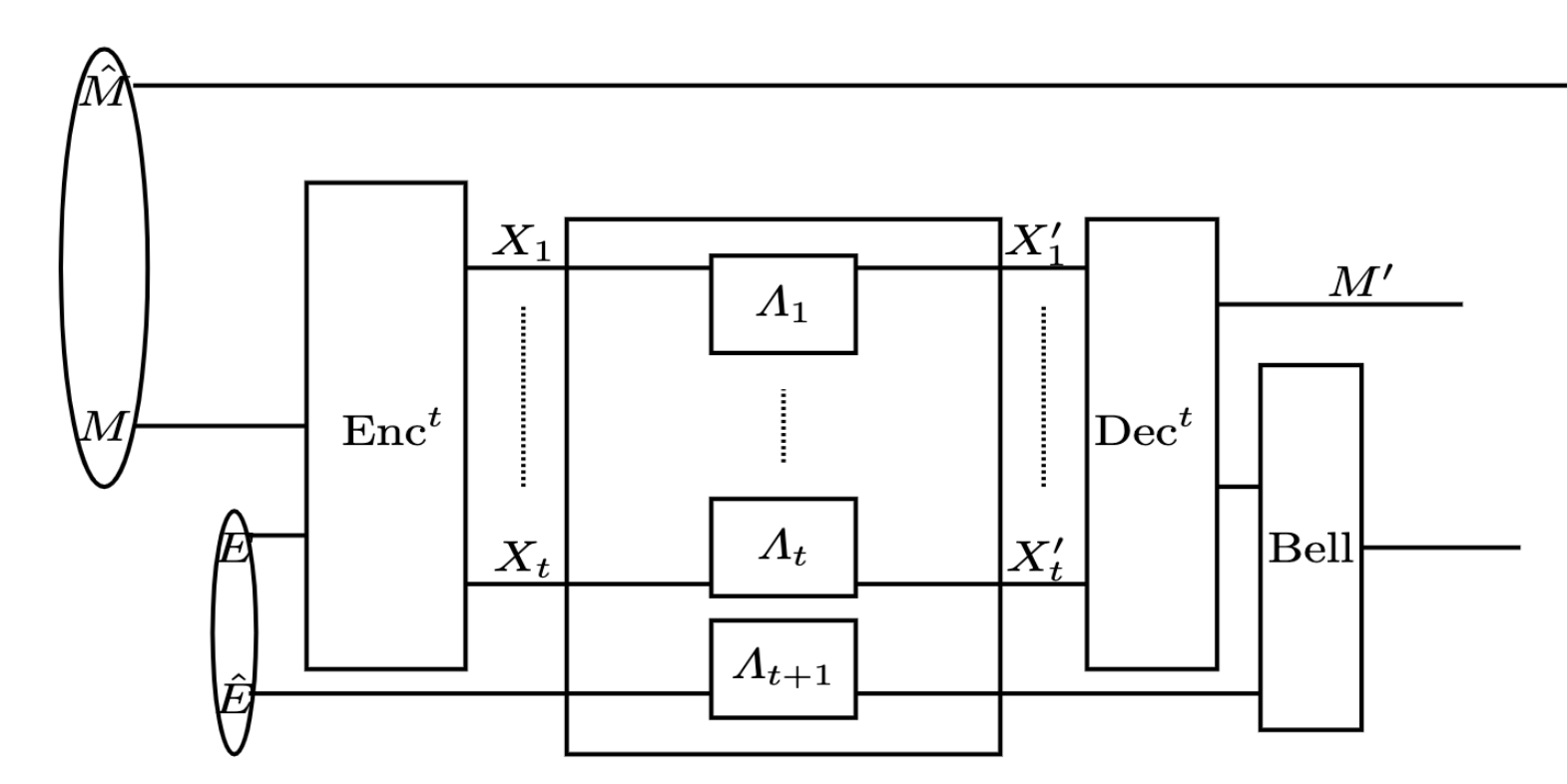}
    \caption{Quantum TDC against $\lo^{t+1}$.}\label{fig:tsplitstatenoentanglment}
\end{figure}

To decode (after tampering), we first decode the non-malleable code on the first $t$ shares using $\Dec^t$, and then measure the EPR registers in the Bell basis. Since the adversaries in $\lo^{t+1}$ are unentangled, the decoder $\Dec^t$ always either receives $\psi$ back, or a \textit{product state} with the $(t+1)$st share. However, the Bell basis measurement is guaranteed to reject product states, and therefore can be used to verify (and reject) if the adversaries have tampered with their shares. 

Unfortunately, our reduction only takes us so far. The reason is two-fold: First, to construct constant rate $3$-split TDCs, we require $2$-split quantum NMCs of constant rate, which remains an open problem \cite{Boddu2023SplitState}. More importantly, it is somewhat unclear whether this compiler can even be used to construction codes in the bounded storage model, or secret sharing schemes.



\suppress{
\begin{figure}[h]
		\centering
   \resizebox{12cm}{5cm}{
		\begin{tikzpicture}

     \draw (2,4.3) -- (12.5,4.3);
     \node at (1.8,4.3) {$\hat{M}$};
      \node at (12.7,4.3) {$\hat{M}$};
			\node at (1.8,1.5) {$M$};
   
			\node at (10.5,2.7) {$M'$};

\draw (1.8,2.8) ellipse (0.3cm and 1.8cm);
\draw (2.7,-0.2) ellipse (0.15cm and 1cm);
\node at (2.7,0.5) {$E$};
\node at (2.7,-0.7) {$\hat{E}$};
   
			\draw (2.8,0.6) -- (3.2,0.6);
   \draw (2.8,-0.7) -- (6,-0.7);
    \draw (7,-0.7) -- (9.8,-0.7);
			
			\draw (2,1.5) -- (3.2,1.5);

			\draw (3.2,-0.5) rectangle (4.3,3.5);
			\node at (3.8,1.5) {$\enc^t$};
			

			\node at (4.7,3) {$X_1$};
			\node at (8.25,3) {$X'_1$};
  
			\draw (4.3,2.8) -- (6,2.8);
			\draw (7,2.8) -- (8.6,2.8);
			\draw (9.5,2.5) -- (11.2,2.5);
   \draw (9.5,1) -- (9.8,1);
			
			\node at (4.7,0.4) {$X_t$};
			\node at (8.25,0.4) {$X_t'$};
			\draw (4.3,0.2) -- (6,0.2);
			\draw (7,0.2) -- (8.6,0.2);

            \draw [dashed] (6.5,1.2) -- (6.5,1.8);
               \draw [dashed] (4.7,1) -- (4.7,2.5);
               \draw [dashed] (8.3,1) -- (8.3,2.5);
                   \draw (6,-0.1) rectangle (7,-0.9);
   
			\draw (6,2.1) rectangle (7,2.9);
			\node at (6.5,2.5) {$\Lambda_1$};
			\draw (6,0.1) rectangle (7,0.9);
			\node at (6.5,0.5) {$\Lambda_t$};
   \node at (6.5,-0.5) {$\Lambda_{t+1}$};

			\draw (5.0,-1.2) rectangle (8.0,3.2);

			\draw (8.6,-0.5) rectangle (9.5,3.2);
   \draw (9.8,-1) rectangle (10.5,2);
			\node at (9,1.5) {$\dec^t$};
   \node at (10.15,0.5) {\text{Bell}};
   \draw (10.5,0.5) -- (11.6,0.5);
			
		\end{tikzpicture} }
		\caption{Quantum TDC against $\lo^{t+1}$.}\label{fig:tsplitstatenoentanglment}
	\end{figure}
 
}

\subsection{Tamper-Detection Codes in the Bounded Storage Model}
\label{subsection:overview-bounded-storage}

Instead, to construct our 3-split TDCs in the `bounded storage' model $\lo_a^3$ (with high rate), we appeal to recent constructions of split-state quantum non-malleable codes by \cite{Boddu2023SplitState,BBJ23} against $\lo_*$. They observed that a certain \textit{augmented} property of classical non-malleable codes could be used to achieve non-malleability for quantum states. Here we briefly overview their approach and highlight our modifications.  \\

\vspace{-10pt}
\begin{figure}[h]
\noindent\fbox{%
    \parbox{\textwidth}{%
      \noindent \textbf{The 3-Split Quantum Non-Malleable Codes by} \cite{Boddu2023SplitState,BBJ23} combined:
\begin{enumerate}[(a)]
    \item $(\Enc_\NM, \Dec_\NM)$: A quantum secure \textit{augmented} 2-split non-malleable code.\\
    \item A family of unitary 2-designs $\{C_R\}_{R\in K}$\footnotemark. 
\end{enumerate}
To encode a message $\psi$, their encoding channel 

\begin{enumerate}
    \item Samples a uniformly random key $R$ and encodes it into the non-malleable code, $(X, Y) = \Enc_{\NM}(R)$\\
    \item Authenticates $\psi$: $C_R\psi C_R^\dagger$ (on some register $Q$), using the family of 2-designs keyed by $R$.
    \end{enumerate}
        2 of the 3 parties hold the classical registers $X, Y$, and the last holds $Q$: $(Q, X, Y)$.

}
}
    \label{fig:qnmcs}
\end{figure}
\footnotetext{Here we point out that any Quantum Authentication Scheme (QAS) would suffice \cite{Barnum2001AuthenticationOQ,Hayden2016TheUC,BW16,garg2016new}, as would a non-malleable \textit{randomness encoder} in place of a non-malleable code \cite{KOS18,BBJ23}. For background on these components, refer to \cref{section:prelim}.}
\vspace{-10pt}

The non-malleability of this scheme against local adversaries with shared entanglement, $\lo^3_*$, relies crucially on the \textit{augmented} security of the quantum-proof non-malleable code. Informally, suppose the two shares $X, Y$ are tampered to $X', Y'$ using shared entanglement, and $W$ denotes any quantum side-information held by one of the adversaries. By definition, the resulting joint distribution (state) over the original uniformly random key, the recovered key $R'=\Dec_\NM(X',Y')$, \textit{and the side-information}, is a convex combination of the original key and a fixed, independent key (See \cref{lem:qnmcodesfromnmext}):
\begin{equation}
     R, R', W \approx p\cdot U_{R=R'}\otimes W + (1-p) \cdot U_R\otimes (R', W),
\end{equation}
\noindent where $p\in [0, 1]$ is independent of $R$, and $U_{R=R'}$ denotes the recovered key is the same as the original and uniformly random. 

It turns out that the non-malleability of the secret key, suffices to achieve non-malleability for the quantum message. In the first case above, the key is recovered and is independent of the side information. By the properties of the two-design unitaries, the message is authenticated \cite{Aharonov2008InteractivePF}, and thereby we either recover it, or reject. In the second case, the key is lost to the decoder. Thereby, from the decoder's perspective, it receives a message scrambled by a random unitary $C_R$, and therefore `looks' encrypted and independent of the original message.

Inspired by our reductions, we differ from the outline above in only one detail: we encrypt the message state $\psi$ \textit{together} with $\lambda$ halves of EPR pairs. The remaining EPR halves are handed to one of the other two adversaries, together with one of the NMC splits. In detail,

\begin{figure}[h]
\noindent\fbox{%
    \parbox{\textwidth}{%
      \noindent \textbf{Our Construction of 3-Split Tamper-Detection Codes} is defined by an encoding channel $\Enc_\td^\lambda$, parametrized by an integer $\lambda,$ which:

      \begin{enumerate}
    \item Prepares $\lambda$ EPR pairs, on quantum registers $E, \hat{E}$.\\
    \item Samples a uniformly random key $R$ and encodes it into the 2-split non-malleable code $(X, Y)=\Enc_\NM(R)$\\
    \item Encrypts $E$ together with the message $\psi$, using a family of 2-design unitaries keyed by $R$. Let the resulting register be $Q$.
\end{enumerate}
\noindent The final 3 shares are $(Q, Y\hat{E},X)$.
}
}
    \label{fig:qtdcs}
\end{figure}

By combining the properties of the quantum secure non-malleable code with a Bell basis measurement (on states of low Schmidt rank), we prove that the construction above is a secure tamper-detection code against an even stronger `bounded storage' adversarial model, which we refer to as $\lo_{(a, *, *)}^3$. In $\lo_{(a, *, *)}^3$, the 3 adversaries hold a pre-shared entangled state 
 on registers $A_1, A_2, A_3$, where the 1st party (the one holding the register $Q$) is limited to $|A_1|\leq a$ qubits, but the other two may be unbounded. As we will see shortly, this model plays a particularly important role in our constructions of tamper-detecting secret sharing schemes.

\suppress{

To encode a message $\psi$, their encoding channel begins by sampling two uniformly random classical sources $X, Y$, and using them to compute a key $r = \nmext(X, Y)$. The resulting code hands $X$ to one of the three adversaries, $Y$ to the second one, and to the third adversary hands an encryption of $\psi:$ $C_r\psi C_r^\dagger$ (on some register $Q$). The resulting 3 shares are $(Q, Y, X)$.\\

The non-malleability of this scheme against local adversaries with shared entanglement, $\lo^3_*$, relies crucially on the security of the (quantum-proof) non-malleable extractor. Informally, suppose the two sources $X, Y$ are tampered with using shared entanglement, resulting in $\Tilde{X}, \Tilde{Y}$ and potentially quantum side-information $W$ held by one of the adversaries. Then, by definition, the resulting distribution over the recovered key $\nmext(\Tilde{X}, \Tilde{Y})$ is a convex combination of the original key and a fixed key independent of the original one:
\begin{equation}
     \nmext(X, Y), \nmext(\Tilde{X}, \Tilde{Y}), Y, W \approx p\cdot U_{R=R'}\otimes (YW) + (1-p) \cdot U_R\otimes (R'YW)
\end{equation}

\noindent Where $p\in [0, 1]$ is independent of $r$, and $U_{R=R'}$ denotes the recovered key is the same as the original and uniformly random. It turns out that in both of these two cases, we achieve our goal of quantum non-malleability: In the first case, the key is recovered, and by the properties of the two-design the message is authenticated - and thereby we either recover it, or reject. In the second case, the key is lost to the decoder, and thereby the recovered message itself looks encrypted (and thus ``fixed" or independent of the original message).\\

\noindent \textbf{Our Construction of 3-Split Tamper-Detection Codes.} Inspired by our reductions, we differ from the outline above in only one detail: we encrypt the message state $\psi$ \textit{together} with $\lambda$ halves of EPR pairs. The remaining EPR halves are handed to one of the other two adversaries, together with one of the NMC splits. In detail, we define an encoding channel $\Enc_\td^\lambda$, parametrized by an integer $\lambda,$ which:
\begin{enumerate}
    \item Prepares $\lambda$ EPR pairs, on quantum registers $E, \hat{E}$.
    \item Samples a uniformly random key $R$ and encodes it into the 2-split non-malleable code $(X, Y)=\Enc_\NM(R)$
    \item Encrypts $E$ together with the message $\psi$, using a family of 2-designs keyed by $R$. Let the resulting register be $Q$.
\end{enumerate}
\noindent The final 3 shares are $(Q, Y\hat{E},X)$.

By combining the properties of the quantum secure non-malleable code with that of the Bell basis measurement (on states of low Schmidt rank), we prove that the construction above is a secure tamper-detection code against an even stronger ``bounded storage" adversarial model, which we refer to as $\lo_{a, *, *}^3$. In $\lo_{a, *, *}^3$, the 3 parties hold a pre-shared entangled state 
 on registers $A_1A_2A_3$, where the 1st party (holding the register $Q$) is limited to $|A_1|\leq a$ qubits, but the other two may be unbounded. As we will see shortly, this model plays a particularly important role in our constructions of tamper-detecting secret sharing schemes.

}

\subsection{Tamper-Detecting Secret Sharing Schemes against \texorpdfstring{$\lo$}{Lo}}
\label{subsection:overview-td-ss}

We dedicate this section to an overview of our construction of ramp secret sharing schemes (for classical messages) that detect local tampering. At a very high level, our secret sharing schemes are based on the concatenation of classical secret sharing schemes, with our tamper-detection codes. The key challenge lies in reasoning about the security of our tamper-detection codes when composed in parallel, which we address by combining them with extra leakage-resilient properties of the underlying classical scheme.

The outline of our construction lies in a compiler by \cite{ADNOP19}. Their compiler takes secret sharing schemes for arbitrary access structures as input, and produces either leakage-resilient or non-malleable secret sharing schemes for the same access structure. We begin with a brief description:

\vspace{-5pt}
\begin{figure}[H]
\noindent\fbox{%
    \parbox{\textwidth}{%
    \vspace{5pt}

    \noindent \textbf{The Compiler by} \cite{ADNOP19}. To encode a message $m$, they begin by 

    \begin{enumerate}
        \item  Encoding it into the base secret sharing scheme $(M_1, \cdots, M_p)\leftarrow \share(m)$.\\
        \item Then, each share $M_i$ is encoded into some (randomized) \textit{bipartite} coding scheme $(\Enc, \Dec)$, to obtain two shares $L_i$ and $R_i$.\\
        \item For each $i\in [n]$, the new compiled shares amount to giving $L_i$ to the $i$-th party, and a copy of $R_i$ to every other party. At the end of this procedure, the $i$-th party has a compiled share $S_i$, where 
        \begin{equation}
    S_i = (R_1, \cdots, R_{i-1}, L_i, R_{i+1}, \cdots, R_p)
\end{equation}
    \end{enumerate}

}} \label{fig:compiler-adn19}
\end{figure}

\vspace{-10pt}

The natural decoding algorithm takes all pairs of shares $(L_i, R_i)$ within an authorized set of parties, recovers $M_i\leftarrow \Dec(L_i, R_i)$, and then uses the reconstruction algorithm for the secret sharing scheme to recover the message $m$. By instantiating this compiler using different choices of the underlying bipartite coding scheme (strong seeded extractors or non-malleable extractors), they construct secret sharing schemes satisfying different properties (leakage resilience or non-malleability).

\begin{figure}[h]
\noindent\fbox{%
    \parbox{\textwidth}{
    \vspace{5pt}
    \noindent \textbf{Our Construction} attempts a tripartite version of the compiler by \cite{ADNOP19}. We combine
\begin{enumerate}[(a)]
    \item $(\Enc^\lambda_\td, \Dec_\td^\lambda)$, our $3$-split-state TDC in the bounded storage model.\\
    
    \item $(\share, \rec)$, a classical $t$-out-of-$p$ secret sharing scheme, which is additionally leakage-resilient against a certain local quantum leakage model.
\end{enumerate}

\noindent On input a classical message $m$, we similarly begin by 

\begin{enumerate}
    \item Encoding it into the base secret sharing scheme $(M_1, \cdots, M_p)\leftarrow \share(m)$.\\
    \item Then, every triplet of parties $a<b<c \in [n]$ encodes a copy of their shares $M_a, M_b, M_c$ into a certain 3-out-of-3 ``gadget", $\Enc_\triangle(M_a, M_b,M_c)$, which is then re-distributed between the three parties.
\end{enumerate}
}}
\caption{A ``Tamper-Detecting" Secret Sharing Scheme}
   \label{fig:tdss-compiler1}
\end{figure}

We refer the reader to \cref{fig:tdss-compiler1} for a description of our compiler. The most technical part of our work is to find a choice of $\Enc_\triangle$ which guarantees a local version of tamper detection for the shares in the gadget, and can be combined with the leakage-resilience of the underlying secret sharing scheme to ensure tamper detection of the actual message. We motivate this challenge and our choice of $\Enc_\triangle$ below.\\

\noindent \textbf{Weak Tamper Detection and the Selective Bot Problem.} The most natural tripartite candidate for $\Enc_\triangle$ would directly be our tamper-detection code. That is, $M_a, M_b, M_c$ are jointly encoded into $\Enc_\td(M_a||M_b||M_c)$, resulting in 3 (entangled) quantum registers $A, B, C$, which are handed to parties $a, b, c$ respectively. 

One can readily verify that this construction already offers a certain ``weak" tamper detection guarantee. If any three parties $a, b, c$, were to tamper with their shares, then by the properties of $\Enc_\td$\footnote{Technically, we additionally require $\Enc_\td$ to satisfy a strong form of 3-out-of-3 secret sharing, where any two shares are maximally mixed.}, one can show that the \textit{marginal} distribution over the recovered shares $M_a', M_b', M_c'$ is near a convex combination of the original shares $M_a, M_b, M_c$, or rejection. Similarly, given tampered shares of any authorized subset of parties $T\subset [p]$, after decoding all the copies of $\Enc_\td$ in $T$, by a standard union bound we either reject or recover all the honest shares of $T$. By the correctness of the secret sharing scheme, this implies we either recover the message, or reject, $\Dec\circ \Lambda\circ \Enc(m) \in \{m, \bot\}$ with high probability. 

The reason this code only offers ``weak" tamper detection is that \textit{the probability the decoder rejects may depend on the message} (See \cref{def:weak-td} for a formal definition). That is, whether the gadgets in $T$ reject may correlate with all the shares of $T$, and therefore the message itself, violating tamper detection/non-malleability. This ``selective bot" problem is a fundamental challenge in the design of non-malleable secret sharing schemes \cite{GK16}, and we combine two new ideas to overcome it.\\

\noindent \textbf{A Multipartite Encoder with staggered Entanglement}. One might hope that instead of simply relying on the privacy of the underlying secret sharing scheme, one could leverage stronger properties to simplify the tampering process (to some other effective tampering on the classical shares). While this is indeed roughly the approach we undertake, the main obstacle lies in that to implement $\Enc_\td(M_a||M_b||M_c)$, one of the parties $i\in \{a, b, c\}$ will contain a register with an encryption of all three shares. Any simulation\footnote{By simulation, we mean that the joint distribution over the recovered shares is the same as that of the quantum process.} of $\Enc_\td$ acting directly on the shares would then seem to require communication, which is too strong to impose on the classical secret sharing scheme. 

Our first idea is to instead use independent tamper-detection codes $\Enc_\td^{\lambda_i}(M_i)$ for each $i\in \{a, b, c\}$, using different numbers $\lambda_i$ of ``trap" EPR pairs $(E_i\hat{E}_i)$. This new gadget $\Enc_\triangle$ can be implemented using only \textit{local knowledge} of $m_i$ and pre-shared entanglement between the parties, but no communication.

\vspace{-10pt}

\begin{figure}[h]
\noindent\fbox{%
    \parbox{\textwidth}{%
    \vspace{5pt}

   $\Enc_\triangle(M_a, M_b, M_c)$: Each party $i\in \{a, b, c\}$,
    \begin{enumerate}
        \item Encodes their share into $\Enc_\td^{\lambda_i}(M_i)$, resulting in registers $(Q_i,Y_i\hat{E}_i, X_i)$.\\
        \item Party $i$ keeps the register $Q_i$, hands $X_i$ to party $i-1$, and both $Y_i, \hat{E}_i$ to party $i+1$.
    \end{enumerate}

    The decoding channel $\Dec_\triangle$ simply runs $\Dec_\td^{\lambda_i}$ on the 3 TDCs, and rejects if anyone of them does.
    }
}
    \label{fig:gadget}
\end{figure}

\vspace{-10pt}

We find it illustrative to imagine parties $a, b, c$ in a triangle, where each party $i$ hands the EPR halves $\hat{E}_i$ to the party immediately to their left in the ordering $c>b>a$. As we will see shortly, the locality in this scheme plays an important role in addressing the selective bot problem, however, it also comes at a cost. Suppose we tamper with $\Enc_\triangle(M_a, M_b, M_c)$ using a channel in $\lo^3$. Since $\Enc_\triangle$ is comprised of 3 copies of $\Enc_\td$ ``in parallel", we apriori have no global guarantee on all 3 of them. 

Nevertheless, we prove that we can recover \textit{two out of the three shares} by picking an increasing sequence of EPR pairs, say, $\lambda_a, \lambda_b, \lambda_c = \lambda, 2\cdot\lambda, 3\cdot \lambda$ (where $a<b<c$). This follows by carefully inspecting the marginal distribution on any single recovered share $M_i'$. We observe that the effective tampering channel on the $i$th tamper-detection code can be simulated by a channel in the bounded storage model $\lo_{(\lambda_{i-1}, *, *)}^3$: where the other two copies of $\Enc_\td$ act as pre-shared entanglement which can aid the adversaries.\footnote{Recall from \cref{subsection:overview-bounded-storage} that $\lo_{(e, *, *)}^3$ corresponds to the set of 3-split-state adversaries aided by a pre-shared entangled state where only one of the parties is limited to $\leq e$ qubits, but the other two may have unbounded sized registers.} Since $\Enc_\td^{\lambda_i}$ is secure against $\lo_{(\lambda_{i-1}, *, *)}$ whenever $\lambda_i>\lambda_{i-1}$, we acquire weak tamper detection for both $M_b$ and $M_c$ (but not $M_a$).\footnote{This idea resembles \cite{GK16} approach to non-malleable secret sharing by combining secret sharing schemes with different privacy parameters to acquire one-sided independence. } 

We point out that this is the main technical reason for which we relax our model to ramp secret sharing (instead of threshold): our decoder will have essentially no guarantees on the smallest share of $T$. Nevertheless, by the same union bound, all the other shares of $T$ except for the smallest are ``weak" tamper-detected. If we assume $|T|\geq t+1$, we acquire weak tamper detection for the message as well. \\

\noindent \textbf{Correlations between Triangles and Leakage-Resilient Secret Sharing.} To conclude our construction, we show how to leverage leakage-resilient secret sharing to ensure the probability the decoder rejects doesn't depend on the message. Together with the weak tamper detection guarantee inherited by the outer tamper-detection codes, this implies the secret sharing scheme is actually (strongly) tamper detecting.

At a high level, we devise a \textit{bipartite} decoder which captures the following scenario. Suppose we partition an authorized subset $T = A\cup B$ into disjoint, unauthorized sets of size $\geq 3$. After independent tampering on the shares of $T$, suppose we use $\Dec_\triangle$ to decode all the triplets $a, b, c$ contained entirely in $A$ or $B$, but we completely ignore the gadgets which cross between $A$ and $B$. What is the probability any of the gadgets rejects? Can it correlate with the message $m$?

It turns out this scenario is cleanly captured by a model of quantum leakage-resilience between $A$ and $B$ on the underlying classical secret sharing scheme. Note that the amount of shared entanglement between $A$, $B$ comprises $\mu = O(p^3\cdot \lambda)$ qubits, due to all the EPR pairs in the triplets of $A\cup B$. Using the locality of the gadget $\Enc_\triangle$, we claim that the joint measurement performed to decide whether $A$ or $B$ reject in our compiler (a binary outcome POVM), can be simulated directly on the underlying classical secret sharing scheme by a $\mu$ qubit ``leakage channel" from $A$ to $B$.

In slightly more detail, this leakage channel comprises the following experiment on the classical secret sharing scheme. The parties in $A$ are allowed to jointly process their classical shares $M_A$ into a (small) $\mu \ll |M_A|$ qubit quantum state $\sigma_{M_A}$, which is then sent to the parties in $B$. In turn, the parties in $B$ are allowed to jointly use their own shares $M_B$, together with the leakage $\sigma_{M_A}$, to attempt to guess the message. If $B$ can't distinguish between any two messages $m^0, m^1$ with bias more than $\epsilon_{\lr}$, that is, 
\begin{equation}
    M_B^0, \sigma_{M_A^0} \approx_{\epsilon_{\lr}}  M_B^1, \sigma_{M_A^1} \text{ where }M_A^b, M_B^b = \big(\share_\lrss(m^b)\big)_{A\cup B},
\end{equation}

\noindent then we say the secret sharing scheme is leakage-resilient (to this bipartite quantum leakage model) with error  $\epsilon_{\lr}$ (see \cref{def:lrss}). In the context of our compiler, if we choose the underlying secret sharing scheme to be such an $\lrss$, we conclude that the probability $A, B$ reject is (roughly) $\epsilon_{\lr}$ close to independent of the message. Together with the weak tamper detection guarantee, we all but conclude the proof.

It only remains to construct such classical leakage-resilient secret sharing schemes. Unfortunately, this bipartite quantum leakage model is slightly non-standard, and to the best of our knowledge, there are no ``off-the-shelf" solutions in the literature. This is due in part to the joint nature of the leakage model, but also to its security against quantum adversaries. In \Cref{section:lrss}, we show how to construct leakage-resilient secret sharing schemes in this model following simple modifications to a classical $\lrss$ compiler by \cite{CKOS22}.

\suppress{

\subsection{Non-Malleable Secret-Sharing against \texorpdfstring{$\locc$}{LOCC}}

\label{subsection:overview-locc}

In this subsection, we present an overview of our constructions of non-malleable secret sharing schemes for classical messages, secure against $\locc$. However, to introduce our proof techniques, we begin by presenting a much simpler 4-split-state non-malleable code secure against $\locc$ (see \Cref{fig:locc-nmcs}). 

\begin{figure}[h]
\phantomsubcaption
    \label{fig:locc-nmcs}

\noindent\fbox{%
    \parbox{\textwidth}{%
   \vspace{5pt}\textbf{Non-Malleable Codes in the 4-Split-State Model}. To devise our codes against $\locc^4$, we combine:

    \begin{enumerate}[(a)]
    \item $(\Enc_\locc, \Dec_\locc)$, a family of bipartite $\locc$ data hiding schemes (See \cref{definition:locc-datahiding}).\\
    \item $(\Enc_\NM, \Dec_\NM)$, a classical 2-split-state non-malleable code.
    \end{enumerate}

To encode a message $m$, 

\begin{enumerate}
    \item We first share it into two classical shares using the non-malleable code $(
L,R)\leftarrow \Enc_\NM(m)$.\\
    \item Encode $L$ into the bipartite data hiding scheme supported on quantum registers $(L_1, L_2)$, and similarly $R$ into $(R_1, R_2)$. The result is a quantum state on 4 registers, $(L_1, L_2, R_1, R_2)$:
\end{enumerate}
\begin{equation}
    \Enc^4(m) = \Enc_\locc^{\otimes 2}\circ \Enc_\NM(m)
\end{equation}
}}

\end{figure}

Our decoder is relatively straightforward: We first decode the ``inner" data hiding schemes, and then decode the ``outer" non-malleable code. Now, suppose we tamper with the shares using a quantum channel $\Lambda\in \locc^4$. To prove the non-malleability of our construction, it suffices to show that the distribution over recovered messages $M'\leftarrow \Dec\circ\Lambda\circ \Enc(m)$, can be simulated by a split-state tampering channel $\Lambda'\in \lo^2$ acting directly on the underlying 2-split-state code\footnote{In particular, an independent convex combination over pairs of deterministic functions $(f, g)$.}, a technique known as a \textit{non-malleable reduction.} \\

\noindent \textbf{Simulating the Communication Transcript with Shared Randomness.} In general, since the density matrix describing the code-state $\Enc^4(m)$ is a separable state across the left/right cut, after $\locc$ tampering, it remains separable. What this implies is that the post-tampered state on each side of the cut can only depend on the original classical share $L$ (or $R$) on that side, in addition to the entire classical communication transcript $C\in \{0, 1\}^*$ of the $\locc$ protocol. If we were to then condition on $L, R$ \textit{and the transcript} $C$, the distribution over the recovered classical shares $\Tilde{L}, \Tilde{R}$ is conditionally independent:
\begin{equation}
   p(\Tilde{L}, \Tilde{R}|L, R, C) = p(\Tilde{L}|L, C) \times p(\Tilde{R}|R, C)
\end{equation}
However, apriori, $C$ may carry information about $L, R$. To conclude, we leverage the data hiding guarantee to prove the distribution over the transcript $C$ is nearly independent of $L, R$. If so, then it can be sampled using shared randomness between the two splits, and subsequently one can sample $\Tilde{L}, \Tilde{R}$ to complete the non-malleable reduction.\\

\noindent We are now in a position to describe our application to non-malleable secret sharing schemes against $\locc^p$:

\begin{figure}[h]
\phantomsubcaption
    \label{fig:locc-nmss}
\noindent\fbox{%
    \parbox{\textwidth}{%
   \vspace{5pt}\textbf{Non-Malleable Secret Sharing against $\locc$.} We combine:

    \begin{enumerate}[(a)]
    \item $(\Enc_\locc, \Dec_\locc)$, a family of bipartite $\locc$ data hiding schemes. Moreover, we assume $\Enc_\locc$ is separable.\\
    \item $(\share_\nmss, \rec_\nmss)$, a classical $t$-out-of-$p$ non-malleable secret sharing scheme secure against \textit{joint tampering} (\cref{definitions:joint-tampering}).
    \end{enumerate}

The outline of our construction is similar to the compiler by \cite{ADNOP19}. To encode a message $m$,

\begin{enumerate}
    \item We begin by secret-sharing it into the classical scheme: $(M_1, \cdots, M_p)\leftarrow \share_\nmss(m)$.\\
    \item Then, each share $M_i$ is encoded into the bipartite data hiding scheme. In fact, we create $p-1$ copies of $\Enc_\locc(M_i)$ on bipartite registers $(L_{i, j}, R_{i, j}), \forall j\in [p]\setminus \{i\}$.
\end{enumerate}
The final $i$th share, a quantum register $S_i$, collects all the ``left" halves $L_{i, j}$ and ``right" halves $R_{j, i}$:
\begin{equation}
    S_i = \big(L_{i, 1}, L_{i, 2}, \cdots, L_{i, i-1}, L_{i, i+1}, \cdots,  L_{i, p}, R_{1, i}, \cdots, R_{i-1, i}, R_{i+1, i},\cdots, R_{p, i})
\end{equation}
}}
    
\end{figure}

\noindent \textbf{Separable State Data Hiding Schemes and their Composability.} A key challenge in establishing the security of the \cite{ADNOP19} compiler in the $\locc$ setting lies in the composability of $\locc$ data hiding schemes. That is, suppose two parties $A, B$ are handed shares of two code-states $\Enc_\locc(x), \Enc_\locc(y)$ of two different data hiding schemes. Can the possession of a (possibly entangled) copy of $\Enc_\locc(y)$ assist $A, B$ in distinguishing $x$? 

This question of a composable definition of data hiding schemes was posed in \cite{HaydenP07}, and to the best of our knowledge remains largely unstudied. Nevertheless, \cite{DivincenzoLT02} presented a proof (credited to Wooters), that if the data hiding code-states were in fact separable states, then their distinguishability would be additive. Informally, this is since any $\locc$ channel which distinguishes $x$ using $\Enc_\locc(y)$ as ``advice" can be simulated using $\locc$ directly on $\Enc_\locc(x)$. Shortly thereafter, \cite{Eggeling2002HidingCD} presented a construction of bipartite $\locc$ data hiding schemes with separable state encodings, which we use in our secret sharing schemes. 

To warm up our proof techniques, we show that already by instantiating the compiler with any threshold secret sharing scheme $(\share_\sh, \rec_\sh)$ and a separable data hiding scheme, the resulting code is a ``$\locc$ secret sharing scheme". That is, even if an unauthorized subset $T$ is allowed arbitrary quantum communication to other parties within $T$, as well as unbounded classical communication between all the $p$ parties, they would not be able to distinguish the message. \\

\noindent \textbf{A Multipartite Decoder and a Non-Malleable Reduction to Joint Tampering.} To prove the non-malleability of our compiler, we require stronger properties of the underlying secret sharing scheme. In particular, to decode any of the classical shares $(M_1, \cdots, M_p)$, note that we require joint, entangling operations across the shares, which immediately breaks standard ``individual" tampering models (like $\lo$).

Instead, inspired by our tamper-detecting secret sharing schemes, we devise a \textit{multipartite} decoder. That is, upon receiving any authorized subset of parties $T\subset [p]$, we begin by partitioning $T$ into disjoint pairs and triplets of parties: $T = T_1\cup T_2\cdots T_{\lfloor\frac{t}{2}\rfloor}$. Then, we decode the data hiding schemes which are entirely contained within each $T_j$, but completely ignore the schemes between $T_j$ and any other $T_{j'}$.

The advantage in this multipartite decoder lies in that - conditioned on the classical transcript $c\in \{0, 1\}^*$ of the $\locc$ tampering channel - the post-tampered shares $M'_{T_j}$ within each partition $j$ are independent of the shares of the other partitions. This is analogous to our 4-split construction, where the transcript ``mediates" the tampering channel, and moreover we similarly prove that the transcript itself can be treated as shared randomness. Put together, the distribution over recovered shares $M'_{T_j}$ is close to a convex combination of (deterministic) tampering functions acting only on the shares $M_{T_j}$ within each partition $T_j$, which is precisely a (mild) form of joint tampering on the classical secret sharing scheme. We use a construction of $\nmss$ secure against such a joint tampering model by \cite{GK16} to instantiate our result.

\suppress{

\subsubsection{Connections to Quantum Encryption}
\label{subsubsection:overview-encryption}

We prove three connections between non-malleability, tamper detection, and secret-sharing. The first of which is that any $t$ split tamper-detection code must be secret-sharing, in the sense that any individual share (of the $t$) is encrypted:

\begin{theorem}\label{theorem:results-tdss}
    Any $t$ split quantum tamper-detection code with error $\epsilon$ must encrypt each single share\footnote{I.e. for all $i\in [t]$ there exists a fixed state $\sigma_i$ such that the reduced density matrix $\Tr_{\neg i}\Enc(\psi)\approx_\delta \sigma_i$ for all messages $\psi$.} with error $\delta \leq \epsilon^{1/2}$.
\end{theorem}

This result is comparable and analogous to properties of quantum authentication. Indeed, \cite{Barnum2001AuthenticationOQ} proved that quantum authentication codes, in contrast to their classical counterparts, must encrypt the message. Otherwise, an adversary that can distinguish between encodings of $\ket{x}, \ket{y}$ with even some tiny bias would be able to map between the superpositions $\ket{x}+\ket{y}$ and $\ket{x}-\ket{y}$ with some non-trivial fidelity: and thereby break the authentication guarantee. 

Our second connection to secret sharing is for 2-split-state non-malleable codes. We show that, similar to a well known classical result for 2-split-state non-malleable codes \cite{Aggarwal2015LeakageResilientNC}, quantum 2 split-state non-malleable codes (against entangled adversaries) must encrypt each share: 

\begin{lemma}\label{lemma:results-2nm-ss}
Any quantum non-malleable code against 2 split \emph{entangled} adversaries with error $\epsilon$ encrypts each single share with error $\delta\leq  O(\epsilon)$.
\end{lemma}

Therefore, every non-malleable code against $2$-split entangled adversaries is $2$-out-of-$2$ non-malleable secret sharing. Our proof approach is analogous to that of \cite{Aggarwal2015LeakageResilientNC}, but with a fundamentally quantum twist. Informally, we show that if the left adversary could distinguish between encodings of two messages $\Tr_R \Enc(\psi_0), \Tr_R \Enc(\psi_1)$ with some non-trivial bias, then they could swap $\psi_0$ with $\psi_1$ with roughly the same bias by using pre-shared (entangled) copies of $\Enc(\psi_0), \Enc(\psi_1)$. We defer the details to \Cref{section:secret-sharing}.

Our techniques, however, crucially fall short of proving whether $t$ split non-malleable codes are single-share encrypting. Indeed, whereas classical $t\geq 3$ split-state non-malleable codes are known to \textit{not necessarily} be secret sharing\footnote{Consider, for instance, the 3-split code which encodes the message into a 2-split-state NM code and places the message in the clear in the third share.}, we consider it to be an important open question to answer the same question in the quantum setting. Nevertheless, by combining \Cref{theorem:results-tdss} with our reduction in \Cref{theorem:results-tsplit-reduction}, we are able to provide a simple construction of $t$ split non-malleable codes with a secret sharing guarantee:

\begin{corollary}\label{corollary:results-tnm-ss}
Let $(\Enc_\NM, \Dec_\NM)$ be any $t$ split quantum non-malleable code of message length $k$ and error $\epsilon$, and consider the quantum code $(\Enc_\lambda, \Dec_\lambda)$ of message length $k-\lambda$ defined by encoding $\psi$ together with $\lambda$ random bits into $\Enc_\NM$. Then, $(\Enc_\lambda, \Dec_\lambda)$ is a quantum non-malleable code with error $\epsilon$ and encrypts each share with error $O(2^{-\lambda/2} + \epsilon^{1/2})$.
\end{corollary}

Observe that for any $k-\lambda$ qubit state $\psi$, $\Enc_\lambda(\psi) = \Enc_\NM\big(\psi\otimes \frac{\mathbb{I}}{2^{\lambda}}\big)$ is the encoding of $\psi$ with maximally mixed states appended to it.


}

}

\subsection{Discussion}

We dedicate this section to a discussion on the directions and open questions raised in our work.  \\


\noindent \textbf{Optimal tamper detection in the bounded storage model.} Our TDCs against $\lo^3_a$ achieve tamper detection even against adversaries whose amount of pre-shared entanglement is at most a constant fraction of the blocklength $n$. Can one design codes robust against $(1-o(1))\cdot n$ qubits of pre-shared entanglement, approaching the non-malleability threshold?\\

\noindent \textbf{Is tamper detection against $\lo$ possible in the 2-split model?} Based on our constructions for $\lo$, there are a number of ``natural" 2-split candidates which combine classical 2-split non-malleable codes with entanglement across the cut. Nevertheless, they all seem to fall short of tamper detection, due in part to adversaries which correlate their attack on the entangled state with their attack \textit{on both of the shares} of the non-malleable code. \\

\noindent \textbf{Stronger secret sharing schemes.} Can one extend our results to quantum secret sharing schemes, where the messages are quantum states? Extending \cite{GK16} or \cite{BS19} approach to tamper-detecting secret sharing schemes for quantum messages seems to require completely new ideas, including at the very least a definition of leakage resilient quantum secret sharing. \\

\noindent \textbf{Paper Organization} We organize the rest of this work as follows.

\begin{itemize}
    \item [--] In \Cref{section:prelim}, we present the necessary background in quantum information, and define the quantum authentication schemes we require in our work.
    \item [--] In \cref{section:coding-definitions}, we present the definitions of the split-state tampering models considered in this work, and in \Cref{section:bounded-storage} we present the explicit construction of tamper-detection codes in the bounded storage model of \cref{theorem:results-main}. 
    \item [--] In \cref{section:prelim-ss}, we present definitions of variants of secret sharing schemes we require. Subsequently, in \Cref{section:tamper-detecting-secret-sharing}, we present the construction of ramp secret sharing schemes which detect unentangled tampering of \cref{theorem:results-lo-ss}. 
    \item [--] Finally, in \Cref{section:connections-encryption}, we discuss connections between tamper-resilient quantum codes and quantum encryption schemes.
\end{itemize}

\noindent Additionally,

\begin{itemize}
    \item [--] In \Cref{section:cliffords}, we present relevant facts and background on Pauli and Clifford operators as well as unitary 2-designs.
    
    \item [--] In \Cref{section:lrss}, we present our construction of secret sharing schemes resilient to quantum leakage, based on a compiler by \cite{CKOS22}, which we use in \Cref{section:tamper-detecting-secret-sharing}.
\end{itemize}

\section*{Acknowledgements}

A preliminary version of this work appeared on arxiv at the reference \cite{bergamaschi2023splitstate}. We thank João Ribeiro, Venkat Guruswami, Rahul Jain, Marshall Ball, Umesh Vazirani and Nathan Ju for illuminating discussions on classical and quantum non-malleability and tamper detection.


TB acknowledges support by the National Science Foundation Graduate Research Fellowship under Grant No. DGE 2146752.

\section{Preliminaries}
\label{section:prelim}

In \cref{subsection:qi}, we cover some important basic prerequisites from quantum information theory. We refer the reader familiar with quantum states and channels directly to \Cref{subsection:prelim-2-designs}. We define Clifford authentication schemes in \cref{subsection:prelim-2-designs}. 

\subsection{Quantum and Classical Information Theory}\label{subsection:qi}

The notation and major part of this subsection is Verbatim taken from~\cite{Boddu2023SplitState}.

\subsubsection{Basic General Notation}

 We denote sets by uppercase calligraphic letters such as $\X$ and use uppercase roman letters such as $X$ and $Y$ for both random variables and quantum registers. We denote the uniform distribution over $\{0,1\}^d$ by $U_d$.
For a {\em random variable} $X \in \X$, we use $X$ to denote both the random variable and its distribution, whenever it is clear from context. 

\subsubsection{Quantum States and Registers}

Consider a finite-dimensional Hilbert space $\cH$ endowed with an inner-product $\langle \cdot, \cdot \rangle$ (we only consider finite-dimensional Hilbert spaces). A quantum state (or a density matrix or a state) is a positive semi-definite operator on $\cH$ with trace $1$. 
It is called {\em pure} if and only if its rank is $1$. Let $\ket{\psi}$ be a unit vector on $\cH$, that is $\langle \psi,\psi \rangle=1$.  
With some abuse of notation, we let $\psi$ represent both the state and the density matrix $\ketbra{\psi}$. 
 
A {\em quantum register} $A$ is associated with some Hilbert space $\cH_A$. Define $\vert A \vert := \log\left(\dim(\cH_A)\right)$ the size (in qubits) of $A$. The identity operator on $\cH_A$ is denoted $\id_A$. $U_A$ denotes the maximally mixed state in $\cH_A$. For a sequence of registers $A_1,\dots,A_n$ and a set $T\subseteq[n]$, we define the projection according to $T$ as $A_T=(A_i)_{i\in T}$.
Let $\mathcal{L}(\cH_A)$ represent the set of all linear operators on the Hilbert space $\cH_A$, and $\mathcal{D}(\cH_A)$ the set of all quantum states on the Hilbert space $\cH_A$. The state $\rho$ with subscript $A$ indicates that $\rho_A \in \mathcal{D}(\cH_A)$. 

Composition of two registers $A$ and $B$, denoted $AB$, is associated with the Hilbert space $\cH_A \otimes \cH_B$.  For two quantum states $\rho\in \mathcal{D}(\cH_A)$ and $\sigma\in \mathcal{D}(\cH_B)$, $\rho\otimes\sigma \in \mathcal{D}(\cH_{AB})$ represents the tensor product of $\rho$ and $\sigma$. Let $\rho_{AB} \in \mathcal{D}(\cH_{AB})$. Define
$$ \rho_{B} \defeq \tr_{A}{\rho_{AB}} \defeq \sum_i (\bra{i} \otimes \id_{B})
\rho_{AB} (\ket{i} \otimes \id_{B}) , $$
where $\{\ket{i}\}_i$ is an orthonormal basis for the Hilbert space $\cH_A$.
The state $\rho_B\in \mathcal{D}(\cH_B)$ is the marginal or partial trace of $\rho_{AB}$ on $B$. Given $\rho_A\in\mathcal{D}(\cH_A)$, a {\em purification} of $\rho_A$ is a pure state $\rho_{AB}\in \mathcal{D}(\cH_{AB})$ such that $\tr_{B}{\rho_{AB}}=\rho_A$. 

\subsubsection{Quantum Measurements, Channels and Instruments}

A {\em Hermitian} operator $H:\cH_A \rightarrow \cH_A$ is such that $H=H^{\dagger}$. A projector $\Pi \in  \mathcal{L}(\cH_A)$ is a Hermitian operator such that $\Pi^2=\Pi$. A {\em unitary} operator $V_A:\cH_A \rightarrow \cH_A$ is such that $V_A^{\dagger}V_A = V_A V_A^{\dagger} = \id_A$. The set of all unitary operators on $\cH_A$ is  denoted by $\mathcal{U}(\cH_A)$. A positive operator-valued measure ({\em POVM}) $\{M_i\}_i$ is a collection of Hermitian operators where $0 \le M_i \le \id$ and $\sum_i M_i = \id$. Each $M_i$ is associated to a measurement outcome $i$, where outcome $i$ occurs with probability $\tr[M_i\rho]$ when measuring state $\rho$. We use the shorthand $\bar{M} \defeq \id - M$, and $M$ to represent $M \otimes \id$, whenever $\id$ is clear from context. 

A quantum {map} $\cE: \mathcal{L}(\cH_A)\rightarrow \mathcal{L}(\cH_B)$ is a completely positive and trace preserving (CPTP) linear map. A CPTP map $\cE$ is described by the Kraus operators $\{ K_i : \cH_A\rightarrow \cH_B \}_i$ such that $\cE(\rho) = \sum_i K_i \rho K^\dagger_i$ but $\sum_i K^\dagger_i K_i =\id_A$. A (trace non-increasing) CP map $\cE$ is similarly described by the Kraus operators $\{ K_i\}_i$ but $\sum_i K^\dagger_i K_i  < \id_A.$  Finally, we use the notation $\mathcal{N}\circ \mathcal{M}$ to denote the composition of maps $\mathcal{N}, \mathcal{M}$.

\subsubsection{Norms, Distances, and Divergences}

This section collects definitions of some important quantum information-theoretic quantities and related useful properties. 

\begin{definition}[Schatten $p$-norm]
    For $p\geq 1$ and a matrix $A$, the \emph{Schatten $p$-norm} of $A$, denoted by $\|A\|_p$, is defined as $\| A \|_p  \defeq (\tr(A^\dagger A)^{\frac{p}{2}})^{\frac{1}{p}}.$
\end{definition}

\begin{definition}[Trace distance]
    The \emph{trace distance} between two states $\rho$ and $\sigma$ is given by $\|\rho-\sigma\|_1$. We write $\rho\approx_\eps \sigma$ if $\|\rho-\sigma\|_1\leq \eps$.
\end{definition}

\begin{fact}[Data-processing inequality, Proposition 3.8 \cite{Tomamichel2015QuantumIP}]
\label{fact:data}
Let $\rho, \sigma$ be states and $\cE$ be a (trace non-increasing) CP map. Then, $\Vert  \cE(\rho)  - \cE(\sigma)\Vert_1  \le \Vert \rho  - \sigma \Vert_1$, which is saturated whenever $\cE$ is a CPTP map corresponding to an isometry.
\end{fact}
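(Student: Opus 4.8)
The plan is to give a short self-contained argument resting on the decomposition of a Hermitian operator into its positive and negative parts (the Jordan decomposition), together with two elementary properties of a (trace non-increasing) CP map $\cE$: it preserves positive semidefiniteness, and it does not increase the trace of a positive semidefinite operator, i.e. $\tr\cE(X)\le \tr X$ for all $X\ge 0$. The latter is exactly the meaning of ``trace non-increasing'', since $\tr\cE(X) = \tr(\cE^\dagger(\id)\,X)$ and $\cE^\dagger(\id)\le \id$.

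First I would set $A \defeq \rho-\sigma$, a Hermitian operator, and write its Jordan decomposition $A = A_+ - A_-$ with $A_+,A_-\ge 0$ of orthogonal support, so that $\|A\|_1 = \tr A_+ + \tr A_-$. By linearity, $\cE(A) = \cE(A_+) - \cE(A_-)$, and the triangle inequality for the trace norm gives $\|\cE(A)\|_1 \le \|\cE(A_+)\|_1 + \|\cE(A_-)\|_1$. Since $\cE$ is positive, $\cE(A_+)$ and $\cE(A_-)$ are positive semidefinite, hence their trace norm equals their trace; and since $\cE$ is trace non-increasing, $\tr\cE(A_\pm)\le \tr A_\pm$. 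Chaining these, $\|\cE(A)\|_1 \le \tr A_+ + \tr A_- = \|A\|_1$, which is the claimed monotonicity. I would also note in passing the equivalent ``dual'' proof: using $\|A\|_1 = \max\{\tr(MA): M=M^\dagger,\ \|M\|_\infty\le 1\}$ for Hermitian $A$, pick the optimal $M$ for $\cE(A)$, push it through the Hilbert--Schmidt adjoint $\cE^\dagger$, and use that $\cE^\dagger$ is positive and subunital to conclude $\|\cE^\dagger(M)\|_\infty\le 1$; this carries the same content.

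For the saturation claim, suppose $\cE$ is CPTP and arises from an isometry, $\cE(X) = VXV^\dagger$ with $V^\dagger V = \id$. Then $\cE$ is reversed by the CP map $\cE^\dagger(Y) = V^\dagger Y V$, which is itself trace non-increasing because $VV^\dagger \le \id$, and satisfies $\cE^\dagger\circ\cE = \mathrm{id}$. Applying the monotonicity already proved, first to $\cE$ and then to $\cE^\dagger$, gives $\|\rho-\sigma\|_1 = \|\cE^\dagger\cE(\rho) - \cE^\dagger\cE(\sigma)\|_1 \le \|\cE(\rho)-\cE(\sigma)\|_1 \le \|\rho-\sigma\|_1$, forcing every inequality to be an equality. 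An alternative one-liner: $V(\rho-\sigma)V^\dagger$ has exactly the same nonzero eigenvalues as $\rho-\sigma$ (as $V$ is injective and $V(\rho-\sigma)V^\dagger$ is supported on $\mathrm{range}(V)$), hence the same trace norm.

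This is a standard textbook fact, so I do not expect a genuine obstacle; the only points that deserve a line of care are (i) spelling out that ``trace non-increasing'' means $\cE^\dagger(\id)\le\id$, which is what justifies $\tr\cE(X)\le\tr X$ on $X\ge 0$, and (ii) ensuring the saturation argument invokes only the monotonicity just established, so as to avoid circularity. Given that the paper attributes the statement to \cite{Tomamichel2015QuantumIP}, in the final write-up I would present the two-line Jordan-decomposition argument for the inequality and the reversibility argument for saturation, and otherwise defer to that reference.
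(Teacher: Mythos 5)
The paper does not prove this statement; it records it as a Fact and defers to Proposition~3.8 of \cite{Tomamichel2015QuantumIP}. Your argument is correct and is the standard one: the Jordan decomposition $\rho-\sigma = A_+ - A_-$ together with positivity of $\cE$ (so that $\|\cE(A_\pm)\|_1 = \tr\cE(A_\pm)$) and the trace non-increasing property ($\cE^\dagger(\id)\le\id$, hence $\tr\cE(A_\pm)\le\tr A_\pm$) gives the inequality, and your two arguments for saturation under an isometric conjugation --- reversibility via the trace non-increasing CP map $Y\mapsto V^\dagger Y V$, or directly observing that $V(\rho-\sigma)V^\dagger$ has the same nonzero spectrum as $\rho-\sigma$ --- are both sound and non-circular. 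Nothing is missing; your care in spelling out that ``trace non-increasing'' means $\cE^\dagger(\id)\le\id$ is exactly the point that makes the trace step rigorous on the positive parts.
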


\subsubsection{Quantum-Classical, Separable, and Low Schmidt Rank States}

This section collects definitions of certain registers and operations on them.

\begin{definition}\label{def:classicalinpurestate}
Let $\X$ be a set. 
A {\em classical-quantum} (c-q) state $\rho_{XE}$ is of the form \[ \rho_{XE} =  \sum_{x \in \X}  p(x)\ket{x}\bra{x} \otimes \rho^x_E , \] where ${\rho^x_E}$ are density matrices. Whenever it is clear from context, we identify the random variable $X$ with the register $X$ via $\mathbb{P}[X=x]=p(x)$. 
\end{definition}

\begin{definition}[Conditioning] \label{def:conditioning}
Let  
\[ \rho_{XE} =  \sum_{x \in \{0,1\}^n}  p(x)\ket{x}\bra{x} \otimes \rho^x_E , \]
be a c-q state. For an event $\mathcal{S} \subseteq \{0,1\}^n$, define  $$\mathbb{P}[\mathcal{S}]_\rho \defeq  \sum_{x \in \mathcal{S}} p(x) \quad \textrm{and} \quad (\rho|X\in \mathcal{S})\defeq \frac{1}{\mathbb{P}[\mathcal{S}]_\rho} \sum_{x \in \mathcal{S}} p(x)\ket{x}\bra{x} \otimes \rho^x_E.$$
We sometimes shorthand $(\rho|X\in \mathcal{S})$ as $(\rho|\mathcal{S})$ when the register $X$ is clear from context. 
\end{definition}

Entangled bi- or multi-partite quantum states are those which are not separable. 

\begin{definition}\label{def:separable}
    A bipartite mixed state $\rho$ is said to be separable if it can be written as a convex combination over product states, 
    \begin{equation}
        \rho = \sum_i p_i\cdot \sigma_i\otimes \tau_i
    \end{equation}
    \noindent Where $\{\sigma_i,  \tau_i\}_i$ are collections of density matrices, and $p_i\in \mathbb{R^+}$ s.t. $\sum_ip_i=1$.
\end{definition}

Entanglement can't be created across multi-partite quantum states using only local operations and classical communication (a.k.a $\locc$), and thereby separable states remain separable under $\locc$. The notion of a Schmidt rank extends this behavior to states with only a finite amount of entanglement across its partitions. For a bipartite pure state $\ket{\psi}$ in $\cD(\mathcal{H}_A\otimes \mathcal{H}_B)$, we write its Schmidt
decomposition as: 

\begin{equation}
    \ket{\psi} = \sum_{i=1}^R \sqrt{\lambda_i} \ket{a_i}\otimes \ket{b_i}
\end{equation}

\noindent Where $\{\ket{a_i}\}_{i\in [R]}$ and $\{\ket{b_i}\}_{i\in [R]}$ define collections of orthonormal vectors. The integer $R$ corresponds to the Schmidt rank of $\psi$, and is a measure of how entangled $\psi$ is across the bipartition. Moreover, $R\leq |A|, |B|$. \cite{Terhal1999SchmidtNF} introduced a notion of Schmidt number for density matrices, generalizing the definition for pure states. 

\begin{definition}[The Schmidt Number, \cite{Terhal1999SchmidtNF}]\label{def:schmidt-number} A bipartite density matrix $\rho$ has Schmidt number $R$ if:
    \begin{enumerate}
        \item For any decomposition of $\rho = \sum_i p_i \ketbra{\psi_i} $, at least one of the vectors with $p_i>0$, $\ket{\psi_i}$ has Schmidt rank at least $R$.
        \item There exists a decomposition of $\rho$ with all vectors $\{\ket{\psi_i}\}$ of Schmidt rank at most $R$.
    \end{enumerate}
\end{definition}

In other words, the Schmidt number of a mixed state $\rho$ is the minumum (over all possible decompositions of $\rho$) of the maximum Schmidt rank in the decomposition. \cite{Terhal1999SchmidtNF} proved that much like the the Schmidt rank for pure states, the mixed state Schmidt number cannot increase under unentangled operations. 

\begin{proposition}[\cite{Terhal1999SchmidtNF}]\label{prop:schimidt}
    The Schmidt number of a density matrix cannot increase under local quantum operations and classical communication $(\locc)$.
\end{proposition}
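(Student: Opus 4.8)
The plan is to isolate two elementary closure properties of the family $\mathcal{S}_k$ of bipartite density matrices of Schmidt number at most $k$, and then observe that every $\locc$ channel is built from maps that respect both properties. First I would record that $\mathcal{S}_k$ is closed under (countable) convex combinations: if $\rho=\sum_x q_x\rho_x$ with each $\rho_x\in\mathcal{S}_k$, then choosing for each $x$ a decomposition $\rho_x=\sum_i p_{x,i}\ketbra{\psi_{x,i}}$ into pure states of Schmidt rank $\le k$ — which exists by part (2) of \cref{def:schmidt-number} — and concatenating these decompositions exhibits $\rho$ as a mixture of Schmidt-rank-$\le k$ pure vectors, so $\rho\in\mathcal{S}_k$.

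The substantive step is that $\mathcal{S}_k$ is preserved by one-sided, and hence product, Kraus operators. Concretely, if $\rho\in\mathcal{S}_k$ and $K$ acts on $\mathcal{H}_A$, $L$ on $\mathcal{H}_B$, then the (normalized) state $(K\otimes L)\rho(K\otimes L)^\dagger$ again lies in $\mathcal{S}_k$. The reason is the pure-state bound: applying $K\otimes L$ to $\ket{\psi}=\sum_{j=1}^r\sqrt{\lambda_j}\ket{a_j}\ket{b_j}$ produces $\sum_{j=1}^r\sqrt{\lambda_j}\,(K\ket{a_j})(L\ket{b_j})$, a sum of $r$ product vectors, so its Schmidt rank is at most $r$; applying this to every vector in a Schmidt-rank-$\le k$ decomposition of $\rho$ and renormalizing gives the claim. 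Iterating over an alternating sequence of one-sided maps on $A$ and on $B$, together with the remarks that (i) adjoining a local ancilla is tensoring with a vector that is product across the cut and hence leaves Schmidt ranks unchanged, and (ii) copying a classical message $j$ to both sides turns a vector $\ket{\psi}_{AB}$ into $\ket{j}_{A'}\ket{j}_{B'}\ket{\psi}_{AB}$, which still has Schmidt rank equal to that of $\ket{\psi}$ across the $A'A:B'B$ cut, shows that any single deterministic ``branch'' of an $\locc$ protocol sends $\mathcal{S}_k$ into $\mathcal{S}_k$.

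Finally I would assemble the pieces. Writing an $\locc$ channel $\Lambda$ as an average over its classical transcripts — each transcript fixing a composition of one-sided operations of the type handled above, equivalently using that $\locc$ channels are separable operations $\Lambda(\rho)=\sum_i (A_i\otimes B_i)\rho(A_i\otimes B_i)^\dagger$ \cite{Chitambar2012EverythingYA} — every unnormalized branch output lies in $\mathbb{R}_{\ge 0}\cdot\mathcal{S}_k$ by the second step, and $\Lambda(\rho)$ is their sum, i.e.\ a genuine convex combination of normalized states in $\mathcal{S}_k$, hence in $\mathcal{S}_k$ by the first step. I expect the only real obstacle to be bookkeeping rather than mathematics: namely, justifying cleanly that a general (adaptive, possibly unbounded-round) $\locc$ channel reduces to the product-Kraus / separable-operation form, and tracking the shared classical registers in the Schmidt-rank count; once that reduction is in place, the pure-state Schmidt-rank inequality does all the work.
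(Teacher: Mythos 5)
The paper does not in fact reproduce a proof of this proposition — it is stated as a citation to \cite{Terhal1999SchmidtNF}, so there is no in-paper argument against which to compare. That said, your argument is correct and is the standard one (and essentially the one given by Terhal and Horodecki): the set $\mathcal{S}_k$ of Schmidt-number-$\le k$ states is convex by construction, a product Kraus operator $K\otimes L$ cannot raise the Schmidt rank of a pure vector, and every $\locc$ channel lies inside the separable channels, so each branch maps $\mathcal{S}_k$ into $\mathbb{R}_{\ge 0}\cdot\mathcal{S}_k$ and the total output is a convex mixture back in $\mathcal{S}_k$. The one point you flag as ``bookkeeping'' — going from adaptive, possibly unbounded-round $\locc$ to the separable form — is indeed the only place that needs care: finite-round $\locc$ channels are manifestly separable (each transcript yields a product of one-sided Kraus operators, with ancilla adjunction and classical copying not increasing Schmidt rank, exactly as you note), and the unbounded-round case follows because $\mathcal{S}_k$ is trace-norm closed, so limits of finite-round protocols stay inside it. Nothing is missing.
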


\suppress{
\begin{proposition}[\cite{Terhal1999SchmidtNF}]\label{prop:schimidt2} \mycomment{provide a proof or say its obvious}
Let $\rho_{ABC}$ be a state such that 
    the Schmidt number across the bipartition $(AB,C)$ is atmost $1$. Then the Schmidt number across the bipartition $(A,BC)$ is atmost $2^{\vert B \vert}$.
\end{proposition}
}

\subsection{Clifford Authentication Schemes}
\label{subsection:prelim-2-designs}

Roughly speaking, a quantum authentication scheme (QAS) enables two parties who share a secret key, to reliably communicate an encoded quantum state over a noisy channel with the following guarantee: If the state is untampered, the receiver accepts, whereas if the state was altered by the channel, the receiver rejects \cite{Barnum2001AuthenticationOQ,Aharonov2008InteractivePF}.

\cite{Aharonov2008InteractivePF} showed how to construct QASs from unitary 2-designs, and in particular \textit{the Clifford group}.\footnote{The Clifford group is the collection of unitaries that map Pauli matrices to Pauli matrices (up to a phase). We refer the reader to \Cref{section:cliffords} for a more comprehensive definitions of Paulis, Cliffords and their properties.} To encode a message state $\psi$, \cite{Aharonov2008InteractivePF} apply a uniformly random Clifford unitary $C$ to $\psi$, which later is decoded by applying $C^\dagger$ (potentially after adversarial tampering). The description of $C$ is used as a shared secret key between sender and receiver, and is unknown to the adversary. \cite{Aharonov2008InteractivePF} showed this scheme either recovers the message, or completely scrambles it, by appealing to Pauli randomization properties of the Clifford group. Here, we briefly summarize two of these properties, and describe a key-efficient family of unitary 2-designs by \cite{CLLW16}. 

The most basic property is that conjugating any state by a random Pauli (or Clifford), encrypts it:

\begin{fact}[Pauli's are $1$-designs]\label{fact:notequal}  Let $\rho_{AB}\in \cD(\cH_A\otimes \cH_B)$, and $\cP(\cH_A)$ be the Pauli group on $\cH_A$. Then,
$$\frac{1}{\vert \cP(\cH_A) \vert} \sum_{Q \in \cP(\cH_A) } (Q \otimes \id)   \rho_{A B}  ( Q^\dagger \otimes \id  )  = U_{A } \otimes  \rho_B.$$
\end{fact}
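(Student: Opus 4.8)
The plan is to verify the identity by a direct computation using the standard generalized Pauli operators and their orthogonality relations, reducing everything to the case where $\rho_{AB}$ is a pure state via linearity. First I would fix an orthonormal basis $\{\ket{j}\}_{j\in \mathbb{Z}_d}$ for $\cH_A$ with $d = \dim(\cH_A)$, and write the generalized Pauli group $\cP(\cH_A)$ as generated by the shift $X\ket{j} = \ket{j+1 \bmod d}$ and phase $Z\ket{j} = \omega^j\ket{j}$ operators, with $\omega = e^{2\pi i/d}$; the $d^2$ distinct operators (up to phase) are $\{X^a Z^b : a, b\in \mathbb{Z}_d\}$. Since both sides of the claimed identity are linear in $\rho_{AB}$, it suffices to prove it for $\rho_{AB} = \ket{\psi}\!\bra{\psi}$, and in fact, again by linearity in each tensor factor, it suffices to check it on matrix units $\ket{j}\!\bra{k}_A \otimes \sigma_B$; so the whole claim reduces to showing
\begin{equation}
  \frac{1}{d^2}\sum_{a, b\in \mathbb{Z}_d} (X^aZ^b)\, \ket{j}\!\bra{k}\, (X^aZ^b)^\dagger = \delta_{jk}\,\frac{\id_A}{d}.
\end{equation}

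The key step is the computation of the left-hand side. Acting with $X^aZ^b$ on $\ket{j}$ gives $\omega^{bj}\ket{j+a}$, so $(X^aZ^b)\ket{j}\!\bra{k}(X^aZ^b)^\dagger = \omega^{b(j-k)}\ket{j+a}\!\bra{k+a}$. Summing over $b\in \mathbb{Z}_d$ yields $\sum_b \omega^{b(j-k)} = d\,\delta_{jk}$, which already kills all off-diagonal terms; when $j = k$ we are left with $\frac{1}{d}\sum_{a\in \mathbb{Z}_d}\ket{j+a}\!\bra{j+a} = \frac{1}{d}\id_A$. This establishes the reduced identity, and hence by linearity $\frac{1}{|\cP(\cH_A)|}\sum_{Q}(Q\otimes \id)\rho_{AB}(Q^\dagger\otimes\id) = \frac{\id_A}{d}\otimes \rho_B = U_A\otimes \rho_B$, using that the normalized sum is insensitive to the global phases appearing in $\cP(\cH_A)$ (conjugation cancels phases, so one may sum over the $d^2$ phase-free representatives).

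I do not anticipate a genuine obstacle here — this is a routine and well-known calculation. The only mild subtlety worth stating carefully is the bookkeeping around the phases in $\cP(\cH_A)$: the Pauli \emph{group} as usually defined includes scalar multiples $\omega^c X^aZ^b$, but these act identically under conjugation, so the averaging over the full group coincides with averaging over the $d^2$ representatives $\{X^aZ^b\}$, which is what makes the $1/d^2$ normalization come out right. One could alternatively cite the standard fact that the normalized Pauli operators form an orthonormal (Hilbert–Schmidt) basis for $\mathcal{L}(\cH_A)$, from which $\frac{1}{d^2}\sum_Q Q M Q^\dagger = \frac{\tr(M)}{d}\id$ for any $M$ follows immediately by expanding $M$ in that basis; applying this with $M$ ranging over operators of the form $\bra{\cdot}\rho_{AB}\ket{\cdot}$ on the $B$ side gives the result directly. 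I would present the short explicit computation as the main argument and mention the orthonormal-basis viewpoint as a remark.
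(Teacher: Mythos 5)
The paper states this as a standard fact without a proof, so there is no argument of its own to compare against; your proof stands on its own and is essentially correct. The one thing to be careful about is the choice of group: your main calculation is written for the Weyl--Heisenberg (clock-and-shift) group on $\mathbb{Z}_d$, whereas the paper's $\cP(\cH_A)$ (per the definition of Pauli operators in its appendix) is the $|A|$-qubit Pauli group, i.e.\ tensor products of single-qubit $I,X,Y,Z$; when $\dim\cH_A = 2^n$ with $n>1$ these are genuinely different groups. Your computation adapts verbatim once you index by $\mathbb{F}_2^n$ instead of $\mathbb{Z}_d$: writing $X^a Z^b$ with $a,b\in\mathbb{F}_2^n$, one has $(X^aZ^b)\ket{j}\!\bra{k}(X^aZ^b)^\dagger = (-1)^{b\cdot(j\oplus k)}\ket{j\oplus a}\!\bra{k\oplus a}$, the $b$-sum gives $2^n\delta_{jk}$, and the $a$-sum then yields $\id_A/2^n$, matching the $1/4^{|A|}$ normalization. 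Alternatively, the orthonormal-basis remark you close with (the $4^{|A|}$ normalized $n$-qubit Paulis form a Hilbert--Schmidt orthonormal basis of $\cL(\cH_A)$, hence $\frac{1}{d^2}\sum_Q QMQ^\dagger = \frac{\tr M}{d}\id$ for all $M$) applies directly to the paper's $\cP(\cH_A)$ and is the cleanest way to phrase the proof in the paper's conventions.
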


\suppress{
We now raise two relevant properties of 2-designs. First, conjugating any state by a unitary chosen at random from a 2-design (in fact, even a 1-design), encrypts it. 

\begin{fact}[$1$-design]\label{fact:notequal}  Let $\rho_{AB}$ be a state. Let $\cSC(\cH_A)$ be the subgroup of Clifford group as in \cref{lem:subclifford}. Then,
$$\frac{1}{\vert \cSC(\cH_A) \vert} \sum_{C \in \cSC(\cH_A) } (C \otimes \id)   \rho_{A B}  ( C^\dagger \otimes \id  )  = U_{A } \otimes  \rho_B.$$
\end{fact}

A unitary design is a distribution over unitaries which resembles the Haar measure. $2$-designs, which are indistinguishable from truly random unitaries by any procedure that queries it twice, have found numerous applications to quantum authentication. In this work, following \cite{Boddu2023SplitState}, we leverage a particular key-efficient (near-linear) construction of 2-designs by Cleve, Leung, Liu and Wang \cite{CLLW16}:

\cite{Aharonov2008InteractivePF} showed how to use families of 2-designs (and Cliffords in particular) to design quantum authentication schemes. Informally, a state is encrypted using a Clifford unitary $C$ chosen uniformly at random, and then decrypted by applying $C^\dagger$ after some adversary has tampered with it. A description of $C$ is used as a shared secret key between sender and receiver. \cite{Aharonov2008InteractivePF} showed this scheme either recovers the message, or completely scrambles it, by appealing to the ``Clifford Twirl" property (described in \Cref{section:cliffords}). Here, we require a strong version of this property in the presence of side information, which we depict visually in \cref{fig:splitstate21} and prove in \Cref{section:cliffords}.

}

The second property we require is known as the ``Clifford Twirl" property. Informally, applying a uniformly random Clifford operator (by conjugation) maps any Pauli $Q$ to a uniformly random non-identity Pauli operator. \cite{Aharonov2008InteractivePF} used this property to argue the security of their authentication scheme after tampering.  We actually require a (standard) strengthening of the Clifford twirl property in the presence of side-information, which we depict visually in \cref{fig:splitstate21} and prove in \Cref{section:cliffords} for completeness:

\begin{lemma}[Clifford Twirl with Side Information]\label{lem:twirl-wsi}
 Consider \cref{fig:splitstate21}. Let $R$ denote a uniformly random key of $\log |\cC(\cH_A)|$ bits, where $\cC(\cH_A)$ denotes the Clifford group on $\cH_A$. Let $ \Lambda:\cD(\cH_A\otimes \cH_{E})\to \cD(\cH_{A}\otimes\cH_{E})$ be an arbitrary CPTP map on $A, E$. Then, there exists CP maps $\Phi_1, \Phi_2 :\cL(\cH_E)\to \cL(\cH_{E}) $ acting only on register $E$, depending only on $\Lambda$, such that
 \[  \rho_{\hat{A}AE} = \frac{1}{\vert \cC(\cH_A)\vert } \sum_{C \in \cC(\cH_A)}  C^\dagger\circ  \Lambda\circ  C  (\psi_{ \hat{A}AE})  \approx_{\frac{2}{2^{2 \vert A \vert}-1}}   \Phi_1 ( \psi_{\hat{A}AE}) + (\Phi_2( \psi_{\hat{A}E}) \otimes U_A),\]
$\forall \psi_{A \hat{A}E}\in \cD(\cH_A\otimes \cH_{\hat{A}}\otimes \cH_E)$. Moreover, $\Phi_1+\Phi_2$ is CPTP. 
\end{lemma}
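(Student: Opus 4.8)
The statement is the standard Clifford twirl identity, augmented by a bystander register $E$ on which $\Lambda$ is allowed to act nontrivially, together with a further purifying register $\hat A$ that the twirl does not touch. The plan is to reduce the claim to the exact Pauli twirl identity by expanding the random Clifford conjugation into its action on the Pauli basis. First I would write $\Lambda$ in Kraus form, $\Lambda(\cdot)=\sum_j K_j(\cdot)K_j^\dagger$ with $K_j\in\cL(\cH_A\otimes\cH_E)$, and expand each $K_j$ in the Pauli basis of $\cH_A$: $K_j=\sum_{P\in\cP(\cH_A)} P\otimes M_{j,P}$ for operators $M_{j,P}\in\cL(\cH_E)$ (here $\cP(\cH_A)$ is the Pauli group on $|A|$ qubits, of size $2^{2|A|}$ up to phase). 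Then
\[
  C^\dagger\circ\Lambda\circ C(\psi_{\hat A A E}) \;=\; \sum_j \sum_{P,P'} (C^\dagger P C\otimes M_{j,P})\,\psi_{\hat A AE}\,(C^\dagger P'^\dagger C\otimes M_{j,P'}^\dagger).
\]
Averaging over $C$ and invoking the Clifford twirl identity for the second-moment operator — namely that $\frac{1}{|\cC(\cH_A)|}\sum_C (C^\dagger P C)\otimes(C^\dagger P' C)^\dagger$ vanishes unless $P=P'$, equals the identity-channel contribution when $P=P'=\id$, and equals a uniform average over all non-identity Paulis when $P=P'\neq\id$ — collapses the double sum to a single sum over $P$, splitting into the $P=\id$ term and the $P\neq\id$ terms.

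Carrying this out, the $P=\id$ contribution is exactly $\Phi_1(\psi_{\hat A AE})$ with $\Phi_1(\cdot):=\sum_j (\id\otimes M_{j,\id})(\cdot)(\id\otimes M_{j,\id}^\dagger)$, a CP map acting only on $E$ (tensored with identity on $A\hat A$). For the $P\neq\id$ terms, the Clifford twirl replaces each $P\otimes M_{j,P}\,(\cdot)\,P^\dagger\otimes M_{j,P}^\dagger$ by $\frac{1}{2^{2|A|}-1}\sum_{Q\neq\id}(Q\otimes M_{j,P})(\cdot)(Q^\dagger\otimes M_{j,P}^\dagger)$. I would then use the exact Pauli $1$-design identity (\cref{fact:notequal}) in the slightly modified form $\frac{1}{2^{2|A|}-1}\sum_{Q\neq\id} (Q\otimes\id)\,\sigma_{AE}\,(Q^\dagger\otimes\id) = \frac{2^{2|A|}}{2^{2|A|}-1}\,U_A\otimes\sigma_E - \frac{1}{2^{2|A|}-1}\,\sigma_{AE}$, applied with $\sigma_{AE}=\tr_{\hat A}$-free but viewed as a state on $A(\hat A E)$. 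This produces a term proportional to $U_A\otimes \Phi_2(\psi_{\hat A E})$ where $\Phi_2(\cdot):=\sum_j\sum_{P\neq\id} (\id\otimes M_{j,P})(\cdot)(\id\otimes M_{j,P}^\dagger)$ (again a CP map on $E$), plus a leftover term of the form $-\frac{1}{2^{2|A|}-1}\sum_{j,P\neq\id}(\id\otimes M_{j,P})\psi_{\hat A AE}(\id\otimes M_{j,P}^\dagger)$. This leftover is precisely the error: it is a difference of two (subnormalized) states each of trace at most $1$, scaled by $\frac{1}{2^{2|A|}-1}$, which by the triangle inequality has trace norm at most $\frac{2}{2^{2|A|}-1}$, giving the claimed bound. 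Finally, $\Phi_1+\Phi_2 = \sum_j\sum_{P}(\id\otimes M_{j,P})(\cdot)(\id\otimes M_{j,P}^\dagger)$, and the trace-preservation of $\Lambda$ — i.e. $\sum_j K_j^\dagger K_j = \id$, which upon taking $\tr_A$ and using orthogonality of Paulis yields $\sum_{j,P} M_{j,P}^\dagger M_{j,P} = \id_E$ — shows $\Phi_1+\Phi_2$ is CPTP.

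The one genuinely delicate point — the main obstacle — is bookkeeping the phases and normalization in the Clifford/Pauli twirl identities, in particular getting the constant $\frac{2}{2^{2|A|}-1}$ exactly right and making sure the "error" term really is a difference of two subnormalized states (so that its trace norm is at most the sum of the two traces, each $\le 1$). I would handle this by being careful that the coefficient of $U_A\otimes\Phi_2(\psi_{\hat A E})$ comes out as $\frac{2^{2|A|}}{2^{2|A|}-1}$ and separately isolating the $-\frac{1}{2^{2|A|}-1}(\cdot)$ correction, rather than trying to absorb everything at once; then the statement's approximate equality is $\Phi_1(\psi)+\Phi_2(\psi_{\hat A E})\otimes U_A$ on the nose plus the two small correction terms. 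A secondary (routine) point is to confirm that the Clifford group is indeed a unitary $2$-design so that the second-moment twirl has the stated form; this is standard and can be cited. Everything else is linear algebra.
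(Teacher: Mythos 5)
Your proposal matches the paper's own proof in essentially every respect: Kraus-decompose $\Lambda$, expand each Kraus operator in the Pauli basis of $\cH_A$ (your $K_j = \sum_P P\otimes M_{j,P}$ is the same grading the paper records via $M_i^Q := \sum_{j: P^{ij}_A = Q}\alpha_{ij}P^{ij}_E$), use the Clifford $2$-design twirl to kill cross terms $P\neq P'$ and to average the surviving $P\neq\id$ diagonal terms over all non-identity Paulis, then split the result into an identity-Pauli piece $\Phi_1$ and a non-identity piece $\Phi_2\otimes U_A$, with the $\frac{2}{2^{2|A|}-1}$ error coming from the difference between the uniform-over-nonidentity-Pauli average and the true depolarizing channel. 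Your derivation of $\sum_{j,P}M_{j,P}^\dagger M_{j,P}=\id_E$ by tracing out $A$ and using orthogonality of Paulis is exactly the paper's computation, and your error bookkeeping (the error is $\frac{1}{2^{2|A|}-1}$ times a difference of two subnormalized states) is the same estimate the paper obtains via monotonicity of trace distance under the trace-non-increasing CP map $\Phi_2$.
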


\begin{figure}[h]
\centering
\resizebox{10cm}{5cm}{
\begin{tikzpicture}

\node at (1,6.5) {$\hat{A}$};
\node at (13.2,6.7) {$\hat{A}$};
\draw (1.2,6.5) -- (13,6.5);
\draw (1,4.8) ellipse (0.3cm and 2.2cm);
\draw (2.6,3.9) rectangle (4,5.4);
\draw (9.3,3.9) rectangle (10.7,5.4);
\node at (1,4.5) {$A$};

\node at (3.3,4.6) {$C_R$};
\node at (10.1,4.6) {$C^\dagger_{R}$};


\draw (1.2,4.5) -- (2.6,4.5);
\draw (4,4.5) -- (6,4.5);
\draw (7,4.5) -- (9.3,4.5);
\draw (10.7,4.9) -- (13.2,4.9);




\node at (0.8,0.5) {$R$};
\draw (1,0.5) -- (10,0.5);


\draw  (3.2,0.5) -- (3.2,3.9);

\draw [dashed] (1.4,-1) -- (1.4,7.2);
\draw [dashed] (12.8,-1) -- (12.8,7.2);


\node at (1.25,-1.4) {$\psi$};
\node at (12.6,-1.4) {$\rho$};

\node at (1,2.8) {$E$};

\draw (1.2,2.8) -- (6,2.8);
\draw (7,2.8) -- (13.1,2.8);

\draw (10,0.5) -- (10,3.9);



\draw (6,2.3) rectangle (7,5);
\node at (6.5,3.5) {$ \Lambda$};

\node at (6.5,5.5) {$\mathcal{A}$};
\draw (5.2,2) rectangle (7.8,6);




\node at (13.2,2.8) {$E$};
\end{tikzpicture}}

\caption{Clifford Twirling with Side Information.}\label{fig:splitstate21}
\end{figure}

In this work, following \cite{Boddu2023SplitState}, we leverage a particular key-efficient (near-linear) construction of 2-designs by Cleve, Leung, Liu and Wang \cite{CLLW16}:

\begin{fact}[Subgroup of the Clifford group~\cite{CLLW16}]\label{lem:subclifford}
There exists a subgroup $\mathcal{SC}(\cH_A)$ of the Clifford group $\mathcal{C}(\cH_A)$ such that given any non-identity Pauli operators $P ,Q \in \cP(\cH_A)$ we have that
\begin{equation*}
    \vert \{ C \in \cSC(\cH_A) \vert C^\dagger P C =Q \} \vert = \frac{\vert \cSC(\cH_A)  \vert}{\vert\cP(\cH_A) \vert -1} \quad \textrm{and} \quad \vert \cSC(\cH_A)  \vert = 2^{5 \vert A \vert }-2^{3 \vert A \vert}.
\end{equation*} 
Moreover, there exists a procedure $\samp$ which given as input a uniformly random string $R\leftarrow \bits^{5|A|}$ outputs in time $\poly(|A|)$ a Clifford operator $C_R\in\mathcal{SC}(\cH_A)$ drawn from a distribution $2 \cdot 2^{-2|A|}$ close to the uniform distribution over $\cSC(\cH_A)$.
\end{fact}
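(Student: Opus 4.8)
The plan is to exhibit $\cSC(\cH_A)$ explicitly as the subgroup of \cite{CLLW16} and then verify its three advertised features --- the exact Pauli-mixing count, the order formula, and efficient near-uniform sampling --- by elementary group theory. Write $n=|A|$ and $q=2^{n}$, and recall the standard presentation of the Clifford group modulo phases: there is a surjection $\phi$ from $\cC(\cH_A)$ (mod phases) onto the symplectic group $\mathrm{Sp}(2n,\F_2)$ whose kernel is $\cP(\cH_A)$ (mod phases), an elementary abelian group of order $q^2$; under $\phi$, conjugation of a non-identity Pauli by a Clifford corresponds, up to a global phase, to the symplectic action of $\phi(C)$ on the nonzero labels $\F_2^{2n}\setminus\{0\}$. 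I would \emph{define} $\cSC(\cH_A)$ to be $\phi^{-1}(\mathrm{SL}(2,\F_q))$, after fixing one phase representative per coset; being the preimage of a subgroup it is automatically a subgroup of $\cC(\cH_A)$, and it automatically contains every Pauli.

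Next I would pin down the embedding $\mathrm{SL}(2,\F_q)\hookrightarrow \mathrm{Sp}(2n,\F_2)$ used here. Identify $\F_2^{2n}$ with $\F_q^2$, so that $\mathrm{GL}(2,\F_q)$ acts $\F_q$-linearly, hence $\F_2$-linearly. The determinant form $\omega_q(u,v)=u_1v_2+u_2v_1$ on $\F_q^2$ is alternating and non-degenerate, and in characteristic $2$ the composition $\omega_2:=\mathrm{Tr}_{\F_q/\F_2}\circ\,\omega_q$ is an $\F_2$-symplectic form (non-degeneracy follows from non-degeneracy of the trace pairing). Since $\mathrm{SL}(2,\F_q)$ preserves $\omega_q$ it preserves $\omega_2$, giving $\mathrm{SL}(2,\F_q)\le \mathrm{Sp}(2n,\F_2)$. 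Using $|\mathrm{SL}(2,\F_q)|=q(q^2-1)$ together with the kernel size $q^2$, one gets $|\cSC(\cH_A)|=q^2\cdot q(q^2-1)=q^3(q^2-1)=2^{5n}-2^{3n}$, and $|\cP(\cH_A)|-1=q^2-1$.

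For the Pauli-mixing count, the key input is that $\mathrm{SL}(2,\F_q)$ acts transitively on $\F_q^2\setminus\{0\}$: any nonzero $v$ extends to a basis of determinant $1$, so some element of $\mathrm{SL}(2,\F_q)$ carries $e_1$ to $v$. The stabilizer of $e_1$ is the unipotent subgroup $\left(\begin{smallmatrix}1&\ast\\0&1\end{smallmatrix}\right)$, of order $q$. Hence, for non-identity Paulis $P,Q$ with labels $p,q\in\F_2^{2n}\setminus\{0\}$, orbit--stabilizer gives exactly $q=2^n$ elements of $\mathrm{SL}(2,\F_q)$ mapping $p$ to $q$; lifting through the Pauli kernel (each element of which fixes every Pauli up to phase) multiplies this by $q^2$, so $|\{C\in\cSC(\cH_A): C^\dagger PC=Q\}|=q^3=2^{3n}=|\cSC(\cH_A)|/(|\cP(\cH_A)|-1)$, the equality ``$C^\dagger PC=Q$'' being read up to a global phase as is standard in this setting. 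This Pauli-mixing property is exactly the criterion of \cite{CLLW16} ensuring that $\cSC(\cH_A)$ (which contains all Paulis) is an exact unitary $2$-design. Matching the phase on the nose, if one insists, only requires choosing the coset representatives compatibly with conjugation on a generating set of Paulis; this is the one genuinely fussy point.

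Finally, for $\samp$, I would parametrize $\cSC(\cH_A)$ as a set by a pair (Pauli coset, element of $\mathrm{SL}(2,\F_q)$). The Pauli part uses $2n$ bits and is sampled exactly uniformly ($q^2$ values). An element of $\mathrm{SL}(2,\F_q)$ is specified by a nonzero first column ($q^2-1$ choices) and a point on an affine line ($q$ choices), so $3n$ bits suffice: map the first $2n$ bits into $\F_q^2\setminus\{0\}$ at statistical distance $\le 2^{-2n}$ from uniform, and the last $n$ bits exactly onto the line. Concatenating, $5n$ uniform bits yield an element within distance $\le 2\cdot 2^{-2n}$ of uniform over $\cSC(\cH_A)$, and the whole map runs in $\poly(n)$ time since it only involves $\F_{2^n}$-arithmetic plus the standard synthesis of a Clifford circuit from a symplectic matrix over $\F_2$. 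I expect the main obstacle to be the Pauli-mixing step --- fixing the correct symplectic embedding of $\mathrm{SL}(2,\F_q)$, checking transitivity, and handling phases --- whereas the order count and the sampling analysis are routine.
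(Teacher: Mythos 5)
Your reconstruction is correct and matches the construction of Cleve, Leung, Liu and Wang: $\cSC(\cH_A)$ is precisely the preimage of $\mathrm{SL}(2,\F_{2^{|A|}})\hookrightarrow\mathrm{Sp}(2|A|,\F_2)$ under the natural surjection from the Clifford group (mod phase) with the Pauli group as kernel, and your orbit--stabilizer count, order formula, and $5|A|$-bit near-uniform sampler are all exactly the CLLW argument. Note that the paper states this as a Fact cited to \cite{CLLW16} without giving a proof, so there is nothing in the paper's own treatment to compare against; the one point you already flag --- reading $C^\dagger P C = Q$ modulo global phase --- is indeed the standard convention here and is all that is needed downstream (in \cref{fact:notequal} and \cref{lem:twirl-wsi} only conjugation $C\rho C^\dagger$ appears, so phases are immaterial).
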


We remark that both properties above are true as stated even for unitaries chosen at random from $\mathcal{SC}(\cH_A)$ using $\samp$.

\newpage

\section{Coding-Theoretic Definitions in the Split-State Tampering Model}\label{section:coding-definitions}

In this section, we introduce several coding-theoretic notions of security in various adversarial settings. We begin in \cref{subsection:formal-tampering} by introducing the split-state tampering models considered in this work, and proceed in \cref{subsection:prelim-non-malleability} by discussing classical non-malleability, quantum non-malleability and quantum tamper detection.

\subsection{Formal Definitions of Split-State Tampering Models}
\label{subsection:formal-tampering}

In the absence of a secret key, it is impossible to offer tamper-resilience against arbitrary tampering channels. In this subsection, we formalize the various models of split-state tampering that we consider in this work. Let $t$ parties share a quantum state defined on $\cD(\mathcal{H}_1\otimes \cdots \otimes \mathcal{H}_t)$.

 \begin{definition}[$\lo^t$: Unentangled $t$-Split-State Model]\label{def:lot}
   These are $t$-split quantum adversaries \emph{without shared entanglement}, comprised of tensor product channels $\Lambda = \Lambda_1\otimes \cdots\otimes  \Lambda_t$ where each $\Lambda_i:\cL(\mathcal{H}_i)\rightarrow \cL(\mathcal{H}_i)$ is a CPTP map. 
 \end{definition}

If each channel $\Lambda_i$ were a deterministic function $f_i$ and the ciphertext were classical, then we we would recover the classical model \cite{LL12},\footnote{Moreover, codes which are secure against this model also capture security in the presence of shared randomness.} see \cref{fig:splitstate121}. \cite{ABJ22,Boddu2023SplitState,BBJ23} generalized the definition above to adversaries with an unbounded amount of pre-shared entanglement, a model we refer to as $\lo^t_{*}$.

\begin{definition}[$\lo^t_{*}$: Entangled $t$-Split-State Model]\label{def:lotsharedentanglementunbounded}
     These are $t$-split quantum adversaries with \emph{unbounded shared entanglement}, comprised of an arbitrary multi-partite ancilla state $\psi\in \cD(\mathcal{H}_{E_1}\otimes\cdots \otimes \mathcal{H}_{E_t})$, and a tensor product channel $(\Lambda_1\otimes \cdots\otimes \Lambda_t)$ where each $\Lambda_i:\cL(\mathcal{H}_i\otimes \mathcal{H}_{E_i})\rightarrow \cL(\mathcal{H}_i\otimes \mathcal{H}_{E'_i})$ is a CPTP map.
 \end{definition}

Tamper detection is impossible in $\lo^t_{*}$, due to the standard substitution attack wherein the adversaries store in $\phi$ a pre-shared copy of the encoding a fixed message. This impossibility motivates the following ``bounded storage model", where we restrict the amount of pre-shared entanglement between the parties. 

 \begin{definition}[$\lo^t_{(e_1,e_2, \ldots, e_t)}$: Bounded Storage Model]\label{def:lotsharedentanglement}
     These are $t$-split quantum adversaries with \emph{bounded shared entanglement}, where the ancilla state $\psi\in \cD(\mathcal{H}_{E_1}\otimes\cdots \otimes \mathcal{H}_{E_t})$, satisfies $|E_i|\leq e_i, \forall i\in [t]$.
 \end{definition}

 A visual representation of this bounded storage model with $t=3$ can be found in \cref{fig:splitstate211d}. Note $\lo^t_{(e_1,e_2, \ldots, e_t)} $ simultaneously generalizes $\lo^t$ (when each $e_i=0$) and $\lo^t_*$  (when each $e_i\rightarrow \infty$).

\subsection{Non-Malleable Codes and Tamper-Detection Codes}
\label{subsection:prelim-non-malleability}

We dedicate this section to define classical and quantum non-malleable codes and tamper-detection codes.

\begin{figure}[ht]
		\centering
  \resizebox{10cm}{5cm}{
	\begin{tikzpicture}
		
		\node at (1.8,1.5) {$M$};
		\node at (11,1.5) {$M'$};
		
		\draw (2,1.5) -- (3,1.5);
		\draw (3,-0.5) rectangle (4.5,3.5);
		\node at (3.8,1.5) {$\enc$};
		
		
		
		\node at (4.7,3) {$X$};
		\node at (8.1,3) {$X'$};
		\draw (4.5,2.8) -- (6,2.8);
		\draw (7,2.8) -- (8.5,2.8);
		\draw (10,1.5) -- (10.7,1.5);

            \node at (4.7,1.8) {$Y$};
		\node at (8.1,1.4) {$Y'$};
          \draw (4.5,1.6) -- (6,1.6);
		\draw (7,1.6) -- (8.5,1.6);
         \draw (6,1) rectangle (7,2);
         \node at (6.5,1.5) {$f_2$};
  
		\node at (4.7,0.4) {$Z$};
		\node at (8.1,0) {$Z'$};
		\draw (4.5,0.2) -- (6,0.2);
		\draw (7,0.2) -- (8.5,0.2);
		
		\draw (6,2.2) rectangle (7,3.2);
		\node at (6.5,2.7) {$f_1$};
		\draw (6,-0.2) rectangle (7,0.8);
		\node at (6.5,0.3) {$f_3$};
		
		\draw (5.2,-0.8) rectangle (7.8,3.8);

		
		
		\draw (8.5,-0.5) rectangle (10,3.2);
		\node at (9.2,1.5) {$\dec$};
		
	\end{tikzpicture} }
	\caption{$t$-split-state tampering model for $t=3$.}\label{fig:splitstate121}
\end{figure}

\subsubsection{Classical Non-Malleable Codes}
\label{subsection:prelim-classical-nm}

Consider the following tampering experiment. Given some message $m$, encode it into a coding scheme $c\leftarrow \Enc(m)$, tamper with it using a function $f:$ $c'=f(c)$, and finally decode it to $\Dec(f(c)) =  m'$. If $\Enc$ were an error correcting code robust against $f$, then our goal would be to recover $m'=m$. If $\Enc$ were a \textit{tamper} detecting code robust against $f$, we would like either $m'=m$ or we reject, $m'=\bot$. Unfortunately, error correction or detection isn't possible for every $f$.\footnote{For instance, constant functions.} 

Nevertheless, there are cases where the notion of non-malleability can be much more versatile. Dziembowski, Pietrzak, and Wichs formalized a coding-theoretic notion of non-malleability using a simulation paradigm: $m'$ can be simulated by a distribution that depends only on  $f$ (and not the message):

\begin{definition}  [Non-Malleable Codes, adapted from \cite{DPW10}] \label{definition:non-malleability}
   A pair of (randomized) algorithms $(\Enc :\{0, 1\}^k \rightarrow\{0, 1\}^n, \Dec : \{0, 1\}^n \rightarrow\{0, 1\}^k)$, is an $(\epsilon, \delta)$ \emph{non-malleable code} w.r.t a family of functions $\mathcal{F} \subset \{f : \{0, 1\}^n \rightarrow\{0, 1\}^n\}$, if the following properties hold:
    \begin{enumerate}
        \item \textbf{Correctness:} For every message $m\in \{0, 1\}^k$, $\mathbb{P}[\Dec\circ \Enc(m) \neq m] \leq \delta$.\\
        \item \textbf{Non-Malleability:} For every $f\in \mathcal{F}$ there is a constant $p_f\in [0, 1]$ and a random variable $D_f$ on $\{0, 1\}^k$ which is independent of the message, s.t. 
        \begin{equation}
        \forall m\in \{0, 1\}^k: \Dec\circ f\circ \Enc(m) \approx_\epsilon p_f\cdot m + (1-p_f)\cdot \mathcal{D}_f
    \end{equation}
    \end{enumerate} 
\end{definition}

In other words, the outcome of the effective channel on the message is a convex combination of the original message or a fixed distribution.\footnote{Non-malleability is originally defined in terms of a simulator which is allowed to output a special symbol \emph{same}$^*$ which is later replaced by the encoded $m$. The reader may be more accustomed to this notion, nevertheless, it is well-known to be equivalent to the above. } As previously introduced, the classical $t$-split-state model corresponds to the the collection of individual tampering functions $\{(f_1, f_2,\cdots ,f_t)\}$ on the shares (see \cref{fig:splitstate121}).

    \suppress{When designing NMCs for classical messages encoded into quantum states, we change the definitions above only slightly. If the Hilbert space $\mathcal{H}$ denotes the codespace, we modify the image and domain of $\Enc, \Dec$ to be $\cD(\cH)$, and allow the adversary to tamper with the code using channels $\Lambda:\cD(\cH)\rightarrow \cD(\cH)$.}

\subsubsection{Quantum Secure Non-Malleable Codes and Extractors}
\label{subsubsection:prelim-nmext}

Unfortunately, constructions of high-rate (quantum secure) non-malleable codes remain an open problem. Instead, in order to optimize the rate of our results, we will appeal to fact that we don't actually require the non-malleability of a worst case message in our codes, just that of a random secret key. For this purpose, \cite{BBJ23,Boddu2023SplitState} leverage the refined notion of a \textit{quantum secure, 2-source non-malleable extractor}, and we follow suit.

The connection between 2-source non-malleable extractors and codes was first made explicit by \cite{Cheraghchi2013NonmalleableCA}. Informally, a $2$-source extractor $\nmext$ takes as input two some-what random sources $X$ and $Y$, and outputs a near-uniformly random string $R=\nmext(X,Y)$. What makes its non-malleable is that $R$ is still near uniformly random even conditioned on any $R'=\nmext(f(X),g(Y))$, which is the outcome of independent tampering functions $f, g$ on the sources $X, Y$. In our work we make use of a construction of \emph{quantum secure} $2$-source non-malleable extractors by \cite{BBJ23}:

\begin{figure}[t]
\centering

\resizebox{12cm}{6cm}{

\begin{tikzpicture}

\draw (4,4.5) -- (5,4.5);
\draw (5,4.5) -- (11,4.5);

\draw (3.5,5.5) -- (5,5.5);
\draw (5,5.5) -- (11,5.5);
\node at (14.5,4.7) {$R$};
\node at (14.6,4.1) {$E'_2$};

\node at (12.5,5.3) {$\nmext$};

\draw [dashed] (4.78,-0.8) -- (4.78,6.5);
\draw [dashed] (9.7,-0.8) -- (9.7,6.5);
\draw [dashed] (14.3,-0.8) -- (14.3,6.5);

\node at (6.8,1.6) {$\ket{\psi}_{W_1W_2}$};
\node at (4.64,-0.8) {$\sigma$};
\node at (9.5,-0.8) {$\hat{\rho}$};
\node at (14.13,-0.8) {$\rho$};

\node at (3.1,3) {$X$};
\node at (7.8,3) {$X'$};
\draw (3,2.8) -- (6.3,2.8);
\draw (7.3,2.8) -- (11,2.8);
\node at (14.5,2.6) {$R'$};
\draw (14,2.4) -- (15,2.4);

\node at (3.1,0.4) {$Y$};
\node at (7.8,0.0) {$Y'$};
\draw (3,0.2) -- (6.3,0.2);
\draw (7.3,0.2) -- (11,0.2);
\draw (3.5,2.8) -- (3.5,5.5);
\draw (4,0.2) -- (4,4.5);

\draw (14,5) -- (14.7,5);

\draw (7.3,0.7) -- (8.8,0.7);
\draw (8.8,0.7) -- (8.8,3.8);
\draw (8.8,3.8) -- (14.8,3.8);

\draw (6.3,2) rectangle (7.3,3.5);
\node at (6.8,2.8) {$U$};
\draw (6.3,0) rectangle (7.3,1);
\node at (6.8,0.5) {$V$};

\node at (6.5,-0.4) {$\mathcal{A}=(U,V,\psi)$};
\draw (5.2,-0.8) rectangle (8.2,4);

\draw (5.8,1.5) ellipse (0.3cm and 1cm);
\node at (5.8,2) {$E_1$};
\draw (6,2.2) -- (6.3,2.2);
\node at (7.8,1.9) {$E'_1$};
\draw (7.3,2.2) -- (7.5,2.2);
\node at (5.8,1) {$E_2$};
\draw (6,0.7) -- (6.3,0.7);
\node at (7.7,0.9) {$E'_2$};
\draw (7.3,0.7) -- (7.6,0.7);


\draw (11,4) rectangle (14,6);
\draw (11,-0.5) rectangle (14,3.2);

\node at (12.5,1.5) {${\nmext}$};

\end{tikzpicture} }
\caption{Tampering experiment for quantum secure $2$-source non-malleable extractors.}\label{fig:splitstate5}
\end{figure}

\begin{fact}[Quantum Secure $2$-Source Non-malleable Extractor~\cite{BBJ23}]\label{lem:qnmcodesfromnmext}
 Co-nsider the split-state tampering experiment in \cref{fig:splitstate5} with a split-state tampering adversary $\cA=(U,V,\ket\psi_{E_1 E_2})$.
Based on this figure, define $p_\sm=\mathbb{P}[(X,Y)=(X',Y')]_{\hat\rho}$ and the conditioned quantum states
\begin{gather}
    \rhosame = (\nmext\otimes\nmext)(\hat\rho | (X,Y)=(X',Y')) \\\text{ and }\rhotamp = (\nmext\otimes\nmext)(\hat\rho | (X,Y)\neq(X',Y')).
\end{gather}
Then, for any $n\geq n_0$ and constant $\delta>0$, there exists an explicit function $\nmext : \{ 0,1\}^n \times \{0,1 \}^{\delta n} \rightarrow \{0,1 \}^{r}$ with $r=(1/2-\delta)\cdot n$ such that for independent sources $X\leftarrow\bits^n$ and $Y \leftarrow \bits^{\delta n}$ and any such split-state tampering adversary $\cA=(U,V,\ket\psi_{E_1 E_2})$, $\nmext$ satisfies
\begin{enumerate}
    \item Strong Extraction: $ \Vert  \nmext(X,Y)X - U_{r} \otimes U_n  \Vert_1 \leq \eps$ and $\Vert  \nmext(X,Y)Y - U_{r} \otimes U_{\delta n}  \Vert_1 \leq \eps,$\\
    
    
    \item Augmented Non-Malleability: $p_\sm\Vert \rhosame_{RE_2'}-  U_r \otimes \rhosame_{E_2'} \Vert_1 +  (1-p_\sm)  \Vert \rhotamp_{RR'E_2'}-  U_r \otimes \rhotamp_{R'E_2'}  \Vert_1  \leq \eps$,
\end{enumerate}
with $\eps =2^{-n^{\Omega_\delta(1)}}$. Furthermore, $\nmext(x,y)$ can be computed in time $\poly(n)$. 
\end{fact}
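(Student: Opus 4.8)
This statement is quoted from \cite{BBJ23}; what follows is the strategy I would follow to prove it (and which justifies importing it as a black box). The extractor $\nmext$ is built in the modular, Li-style fashion \cite{Li2023TwoSE}: on inputs $x\in\bits^n$, $y\in\bits^{\delta n}$ one first computes a short \emph{advice string} $\alpha=\alpha(x,y)$ of length $\ell=\poly(\log n)$ via an advice generator, and then runs a \emph{correlation breaker with advice} (equivalently, a non-malleable independence-preserving merger) on $\alpha$ together with slices of $x$ and $y$, whose output is $R=\nmext(x,y)$. The whole argument amounts to checking that each of the two claimed properties survives against an adversary $\cA=(U,V,\ket\psi_{E_1E_2})$ who shares $\ket\psi_{E_1E_2}$ and applies $U$ on $(X,E_1)$ to produce $X'$ and $V$ on $(Y,E_2)$ to produce $Y'$.

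First I would dispense with strong extraction (item 1). Because the second source $Y$ is only $\delta n$ bits, this resembles the seeded setting, and it reduces to the fact that strong (two-)source extractors are quantum-proof: after purifying the experiment, the quantum-proof extractor machinery already used in the line of work on quantum-secure split-state codes \cite{ABJ22, Boddu2023SplitState} shows $\nmext(X,Y)$ is $\eps$-close to uniform conditioned on $X$ (and, symmetrically, on $Y$). I expect this part to be essentially routine given the right imported lemmas.

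The substance is augmented non-malleability (item 2). I would purify the tampering experiment, replacing $U,V$ by Stinespring isometries into registers $(X',\hat E_1)$ and $(Y',\hat E_2)$, and keep track of the joint classical--quantum state $\hat\rho$ on $X,X',Y,Y'$ together with the residual register $E_2'$ that must be decoupled from $R$. Classically, Li-style non-malleable extractors are analysed by a \emph{leakage chain}: one reveals, one step at a time, a short slice of $x$, a short slice of $y$, and the advice; one argues (a) that the advice generator is non-malleable, i.e.\ $\alpha(X,Y)\neq\alpha(X',Y')$ except with probability $\eps$ on the tampered branch, and (b) that, conditioned on $\alpha\neq\alpha'$, the correlation breaker makes $R$ near-uniform given $R'$ --- while on the ``same'' branch one directly gets $R$ uniform given the leak. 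The difference in the quantum setting is that revealing a function of $X$ is not a clean conditioning event: a measurement on $E_1$ induces \emph{correlated} side information on $E_2$ through the shared state. I would handle this exactly as in the quantum-secure correlation-breaker / alternating-extraction literature: after revealing a short leak, decompose the residual state into a convex combination (\cref{fact:traceconvex}) of states on which $X$ (resp.\ $Y$) still has large conditional min-entropy against the relevant quantum register, and propagate this invariant along the chain using the smooth-min-entropy chain rule together with quantum-Markov-chain / decoupling-style lemmas. Each sub-component --- the collision bound for the advice generator, each inner seeded extractor inside the merger, each ``flip-flop'' step --- has a quantum-secure analogue incurring error $2^{-n^{\Omega_\delta(1)}}$, and since the chain has only $\poly(\log n)$ steps, summing by the triangle inequality yields the final $\eps$. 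The two branches (probability $p_\sm$ and $1-p_\sm$) are analysed separately and recombined to give exactly the displayed bound, with parameters tuned so the output length reaches $r=(1/2-\delta)n$.

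The hardest point is precisely where the two adversaries' pre-shared entanglement could be exploited: one must rule out that $\ket\psi_{E_1E_2}$ lets the adversary either (i) force an advice collision $\alpha=\alpha'$, or (ii) smuggle information about $R$ into the residual register $E_2'$. Getting this right is what forces the use of genuinely quantum-proof versions of every building block rather than their classical counterparts, and it is also what the near-optimal rate $(1/2-\delta)n$ is being paid for.
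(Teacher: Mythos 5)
The paper does not prove this statement; it is imported verbatim as a black-box fact from \cite{BBJ23}, and you correctly identify it as such. Your high-level sketch of the \cite{BBJ23} construction --- a Li-style decomposition into an advice generator plus a correlation breaker, with every building block replaced by its quantum-proof analogue and the error propagated along a $\poly(\log n)$-step leakage chain --- accurately reflects the strategy in that work, including the two points that make it nontrivial: (i) revealing a classical slice of $X$ collapses the shared state $\ket\psi_{E_1E_2}$ and so must be handled via a convex decomposition that preserves conditional min-entropy against the other adversary's register, and (ii) the \emph{augmented} clause (item 2) is exactly the requirement that $R$ decouple from the updated side register $E_2'$, not merely from $R'$. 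Since the paper's ``proof'' is the citation itself, there is nothing further to compare; your sketch is consistent with what is being imported.
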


Item 1 in \cref{lem:qnmcodesfromnmext} should be the familiar 2-source (strong) extraction guarantee. Item 2 however, captures the \textit{augmented} non-malleability of $\nmext$. If the tampering attack doesn't change $X$ and $Y$, then $R$ should be close to uniformly random even together with any updated side-entanglement $E'_2$ held by one of the adversaries. Conversely, if the tampering attack changed $X$ and $Y$, then $R$ should be independent of the output $R'=\nmext(X',Y')$, jointly with $E'_2$.\footnote{The role of \textit{augmented} non-malleability is more subtle in the quantum secure model. Classically, the tampering functions are deterministic, and thus the left share $X$ (before tampering) uniquely determines the ``view" of the left adversary. In contrast, in the quantum secure model, the entangled state that the split-state adversaries share may collapse and correlate with \textit{both} post-tampered shares $X', Y'$.
}

\suppress{

\begin{figure}
		\centering
   \resizebox{12cm}{5cm}{
		\begin{tikzpicture}
			

     \draw (2,4.3) -- (11,4.3);
     \node at (1.8,4.3) {$\hat{M}$};
      \node at (11.2,4.3) {$\hat{M}$};
			\node at (1.8,1.5) {$M$};
			\node at (11,1.5) {$M'$};

\draw (1.8,2.8) ellipse (0.3cm and 1.8cm);
   
			
			\draw (2,1.5) -- (3,1.5);
			\draw (3,-0.5) rectangle (4.5,3.5);
			\node at (3.8,1.5) {$\enc$};
			
				\draw [dashed] (2.78,-0.8) -- (2.78,5.5);
			\draw [dashed] (10.3,-0.8) -- (10.3,5.5);
			
				\node at (2.64,-0.8) {$\sigma$};
			\node at (10.13,-0.8) {$\eta$};
			
			\node at (4.7,3) {$X$};
			\node at (8.25,3) {$X'$};
  
			\draw (4.5,2.8) -- (6,2.8);
			\draw (7,2.8) -- (8.6,2.8);
			\draw (10,1.5) -- (10.7,1.5);
			
			\node at (4.7,0.4) {$Z$};
			\node at (8.25,0) {$Z'$};
			\draw (4.5,0.2) -- (6,0.2);
			\draw (7,0.2) -- (8.6,0.2);

                \node at (4.7,1.8) {$Y$};
		      \node at (8.25,1.8) {$Y'$};
         \draw  (4.5,1.6) -- (5,1.6);
            \draw [dashed] (5,1.6) -- (6,1.6);
		  \draw (7,1.6) -- (8.6,1.6);
                \draw (6,1.1) rectangle (7,1.9);
                \node at (6.5,1.5) {$V$};
             \draw (5,1.6) -- (6,1.6);
   
			\draw (6,2.1) rectangle (7,2.9);
			\node at (6.5,2.5) {$U$};
			\draw (6,0.1) rectangle (7,0.9);
			\node at (6.5,0.5) {$W$};
			
			\node at (6.5,-0.4) {$\mathcal{A}=(U,V,W)$};
			\draw (5,-0.8) rectangle (8,3.8);

			
			
			\draw (8.6,-0.5) rectangle (10,3.2);
			\node at (9.2,1.5) {$\dec$};
			
		\end{tikzpicture} }
		\caption{$\lo^t:$ $t$-split model against unentangled quantum adversaries for $t=3$.}\label{fig:3splitstatenoentanglment}
	\end{figure}

 }

 \subsubsection{Quantum Non-Malleable Codes} To introduce our new Tamper detection codes, we first need to recollect the definition of quantum NMCs from~\cite{Boddu2023SplitState}. For concreteness, we focus on the model $\lo^3_{(e_1,e_2,e_3)}$. Our coding scheme is given by encoding and decoding Completely Positive Trace-Preserving (CPTP) maps
\begin{gather}
    \enc : \cL( \cH_M) \to \cL(\cH_{X}\otimes \cH_Y\otimes\cH_{Z}),  \dec  : \cL(\cH_{X}\otimes\cH_{Y} \otimes\cH_{Z}) \to  \cL(\cH_{M})
\end{gather}

Here, $\cH_{X}, \cH_{Y}, \cH_{Z}$ denote the Hilbert spaces for the three shares. The most basic property we require of this coding scheme $(\enc,\dec)$ is correctness:
\begin{equation*}
\dec(\enc(\sigma_{M\hat{M}}))=\sigma_{M\hat{M}}, \text{ where } \hat{M} \text{ represents entangled side-information.}
\end{equation*}

Now, suppose we tamper with the code using a tampering adversary $\cA\in \lo^3_{(e_1,e_2,e_3)}$. Recall from \cref{def:lotsharedentanglement} that such adversaries are specified by three tampering maps, $U, V, W$ along with a pre-shared state $\ket\psi_{E_1 E_2E_3}$. After the tampering channel, the decoding procedure $\dec$ is applied to the corrupted codeword $\tau_{X' Y'Z'}$ and stored in register $M'$, see \cref{fig:splitstate211d}. Let 
\begin{equation*}
    \eta = \dec( ( U \otimes V \otimes W)  \left(\enc( \sigma_{M \hat{M}}) \otimes \ketbra{\psi} \right)  )
\end{equation*}
be the final state after applying the tampering adversary $\cA$ followed by the decoding procedure. The non-malleability of the coding scheme $(\enc,\dec)$ is defined as follows.

\begin{figure}[t]
		\centering
   \resizebox{12cm}{5cm}{
		\begin{tikzpicture}
			

     \draw (2,4.3) -- (11,4.3);
     \node at (1.8,4.3) {$\hat{M}$};
      \node at (11.2,4.3) {$\hat{M}$};
			\node at (1.8,1.5) {$M$};
			\node at (11,1.5) {$M'$};

\draw (1.8,2.8) ellipse (0.3cm and 1.8cm);
   
			
			\draw (2,1.5) -- (3,1.5);
			\draw (3,-0.5) rectangle (4.5,3.5);
			\node at (3.8,1.5) {$\enc$};
			
				\draw [dashed] (2.78,-0.8) -- (2.78,5.5);
			\draw [dashed] (10.3,-0.8) -- (10.3,5.5);
			
				\node at (2.64,-0.8) {$\sigma$};
			\node at (10.13,-0.8) {$\eta$};
			
			\node at (4.7,3) {$X$};
			\node at (8.25,3) {$X'$};
  
			\draw (4.5,2.8) -- (6,2.8);
			\draw (7,2.8) -- (8.6,2.8);
			\draw (10,1.5) -- (10.7,1.5);
			
			\node at (4.7,0.4) {$Z$};
			\node at (8.25,0) {$Z'$};
			\draw (4.5,0.2) -- (6,0.2);
			\draw (7,0.2) -- (8.6,0.2);

                \node at (4.7,1.8) {$Y$};
		      \node at (8.25,1.8) {$Y'$};
         \draw  (4.5,1.6) -- (5,1.6);
            \draw [dashed] (5,1.6) -- (6,1.6);
		  \draw (7,1.6) -- (8.6,1.6);
                \draw (6,1.1) rectangle (7,1.9);
                \node at (6.5,1.5) {$V$};
            \node at (7.5,1.3) {$E'_2$};
             \node at (5.5,1.4) {$E_2$};
             \draw (5.8,1.4) -- (6,1.4);
             \draw (7,1.4) -- (7.2,1.4);
   
			\draw (6,2.1) rectangle (7,2.9);
			\node at (6.5,2.5) {$U$};
			\draw (6,0.1) rectangle (7,0.9);
			\node at (6.5,0.5) {$W$};
			
			\node at (6.5,-0.4) {$\mathcal{A}=(U,V,W,\ket{\psi}_{})$};
			\draw (5,-0.8) rectangle (8,3.8);

			\draw (5.5,1.5) ellipse (0.3cm and 1.1cm);
			\node at (5.5,2.2) {$E_1$};
				\draw (5.7,2.2) -- (6,2.2);
			\node at (7.5,2.2) {$E'_1$};
				\draw (7,2.2) -- (7.2,2.2);
				\node at (5.5,0.8) {$E_3$};
				\draw (5.7,0.7) -- (6,0.7);
			\node at (7.5,0.6) {$E'_3$};
			\draw (7.0,0.7) -- (7.2,0.7);
			
			
			\draw (8.6,-0.5) rectangle (10,3.2);
			\node at (9.2,1.5) {$\dec$};
			
		\end{tikzpicture} }
		\caption{$\lo^t_{(e_1, \ldots, e_t)}:$ $t$-split tampering model with shared entanglement for $t=3$. The shared entanglement $\psi$ is stored in registers $E_1, E_2,E_3$ such that $\vert E_i \vert \leq e_i$.}\label{fig:splitstate211d}
	\end{figure}

\begin{definition}[Quantum Non-Malleable Code against $\lo^3_{(e_1,e_2,e_3)}$~\cite{Boddu2023SplitState}]\label{def:qnmcodesfinaldef}  
See \cref{fig:splitstate211d} for the tampering experiment. The coding scheme
$(\enc, \dec)$ is an \emph{$\eps$-secure quantum non-malleable code} against $\lo^3_{(e_1,e_2,e_3)}$ if for every adversary $\cA=(U,V,W,\ket\psi_{E_1 E_2E_3})\in \lo^3_{(e_1,e_2,e_3)}$ and message  $\sigma_{M\hat{M}}$, 
\begin{equation}\label{eq:splitnmc}
    \eta_{M'\hat{M}} \approx_\eps p_{\cA} \sigma_{M \hat{M}}+(1-p_{\cA})\gamma^{\cA}_{M'}\otimes \sigma_{\hat{M}},
\end{equation}
where $p_{\cA}\in[0,1]$ and $\gamma^{\cA}_{M'}$ depend only on the tampering adversary $\cA$.
\end{definition}

\subsubsection{Quantum Tamper-Detection Codes}

Finally, we formally define the notion of quantum tamper-detection codes. Following the previous subsection, for concreteness we present an explicit definition for $\lo^t_{(e_1,e_2, \ldots, e_t)}$ tampering adversaries with $t=3$, which can be easily generalized to the different tampering models of \cref{subsection:formal-tampering}.

\begin{definition}[Quantum Tamper-Detection Code against $\lo^3_{(e_1,e_2, e_3)}$]\label{def:qtdcodesfinaldef}  
See \cref{fig:splitstate211d} for the tampering experiment. We say that the coding scheme
$(\enc, \dec)$ is an \emph{$\eps$-secure quantum tamper-detection code} against $\lo^3_{(e_1,e_2, e_3)}$ if for every $\lo^3_{(e_1,e_2, e_3)}$ adversary $\cA=(U,V,W,\ket\psi_{E_1 E_2E_3})$ and for every quantum message $\sigma_M$ (with canonical purification $\sigma_{M\hat{M}}$)
it holds that
\begin{equation}\label{eq:splittdc}
    \eta_{M'\hat{M}} \approx_\eps p_{\cA} \sigma_{M \hat{M}}+(1-p_{\cA}) \bot_{M'}\otimes \sigma_{\hat{M}},
\end{equation}
where $p_{\cA}\in[0,1]$ depend only on the tampering adversary $\cA$ and $ \bot_{M'}$ denotes a special abort symbol stored in register $M'$ to denote tamper detection.
\end{definition}

\newpage
\section{Tamper Detection in the Bounded Storage Model}
\label{section:bounded-storage}

In this section, we present and analyze our quantum tamper detection code in the $3$-split-state model $\lo^3_{(e_1, e_2, e_3)}$, where the adversaries are allowed a finite pre-shared quantum state to assist in tampering with their shares of the code (see \Cref{def:lotsharedentanglement}). In particular, we show security even when $e_1, e_2$ are both unbounded, that is, only the entanglement of one of the parties is restricted.

\begin{theorem}[\cref{theorem:results-main}, restatement]
    For every integer $k$, $a\leq \lambda$, there exists an efficient tamper-detection code against $\lo^3_{(a, *, *)}$ for $k$ qubit messages with error $2^{-(k+\lambda)^{\Omega(1)}} + 6\cdot 2^{a-\lambda}$ and rate $(11 + 12\frac{\lambda}{k})^{-1}$.
\end{theorem}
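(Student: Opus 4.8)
The plan is to instantiate the abstract construction of Figure~\ref{fig:qtdcs} (the 3-split tamper-detection code built from an augmented quantum non-malleable code, a family of 2-design unitaries, and $\lambda$ trap EPR pairs) using concrete components, and then to run the security argument through \cref{theorem:tsplit-bounded-storage}-style reasoning specialized to this construction. Concretely, I would take the encoder $\Enc_\td^\lambda$ to: (i) prepare $\lambda$ EPR pairs on $E\hat E$; (ii) sample a uniformly random key $R$ and set $(X,Y)=\Enc_\NM(R)$ for a classical quantum-secure 2-source non-malleable extractor/code against $\lo^2_*$ (\cref{lem:qnmcodesfromnmext}), which is augmented at the share $Y$; (iii) apply a 2-design Clifford $C_R$ (via $\samp$, \cref{lem:subclifford}) to the joint register $Q=(\text{message }M)\otimes E$. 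The three shares are $(Q,\,Y\hat E,\,X)$. Decoding: reconstruct $R'$ from $(X,Y')$, apply $C_{R'}^\dagger$ to $Q'$, split off registers $M'$ and $E'$, run the authentication check from the Clifford scheme, and run the EPR test on $E'\hat E$; accept and output $M'$ only if both checks pass, else output $\bot$. Rate: the message is $k$ qubits, the NMC shares $X,Y$ have total length $O(k+\lambda)$ by \cref{lem:qnmcodesfromnmext} (with $r=(1/2-\delta)n$), and $\hat E$ adds $\lambda$ qubits; matching the claimed rate $(11+12\lambda/k)^{-1}$ amounts to a careful accounting of the blocklengths of the extractor's two sources ($n$ and $\delta n$) against the key length $r = k+\lambda$ needed to Clifford-encrypt $(M,E)$ — this bookkeeping mirrors the $1/11$ rate in the BBJ23 line of work.

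For security against $\lo^3_{(a,*,*)}$, I would argue in two stages. First, as in \cref{fig:t1splitstatenoentanglment}, delay the tampering $\Lambda_1$ on the $Q$-share and analyze the intermediate state after the other two adversaries tamper $(Y\hat E, X)$ and we decode the NMC. Here the key move is: the two NMC-adversaries, aided by their pieces of the pre-shared entanglement $\psi_{A_1A_2A_3}$, implement exactly a $\lo^2_*$ tampering attack on $(X,Y)$, with the side-entanglement being the registers $A_2,A_3$ (unbounded) together with $A_1$ and $\hat E$ which eventually flow into the $Q$-adversary. By the \emph{augmented} non-malleability of $\nmext$ (Item 2 of \cref{lem:qnmcodesfromnmext}), the recovered key $R'$ is, up to error $2^{-n^{\Omega_\delta(1)}}$, either equal to $R$ and uniform and \emph{independent of the $Q$-side view}, or completely decoupled from $R$. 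Second, condition on these two cases. In the ``same'' case, $R=R'$ is a uniformly random Clifford key unknown to and independent of the $Q$-adversary's view; by Clifford Twirl with Side Information (\cref{lem:twirl-wsi}) applied to the tampering $\Lambda_1$ on $Q=(M,E)$ — with side information being $\hat E$ and $A_1$, of total size $\le a+\lambda$ — the effective channel on $(M,E)$ is close (error $O(2^{-2(k+\lambda)})$) to $p\cdot(\text{identity})+(1-p)\cdot(U_{ME}\otimes\text{something on side})$. The identity branch recovers $M$ and keeps the EPR pairs intact so the EPR test accepts; the depolarizing branch produces a state whose reduced state on $E'\hat E$ has low Schmidt number, so \cref{claim:tss-bellbasis} forces the EPR test to reject except with probability $2^{a-\lambda}\cdot(\dots)$. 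In the ``tamper'' case, $R'$ is independent of $R$, so $C_{R'}^\dagger$ applied to whatever the $Q$-adversary produced yields a state independent of $M$; moreover the trap register $E'\hat E$ again has Schmidt number bounded by $2^{a}$ (since only $a$ qubits of entanglement reach the $Q$-side and $\hat E$ is on the $Y$-side), so by \cref{claim:tss-bellbasis} the EPR test rejects except with probability $O(2^{a-\lambda})$, yielding $\bot$.

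Combining the two cases gives $\eta_{M'\hat M}\approx_\epsilon p_{\mathcal A}\sigma_{M\hat M}+(1-p_{\mathcal A})\bot_{M'}\otimes\sigma_{\hat M}$ with $p_{\mathcal A}$ depending only on $\mathcal A$ (it is the product of the NMC ``same'' probability and the Clifford-twirl ``identity'' probability, both adversary-determined), which is exactly \cref{def:qtdcodesfinaldef}. Tallying the error terms: $2^{-n^{\Omega_\delta(1)}}$ from the extractor, $O(2^{-2(k+\lambda)})$ from \cref{lem:twirl-wsi}, and several contributions of the form $2^{a-\lambda}$ from the EPR/Schmidt-rank bounds (one per branch, and from applying the gentle-measurement lemma to re-normalize after the accept/reject split), which I would collect as $6\cdot 2^{a-\lambda}$, and $2^{-(k+\lambda)^{\Omega(1)}}$ absorbing the polynomially-weakened extractor error in terms of the blocklength. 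The main obstacle I anticipate is the careful tracking of \emph{which} side the bounded register $A_1$ is on and ensuring the augmented guarantee is at the correct share ($Y$, the one co-located with $\hat E$) so that the $Q$-adversary's total entanglement budget ($|A_1|+|\hat E|$ before the EPR halves are consumed, but crucially only $|A_1|\le a$ of it is ``free'' entanglement with the rest) is genuinely $\le a$ when we invoke \cref{claim:tss-bellbasis} — this is precisely the subtlety flagged in \cref{subsubsection:overview-bounded-storage} about the $\lo^3_{(a,*,*)}$ model, and getting the Schmidt-number bound to come out as $2^a$ rather than $2^{a+\lambda}$ is what makes the $6\cdot 2^{a-\lambda}$ error (rather than something trivial) possible.
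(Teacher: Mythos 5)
Your proposal matches the paper's construction (\Cref{alg:algorithm3qtdc}, \Cref{alg:algorithm3qtdc_dec}, up to swapping which half of the EPR pairs is named $E$ vs.\ $\hat E$) and its security argument in \Cref{theorem:tdc-loa}: delay the $Q$-adversary, invoke the augmented quantum-secure non-malleable extractor to split into the ``same'' and ``tamper'' cases, then apply \Cref{lem:twirl-wsi} in the ``same'' case and the Schmidt-number bound of \Cref{claim:tss-bellbasis} in the ``tamper'' case, concluding with the rate bookkeeping $n\to 10(k+\lambda)$ from $|R|=5(k+\lambda)$. Two small imprecisions worth noting (neither a gap): the decoder has only the EPR test and no separate ``authentication check''; and in the ``same''-case depolarizing branch the post-twirl state on $E'\hat E$ is a product state so the rejection error is $2^{-\lambda}$ rather than $2^{a-\lambda}$ — a weaker but still valid bound.
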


Our construction combines quantum secure non-malleable extractors and families of unitary 2 designs. We refer the reader to \Cref{subsubsection:prelim-nmext} and \Cref{subsection:prelim-2-designs} respectively for formal definitions of the ingredients. We dedicate \cref{subsection:tdc-components} for a description of our construction, and \cref{subsection:tdc-analysis} for its analysis.

\subsection{Code Construction}
\label{subsection:tdc-components}

\textbf{Ingredients} Let $\delta, \delta'>0$ be constants, $k = (1/2 -\delta -5\delta')n/5$, and $\lambda = \delta' \cdot n$. We combine

\begin{enumerate}
    \item $\nmext:\{0, 1\}^n\times \{0, 1\}^{\delta \cdot n}\rightarrow \{0, 1\}^{(1/2-\delta)n}$, a quantum secure two source non-malleable extractor with error $\epsilon_\nmext =2^{-n^{\Omega_\delta(1)}} $ from \cref{lem:qnmcodesfromnmext}.\\

    \item  The family of $2$-design unitaries $C_R$ from \cref{lem:subclifford}.

\end{enumerate}

\noindent\textbf{Our candidate TDC} Our candidate construction of quantum TDC against $\lo^3_{(e_1 = \infty,e_2 =\infty,e_3)}$ is given in \cref{fig:3splitstatetdc} along with the tampering process. We describe it explicitly in the following \cref{alg:algorithm3qtdc}. The decoding scheme, denoted as $\dec$, for the quantum TDC operates as in \Cref{alg:algorithm3qtdc_dec}.

\begin{algorithm}[H]
    \setstretch{1.35}
    \caption{$\Enc$: Quantum TDC against $\lo^3_{(e_1=\infty,e_2=\infty,e_3)}$ (see \cref{fig:3splitstatetdc}).}
    \label{alg:algorithm3qtdc}
    \KwInput{ A $k$ qubit quantum message $\sigma_{M}$ (with purification $\hat{M}$). }

    \begin{algorithmic}[1]

    \State Sample classical registers $X, Y$ uniformly and independently of size $n, \delta n$ respectively. Evaluate $R= \nmext(X,Y)$. 
    
    \State Prepare $\lambda =\delta' n$ EPR pairs, $\Phi^{\otimes \lambda}$, on a bipartite register $E, \hat{E}$.

    \State Consider $X$ as the first share and $(Y,E)$ as the second share.
    
    \State Let $C_R$ be the Clifford unitary picked using sampling process $\samp$ in \cref{lem:subclifford}. Apply $C_R$ on registers $(\hat{E},M)$ to generate $Z$. This is possible since $\vert R \vert = (1/2-\delta)n=5(k+\delta'n)$ from our choice of parameters. Consider $Z$ as the third share. 

    \item Output shares $X, (Y, {E}), Z$.
    \end{algorithmic}
\end{algorithm}

\begin{algorithm}[H]
    \setstretch{1.35}
    \caption{$\Dec$: }
    \label{alg:algorithm3qtdc_dec}
    \KwInput{ Any tampered code-state $\Lambda\circ \Enc(\sigma_{M})$ on 3 shares $(X, YE, Z)$.}

    \begin{algorithmic}[1]

    \State The decoder first computes $R'= \nmext(X',Y')$.
    
    \State Subsequently $C^\dagger_{R'}$ is applied on register $Z'$ to obtain $\hat{E}' , M'$. 

    \State Perform an $\epr$ test on registers $E'\hat{E}'$. This entails a binary outcome projective measurement $({\Pi = \Phi_{E\hat{E}}, \Bar{\Pi} = \id_{E \hat{E}} - \Phi_{E\hat{E}}})$.

    \item An acceptance decision is made if the outcome is $\Pi = \Phi_{E\hat{E}}$. In this case, the ``replace" operation $\swap$ outputs the register $M'$.

    \item If the outcome is not $\Pi$, the operation $\swap$ outputs a special symbol $\bot$ in register $M'$.

    \end{algorithmic}

\end{algorithm}

\subsection{Analysis}
\label{subsection:tdc-analysis}

We represent the ``tampering experiment" in \Cref{fig:3splitstatetdc}, where a message is encoded into $\Enc$, tampered with a channel in the bounded storage model, and then decoded.

\begin{figure}[h]
\centering
\resizebox{12cm}{6cm}{
\begin{tikzpicture}

\draw (0,7.4) ellipse (0.2cm and 1cm);
\draw (1,5.5) ellipse (0.3cm and 1.2cm);
\draw (3,5.9) rectangle (4.4,7.4);
\draw (11.6,5.4) rectangle (12.5,7.4);
\node at (0,6.9) {$M$};
\draw (0.2,6.9) -- (3,6.9);
\node at (1,6.4) {$\hat{E}$};
\node at (12.7,6.25) {$\hat{E}'$};
\node at (1,4.7) {$E$};
\draw (1.2,4.5) -- (4.5,4.5);
\node at (16.2,7.1) {$M'$};

\node at (3.6,6.6) {$C_R$};
\node at (12.1,6.1) {$C^\dagger_{R'}$};
\node at (0,8) {$\hat{M}$};
\node at (16.2,8) {$\hat{M}$};

\draw (0.2,8) -- (16,8);
\draw (1.2,6.5) -- (3,6.5);
\draw (4.4,6.5) -- (6.3,6.5);
\draw (7.3,6.5) -- (11.6,6.5);
\draw (12.5,6.5) -- (13,6.5);
\draw (12.5,7.1) -- (14.2,7.1);

\draw (4.4,4.5) -- (6.3,4.5);
\draw (7.3,4.5) -- (13,4.5);
\node at (12.7,4.7) {$E'$};
\draw (13,4.2) rectangle (14,6.8);
\node at (14.8,6.8) {$\mathsf{Rep}$};
\draw (14.2,6) rectangle (15.4,7.4);
\draw (15.4,6.6) -- (16,6.6);

\node at (13.5,5.4) {$\mathsf{EPR}$};
\draw (14,5.6) -- (14.6,5.6);
\draw (14.6,6) -- (14.6,5.6);
\node at (14.8,5.4) {$O$};
\node at (0.8,2) {$Y$};
\node at (0.8,1) {$X$};
\node at (3.4,1.2) {$R$};
\draw (1,1) -- (1.8,1);
\draw (1,2) -- (1.8,2);
\draw (3.2,1.5) -- (4,1.5);
\draw (4,1.5) -- (4,5.9);
\draw (1.2,0.2) -- (4.6,0.2);
\draw (1.2,2.8) -- (4.6,2.8);
\draw (1.2,0.2) -- (1.2,1);
\draw (1.2,2) -- (1.2,2.8);
\draw (1.8,0.5) rectangle (3.2,2.5);
\draw (1.6,-0.5) rectangle (4.8,7.6);
\node at (3.3,7.8) {$\enc$};
\node at (2.5,1.5) {$\nmcenc$};

\draw (9.4,-0.5) rectangle (15.6,7.6);
\node at (11.5,7.8) {$\dec$};

\node at (4.6,6.3) {$Z$};

\draw [dashed] (1.4,-1.4) -- (1.4,8.5);
\draw [dashed] (4.98,-1.4) -- (4.98,8.5);
\draw [dashed] (9.1,-1.4) -- (9.1,8.5);
\draw [dashed] (15.8,-1.4) -- (15.8,8.5);


\node at (6.9,1.6) {$\ket{\psi}_{W_1W_2W_3}$};
\node at (1.25,-1.4) {$\sigma$};
\node at (4.8,-1.4) {$\sigma_1$};
\node at (8.9,-1.4) {$\sigma_2$};
\node at (15.6,-1.4) {$\eta$};

\node at (4.6,4.7) {$E$};
\node at (8.5,4.7) {$E'$};
\node at (4.6,3) {$Y$};
\node at (8.5,3) {$Y'$};
\node at (8.5,6.2) {$Z'$};
\draw (4.5,2.8) -- (6.3,2.8);
\draw (7.3,2.8) -- (10,2.8);

\node at (11.7,2.6) {$R'$};
\draw (11.5,2.4) -- (12,2.4);
\draw (12,2.4) -- (12,5.4);

\node at (4.6,0.4) {$X$};
\node at (8.5,0.0) {$X'$};
\draw (4.5,0.2) -- (6.3,0.2);
\draw (7.3,0.2) -- (10,0.2);

\draw (6.3,5.5) rectangle (7.3,7);
\node at (6.8,6.2) {$T$};
\draw (6.3,2) rectangle (7.3,5);
\node at (6.8,3.5) {$V$};
\draw (6.3,0) rectangle (7.3,1);
\node at (6.8,0.5) {$U$};

\node at (6.5,-0.4) {$\mathcal{A}=(U,V,T,\psi)$};
\draw (5.2,-0.8) rectangle (8.2,7.3);

\draw (5.8,3.1) ellipse (0.3cm and 2.8cm);
\node at (5.8,5.3) {$W_3$};
\draw (5.9,5.7) -- (6.3,5.7);
\draw (7.3,5.7) -- (7.45,5.7);
\node at (7.7,5.7) {$W'_3$};
\node at (5.8,2) {$W_2$};
\draw (6,2.2) -- (6.3,2.2);
\node at (7.7,2) {$W'_2$};
\draw (7.3,2.2) -- (7.5,2.2);
\node at (5.8,1) {$W_1$};
\draw (6,0.7) -- (6.3,0.7);
\node at (7.7,0.9) {$W'_1$};
\draw (7.3,0.7) -- (7.45,0.7);



\draw (10,-0.2) rectangle (11.5,3.5);
\node at (10.8,1.5) {${\nmcdec}$};

\end{tikzpicture}}

    \caption{
       A quantum TDC against $\lo^3_{(e_1=\infty, e_2=\infty, e_3)}$ along with  tampering process.
    }\label{fig:3splitstatetdc}
\end{figure}

In this section, we prove two key results about our code construction. Our main result is that the code described above is a secure tamper-detection code in the bounded storage model:

\begin{theorem}[Tamper Detection]\label{theorem:tdc-loa}
Consider \cref{fig:3splitstatetdc}.  $(\Enc, \Dec)$ as shown in \cref{fig:3splitstatetdc} is a quantum TDC against $\lo^3_{(\infty, \infty, e_3)}$ with error $\eps' \leq 2 \eps_{\nmext}  +6\cdot 2^{e_3- \lambda}$.
\end{theorem}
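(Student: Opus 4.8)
\textbf{Proof plan for Theorem~\ref{theorem:tdc-loa}.}

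The plan is to mirror the analysis of the black-box reduction in \cref{theorem:tsplit-bounded-storage}, but ``opening up'' the non-malleable code to the construction built from $\nmext$ and the Clifford 2-design. The key observation is that the adversary model $\lo^3_{(\infty,\infty,e_3)}$ bundles registers as follows: the $\nmext$ shares $X$ and $Y$ are held by the first two (unbounded) parties, while the third party holds $Z$ (the Clifford-encrypted message together with the EPR halves $\hat E$) and has only $\le e_3$ qubits of pre-shared entanglement. First, I would write the tampering experiment of \cref{fig:3splitstatetdc} in the ``delayed'' form (as in \cref{fig:t2splitstatenoentanglment}), isolating the intermediate state $\tau$ obtained after the first two parties tamper with $X,Y$, after the third party applies its tampering map $T$ on $Z W_3$, and after the decoder computes $R' = \nmext(X',Y')$ and applies $C_{R'}^\dagger$ — but \emph{before} the final $\epr$ test. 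The goal is to show $\tau$ is $\epsilon$-close to a convex combination of ``message recovered, EPR intact, decoupled from the third party's residual register'' and ``message replaced by something of bounded Schmidt number''.

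The main technical step is a \textbf{Clifford-twirl-with-side-information argument combined with the quantum-secure augmented non-malleability of $\nmext$}. Concretely: conditioning on the transcript of the $\nmext$-tampering, \cref{lem:qnmcodesfromnmext} (item 2, augmented non-malleability) tells us that with probability $p_\sm$ the recovered key satisfies $R = R'$ and is uniform \emph{and independent of the residual side information} $W_3'$ held by the third adversary, while with probability $1-p_\sm$ the recovered key $R'$ is uniform, independent of $R$, jointly with $W_3'$. In the first case, the third party's map $T$ acting on $Z = C_R(\hat E, M) C_R^\dagger$ followed by $C_{R'}^\dagger = C_R^\dagger$ is exactly a Clifford twirl of $T$ with side information (the register $\hat E M$ plays the role of $A$, the external system $\hat M E$ and the third party's entanglement $W_3$ play the role of $E$ in \cref{lem:twirl-wsi}); hence up to error $2/(2^{2|A|}-1)$, the output is a convex combination of $\Phi_1(\psi) + \Phi_2(\psi_{\text{rest}})\otimes U_{\hat E M}$. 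In the sub-case corresponding to $\Phi_1$ (the ``identity-like'' part), the message $M$ and the EPR register $\hat E$ are recovered intact and decoupled from $W_3'$; in the $\Phi_2$ sub-case, the recovered register is maximally mixed, hence of trivial Schmidt number. In the second case ($R' \ne R$, independent), the decoder applies $C_{R'}^\dagger$ with $R'$ a uniformly random key independent of everything in $Z$; by \cref{fact:notequal} (Paulis/Cliffords are $1$-designs) the resulting register $\hat E', M'$ is maximally mixed conditioned on the third party's view, hence again has Schmidt number $1$ across the $(\hat E', \text{rest})$ cut — modulo the $\le e_3$ qubits of entanglement the third party brought, which can inflate the Schmidt number to at most $2^{e_3}$.

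To finish, I would invoke \cref{claim:tss-bellbasis}: the $\epr$ test on $\lambda$ EPR pairs accepts a state of Schmidt number $R$ with probability at most $R\cdot 2^{-\lambda}$. In the ``good'' branches ($R = R'$ and the $\Phi_1$ part) the EPR register is untouched, so the test accepts deterministically and outputs $M$; in all other branches the relevant register has Schmidt number $\le 2^{e_3}$, so the test rejects except with probability $\le 2^{e_3-\lambda}$, and $\swap$ outputs $\bot$. Collecting the error terms — $\epsilon_\nmext$ for the augmented non-malleability (applied twice, once for the ``same'' branch and once for the ``tamper'' branch, hence $2\epsilon_\nmext$), the Clifford-twirl error $2/(2^{2k}-1)$ which is dominated, and the Schmidt-rank bound $2^{e_3-\lambda}$ entering a constant number of times (once for the $\Phi_2$ sub-case, once for the $R'\ne R$ case, and a small-factor slack from the gentle measurement / conditioning manipulations) — gives total error $\le 2\epsilon_\nmext + 6\cdot 2^{e_3-\lambda}$, with the acceptance probability depending only on the adversary (it is a product of a $p_{\mathcal A}$-type factor coming from the $\nmext$ transcript statistics and a $p_{T}$-type factor coming from the $\Phi_1$ weight, both functions of $\mathcal A$ alone). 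The hard part will be carefully tracking the \emph{joint} state of the recovered key $R'$, the third party's post-tampering register $W_3'$, and the message register through both the $\nmext$ step and the Clifford-twirl step — in particular ensuring the decoupling guaranteed by augmented non-malleability is exactly what is needed to apply \cref{lem:twirl-wsi} with $W_3'$ sitting in the side-information register $E$, and that the two sources of randomness ($R$ from $\nmext$, and the implicit averaging in the twirl) compose cleanly without double-counting error.
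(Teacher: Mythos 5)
Your overall plan is correct and mirrors the paper's proof in its essential structure: delay the third party's tampering as in \cref{fig:3splitstatetdc1}, split on the extractor's ``same'' ($(X,Y)=(X',Y')$) and ``tamper'' branches via the augmented non-malleability of \cref{lem:qnmcodesfromnmext}, use the Clifford twirl with side information (\cref{lem:twirl-wsi}) in the ``same'' branch to get the $\Phi_1$/$\Phi_2$ decomposition, and use \cref{claim:tss-bellbasis} to show the $\epr$ test rejects the low-Schmidt-rank branches. The error bookkeeping also lands in the right place.

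However, there is a genuine gap in your handling of the ``tamper'' branch. You assert that ``$R'$ \emph{is} a uniformly random key independent of everything in $Z$'' and then apply $C_{R'}^\dagger$ as a $1$-design. This is not what \cref{lem:qnmcodesfromnmext} gives you: the augmented non-malleability statement (item 2) says that \emph{$R$} (the original encryption key) is uniform and independent of $R'$ and the second party's side information; it says nothing about $R'$ being uniform, and $R'$ can in fact be a fixed deterministic value that the adversary chooses. Even setting that aside, you would need $R'$ to be independent of the \emph{post-tampering} register $Z'$ and of $E'$ to conclude decoupling after $C_{R'}^\dagger$, and there is no reason for this to hold: the shared entanglement $W_1,W_2,W_3$ can correlate $R'$ with $Z'$. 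The paper's Claim~\ref{claim:dfskjboca} avoids this entirely by exploiting the uniformity of $R$ \emph{before} the third party's channel $T$ acts: since $R$ is uniform and independent (of $R', W_2', W_3, E', \hat M$), conjugation by $C_R$ makes $\hat E M$ (inside $Z$) maximally mixed and decoupled, so the Schmidt number of the state across the $(W_2'E',\; W_3\hat{E}M\hat{M})$ cut is already $\le 2^{|W_3|}\le 2^{e_3}$ before $T$ or $C_{R'}^\dagger$ are applied; local operations then cannot increase it (\cref{prop:schimidt}). You should replace your $R'$-uniformity argument with this decouple-first-then-propagate argument. A second, smaller imprecision: in the ``same'' branch, the side-information register of \cref{lem:twirl-wsi} should be $W_3$ alone (it is the only register on which the twirled map $T$ acts besides $A=\hat E M$), with $E\hat M$ placed in the reference register $\hat A$, not lumped into the side-information slot.
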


The proof of which we defer to \Cref{subsubsection:proof-of-tdc-loa}. By instantiating \cref{theorem:tdc-loa} with the ingredients above, we obtain the following explicit construction. 
\begin{corollary}
    For any constant $0 < \gamma < 1/12$, there exists an efficient quantum TDC of blocklength $n$ and rate $\frac{1-12\cdot \gamma}{11}$ secure against $\lo^3_{(\infty, \infty, \gamma \cdot n)}$ with error $2^{-n^{\Omega(1)}}$.
\end{corollary}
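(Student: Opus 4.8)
The corollary is obtained by instantiating the generic construction of \cref{alg:algorithm3qtdc}--\cref{alg:algorithm3qtdc_dec} (equivalently \cref{fig:3splitstatetdc}) with explicit choices of its two ingredients, applying \cref{theorem:tdc-loa} as a black box, and then optimizing the free parameters $\delta,\delta'$ (hence $\lambda,k$) as functions of $\gamma$. First I would fix the ingredients: take $\nmext$ from \cref{lem:qnmcodesfromnmext} with a small constant $\delta>0$, which maps $\{0,1\}^n\times\{0,1\}^{\delta n}\to\{0,1\}^{(1/2-\delta)n}$, is computable in time $\mathrm{poly}(n)$, and has error $\epsilon_{\nmext}=2^{-n^{\Omega_\delta(1)}}$; and take the unitary $2$-design family $\{C_R\}$ to be the Clifford subgroup $\cSC$ of \cref{lem:subclifford}, whose sampler $\samp$ runs in polynomial time. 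Since the remaining operations in $\Enc,\Dec$ are only preparing $\lambda$ EPR pairs and performing the $\epr$ (Bell-basis) test, the instantiated code $(\Enc,\Dec)$ is efficient. \cref{theorem:tdc-loa} then directly gives that this code is a quantum TDC against $\lo^3_{(\infty,\infty,e_3)}$ with error $\epsilon'\le 2\epsilon_{\nmext}+6\cdot 2^{\,e_3-\lambda}$, for every target $e_3$.

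\textbf{Parameter selection.} Recall $\lambda=\delta'n$ and $k=(1/2-\delta-5\delta')n/5$, so the three shares $X$, $(Y,E)$, $Z$ have total blocklength $N=(1+\delta)n+2\lambda+k=\bigl(\tfrac{11}{10}+\tfrac{4\delta}{5}+\delta'\bigr)n$, and the rate is $k/N=\tfrac{\,1/10-\delta/5-\delta'\,}{\,11/10+4\delta/5+\delta'\,}$. Given the desired tolerance $e_3=\gamma N$, I would take $\delta$ an arbitrarily small positive constant and $\delta'$ a constant just above $\tfrac{11\gamma}{10(1-\gamma)}$. Then $\lambda-e_3=\bigl(\delta'-\gamma(\tfrac{11}{10}+\tfrac{4\delta}{5}+\delta')\bigr)n=\Omega(n)$, so $6\cdot 2^{\,e_3-\lambda}=2^{-\Omega(n)}$, while a routine computation shows that for such small $\delta,\delta'$ the rate is at least $\tfrac{1-12\gamma}{11}$; indeed, plugging $\delta=0$ and $\delta'=\tfrac{11\gamma}{10(1-\gamma)}$ gives rate exactly $\tfrac{1-12\gamma}{11}$ with $\lambda=e_3$, so the stated rate is attained up to an $o(1)$ slack that vanishes as the margin in $\delta'$ and the value of $\delta$ shrink. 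Finally, because $\delta,\delta'$ are fixed constants we have $n=\Theta(N)$, hence $2\epsilon_{\nmext}=2^{-n^{\Omega_\delta(1)}}=2^{-N^{\Omega(1)}}$; together with the $2^{-\Omega(N)}$ term this gives overall error $2^{-N^{\Omega(1)}}$, and efficiency was already established.

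\textbf{Main obstacle.} For the corollary itself there is no serious obstacle beyond the bookkeeping above; all the substance lies in \cref{theorem:tdc-loa}, whose proof (combining the augmented non-malleability of $\nmext$ with the Clifford twirl \cref{lem:twirl-wsi} and the Schmidt-number bound on the $\epr$ test, \cref{claim:tss-bellbasis}) is carried out separately and assumed here. The only delicate point, and the reason the constant is precisely $12$, is the tension in the parameter choice: tolerating $e_3=\gamma N$ qubits of adversarial entanglement forces $\lambda>e_3$, but $\lambda$ contributes to the blocklength \emph{twice} (once as $E$ in the second share, once as $\hat E$ inside $Z$), so each unit of entanglement tolerance costs roughly two units of blocklength, which combined with the relation $5(k+\lambda)=(1/2-\delta)n$ pins down the $\tfrac{1}{11}\to\tfrac{1-12\gamma}{11}$ rate loss. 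One should also be slightly careful that $\delta'<1/10$ so that $k>0$, which holds automatically for $\gamma<1/12$.
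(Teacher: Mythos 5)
Your proposal is correct and takes the same route the paper intends: the corollary is a direct instantiation of \cref{theorem:tdc-loa} with the $\nmext$ of \cref{lem:qnmcodesfromnmext} and the $2$-design sampler of \cref{lem:subclifford}, followed by the bookkeeping you carry out ($N=(\tfrac{11}{10}+\tfrac{4\delta}{5}+\delta')n$, the rate formula, and the constraint $\lambda>e_3=\gamma N$, which at $\delta\to 0$, $\delta'\to\tfrac{11\gamma}{10(1-\gamma)}$ yields rate $\tfrac{1-12\gamma}{11}$). The only very minor nit is in your informal ``main obstacle'' remark: since the constraint $5(k+\lambda)=(1/2-\delta)n$ forces $k$ to drop by one unit per unit of $\lambda$, the \emph{net} blocklength cost per unit of $\lambda$ is one, not two (the two $\lambda$-qubit registers $E,\hat E$ partially offset the $k$ lost from $Z$), though this does not affect your actual computation.
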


We further prove that our construction inherits quite strong secret sharing properties, which we extensively leverage in our future applications to ramp secret sharing. 

\begin{theorem}[3-out-of-3 Secret Sharing]\label{corr:2nmssq}
$(\enc,\dec)$ from \cref{theorem:tdc-loa} is also a $3$-out-of-$3$ secret sharing scheme with error $\epsilon'$. In fact, any two shares of the quantum TDC are $\epsilon'$-close to the maximally mixed state. 
\end{theorem}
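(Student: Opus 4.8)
\textbf{Proof proposal for \cref{corr:2nmssq}.} The plan is to directly inspect the reduced density matrix of any two of the three shares of $\Enc(\sigma_M)$ from \cref{alg:algorithm3qtdc}, and argue it is $\epsilon'$-close to the maximally mixed state regardless of the message $\sigma_M$. Recall the three shares are $X$, $(Y,E)$, and $Z = C_R(\hat{E}\otimes M)C_R^\dagger$ where $R = \nmext(X,Y)$ and $\Phi^{\otimes\lambda}_{E\hat E}$ are $\lambda$ EPR pairs. There are three cases to handle, one for each pair of shares, but the arguments are all variations on two ideas: the $2$-source extraction property of $\nmext$ (\cref{lem:qnmcodesfromnmext}, Item 1), and the fact that Paulis/Cliffords are $1$-designs (\cref{fact:notequal}).

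First I would handle the pair $\{X, (Y,E)\}$, i.e.\ the two shares \emph{not} containing $Z$. Here $\sigma_{X Y E} = (X,Y)\otimes U_E$ where the first two coordinates are simply two independent uniform strings (of length $n$ and $\delta n$) and $E$ carries half of the EPR pairs, which is maximally mixed on its own. Since $X, Y$ are sampled uniformly and independently of everything, $\sigma_{XYE}$ is \emph{exactly} maximally mixed on $X\otimes Y \otimes E$, so this case holds with error $0$. The main work is the other two cases, $\{X, Z\}$ and $\{(Y,E), Z\}$, which are symmetric in structure. Consider $\{(Y,E), Z\}$: we need $\sigma_{YEZ}$ close to maximally mixed. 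The key point is that $Z = C_R(\hat E \otimes M)C_R^\dagger$ where $R = \nmext(X,Y)$, and $X$ is traced out in this marginal. By the strong extraction guarantee (\cref{lem:qnmcodesfromnmext} Item 1), $\nmext(X,Y)Y \approx_{\epsilon_\nmext} U_r \otimes U_{\delta n}$, i.e.\ the key $R$ is $\epsilon_\nmext$-close to uniform and independent of $Y$. Conditioned on $Y$ (and on the value of the now near-uniform $R$), the register $Z$ is the conjugation of the fixed state $\hat E \otimes M$ (recall $\hat E$ is half of the EPR pair, which is maximally mixed, but here we should be careful: the joint state of $E\hat E$ is entangled, so I would keep the joint purification in mind) by a near-uniformly-random Clifford $C_R$ drawn by $\samp$. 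Applying \cref{fact:notequal} (which the paper notes holds for $\samp$-drawn Cliffords as well, up to the $2\cdot 2^{-2|A|}$ sampling error from \cref{lem:subclifford}), conjugating \emph{any} state on register $Z$ by a uniformly random element of $\cSC(\cH_Z)$ yields the maximally mixed state $U_Z$, \emph{independently} of the rest. So after averaging over $R$, the marginal on $Z$ is $U_Z$, independent of $Y$ and $E$, hence $\sigma_{YEZ} \approx U_Y \otimes U_E \otimes U_Z$ up to the accumulated error. The case $\{X, Z\}$ is identical using the other half of the strong extraction guarantee, $\nmext(X,Y)X \approx_{\epsilon_\nmext} U_r \otimes U_n$.

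The bookkeeping step is to collect the error terms and confirm they are bounded by $\epsilon'$. From the above we incur: $\epsilon_\nmext$ from extraction (once), plus $2\cdot 2^{-2|Z|}$ from the $\samp$ sampling error in \cref{lem:subclifford}, plus (if one is careful about the entanglement between $E$ and $\hat E$) possibly an additional term — but since $|Z| \geq \lambda$ and $\epsilon' = 2\epsilon_\nmext + 6\cdot 2^{e_3 - \lambda}$ with $e_3 - \lambda$ possibly as large as $0$, one checks $\epsilon_\nmext + 2\cdot 2^{-2\lambda} \leq \epsilon'$ comfortably. I would state the bound loosely as ``$\leq \epsilon'$'' and cite that $\epsilon_\nmext = 2^{-n^{\Omega(1)}}$ dominates the Clifford sampling error. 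One clean way to avoid re-deriving everything is to observe that this $3$-out-of-$3$ secret sharing property is in fact \emph{implied} by the tamper-detection property of \cref{theorem:tdc-loa} combined with \cref{theorem:results-encryption} (each share of a TDC against $\lo^t$ is $4\sqrt{\epsilon}$-close to a message-independent state) — but that only gives ``message-independent'', not ``maximally mixed'', and only for single shares, so the direct computation above is still needed for the ``two shares are maximally mixed'' claim.

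\textbf{Main obstacle.} The subtle point I expect to need the most care is the entanglement between $E$ (in share 2) and $\hat E$ (which gets absorbed into $Z$ via $C_R$). Naively one wants to say ``$\hat E$ is maximally mixed so its encryption $Z$ is maximally mixed and independent of everything'' — but $\hat E$ is \emph{correlated} with $E$, so I cannot just trace it out. The fix is to apply \cref{fact:notequal} to the \emph{joint} state on $Z$ conditioned on the rest (including $E$): conjugating register $Z$ (whatever its correlations with $E$, $Y$, $X$, $\hat M$) by a random Clifford gives $U_Z$ tensored with the unchanged remainder, which \emph{destroys} the $E$–$\hat E$ correlation entirely from the point of view of the marginal $\sigma_{YEZ}$. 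So the correlation is a red herring for \emph{this} statement — but one must phrase the application of \cref{fact:notequal} carefully (setting $A = Z$, $B = $ everything else) rather than reasoning about $\hat E$ in isolation. A secondary, purely cosmetic obstacle is that the Clifford $1$-design fact as stated uses the full Clifford group; I would invoke the paper's remark that both design properties hold for $\samp$-sampled Cliffords up to the stated sampling error.
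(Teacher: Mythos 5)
Your proposal is correct and takes essentially the same approach as the paper's proof: use strong extraction to decouple the key $R$ from whichever of $X$ or $Y$ survives in the two-share marginal, then invoke the Clifford $1$-design property (for $\samp$-sampled Cliffords) to make $Z$ maximally mixed and decoupled, and note that the $\{X,YE\}$ pair is exactly maximally mixed by construction. Your careful handling of the $E$–$\hat E$ correlation (applying the $1$-design fact with $A=Z$ and $B$ everything else, which wipes out that correlation entirely in the two-share marginal) is exactly the right instinct and is the subtlety the paper's terse proof glosses over; your error accounting ($\epsilon_\nmext + 2\cdot 2^{-2|Z|} \leq \epsilon'$) matches the paper's, which simply uses the looser intermediate bound $\epsilon_\nmext + 2^{1-\lambda}$.
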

\begin{proof}

Note that $\enc(.)$ in \cref{fig:3splitstatetdc} first samples an independent $(X,Y)$ and then generates $R= \nmext(X,Y)$. It also independently prepares $\lambda = \delta' n $ EPR pairs $\Phi_{E \hat{E}}$. It follows from the strong-extraction property of $\nmext$ (see \cref{lem:qnmcodesfromnmext}) that
\[ RX \approx_{\eps_\nmext} U_R \otimes U_X \quad ; \quad RY \approx_{\eps_\nmext} U_R \otimes U_Y .\]

Recall that the three shares are $(X, YE, Z = C_R(\hat{E} M) C^\dagger_R)$. Thus for every message ${\sigma_{M\hat{M}}}$, using the fact that Cliffords are 1-Designs (\cref{fact:notequal}), and the approximate sampler $\samp$ from \cref{lem:subclifford}, we have
\begin{gather}\label{eqfd}
     (\enc( \sigma))_{\hat{M}ZX}  \approx_{\eps_\nmext+2^{1-\lambda}} \sigma_{\hat{M}} \otimes  U_{Z} \otimes U_X \quad ;\\ \quad  (\enc( \sigma))_{\hat{M}ZYE}  \approx_{\eps_\nmext+2^{1-\lambda}} \sigma_{\hat{M}} \otimes  U_{Z} \otimes U_Y \otimes U_E. 
\end{gather}
Since $|Z|>\lambda=\delta' n $. Moreover, since $(X,Y)$ are sampled independently, we also have
\begin{equation}\label{eqfdsfvf}
    (\enc( \sigma))_{\hat{M}XYE}  =\sigma_{\hat{M}} \otimes  U_{X} \otimes U_Y \otimes U_E . 
\end{equation}
\qed \end{proof}

\subsubsection{Proof of \Cref{theorem:tdc-loa}}
\label{subsubsection:proof-of-tdc-loa}

To show that $(\enc,\dec)$ is an $\eps'$-secure quantum TDC, it suffices to show that for every $\mathcal{A}=(U,V,T,\psi_{W_1W_2W_3})$ it holds that (in \cref{fig:3splitstatetdc})
\begin{equation}\label{eq:finalgoal}
    \eta_{\hat{M}M'} \approx_{\eps'} p_{\mathcal{A}} \sigma_{\hat{M}M}  + (1-p_\mathcal{A}) (\sigma_{\hat{M}} \otimes  {\bot}_{M'}),
\end{equation}where $p_{\mathcal{A}}$ depends only on the tampering adversary $\cA$. In \cref{fig:3splitstatetdc1} below, we represent the same tampering experiment as \cref{fig:3splitstatetdc}, except for the the delayed action of the tampering map $T$. We show that \cref{eq:finalgoal} holds in \cref{fig:3splitstatetdc1}, which completes the proof.

\begin{figure}[H]
\centering
\resizebox{12cm}{6cm}{
\begin{tikzpicture}

\draw (0,7.4) ellipse (0.2cm and 1cm);
\draw (1,5.5) ellipse (0.3cm and 1.2cm);
\draw (9,5.9) rectangle (9.8,7.4);
\draw (11.6,5.4) rectangle (12.5,7.4);
\node at (0,6.9) {$M$};
\draw (0.2,6.9) -- (9,6.9);
\node at (1,6.4) {$\hat{E}$};
\node at (12.7,6.25) {$\hat{E}'$};
\node at (1,4.7) {$E$};
\draw (1.2,4.5) -- (4.5,4.5);
\node at (16.2,7.1) {$M'$};

\node at (9.3,6.6) {$C_R$};
\node at (12.1,6.1) {$C^\dagger_{R'}$};
\node at (0,8) {$\hat{M}$};
\node at (16.2,8) {$\hat{M}$};

\node at (10.6,6.2) {$T$};

\draw (0.2,8) -- (16,8);
\draw (1.2,6.5) -- (9,6.5);
\draw (11,6.5) -- (11.6,6.5);
\draw (9.8,6.5) -- (10.1,6.5);
\draw (12.5,6.5) -- (13,6.5);
\draw (12.5,7.1) -- (14.2,7.1);

\draw (4.4,4.5) -- (6.3,4.5);
\draw (7.3,4.5) -- (13,4.5);
\node at (12.7,4.7) {$E'$};
\draw (13,4.2) rectangle (14,6.8);
\node at (14.8,6.8) {$\mathsf{Rep}$};
\draw (14.2,6) rectangle (15.4,7.4);
\draw (15.4,6.6) -- (16,6.6);

\node at (13.5,5.4) {$\mathsf{EPR}$};
\draw (14,5.6) -- (14.6,5.6);
\draw (14.6,6) -- (14.6,5.6);
\node at (14.8,5.4) {$O$};
\node at (0.8,2) {$Y$};
\node at (0.8,1) {$X$};
\node at (3.4,1.2) {$R$};
\draw (1,1) -- (1.8,1);
\draw (1,2) -- (1.8,2);
\draw (3.2,1.5) -- (4,1.5);
\draw (4,1.5) -- (4,5.5);
\draw (4,5.5) -- (5.2,5.5);
\draw [dashed] (5.2,5.5) -- (6.5,5.5);
\draw (6.5,5.5) -- (9.3,5.5);
\draw (9.3,6) -- (9.3,5.5);

\draw (1.2,0.2) -- (4.6,0.2);
\draw (1.2,2.8) -- (4.6,2.8);
\draw (1.2,0.2) -- (1.2,1);
\draw (1.2,2) -- (1.2,2.8);
\draw (1.8,0.5) rectangle (3.2,2.5);
\node at (2.5,1.5) {$\nmcenc$};

\node at (11.5,7.8) {$\dec$};


\draw [dashed] (1.4,-1.4) -- (1.4,8.5);
\draw [dashed] (8.9,-1.4) -- (8.9,8.5);
\draw [dashed] (10,-1.4) -- (10,8.5);
\draw [dashed] (11.1,-1.4) -- (11.1,8.5);
\draw [dashed] (12.8,-1.4) -- (12.8,8.5);
\draw [dashed] (15.8,-1.4) -- (15.8,8.5);


\node at (6.9,1.6) {$\ket{\psi}_{W_1W_2W_3}$};
\node at (1.25,-1.4) {$\sigma$};
\node at (8.9,-1.4) {$\tau$};
\node at (10.2,-1.4) {$\nu$};
\node at (11.3,-1.4) {$\chi$};
\node at (12.6,-1.4) {$\mu$};
\node at (15.6,-1.4) {$\eta$};

\node at (4.6,4.7) {$E$};
\node at (8.5,4.7) {$E'$};
\node at (4.6,3) {$Y$};
\node at (7.5,3) {$Y'$};
\node at (11.3,6.7) {$Z'$};
\draw (4.5,2.8) -- (6.3,2.8);
\draw (7.3,2.8) -- (8,2.8);

\node at (11.7,2.6) {$R'$};
\draw (8.7,2.4) -- (12,2.4);
\draw (12,2.4) -- (12,5.4);

\node at (4.6,0.4) {$X$};
\node at (7.5,0.0) {$X'$};
\draw (4.5,0.2) -- (6.3,0.2);
\draw (7.3,0.2) -- (8,0.2);

\draw (10.1,5.5) rectangle (11,7);
\draw (6.3,2) rectangle (7.3,5);
\node at (6.8,3.5) {$V$};
\draw (6.3,0) rectangle (7.3,1);
\node at (6.8,0.5) {$U$};

\node at (6.5,-0.4) {$\mathcal{A}=(U,V,T,\psi)$};

\draw (5.8,3.1) ellipse (0.3cm and 2.8cm);
\node at (5.8,5.3) {$W_3$};
\draw (5.9,5.7) -- (10.1,5.7);
\draw (11,5.7) -- (11.25,5.7);
\node at (11.3,5.7) {$W'_3$};
\node at (5.8,2) {$W_2$};
\draw (6,2.2) -- (6.3,2.2);
\node at (7.7,2) {$W'_2$};
\draw (7.3,2.2) -- (7.5,2.2);
\node at (5.8,1) {$W_1$};
\draw (6,0.7) -- (6.3,0.7);
\node at (7.7,0.9) {$W'_1$};
\draw (7.3,0.7) -- (7.45,0.7);



\draw (8,-0.2) rectangle (8.7,3.5);
\node at (8.3,-0.5) {${\nmcdec}$};

\end{tikzpicture}}

    \caption{
       Analysis of quantum TDC against $\lo^3_{(e_1=\infty, e_2=\infty, e_3)}$.
    }\label{fig:3splitstatetdc1}
\end{figure}

\begin{proof}
    Consider the state ${\tau}$ in~\cref{fig:3splitstatetdc1}. Note ${\tau}_{\hat{M}M} \equiv \sigma_{\hat{M}M}$ is a pure state (thus independent of other registers in ${\tau}$) and 
\[{\tau} = (\nmext_{X'Y'} \otimes \nmext_{XY})  \left( (U \otimes V)  (\sigma \otimes \ketbra{\psi}_{W_1W_2W_3}) \right). \]

Our analysis will proceed by cases, depending on whether the $X$ and $Y$ registers are modified by the tampering experiment in \cref{fig:3splitstatetdc1} (i.e., $XY\neq X' Y'$) or not.
To this end, we consider two different conditionings of $\tau$ based on these two cases. Let $\tau^1$ be the state if the tampering adversary ensures $(XY=X'Y')$ and $\tau^0$ be the state conditioned on $(XY \ne X'Y')$. 

Using \cref{lem:qnmcodesfromnmext}, state $\tau$ can be written as convex combination of two states $\tau^1$ and $\tau^0$ such that:
\begin{equation}\label{eq:2classical198}
     (\tau)_{{R}R'W_2'W_3E'\hat{E}M\hat{M}}= p_{\sm} (\tau^1)_{{R}R'W_2'W_3E' \hat{E}M\hat{M}} + (1-p_{\sm}) (\tau^0)_{{R}R'W_2'W_3E' \hat{E}M\hat{M}},
    \end{equation}
where $p_\sm$ depends only on the adversary $\mathcal{A}'=(U,V,\psi_{W_1W_2W_3})$. In case $(XY=X'Y')$, the key is recovered $\Pr({R} =R')_{\tau^1}=1$. \cref{lem:qnmcodesfromnmext} guarantees the non-malleability of the secret key $R$:
\begin{multline}\label{eq:2classical1}
        p_{\sm} \Vert (\tau^1)_{{R}W_2'W_3E' \hat{E}M\hat{M}} -  U_{\vert R \vert} \otimes (\tau^1)_{W_2'W_3E' \hat{E}M\hat{M}} \Vert_1 + \\ (1-p_{\sm})\Vert (\tau^0)_{{R}R'W_2'W_3E' \hat{E}M\hat{M}} -  U_{\vert R \vert} \otimes (\tau^0)_{R'W_2'W_3E' \hat{E}M\hat{M}} ) \Vert_1  \leq \eps_{\nmext}.
    \end{multline}

    Suppose $\Upsilon$ denotes the CPTP map from registers ${R}R'W_3E' \hat{E}M\hat{M}$ to $M'\hat{M}$ (i.e. $\Upsilon$ maps state $\tau$ to $\eta$) in~\cref{fig:3splitstatetdc1}. We present two lemmas (proved in the next subsections) which allow us to conclude the proof. The first one stipulates that if the key is not recovered, the Bell basis measurement rejects with high probability:
 
     \begin{lemma}[Key \textit{not} recovered]\label{lemma:dfskjboca}
   $ \Upsilon(U_{\vert R \vert} \otimes (\tau^0)_{R'W_2'W_3E' \hat{E}M\hat{M}}) 
 \approx_{2 \cdot  2^{ \vert W_3 \vert - \vert E \vert  }} \sigma_{\hat {M}}  \otimes \bot_{M'}$ . 
 \end{lemma}
    
    In the second, if the key is recovered, then the 2-design functions as an authentication code. In this manner, we either recover the original message, or reject:

    \begin{lemma}[Key recovered]\label{lemma:nekjbs2}
   $ \Upsilon(U_{\vert R \vert} \otimes (\tau^1)_{W_2'W_3E' \hat{E}M\hat{M}}) 
 \approx_{ \frac{2}{(4^{\vert E \vert + \vert M \vert}-1)} + \frac{2}{2^{\vert E \vert}} }  p\sigma_{\hat {M}M}  + (1-p) \sigma_{\hat {M}}  \otimes \bot_{M'}$ . Furthermore, $p$ depends only on adversary $\mathcal{A} = (U,V,T, \ket{\psi}_{W_1W_2W_3})$.
 \end{lemma}

    We are now in a position to conclude the proof. First, leveraging \cref{eq:2classical1}
\begin{gather}
  \eta_{M'\hat{M}} =\Upsilon( (\tau)_{{R}R'W_3E' \hat{E}M\hat{M}})\nonumber = p_{\sm}  \Upsilon((\tau^1)_{{R}R'W_3E' \hat{E}M\hat{M}}) 
    + (1-p_{\sm})  \Upsilon((\tau^0)_{{R}R'W_3E' \hat{E}M\hat{M}}) \\
    \approx_{\eps_{\nmext}} p_{\sm}  \Upsilon(U_{\vert R \vert} \otimes (\tau^1)_{W_2'W_3E' \hat{E}M\hat{M}} ) + (1-p_{\sm}) \Upsilon(U_{\vert R \vert} \otimes (\tau^0)_{R'W_2'W_3E' \hat{E}M\hat{M}} ) 
\end{gather}
Next, by applying \cref{lemma:nekjbs2}, proceeded by \cref{lemma:dfskjboca}:
\begin{gather}
     \approx_{ \frac{2}{(4^{\vert E \vert + \vert M \vert}-1)} + \frac{2}{2^{\vert E \vert}} } p_{\sm}\cdot (p\cdot \sigma_{M\hat{M}}+  (1-p) (\bot_{{M'}} \otimes \sigma_{\hat{M}}))+ (1-p_{\sm}) \Upsilon(U_{\vert R \vert} \otimes (\tau^0)_{R'W_2'W_3E' \hat{E}M\hat{M}} )\\
     \approx_{2 \cdot \frac{2^{\vert W_3 \vert}}{ 2^{\vert E \vert}} } p_{\sm}\cdot p\cdot \sigma_{M\hat{M}}+  p_{\sm}  (1-p) (\bot_{{M'}} \otimes \sigma_{\hat{M}})  + (1-p_{\sm}) \bot_{M'} \otimes \sigma_{\hat{M}}
\end{gather}

The total error is thus $\leq \eps_\nmext + 6\cdot 2^{|W_3|-|E|}$. The observation that $p_\sm$ and $p$ depend only on the adversary $\mathcal{A} = (U,V,T ,\ket{\psi}_{W_1W_2W_3})$ completes the proof.

\qed\end{proof}

To prove the remaining lemmas above, we require a short statement on the Bell basis measurements of states of bounded Schmidt rank. 

\begin{lemma}\label{lemma:tss-bellbasis} Let 
$ \Phi_{E \hat{E}} \equiv \Phi^{\otimes \lambda} $ denote 
 $\lambda$ EPR pairs on a bipartite register $E, \hat{E}$. Let $\tau_{E\hat{E}}$ be a pure state of Schmidt rank $R$. Then, measuring $\tau$ with the binary measurement $\{ \Phi_{E\hat{E}} , \id_{E\hat{E}}  -\Phi_{E\hat{E}}  \}$ outputs the $\Phi_{E\hat{E}}$ with negligible probability, i.e., 
    \begin{equation*}
       \Tr[\Phi^{\otimes \lambda}\tau]\leq R\cdot 2^{- \lambda}.
    \end{equation*}

\noindent Similarly, if $\tau_{E\hat{E}}$ were a mixed state with Schmidt number $R$, then $\Tr[\Phi^{\otimes \lambda}\tau]\leq R\cdot 2^{- \lambda}.$
\end{lemma}

\begin{proof}

    [of \Cref{lemma:tss-bellbasis}] Let us first consider a pure state $\ket{\phi}$ of rank $R$, and let $\ket{\phi} = \sum_{i=1}^R \alpha_i \ket{u_i}_E\otimes \ket{v_i}_{\hat{E}}$, $ \Phi_{E \hat{E}} \equiv \ket{\Phi}^{\otimes \lambda} = 2^{-\lambda/2} \sum_j  \ket{j}_E\otimes \ket{j}_{\hat{E}}$ define Schmidt decompositions. Then, by the triangle and the Cauchy-Schwartz inequalities:
    \begin{gather}
        \Tr[\Phi^{\otimes \lambda}\phi]^{1/2} = |\bra{\Phi}^{\otimes \lambda}\ket{\phi}| \leq \sum_i^R |\alpha_i|\cdot |\bra{\Phi}^{\otimes \lambda}\ket{u_i}\otimes \ket{v_i}|\leq \\ \leq \max_{\ket{u}\otimes \ket{v}} |\bra{\Phi}^{\otimes \lambda}\ket{u}\otimes \ket{v}| \cdot R^{1/2}\cdot \bigg(\sum_i |\alpha_i|^2\bigg)^{1/2} \leq \\ \leq R^{1/2}\cdot \max_{\ket{u}\otimes \ket{v}} |\bra{\Phi}^{\otimes \lambda}\ket{u}\otimes \ket{v}|
    \end{gather}
    \noindent In turn, 
    \begin{gather}
        |\bra{\Phi}^{\otimes \lambda}\ket{u}\otimes \ket{v}| =  2^{-\lambda/2}\cdot \bigg|\sum_j \bra{j}\ket{u}\cdot  \bra{j}\ket{v}\bigg| \leq \\ \leq  2^{-\lambda/2}\bigg(\sum_j \cdot |\bra{j}\ket{u}|^2\bigg)^{1/2} \bigg(\sum_j  |\bra{j}\ket{v}|^2\bigg)^{1/2} \leq 2^{-\lambda/2},
    \end{gather}
    \noindent which gives us the desired bound. If $\tau$ is a mixed state of Schmidt number $R$, then it can be written as a convex combination of pure states of Schmidt rank $\leq R$, which concludes the lemma. \qed \end{proof}

 \begin{proof}
 
 [of \cref{lemma:dfskjboca}]
Let $\nu^0, \chi^0, \mu^0$ be the intermediate states and $\eta^0$ be the final state when we run the CPTP map $\Upsilon$ on $U_{\vert R \vert} \otimes \tau^0_{R'W_2'W_3E' \hat{E}M\hat{M}}$ (see \Cref{fig:3splitstatetdc1}). Since, $\tau^0_{R'W_2'W_3E' \hat{E}M\hat{M}} = \tau^0_{R'W_2'W_3E' \hat{E}}  \otimes \sigma_{M\hat{M}}$, using~\cref{fact:notequal} (Cliffords are 1-Designs) it follows that in the state $\nu^0$ the two registers $E$ and $\hat{E}$ are decoupled:

\begin{equation}
    \nu^0_{R'W_2'W_3E' \hat{M} \hat{E}M} = \tau^0_{R'W_2'W_3E'  } \otimes \sigma_{\hat{M}} \otimes U_{\hat{E}M}
\end{equation}

We fix $R'=r'$ and argue that we output $\bot_{M'}$ with high probability for every such fixing. Let 
  $\tau^{0,r'} \defeq \tau^{0} \vert (R'=r')$ and similarly define $\nu^{0,r'} , \chi^{0,r'} $, $\mu^{0,r'} $ and $\eta^{0,r'}$. Note $\nu^{0,r'}_{W_2'W_3E' \hat{M} \hat{E}M} = \tau^{0,r'}_{W_2'W_3E'  } \otimes \sigma_{\hat{M}} \otimes U_{\hat{E}M} $. This implies that the Schmidt number of the state $\nu^{0,r'}$ across the bipartition $(W_2'E', W_3\hat{E}M \hat{M})$ is $\leq 2^{\vert W_3 \vert}$. 

  The states $\chi^{0, r'}, \mu^{0, r'}$ can be prepared from $\nu^{0,r'}$ using just local operations on each side of the cut $(W_2'E' , W_3\hat{E}M \hat{M})$. From \cref{prop:schimidt}, their Schmidt numbers are at most $2^{\vert W_3 \vert}$. Moreover, again by \cref{prop:schimidt}, the reduced density matrix $\mu^{0,r'}_{E'\hat{E}}$ also has Schmidt number at most $2^{\vert W_3 \vert}$. \Cref{lemma:tss-bellbasis} then ensures that the $\epr$ test on state $\mu^{0,r'}$ fails with probability at least $1-2^{\vert W_3 \vert-\vert E  \vert}$. We conclude $\eta^{0,r'}_{ \hat{M} M'} \approx_{ 2 \cdot 2^{\vert W_3 \vert-\vert E  \vert} }  \sigma_{\hat{M}} \otimes \bot_{M'}$. 

Since the above argument works for every fixing of $R'=r'$ and $\eta^0 = \mathbb{E}_{r'} \eta^{0,r'}$, we have the desired.

  \suppress{We fix $R'=r'$ and argue that we output $\bot_{M'}$ with high probability for every such fixing. Let 
  $\tau^{0,r'} \defeq \tau^{0} \vert (R'=r')$ and we similarly define $\nu^{0,r'} , \chi^{0,r'} $, $\mu^{0,r'} $ and $\eta^{0,r'}$. Note $\nu^{0,r'}_{W_2'W_3E' \hat{M} \hat{E}M} = \tau^{0,r'}_{W_2'W_3E'  } \otimes \sigma_{\hat{M}} \otimes U_{\hat{E}M} $. This implies that the Schmidt number of the state $\nu^{0,r'}$ across the bipartition $(W_2'W_3E' \hat{M},\hat{E}M )$ is $1$. 
  
  Using \cref{prop:schimidt2}, we can conclude that Schmidt number of the state $\nu^{0,r'}$ across the bipartition $(W_2'E' \hat{M},W_3\hat{E}M )$ is atmost $2^{\vert W_3 \vert}$. Now consider the state $\chi^{0,r'}$, i.e. $\chi^{0,r'} = T_{\hat{E}MW_3} (\nu^{0,r'})$. Using \cref{prop:schimidt}, we can conclude the Schmidt number of the state $\chi^{0,r'}$ across the biparition $(W_2'E' \hat{M},\hat{E}'M'W'_3 )$ is atmost $2^{\vert W_3 \vert}$. Using \cref{prop:schimidt} again, we further have that Schmidt number of the state $\mu^{0,r'}$ across the bipartition $(W_2'E' \hat{M},\hat{E}'M'W'_3 )$ is at most $2^{\vert W_3 \vert}$. Using the same proposition again, Schmidt number of the state $\mu^{0,r'}$ across the bipartition $(E' ,\hat{E}' )$ is atmost $2^{\vert W_3 \vert}$.

  \Cref{lemma:tss-bellbasis} then ensures that the $\epr$ test fails with probability at least $1-2^{\vert W_3 \vert-\vert E  \vert}$ on state $\mu^{0,r'}$. Thus, we conclude 
$\eta^{0,r'}_{ \hat{M} M'} \approx_{ 2 \cdot 2^{\vert W_3 \vert-\vert E  \vert} }  \sigma_{\hat{M}} \otimes \bot_{M'}$. 

Since the above argument works for every fixing of $R'=r'$ and $\eta^0 = \mathbb{E}_{r'} \eta^{0,r'}$, we have the desired. }

 \qed\end{proof}

 \begin{proof} 

 [of \cref{lemma:nekjbs2}] Let $\nu^1, \chi^1, \mu^1$ be the intermediate states and $\eta^1$ be the final state when we run the CPTP map $\Upsilon$ on $U_{\vert R \vert} \otimes \tau^1_{W_2'W_3E' \hat{E}M\hat{M}}$ (see \Cref{fig:3splitstatetdc1}).

 Note we have $(\tau^1)_{W_2'W_3E' \hat{E}M\hat{M}} = (\tau^1)_{W_2'W_3E' \hat{E}} \otimes \sigma_{M \hat{M}}$. Consider the state $\mu^1$, i.e. the state obtained by the action of $C_R$ on $\tau^1$ followed by CPTP map $T$, followed by $C^\dagger_R$. Using, \cref{lem:twirl-wsi}, we have 
\[\mu^1_{\hat{M}M'\hat{E}' E' W'_3} \approx_{2/(4^{\vert M \vert +\vert \hat{E} \vert} -1)}  T_1 ( (\tau^1)_{\hat{M}M\hat{E}E'W_3}  ) + T_2 (( \tau^1)_{\hat{M} E'W_3 } \otimes U_{M \hat{E}}  ), \]
where $T_1(.) : \cL( \cH_{W_3}) \to  \cL( \cH_{W'_3})$ ,  $T_2(.) : \cL( \cH_{W_3}) \to  \cL( \cH_{W'_3})$ are CP maps such that $T_1 +T_2$ is trace preserving and they depend only on adversary CPTP map $T(.)$. Note both \[  T_1 ( (\tau^1)_{\hat{M}M\hat{E}E'W_3}  ) \quad ; \quad T_2 (( \tau^1)_{\hat{M} E'W_3 } \otimes U_{M \hat{E}}  ) \] are sub-normalized density operators. Let $p_1 \defeq \Tr( T_1 ( (\tau^1)_{\hat{M}M\hat{E}E'W_3}  ) ) $. Let  \[ \mu^{1,0}  \defeq  \frac{1}{p_1} T_1 ( (\tau^1)_{\hat{M}M\hat{E}E'W_3}  )  \quad ; \quad  \mu^{1,1}  \defeq  \frac{1}{1-p_1} T_2 (( \tau^1)_{\hat{M} E'W_3 } \otimes U_{M \hat{E}}  ). \]
Note $\mu^1  \approx_{2/(4^{\vert M \vert + \vert \hat{E} \vert} -1)}  p_1 \mu^{1,0} + (1-p_1)\mu^{1,1}$. Let the final states be $\eta^{1,0}$, $\eta^{1,1}$ when we run the $\epr$ test followed by $\mathsf{Rep}$ on $\mu^{1,0}$, $\mu^{1,1}$ respectively. Since,  $\mu^1  \approx_{2/(4^{\vert M \vert + \vert \hat{E} \vert}-1)}  p_1 \mu^{1,0} + (1-p_1)\mu^{1,1}$, we conclude,
\begin{equation}\label{lemma:eq1djfvsissdd}
   \eta^1  \approx_{2/(4^{\vert M \vert + \vert \hat{E} \vert}-1)}  p_1 \eta^{1,0} + (1-p_1)\eta^{1,1}.
\end{equation}
In the first case, since  $(\mu^{1,0})_{W'_3E' \hat{E}'M'\hat{M}} = (\mu^{1,0})_{W'_3E' \hat{E}'} \otimes \sigma_{M \hat{M}}$, we can conclude that
\begin{equation}\label{lemma:eq1djfvsi}
    \eta^{1,0} = p_2 \sigma_{ \hat{M}M} + (1-p_2) \sigma_{\hat{M}} \otimes \bot_{M'},
\end{equation}
and furthermore $p_2$ depends only on CP map $T_1(.)$ and state $\tau^1$, which further depends only on tampering adversary $\mathcal{A}$.

In the second case, since  $(\mu^{1,1})_{W'_3E' \hat{E}M'\hat{M}} = (\mu^{1,1})_{W'_3E' } \otimes U_{\hat{E}'M} \otimes  \sigma_{ \hat{M}}$, we can conclude 
\begin{equation}\label{lemma:eq1djfvsiw}
    \eta^{1,1}_{\hat{M}M'} \approx_{ 2 \cdot 2^{-\vert E \vert}} \sigma_{\hat{M}} \otimes \bot_{M'},
\end{equation} 
using \cref{lemma:tss-bellbasis} as $\epr$ test rejects with probability atleast $1- 2^{-\vert E \vert}$.

Combining \cref{lemma:eq1djfvsissdd}, \cref{lemma:eq1djfvsi} and  \cref{lemma:eq1djfvsiw}, we have the following:
 \begin{align*}
   &\Upsilon(U_{\vert R \vert} \otimes (\tau^1)_{W_2'W_3E' \hat{E}M\hat{M}})  \\
     &=  \eta^{1}_{\hat{M}M'} \\
     &  \approx_{2/(4^{\vert M \vert + \vert \hat{E} \vert}-1)}  p_1 \eta^{1,0} + (1-p_1)\eta^{1,1}  \\
     & = p_1p_2 \sigma_{\hat{M}M} + p_1 (1-p_2) \sigma_{\hat{M}} \otimes \bot_{M'} +(1- p_1) \eta^{1,1}  \\
     & \approx_{2 \cdot 2^{-\vert E \vert}}  p_1 \cdot p_2 \sigma_{\hat{M}M} + p_1 (1-p_2) \sigma_{\hat{M}} \otimes \bot_{M'} +(1- p_1) \sigma_{\hat{M}} \otimes \bot_{M'} \\
     &= p_1 \cdot p_2 \sigma_{\hat{M}M} + (1-p_1 \cdot p_2) \sigma_{\hat{M}} \otimes \bot_{M'}.
 \end{align*}

Considering $p =p_1 \cdot p_2  $, we have the desired. Through $p_1, p_2$, $p$ depends only on tampering adversary $\mathcal{A}$. 
 \qed\end{proof}

\section{Secret Sharing Schemes and definitions of their many variants}\label{section:prelim-ss}

In this Section we present basic definitions of secret sharing schemes, in addition to known constructions in the literature that we require to instantiate our compilers. We begin in \Cref{subsubsection:prelim-threshold-ss} by introducing threshold secret sharing, proceeded by leakage-resilient secret sharing in \Cref{subsubsection:prelim-lrss}, and non-malleable secret sharing in \Cref{subsubsection:prelim-nmss}. Finally, in \Cref{subsubsection:prelim-tamper-detecting-ss} we formally introduce the tamper detecting secret sharing schemes we construct in this work.

\subsection{Threshold Secret Sharing Schemes}
\label{subsubsection:prelim-threshold-ss}

Informally, in a $t$-out-of-$p$ secret sharing scheme, any $t$ honest shares suffice to reconstruct the secret, but no $t-1$ shares offer any information about it.

\begin{definition}
    [$(p, t, \epspriv, \epsilon_c)$-Secret Sharing Scheme]\label{definition:secret-sharing} Let $\mathcal{M}$ be a finite set of secrets, where $|\mathcal{M}| \geq 2$. Let $[p] = \{1, 2, \cdots , p\}$ be a set of identities (indices) of $p$ parties. A sharing channel $\Enc$ with domain of secrets $\mathcal{M}$ is a $(p, t, \epspriv, \epsilon_c)$ \emph{threshold secret sharing scheme} if the following two properties hold:
    \begin{enumerate}
        \item \textbf{Correctness}: the secret can be reconstructed by any set of parties $T\subset [p], |T|\geq t$. That is, for every such $T$, there exists a reconstruction channel $\Dec_T$ such that
        \begin{equation}
         \forall m\in \mathcal{M}: \text{ } \mathbb{P}[\Dec_T\circ \Enc(m)_T\neq m]\leq \epsilon_c
        \end{equation}
        \item \textbf{Statistical Privacy}: Any collusion of $|T|\leq t-1$ parties has “almost” no information about the underlying secret. That is, for every distinguisher $D$ with binary output:
        \begin{equation}
         \forall m_0, m_1\in \mathcal{M}: \text{ } |\mathbb{P}[D(\Enc(m_0)_T)=1] - \mathbb{P}[D(\Enc(m_1)_T)=1]|\leq \epspriv
        \end{equation}
    \end{enumerate}
\end{definition}

We point out that we use the same syntactic definition of secret sharing for schemes which hide classical messages into classical or quantum ciphers. However, when the cipher is classical, we refer to the encoding/decoding channels as $\share$ and $\rec$ (reconstruction) functions, as opposed to quantum channels $(\Enc, \Dec)$. The block-length of the secret sharing scheme is the total qubit $+$ bit-length of all the shares, and its rate as the ratio of message length (in bits) to block-length.

To instantiate our constructions, we use Shamir’s  standard threshold secret sharing scheme:

\begin{fact}
    [\cite{Sha79}] For any number of parties $p$ and threshold $t$ such that $t\leq p$, there exists a $t$-out-of-$p$ secret sharing scheme $(\share, \rec)$ for classical messages of length $b$ with share size at most $\max(p, b)$, where both the sharing and reconstruction procedures run in time poly$(p, b)$.
\end{fact}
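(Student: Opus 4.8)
The plan is to instantiate Shamir's classical polynomial secret-sharing construction and argue that it meets \cref{definition:secret-sharing} with \emph{perfect} parameters $\epspriv = \epsilon_c = 0$. First I would fix a finite field $\F_q$ with $q$ a prime power chosen large enough that (i) a $b$-bit message embeds into a single field element and (ii) $q - 1 \ge p$, so that $p$ distinct nonzero evaluation points $\alpha_1,\dots,\alpha_p \in \F_q$ exist; concretely one can take $q = 2^{\max(\lceil \log(p+1)\rceil,\, b)}$, which also pins down the share size. The sharing map $\share(m)$ then samples coefficients $a_1,\dots,a_{t-1}\leftarrow \F_q$ uniformly and independently, forms the polynomial $f(X) = m + a_1 X + \dots + a_{t-1}X^{t-1}$ of degree $\le t-1$, and outputs the shares $(f(\alpha_1),\dots,f(\alpha_p))$.

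For \textbf{correctness}, given any authorized set $T$ with $|T|\ge t$, I would restrict to any $t$ of its shares; since $f$ has degree $\le t-1$ and the corresponding evaluation points are distinct, the $t\times t$ Vandermonde system is invertible, so $f$ — and hence $m = f(0)$ — is recovered exactly by Lagrange interpolation, giving a reconstruction map $\rec_T$ with zero error. For \textbf{privacy}, I would take any $T = \{i_1,\dots,i_{t-1}\}$ and observe that $0,\alpha_{i_1},\dots,\alpha_{i_{t-1}}$ are $t$ distinct points, so the evaluation map sending a polynomial's coefficient vector in $\F_q^t$ to its values at these $t$ points is a linear bijection. Conditioning the first coordinate on $f(0)=m$ and taking the remaining coefficients uniform then forces $(f(\alpha_{i_1}),\dots,f(\alpha_{i_{t-1}}))$ to be uniform on $\F_q^{t-1}$ \emph{independently of $m$}; consequently the share distribution of any $\le t-1$ parties is identical for all secrets, so every distinguisher has advantage $0$. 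Finally, for \textbf{efficiency and share size}: each share is a single element of $\F_q$, i.e.\ $\lceil\log q\rceil \le \max(p,b)$ bits, and field arithmetic, evaluation at $p$ points, and Lagrange interpolation all run in time $\mathrm{poly}(p,\log q) = \mathrm{poly}(p,b)$.

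I do not expect a genuine obstacle here — this is a classical textbook result — and the only point requiring any care is the field-size bookkeeping: choosing $q$ simultaneously large enough to carry a $b$-bit secret and to supply $p$ distinct evaluation points while keeping the share length within the claimed $\max(p,b)$ bound. If $b$ is small relative to $\log p$ one simply works over $\F_{2^{\lceil\log(p+1)\rceil}}$ (splitting longer messages into independent instances when needed), which is why the bound is stated loosely as $\max(p,b)$ rather than something tighter.
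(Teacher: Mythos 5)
Your proposal is correct and is exactly Shamir's construction and proof, which is what the paper's citation to \cite{Sha79} refers to; the paper states this as a Fact with no proof of its own. The only small wrinkle is in your last sentence: when $b < \lceil\log(p+1)\rceil$ the message already fits in one field element, so no splitting is needed — splitting only arises if one insists on a smaller field for a larger message, at the cost of multiple field elements per share — but this does not affect the $\max(p,b)$ bound or the argument.
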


\subsection{Leakage-Resilient Secret Sharing Schemes}
\label{subsubsection:prelim-lrss}

Integral in our constructions are classical secret sharing schemes which offer privacy guarantees even if unauthorized subsets are allowed to leak information about their shares to each other in an attempt to distinguish the message. Moreover, we depart from standard classical leakage models in that the leaked information itself could be a quantum state, even if the shares are classical. The definition of leakage-resilient secret sharing formalizes this idea as follows:

\begin{definition}
[Leakage-Resilient Secret Sharing]\label{def:lrss} Let $(\share, \rec)$ be a secret sharing scheme with randomized sharing function $\Share:\mathcal{M}\rightarrow \{\{0, 1\}^{l'}\}^p$, and let $\mathcal{F}$ be a family of leakage channels. Then $\share$ is said to be $(\mathcal{F}, \epsilon_{lr})$ leakage-resilient if, for every channel $\Lambda\in \mathcal{F}$, 
    \begin{equation}
       \forall m_0, m_1\in \mathcal{M}: \text{ } \Lambda(\Share(m_0)) \approx_{\epsilon_{lr}}  \Lambda(\Share(m_1))
    \end{equation}
\end{definition}

As an example, the standard local leakage model in the context of classical secret sharing schemes allows bounded leakage queries $\{f_i: \{0, 1\}^{l'} \rightarrow \{0, 1\}^\mu\}_{i\in K}$, on each share corresponding to an arbitrary set of indices $K\subset [n]$, and further allows full share queries corresponding to an unauthorised subset $T\subset [n]$:

\begin{equation}
    \mathsf{Leak}(\share(m)) = \share(m)_T, \{f_i(\share(m)_i)\}_{i\in K}.
\end{equation}

In our constructions, we unfortunately require a slight modification to this setting, where the leakage parties $K$ are an unauthorized subset, but are allowed to \textit{jointly} leak a small quantum state (dependent on their shares) to $T$. We formally introduce this model as follows:

\begin{definition}
[Quantum $k$-Local Leakage Model]\label{def:k-local-leakage}
    For any integer sizes $p, t, k$ and leakage length (in qubits) $\mu$, we define the $(p, t, k, \mu)$-local leakage model to be the collection of channels specified by
    \begin{equation}
    \mathcal{F}_{k, \mu}^{p, t} =\bigg\{ (T, K, \Lambda): T, K\subset [p],|T|< t, |K|\leq k, \text{ and }\Lambda:\{0, 1\}^{l'\cdot |K|}\rightarrow \cL(\mathcal{H})\bigg\},
\end{equation}
\noindent where $\log \dim(\mathcal{H}) = \mu$. A leakage query $(T, K, \Lambda)\in \mathcal{F}_{k, \mu}^{p, t}$ on a secret $m$ is the density matrix:
\begin{equation}
    (\mathbb{I}_{T}\otimes \Lambda_K)(\Share(m)_{T \cup K})
\end{equation}
\end{definition}

 \noindent In other words, the parties in $K$ perform a quantum channel on their (classical) shares and send the $\mu$ qubit output state to $T$. To the extent of our knowledge, in the literature we do not know of $\lrss$ constructions in this model (even with classical leakage). However, in \Cref{section:lrss} we show that simple modifications to a construction of $\lrss$ against local leakage by \cite{CKOS22} provides such guarantees. 

\begin{theorem}[\Cref{theorem:lrss-local-leakage}, restatement]\label{theorem:lrss-local-leakage-prelim}
    For every $\mu, k, t, p, l\in \mathbb{N}$ such that $k<t, t+k\leq p,$ and $p\leq l$, there exists an $(p, t, 0, 0)$ threshold secret sharing scheme on messages of $l$ bits and shares of size $l+\mu +o(l, \mu)$ bits, which is perfectly correct and private and $p\cdot 2^{-\Tilde{\Omega}(\sqrt[3]{\frac{l+\mu}{p}})}$ leakage-resilient against $\mathcal{F}_{k, \mu}^{p, t}$.
\end{theorem}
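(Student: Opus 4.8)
The plan is to follow the compiler of \cite{CKOS22} for leakage-resilient secret sharing against local leakage, and to show that the argument goes through essentially unchanged when the leakage is a small \emph{quantum} state produced \emph{jointly} by the $k$ leaking parties. Recall that the \cite{CKOS22} compiler works by taking a base threshold scheme (e.g. Shamir) and composing it with a layer of seeded randomness extractors / strong seeds so that each party's share becomes ``leakage-resilient'': the share is an extractor output keyed by a short seed, and the full share carries both the seed and a masked secret-share. The key point of their analysis is a hybrid argument in which one replaces, share by share, the honest extractor output with a uniformly random string, invoking the strong-extractor property against an adversary who holds (i) the full shares of the unauthorized set $T$ and (ii) the bounded leakage $\{f_i(\cdot)\}_{i\in K}$. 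What I would do first is isolate exactly which property of the leakage the \cite{CKOS22} proof uses: it only uses that the leakage is a \emph{bounded-length} (here $\mu$-qubit) function of the shares of $K$, and that conditioning on it does not reduce the min-entropy of the relevant source by more than $\mu$.

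The main technical step is therefore a \emph{quantum} conditional min-entropy / extractor argument. Concretely, the source that feeds the final extractor is some share $M_i$ (a near-uniform string of length $\ell' \approx (\ell+\mu)/p$), and the side information available to the distinguisher is the classical shares $M_T$ together with the $\mu$-qubit leakage state $\sigma_{M_K}$. I would argue that the conditional min-entropy $\mathrm{H}_{\min}(M_i \mid M_T \sigma_{M_K})$ is at least (roughly) $\ell' - |M_T| - \mu$, using the chain rule for smooth conditional min-entropy against quantum side information (a dimension bound: conditioning on a $\mu$-qubit register costs at most $\mu$ bits of min-entropy, e.g. \cite{Tomamichel2015QuantumIP}). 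One then needs a \emph{quantum-proof} seeded extractor, so that the output is close to uniform even given the quantum leakage; the relevant fact is that the extractor used in \cite{CKOS22} (which is a standard construction) is quantum-proof, or can be replaced by one that is, with only a constant-factor loss in seed length and error. Plugging the resulting near-uniformity into the \cite{CKOS22} hybrid sequence, and doing the union bound over the $\leq p$ hybrids, gives the stated error $p\cdot 2^{-\tilde\Omega(\sqrt[3]{(\ell+\mu)/p})}$, with the cube root and the $\tilde\Omega$ inherited verbatim from their parameter choices; the share size becomes $\ell + \mu + o(\ell,\mu)$ because each share must now additionally absorb room for the $\mu$-qubit leakage budget and the extractor seeds, which are lower-order. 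Perfect correctness and perfect privacy (the $(p,t,0,0)$ part) come for free from the base Shamir scheme and the reversibility of the masking layer, exactly as in \cite{CKOS22}.

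The main obstacle I expect is the \emph{joint} nature of the leakage. In the standard local-leakage model each $f_i$ acts on a single share, so one can peel off leakages one share at a time; here the $k$ parties in $K$ apply one global channel $\Lambda_K$ to $M_{T\cup K}$ producing an entangled $\mu$-qubit state. The fix is to observe that for the hybrid step targeting share $M_i$ with $i\notin T\cup K$, the entire object $(M_T, \sigma_{M_K})$ is simply a (quantum) random variable of total ``size'' $|M_T| + \mu$ that is \emph{independent of $M_i$'s extractor seed conditioned on $M_i$'s source in the right way} — so the joint-vs-product distinction is harmless for those hybrids. The genuinely delicate hybrids are those where $i \in K$: there the source being extracted from is itself being leaked on. But since $|K| = k < t \le p-t$ and $|T| < t$, one can always choose the ordering of hybrids so that the shares in $K$ are handled while still retaining enough uncorrupted shares to invoke privacy of the base scheme, and one simply charges the $\mu$-qubit leakage as a min-entropy loss on that single source — this is exactly where the dimension bound on quantum min-entropy does the work and where I'd expect to spend the most care in writing the formal proof. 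A secondary, more bookkeeping-level obstacle is tracking the smoothing parameters through the quantum chain rule so that the final error stays of the claimed form; I would handle this by fixing all smoothing parameters to be polynomially related to the target error up front, as is standard.
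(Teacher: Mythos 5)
Your proposal is essentially the paper's approach: instantiate the \cite{CKOS22} compiler with a \emph{quantum-proof} strong seeded extractor (Trevisan's, via \cite{DPVR09}, which is linear and supports efficient pre-image sampling), and push the hybrid argument through using the chain rule for quantum conditional min-entropy (conditioning on the $\mu$-qubit leakage register costs at most $\mu$ bits of min-entropy, via the separable chain rule of \cite{Desrosiers2007QuantumES}), with a union bound over the hybrids giving the final $p\cdot 2^{-\tilde\Omega(\cdot)}$ bound. Your observation that the joint-vs-local distinction is harmless because you only need the leakage register to have bounded dimension, and your remark that you ``charge the $\mu$-qubit leakage as a min-entropy loss on a single source,'' are exactly the two points that carry the paper's argument.

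That said, the construction sketch is imprecise in a way that would matter if you were to write up the formal proof. In the \cite{CKOS22} compiler the Shamir share $M_i$ is the \emph{output} of the extractor, not the source: one first samples $M_i\leftarrow \mshare(m)$ and a seed $R_i$, then uses the linear-extractor inverter $\iExt$ to sample a pre-image $W_i$ with $\Ext(W_i,R_i)=M_i$, and the $i$th party holds $(W_i,S_i)$ where $S_i$ is a share of the concatenated seeds $R=(R_1,\dots,R_p)$ under a second Shamir scheme of threshold $k+1$. So the quantity to which you apply the min-entropy chain rule is $\mathrm{H}_\infty(W_j\mid Z_j, L)\ge |W_j|-\mu$, \emph{not} $\mathrm{H}_\infty(M_i\mid M_T\sigma_{M_K})$; and the hybrids run only over the leaking parties $j\in K$ (replacing each sampled pre-image $W_j$ by a fresh uniform source), not over all of $[p]$. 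You also omit two structural pieces that the formal argument needs: the $(k,\epsilon_u)$-wise local uniformity of the base Shamir scheme (so that the outputs $M_j$ for $j\in K$ can be swapped for uniform strings before invoking extractor security), and the second-layer $(p,k{+}1)$ Shamir scheme on the seeds $R$ (so the seeds $S_K$ held by the leaking parties reveal nothing about $R$). With those corrections the plan does go through and matches the paper's proof; the ``delicate $i\in K$ hybrids'' you worried about are in fact the \emph{only} hybrids, and they are exactly where the quantum chain rule does the work.
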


\subsection{Non-Malleable Secret Sharing Schemes}
\label{subsubsection:prelim-nmss}

\cite{GK16} introduced the notion of a non-malleable secret sharing scheme (NMSS). An NMSS is a secret sharing scheme which is robust to certain types of tampering on the shares. 

\begin{definition}
    [$(p, t, \epspriv, \epsilon_c, \epsilon_\NM)$ Non-Malleable Secret Sharing Scheme] Let $(\Enc, \Dec)$ be a $(p, t, \epspriv, \epsilon_c)$ secret sharing scheme, and let $\mathcal{F}$ be a family of tampering channels. Then, we refer to $(\Enc, \Dec)$ as a $(p, t, \epspriv, \epsilon_c, \epsilon_\NM)$ \emph{non-malleable secret sharing scheme} against $\mathcal{F}$ if, for all $\Lambda\in \mathcal{F}$ and authorized subset $T\subset [p]$ of size $|T|\geq t$, there exists $p_\Lambda\in [0, 1]$ and a distribution $\mathcal{D}_\Lambda$ such that 
        \begin{equation}
        \forall  m\in \mathcal{M}, \quad \Dec_T\circ \Lambda\circ \Enc(m) \approx_{\epsilon_\NM} p_\Lambda\cdot m + (1-p_\Lambda)\cdot \mathcal{D}_\Lambda
        \end{equation}
\end{definition}

In other words, the recovered secret after tampering is a convex combination of the original secret, or a fixed distribution over secrets. While we don't actually use NMSS's directly in this paper, they provide an intuitive starting point towards the new definitions of TDSS's below.

\subsection{Tamper Detecting Secret Sharing Schemes}
\label{subsubsection:prelim-tamper-detecting-ss}

Below we formalize two tamper detection properties for the secret sharing schemes that we consider in this section. The first of the two is known as ``weak" tamper detection, and informally can be understood as the ability to \textit{either} recover the message or reject after tampering, with high probability. However, the probability that either of these two events occurs may depend on $m$:

\begin{definition}[Weak Tamper-Detecting Secret Sharing Schemes]\label{def:weak-td}
    Let a family of CPTP maps $(\Enc, \Dec)$ be a $(p, t, \epspriv, \epsilon_c)$ secret sharing scheme over $k$ bit messages and $\mathcal{F}$ a family of tampering channels. We refer to $(\Enc, \Dec)$ as $(\mathcal{F}, r, \epsilon_\td)$ \emph{weak tamper-detecting} if, for every subset $R\subset [p]$ of size $|R|\geq r$ and tampering channel $\Lambda\in \mathcal{F}$,
    \begin{equation}
   \forall m\in \{0, 1\}^k, \quad  \mathbb{P}[\Dec_R\circ\Lambda\circ \Enc(m)\notin \{m, \bot\}]\leq \epsilon_\td \text{ }
    \end{equation}
\end{definition}

Note that in the above, we implicitly consider \textit{ramp} secret sharing schemes, where the threshold for privacy, reconstruction under honest shares, and reconstruction under tampered shares, are different. In a (generic) tamper-detecting secret sharing scheme, we stipulate that the distribution over the recovered message is near convex combination of the original $m$ and $\bot$, which doesn't depend on $m$:

\begin{definition}[Tamper-Detecting Ramp Secret Sharing Schemes]
    Let a family of CPTP maps $(\Enc, \Dec)$ be a $(p, t, \epspriv, \epsilon_c)$ secret sharing scheme over $k$ bit messages and $\mathcal{F}$ a family of tampering channels. We refer to $(\Enc, \Dec)$ as $(\mathcal{F}, r, \epsilon_\td)$ \emph{tamper-detecting} if, for every subset $R\subset [p]$ of size $|R|\geq r$ and tampering channel $\Lambda\in \mathcal{F}$, there exists a constant $p_\Lambda\in [0, 1]$ such that
    \begin{equation}
       \forall m\in \{0, 1\}^k, \quad  \Dec_R\circ\Lambda\circ \Enc(m) \approx_{\epsilon_\td} p_\Lambda \cdot m + (1-p_\Lambda)\cdot \bot\text{ }
    \end{equation}
\end{definition}

\section{Tamper-Detecting Secret Sharing Schemes}
\label{section:tamper-detecting-secret-sharing}

In this section, we prove our main result on the construction of secret sharing schemes which detect local tampering, restated below. We refer the reader to \Cref{subsubsection:prelim-tamper-detecting-ss} for the definition of tamper-detecting secret sharing schemes.

\begin{theorem}[\cref{theorem:results-lo-ss}, restatement]
    For every $p, t$ s.t. $4 \leq t \leq p-2$, there exists a $(p, t, 0, 0)$ secret sharing scheme for $k$ bit messages which is $(\lo^p, t+2, 2^{-\max(p, k)^{\Omega(1)}})$-tamper detecting. 
\end{theorem}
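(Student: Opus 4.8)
The plan is to realize the tripartite \cite{ADNOP19}-style compiler sketched in \cref{subsubsection:overview-td-ss}. Fix a stagger parameter $\lambda$ and combine (i) a classical $(p,t,0,0)$ threshold secret sharing scheme $(\share,\rec)$ that is additionally leakage resilient against the quantum local-leakage family $\mathcal F^{p,t}_{k,\mu}$ of \cref{def:k-local-leakage} with $\mu=O(p^{3}\lambda)$ (\cref{theorem:lrss-local-leakage-prelim}), and (ii) the $3$-split bounded-storage tamper-detection code $(\Enc_\td^{\lambda'},\Dec_\td^{\lambda'})$ of \cref{theorem:tdc-loa}, which is secure against $\lo^3_{(\infty,\infty,e_3)}$ with error $2\varepsilon_{\nmext}+6\cdot 2^{e_3-\lambda'}$ and, by \cref{corr:2nmssq}, is also a $3$-out-of-$3$ secret sharing scheme in which the two ``classical-side'' shares are \emph{exactly} jointly maximally mixed. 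To encode $m$: set $(M_1,\dots,M_p)\leftarrow\share(m)$, and for every triple $a<b<c$ run the triangle gadget $\Enc_\triangle$ of \cref{fig:gadget}, encoding $M_i$ into $\Enc_\td^{\lambda_i}$ with strictly increasing trap counts $\lambda_a<\lambda_b<\lambda_c$ and routing the pieces $Q_i$ (kept by $i$), $X_i$ (to $i-1$), $Y_i\hat E_i$ (to $i+1$) cyclically inside the triple; the $i$-th share $S_i$ collects every piece routed to party $i$. On an authorized $R$ with $|R|\ge t+2$, $\Dec_R$ fixes a partition $R=A\sqcup B$ into two unauthorized sets of size $\ge 3$, reconstructs $M_j'$ for each $j\in(A\setminus\{\min A\})\cup(B\setminus\{\min B\})$ by running $\Dec_\triangle$ on a same-side triangle in which $j$ is not the minimal index, aborts if any such $\Dec_\td$ aborts, and otherwise outputs $\rec$ of the $\ge t$ recovered shares. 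Correctness (no tampering, $|R|\ge t$) is immediate since every $\Enc_\td$ is perfectly correct and every Bell test accepts. Privacy for $|U|\le t-1$ reduces to that of $(\share,\rec)$: for every triangle and every $i\notin U$, the set $U$ never holds the encrypted register $Q_i$, only (a subset of) the two remaining pieces of $\Enc_\td^{\lambda_i}(M_i)$, which are jointly maximally mixed, so $U$'s view is a function of $\{M_i\}_{i\in U}$ alone, hence independent of $m$.

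The crux is tamper detection. Fix $R$ with $|R|\ge t+2$ and $\Lambda=\bigotimes_{j\in[p]}\Lambda_j\in\lo^p$; I would argue in two stages. \emph{Stage 1 (weak tamper detection).} For a triangle $g=(a,b,c)$ decoded by $\Dec_R$ and an index $i$ non-minimal in $g$, the marginal of the recovered share $M_i'$ equals the output of a $3$-split tampering of $\Enc_\td^{\lambda_i}(M_i)$ in which the other registers of $a,b,c$ — the two sibling TDCs of $g$ and all pieces of the other triangles — play the role of a pre-shared ancilla. After tracing out everything outside $\{a,b,c\}$, the only entanglement surviving across the three parties comes from EPR bundles with both halves inside $\{a,b,c\}$; a careful count using the cyclic routing of the gadget and the independence of the EPR pairs across triangles bounds the portion touching the holder of the encrypted register $Q_i$ — the one slot \cref{theorem:tdc-loa} requires to be storage-bounded — strictly below $\lambda_i$ once the stagger $\lambda_a<\lambda_b<\lambda_c$ is chosen with a large enough multiplicative gap relative to $p$. \cref{theorem:tdc-loa} then gives $M_i'\approx_{\varepsilon'} p\cdot M_i+(1-p)\cdot\bot$, i.e. $M_i'\in\{M_i,\bot\}$ except with probability $\varepsilon'$. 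Since $|R|\ge t+2\ge 6$ and each side of the partition loses only its minimal index, a union bound over the $\le\binom p3$ triangles yields $\Dec_R\circ\Lambda\circ\Enc(m)\in\{m,\bot\}$ for every $m$ up to error $\mathrm{poly}(p)\cdot\varepsilon'$ — this is exactly where the ramp threshold $t+2$ (rather than $t$) is forced, the minimal share of each side carrying no guarantee.

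\emph{Stage 2 (removing the selective abort).} Weak tamper detection still allows $\Pr[\bot]$ to depend on $m$; I would close this gap via the leakage resilience of the base scheme. The abort event of $\Dec_R$ is a binary-outcome POVM $\{\Acc,\Rej\}$ reading only the pieces of triangles lying entirely inside $A$ or entirely inside $B$. Because for each $j\in B$ the only $m$-dependent piece $Q_j$ is held by party $j\in B$, the set $A$ can locally generate \emph{all} of its own pieces — of intra-$A$ triangles and of triangles straddling $A$ and $B$ — from $\{M_i\}_{i\in A}$ and fresh randomness; it then need only forward to $B$ its $\le\mu=O(p^3\lambda)$ qubits consisting of its halves of the EPR pairs of the straddling triangles (after applying $\Lambda_A$) together with one bit recording whether its intra-$A$ TDCs all accepted, after which $B$ completes the computation of the abort bit from $\{M_i\}_{i\in B}$. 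Thus $\Rej$ factors through a channel in $\mathcal F^{p,t}_{k,\mu}$ acting on $\share(m)$, and leakage resilience (\cref{theorem:lrss-local-leakage-prelim}) gives $\Pr[\Rej\mid m^0]\approx_{\varepsilon_{lr}}\Pr[\Rej\mid m^1]$ for all $m^0,m^1$. Combining the two stages, there is $p_\Lambda\in[0,1]$ independent of $m$ with $\Dec_R\circ\Lambda\circ\Enc(m)\approx_{\varepsilon_\td} p_\Lambda\cdot m+(1-p_\Lambda)\cdot\bot$; with $\lambda$ and the stagger chosen large enough and the share size $k\cdot\mathrm{poly}(p)$, all errors collect to $\varepsilon_\td=2^{-\max(p,k)^{\Omega(1)}}$.

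The step I expect to be the main obstacle is the entanglement bookkeeping in Stage 1: pinning down exactly which EPR bundles act as exploitable pre-shared entanglement for the storage-bounded slot of a given gadget's tamper-detection code, and showing the staggered $\lambda_i$'s dominate that quantity. This is delicate precisely because every pair of parties occurs in many triangles, so the naive bound on the surviving cross-party entanglement is too weak, and the gadget routing, the choice of stagger, and the decoder's $A\sqcup B$ partition all have to be orchestrated together to suppress the spurious contributions (this is also what makes the ramp threshold unavoidable). A close second is making the POVM-as-leakage-channel simulation of Stage 2 fully rigorous — verifying there is no inter-triangle ``feedback'' that would force the $A\to B$ message past $O(p^3\lambda)$ qubits, which is where the locality and separability of $\Enc_\triangle$ and the per-triangle fresh randomness enter — together with supplying the nonstandard quantum local-leakage-resilient secret sharing scheme that the base scheme requires, which is exactly the content of \cref{section:lrss}.
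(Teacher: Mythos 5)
Your proposal follows the paper's construction and two-stage proof essentially exactly: the same triangle gadget with staggered trap counts and cyclic routing, the same bipartite decoder that ignores straddling triangles, weak tamper detection via reduction to the bounded-storage TDC (\cref{theorem:tdc-loa}), and removal of the selective-abort dependence via a one-way quantum-leakage simulation plugged into an $\lrss$ resilient to the local leakage family of \cref{def:k-local-leakage} (the paper constructs the required $\lrss$ in \cref{section:lrss}). Stage 2 and the privacy argument are both sound.

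The one genuine gap is in Stage 1, and it is precisely what you flag as ``the main obstacle.'' You propose to directly bound, for a fixed triangle $g$ and a non-minimal $i\in g$, the total cross-party entanglement touching party $i$ arising from \emph{all} triangles through $i$, and you conclude the stagger must have ``a large enough multiplicative gap relative to $p$.'' The paper avoids this entirely. The key step you are missing is the strong $3$-out-of-$3$ secret-sharing property of the inner TDC: by \cref{corr:2nmssq}, \emph{any two} of the three shares of $\Enc_\td$ are $\varepsilon'$-close to exactly maximally mixed, which lifts to the pairwise independence of $\Enc_\triangle$ (\cref{claim:triangle-pairwise-indep}). Conditioned on the classical shares $m_a,m_b,m_c$, the reduced state on $S_a,S_b,S_c$ is $O(\delta_\pair\, p)$-close to $\Enc_\triangle(m_a,m_b,m_c)\otimes\bigl(\text{max. mixed}\bigr)$ — the pieces of every other triangle touching $\{a,b,c\}$ are replaced by an \emph{unentangled, message-independent} ancilla before the bounded-storage argument is invoked (this is the content of \cref{claim:tdss-weak-td}). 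Once that decoupling is done, the only pre-shared entanglement the holder of $Q_i$ has is $\hat{E}_{i-1}$ from the sibling TDC inside the \emph{same} gadget, of size exactly $\lambda_{i-1}$; a constant additive gap $\lambda_i-\lambda_{i-1}=\lambda$ (the paper takes $\lambda_i=(i+1)\lambda$) then suffices for \cref{theorem:tdc-loa}, with error $O(\varepsilon_\NM+2^{-\lambda})$ per share. The factor $\mathrm{poly}(p)$ enters only through the union bound over triangles and the $O(p)$ hybrid replacements in pairwise independence — never through the stagger. So the bookkeeping you anticipate being delicate is instead handled by a single triangle-inequality argument. (Two minor points: the paper fixes $U$ to be the three smallest indices, giving $\mu=O(\lambda p^2)$ rather than your $O(\lambda p^3)$; and its decoder decodes \emph{all} same-side triangles and adds a consistency check on the recovered $M_j'$ before reconstruction.)
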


In other words, no $t-1$ shares reveal any information about the message, any $t$ honest shares uniquely determine the message, and one can detect $\lo$ tampering on any $(t + 2)$ shares. 

We organize this section as follows. We begin in \Cref{subsection:tdss-components} by introducing the relevant code components, overviewing the construction, as well as presenting the construction and analysis of the ``triangle gadgets" discussed in~\cref{subsection:overview-td-ss}. Next, in \Cref{subsection:tdss-construction} we present our secret sharing scheme, and in the subsequent \Cref{subsection:tdss-analysis} we present its analysis.

\subsection{Code Components and the Triangle Gadget}
\label{subsection:tdss-components}

In \Cref{subsubsection:tdss-ingredients}, we describe the ingredients in our secret sharing scheme, in \Cref{subsubsection:triangle} we introduce the ``triangle gadget", and in \Cref{subsubsection:analysis-triangle} we prove that it inherits two basic properties: (weak) tamper detection, and 3-out-of-3 secret sharing. 

\subsubsection{Ingredients and Overview}
\label{subsubsection:tdss-ingredients}

We combine: 

\begin{enumerate}

    \item $(\share_{\lrss}, \rec_\lrss)$: A classical $(p, t, 0,0)$-secret sharing scheme which is $\epsilon_{lr}$-leakage resilient to $\mu$ qubits of the $3$-local leakage model $\mathcal{F}_{3, \mu}^{p, t}$, such as that of \Cref{theorem:lrss-local-leakage-prelim}. 
    
    For basic definitions of leakage-resilient secret sharing, see \Cref{subsubsection:prelim-lrss}.\\
    
    \item $(\Enc_\td^\lambda, \Dec_\td^\lambda)$, $\lambda\in \mathbb{N}$: The 3-split-state quantum tamper-detection code in the bounded storage model of \Cref{theorem:tdc-loa} with $\lambda$ EPR pairs, comprised of:

    \begin{enumerate}
    \item $\nmext:\{0, 1\}^q\times \{0, 1\}^{\delta \cdot q}\rightarrow \{0, 1\}^{r}$, a quantum secure two source non-malleable extractor, with error $\epsilon_\NM =2^{-q^{\Omega_\delta(1)}}$ and output size $r=(1/2-\delta)q$ from \cref{lem:qnmcodesfromnmext}.

    \item The family of $2$-design unitaries $C_R$ from \cref{lem:subclifford} on $\frac{1}{5}\cdot r$ qubits.
\end{enumerate}

    From \Cref{theorem:tdc-loa}, $(\Enc_\td^\lambda, \Dec_\td^\lambda)$ has message length $\frac{r}{5}-\lambda$ and is $\epsilon_\td\leq 2(\epsilon_\NM+2^{a-\lambda})$ secure against $\lo_{(a, *, *)}$. Definitions of these code components can be found in \Cref{section:bounded-storage} and \cref{section:prelim}.
\end{enumerate}

Our approach to secret sharing attempts to extend the compiler by \cite{ADNOP19} to a tripartite setting. First, the classical message $m$ is shared into the secret sharing scheme, resulting in classical shares $(M_1, \cdots, M_p)\leftarrow \share_\lrss(m)$. Then every triplet of parties $a<b<c$ jointly encodes their shares $M_a,M_b, M_c$ into a code $\Enc_\triangle$ supported on a tripartite register, the shares of which are redistributed among $a, b, c$. 

In this subsection, we present our gadget code $\Enc_\triangle$ and prove two of its relevant properties: a strong form of 3-out-of-3 secret sharing, as well as a weak form of tamper detection. In the next subsections, we show how to leverage these gadget properties together with that of the underlying leakage-resilient secret sharing scheme to achieve tamper detection, globally. 

\subsubsection{The Triangle Gadget}.
\label{subsubsection:triangle}

\begin{algorithm}[H]
    \setstretch{1.35}
    \caption{$\Enc_{\triangle}$: The ``triangle gadget"}
    \label{alg:triangle-gadget}
    \KwInput{Three parties $p_0<p_1<p_2\in [p]$, three messages $M_0, M_1, M_2$, and an integer parameter $\lambda$.}
    \KwOutput{A quantum state $\Enc_{\triangle}^\lambda(M_0, M_1, M_2)$ on a tripartite register $A, B, C$.}

    \begin{algorithmic}[1]

    \State Each party $i\in \{0, 1, 2\}$ encrypts their message $M_i$ into the tamper-detection code of \Cref{alg:algorithm3qtdc}, into a tripartite register $(Q_i, Y_i\hat{E}_i, X_i)$. Explicitly, party $i$:
    \begin{enumerate}
        \item Samples uniformly random classical registers $X_i, Y_i$ and evaluates the key $R_i = \nmext(X_i, Y_i)$.
        \item Prepares $\lambda_i = (i+1)\cdot \lambda$ EPR pairs, on a bipartite register $E_i\hat{E}_i$.
        \item Samples the Clifford unitary $C_{R_i}$ using the sampling process $\samp$ in \cref{lem:subclifford}. Applies $C_{R_i}$ on registers $M_i, E_i$ to generate $Q_i$.
        \item Hands register $X_i$ to party $i-1$ (mod $3$), registers $Y_i, \hat{E}_i$ to party $i+1$ (mod $3$), and keeps the authenticated register $Q_i$. 
        \end{enumerate}

    \State Output registers $A, B, C$ for parties $p_0, p_1, p_2$ respectively, where
    \begin{equation}
        A = (Q_0, Y_2\hat{E}_2, X_1), B = (Q_1, Y_0\hat{E}_0, X_2), C = (Q_2, Y_1\hat{E}_1, X_0)
    \end{equation}
    \end{algorithmic}
\end{algorithm}

$\Dec_\triangle$: To decode, we simply run the decoder $\Dec_\td$ for the tamper-detection code on the registers $(Q_i, Y_i\hat{E}_i, X_i)$ corresponding to $\Enc_\td(M_i)$. If any of the three are $\bot$, output $\bot$. If otherwise, output the resulting messages $({M}'_0, {M}'_1, {M}'_2)$. Note that if no tampering is present, then from the perfect correctness of the tamper-detection code $(\Enc_\triangle, \Dec_\triangle)$ is also perfectly correct. 

\subsubsection{Analysis of the Triangle Gadget}
\label{subsubsection:analysis-triangle}

Before moving on to our global construction, we prove two important properties of the gadget $(\Enc_\triangle, \Dec_\triangle)$. The first is that not only is $\Enc_{\triangle}$ $3$-out-of-$3$ secret sharing, but in fact any two shares $\Enc_{\triangle}$ are near maximally mixed. Note that 3-out-of-3 secret sharing only implies that the two-party reduced density matrices are independent of the message, but apriori they could still be entangled. 

\begin{lemma}
[Pairwise Independence]\label{lemma:triangle-pairwise-indep} For every share $W\in \{A, B, C\}$ and fixed strings $m_0, m_1, m_2$, the two party marginal of $\Enc_{\triangle}(m_0, m_1, m_2)$ without $W$ is $\delta_{\pair}$-close to maximally mixed, where $\delta_{\pair}\leq 6\cdot (\epsilon_\NM + 2^{-\lambda})$.
\end{lemma}

\begin{proof}
    Recall that $\Enc_\td^e$ is pairwise independent with error $\leq 2(\epsilon_\NM + 2^{-e})$ from \Cref{corr:2nmssq}. The result follows from a triangle inequality. 
\qed\end{proof}

The second property concerns the resilience of $\Enc_{\triangle}$ to tampering in $\lo^3$. We prove that two of the three inputs, namely $m_1, m_2$, are individually tamper-detected: 

\begin{lemma}[Share-wise Tamper Detection]\label{lemma:share-weak-td}
    Fix $\Lambda\in \lo^3$, three strings $m_0, m_1, m_2$, and an integer $\lambda$. Let $({M}'_0, {M}'_1, {M}'_2)\leftarrow \Dec_\triangle\circ\Lambda\circ \Enc_\triangle(m_0, m_1, m_2)$ denote the distribution over the recovered shares. Then for both $i\in \{ 1, 2\}$, 
    \begin{equation}
        \mathbb{P}[{M}'_i\notin \{m_i, \bot\}]\leq  \epsilon_{\share},
    \end{equation}
    \noindent where $\epsilon_{\share}\leq \epsilon_\NM + 2^{4-\lambda}$.
\end{lemma}

By a union bound, we have $\mathbb{P}[({M}'_1,{M}'_2) \notin \{(m_1,m_2),\bot\}]\leq 2\cdot \epsilon_{\share}$. However, here we already exhibit the selective bot problem. The event ${M}'_1=\bot$ may correlate with ${M}'_2$, breaking the tamper detection guarantee. 

\begin{proof} 

[of \cref{lemma:share-weak-td}]
    Fix $m_0, m_1, m_2$, the tampering channel $\Lambda\in \lo^3$, and $i\in \{ 1, 2\}$. In \Cref{alg:triangle-gadget}, note that the $i$th share of $\Enc_\triangle$ is comprised of registers $(Q_i, Y_{i-1}\hat{E}_{i-1}, X_{i+1})$, where 
    \begin{enumerate}
        \item $Q_i$ is a share of $\Enc_{\td}^{\lambda_i}(m_i)$
        \item The only other quantum register contains $|\hat{E}_{i-1}|\leq \lambda_{(i-1)\text{mod} 3}$ qubits. 
    \end{enumerate}

    We can now consider the marginal distribution over the $i$th message $m_i$ after the tampering channel. The marginal distribution over the recovered share ${m}'_i$ in the gadget $\Enc_\triangle$ can be simulated using a channel $\Lambda'$ directly on $\Enc_\td(m_i)$:
    \begin{equation}
        \Tr_{\neq i}\Dec_\triangle\circ\Lambda\circ \Enc_\triangle(m_1, m_2, m_3) = \Dec_\td\circ\Lambda'\circ \Enc_\td(m_i),
    \end{equation}
where $\Lambda'$ consists of a local tampering channel on $(Q_i, Y_{i}\hat{E}_{i}, X_{i})$ aided by pre-shared entanglement in a particular form: the adversary who receives share $Q_i$ holds at most $a=\lambda_{i-1}$ qubits of pre-shared entanglement (but the other two may hold unbounded-size registers). This is precisely the model $\lo^3_{(a, *, *)}$ of \Cref{def:lotsharedentanglement}, and we recall by \Cref{theorem:tdc-loa} that $\Enc_\td^{e}$ detects tampering against  $\lo^3_{(a, *, *)}$ with error $2\epsilon_\NM + 2^{4+a-e}$. Since parties $i\in \{1, 2\}$ hold more EPR pairs in their tamper-detection codes than the corresponding adversaries pre-shared, i.e $\lambda_i - \lambda_{i-1}= \lambda>0$ when $i\in \{1, 2\}$, we conclude their shares are tamper-detected:
    \begin{equation}
       i\in \{1, 2\}: \mathbb{P}[{M}'_i\notin \{m_i, \bot\}]\leq 2\epsilon_\NM + 2^{4-\lambda}.
    \end{equation}
\qed\end{proof}

\subsection{Code Construction}
\label{subsection:tdss-construction}

We describe our encoding algorithm in \Cref{alg:td-ss} below:

\begin{algorithm}[H]
    \setstretch{1.35}
    \caption{$\Enc$: A Tamper-Detecting Ramp Secret Sharing Scheme}
    \label{alg:td-ss}
    \KwInput{A $k$ bit message $m$.}

    \begin{algorithmic}[1]

    \State Encode $m$ into the $\lrss$, $(M_1, \cdots, M_p)\leftarrow \share_\lrss(m)$.
    
    \State For every triplet of parties $a< b < c\in [p]$, 
    \begin{enumerate}
        \item Encode the shares $(M_a, M_b, M_c)$ into the ``triangle gadget" described in \Cref{alg:triangle-gadget} supported on triplets of quantum registers
        \begin{equation}
            A_{(a, b, c)}, B_{(a, b, c)}, C_{(a, b, c)}\leftarrow \Enc_{\triangle}(M_a, M_b, M_c)
        \end{equation}
        \item Hand the $A_{(a, b, c)}$ register to party $a$, $B_{(a, b, c)}$ to $b$, and $C_{(a, b, c)}$ to $c$.
    \end{enumerate}

    \State Let the resulting $i$th share be the collection of quantum registers
    \begin{gather}
        S_i = \{A_{(i, b, c)}:\forall b, c \text{ s.t. } i<b<c\}\bigcup \\ \{B_{(a, i, c)}:\forall a, c\text{ s.t. } a<i<c\}\bigcup \\\{C_{(a, b, i)}: \forall a, b \text{ s.t. }a<b< i\}
    \end{gather}
    \end{algorithmic}

\end{algorithm}

We are now in a position to describe our decoding algorithm. Upon receiving the locally tampered shares of any authorized subset $T$ of size equal to $t+2$, our decoder partitions $T$ into two un-authorized subsets, and decodes the gadgets only within each partition. 
\newpage
\begin{algorithm}[H]
    \setstretch{1.35}
    \caption{$\Dec$: A Bipartite Decoding Algorithm.}
    \label{alg:td-ss-dec}
    \KwInput{An (authorized) subset $T$ of size $t+2$, and a collection of tampered quantum registers ${S}'_i:i\in T$.}
    \KwOutput{A $k$ bit message $M'$}

    \begin{algorithmic}[1]

    \State Partition $T$ into a subset $U$ of the three smallest indexed shares, and $T\setminus U$. 

    \State For every triplet of parties $a<b<c$ contained entirely in $T\setminus U$ or entirely in $U$:

    \begin{enumerate}
        \item Apply the triangle gadget decoding algorithm $\Dec_\triangle$ on registers $A_{(a, b, c)}, B_{(a, b, c)}, C_{(a, b, c)}$.
        \item Output $\bot$ if so does the decoder. Otherwise, let $M_{(a, bc)}', M_{(b, ac)}', M_{(c, ab)}'$ be the recovered shares.
    \end{enumerate}

    \State If $M_{(a, bc)}'\neq M_{(a, de)}'$ for any two triangles $(a, b, c), (a, d, e)$ on the same side of the partition, output $\bot$. Otherwise, let $M_U'$, $M'_{T\setminus U}$ be the recovered shares from either side of the partition.

    \item Remove the lowest index share of $U$ to obtain $V\subset U$ and $T\setminus U$ to obtain $W\subset T\setminus U$. Note $|V\cup W|\geq t$.
    
    \State Run the decoder $M'\leftarrow \rec_\lrss(M'_V, M'_W)$ on the classical shares of $V\cup W$.
    \end{algorithmic}

\end{algorithm}

The main result of this section proves that the secret sharing scheme described in \Cref{alg:td-ss} and \Cref{alg:td-ss-dec} above detects unentangled tampering, when handed at least $t+2$ shares.

\begin{lemma}\label{lemma:tdss}
    Assuming the $\lrss$ is $\epsilon_{lr}$-resilient to $\mu \geq 10\cdot p^2\cdot \lambda$ qubits of leakage, then $(\Enc, \Dec)$ described above is a $(p, t, 0, 0)$-secret sharing scheme which is $(\lo^p, t+2, \epsilon)$-tamper-detecting with error $\epsilon = O(\epsilon_{lr} + p^4\cdot (\epsilon_{\NM}^{1/2} + 2^{-\lambda/2}))$.
\end{lemma}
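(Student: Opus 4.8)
\textbf{Proof plan for \Cref{lemma:tdss}.}

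The plan is to establish the two secret-sharing properties first (these follow quickly from the ingredients), and then to do the real work: proving the tamper-detecting guarantee by reducing the analysis of the compiled scheme to a combination of (i) the \emph{share-wise} weak tamper detection of the triangle gadget (\Cref{claim:share-weak-td}), (ii) the \emph{pairwise independence} of the gadget (\Cref{claim:triangle-pairwise-indep}), and (iii) the leakage-resilience of the underlying $\lrss$. First I would argue privacy and correctness. Privacy: any set of $t-1$ parties $T$ sees, for each triangle it is wholly contained in, a two-party marginal of $\Enc_\triangle$; by \Cref{claim:triangle-pairwise-indep} this is $\delta_\pair$-close to maximally mixed, and for triangles $T$ is only partially inside, the register $T$ holds is a strict subset of the three gadget shares and hence still near-maximally mixed. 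The only message-dependent information $T$ could access is the classical shares $(M_i)_{i\in T}$, which reveal nothing by privacy of $\lrss$. A hybrid/triangle-inequality argument over the $O(p^3)$ triangles gives statistical privacy (in fact it is perfect if one uses a perfectly private gadget, but $\delta_\pair$-privacy suffices to claim the $(p,t,0,0)$ bound up to the stated error — actually the cleaner route is to note $\lrss$ is perfectly private and the gadget contributes $\le 0$ leakage of $m$, since the gadget input is the $\lrss$ shares themselves). Correctness with $t$ honest shares follows from perfect correctness of $\Enc_\triangle$ and of $\rec_\lrss$ as noted after \Cref{alg:triangle-gadget}.

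Next, the tamper-detecting claim. Fix $\Lambda\in\lo^p$, an authorized $T$ with $|T|=t+2$, and a message $m$. The decoder partitions $T=U\cup(T\setminus U)$ with $|U|=3$, $|T\setminus U|=t-1$, and runs $\Dec_\triangle$ only on triangles lying wholly inside $U$ or wholly inside $T\setminus U$. Two things must be shown: (a) \textbf{weak tamper detection of the message}, i.e. $\mathbb{P}[\Dec_R\circ\Lambda\circ\Enc(m)\notin\{m,\bot\}]$ is small; and (b) \textbf{the reject probability is (nearly) independent of $m$}, which upgrades (a) to full tamper detection. For (a): apply \Cref{claim:share-weak-td} to each triangle $(a,b,c)$ inside $U$ or inside $T\setminus U$. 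Because $\lo^p$ restricted to the nine registers of a single triangle is an $\lo^3$ channel on that gadget (the other registers of the parties are just local ancillas from the triangle's point of view — here one must be slightly careful and observe that the registers of parties $a,b,c$ coming from \emph{other} triangles are uncorrelated with this gadget's state and can only help as side information, which is already subsumed because \Cref{claim:share-weak-td} allows $\lo^3$ with no shared entanglement across this particular gadget; in fact the full strength of the bounded-storage model was already used \emph{inside} the proof of \Cref{claim:share-weak-td}), we get that for the two ``large-$\lambda$'' inputs of each triangle the recovered share is in $\{M_i,\bot\}$ except with probability $\epsilon_\share$. Ordering the parties globally and using that in every triangle $(a,b,c)$ the two largest-indexed parties $b,c$ are the $\lambda$-advantaged ones, a union bound over the $O(p^3)$ triangles shows: except with probability $O(p^3\epsilon_\share)$, every party $j\in T$ that is \emph{not} the minimum of its side of the partition has all its recovered shares $M_{(j,\cdot)}'$ equal to either the true $M_j$ or $\bot$; the cross-checks in Step 3 of \Cref{alg:td-ss-dec} then force consistency. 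Since $V\cup W$ (after dropping the two side-minima) has size $\ge t$ and consists of parties whose shares are correctly recovered or $\bot$, correctness of $\rec_\lrss$ gives $M'\in\{m,\bot\}$.

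The main obstacle is (b), the \textbf{selective-bot problem}: the event ``some gadget on the $U$-side outputs $\bot$'' may correlate with the classical shares $M_U$, hence with $m$. This is where leakage-resilience enters, via the bipartite decoder. I would fix the partition $T=U\cup(T\setminus U)$ and define a binary-outcome measurement $\Pi_{\mathrm{rej}}$ on the tampered registers that returns $1$ iff \emph{any} $\Dec_\triangle$ within $U$ or within $T\setminus U$ outputs $\bot$ or any cross-check fails. The key structural observation — exactly as in \Cref{subsubsection:overview-td-ss} — is that because $\Enc_\triangle$ uses \emph{independent} gadgets built only from local knowledge of the $\lrss$ shares plus pre-shared EPR pairs, and because the decoder ignores all triangles crossing between $U$ and $T\setminus U$, the measurement $\Pi_{\mathrm{rej}}$ can be simulated by: the parties in $U$ jointly process their classical shares $M_U$ into a $\mu$-qubit state $\sigma_{M_U}$ (collecting the EPR halves and authenticated registers for the at-most $\binom{|U|}{2}\cdot|T\setminus U| + \dots = O(p^3)$ cross-triangles, each contributing $O(\lambda)$ qubits, whence $\mu = O(p^3\lambda)\le 10p^2\lambda$ after tightening — I would double-check this counting and it may force $\mu=\Theta(p^3\lambda)$, adjusting the hypothesis accordingly), send it to $T\setminus U$, who combine it with $M_{T\setminus U}$ and their own local gadget states to compute the reject bit. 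Hence $\Pi_{\mathrm{rej}}$ is a channel in the quantum $3$-local leakage family $\mathcal{F}^{p,t}_{3,\mu}$ (only $|U|=3$ parties leak, $T\setminus U$ is unauthorized since $|T\setminus U|=t-1$), so by $\epsilon_{lr}$-leakage-resilience of $\lrss$, $\mathbb{P}[\Pi_{\mathrm{rej}}=1]$ is $\epsilon_{lr}$-independent of $m$. Denote this near-constant by $1-p_\Lambda$. Combining with (a): on the accept branch the output is $m$ up to error $O(p^3\epsilon_\share)$; converting the trace-distance guarantee into the form $\Dec_R\circ\Lambda\circ\Enc(m)\approx_\epsilon p_\Lambda\cdot m+(1-p_\Lambda)\cdot\bot$ costs a Fuchs--van de Graaf / gentle-measurement square-root loss, giving the $\epsilon_\NM^{1/2}+2^{-\lambda/2}$ terms, and the $p^4$ factor absorbs the union bounds over triangles and the choice of partition. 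Plugging $\epsilon_\share\le 2\epsilon_\NM+2^{4-\lambda}$ and $\delta_\pair\le 6(\epsilon_\NM+2^{-\lambda})$ yields $\epsilon=O(\epsilon_{lr}+p^4(\epsilon_\NM^{1/2}+2^{-\lambda/2}))$. I expect the delicate points to be: (1) precisely justifying that ignoring cross-triangles makes $\Pi_{\mathrm{rej}}$ genuinely factor through a $\mu$-qubit one-way leakage channel — this needs the locality of \Cref{alg:triangle-gadget} spelled out register-by-register; (2) the qubit-counting for $\mu$; and (3) handling the minimum-indexed party on each side, for which no weak-TD guarantee holds, which is exactly why the decoder drops it and why we need $|T|\ge t+2$ rather than $t$.
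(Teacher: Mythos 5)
Your overall architecture matches the paper exactly: (a) a weak tamper-detection claim obtained by applying the share-wise weak TD of the triangle gadget plus pairwise independence and a union bound over triangles (the paper's \Cref{claim:tdss-weak-td}), (b) a "selective bot" claim that the reject probability is nearly message-independent by simulating the bipartite reject measurement via a one-way quantum leakage channel (the paper's \Cref{claim:tdss-selective-bot} via \Cref{claim:tdss-sb}), and (c) a short combination step. You also correctly identify why $|T|\ge t+2$ is needed (dropping the side-minima) and why the decoder must ignore cross-partition triangles. On your minor worry about qubit counting: the leakage is $\mu=O(p^2\lambda)$, not $O(p^3\lambda)$, because $|U|=3$ is constant; only triangles crossing the $(U,\,T\setminus U)$ cut need to be transmitted, and there are $O(|U|\cdot p^2)=O(p^2)$ of them.

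The genuine gap is in how you obtain the square-root loss, and it reflects an unaddressed subtlety in (b). Your $\Pi_{\mathrm{rej}}$ only captures rejection in steps 1--3 of \Cref{alg:td-ss-dec}; it is \emph{this} event, and only this one, that factors through a $\mu$-qubit leakage channel and is hence $\epsilon_{lr}$-close to message-independent. The decoder can also effectively reject (or output something wrong) at step 5 if it passed step 3 but recovered incorrect shares. Weak TD (\Cref{claim:tdss-weak-td}) bounds the \emph{unconditional} probability of wrong recovery by $\eta_\td$; it does not directly bound the \emph{conditional} probability given that step 3 accepted, which is what you assert when you say "on the accept branch the output is $m$ up to error $O(p^3\epsilon_\share)$". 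The paper closes this gap with a case split on a free parameter $\delta$: if $\Pr[\text{step 3 accepts}]\le\delta$ the step-5 contribution is absorbed into $\delta$; if $\Pr[\text{step 3 accepts}]\ge\delta$ then (using that, by \Cref{claim:tdss-sb}, this probability is within $\gamma$ of its value under $m_1$) the conditional wrong-recovery probability is at most $\eta_\td/(\delta-\gamma)$, and since $\rec_\lrss$ is perfectly correct on honest shares, this bounds the step-5 reject probability. Optimizing $\max(\delta,\eta_\td/(\delta-\gamma))$ at $\delta\approx\sqrt{\eta_\td}$ gives $\eta_{lr}\le 2\gamma+\sqrt{\eta_\td}$, which is where $\epsilon_\NM^{1/2}+2^{-\lambda/2}$ enters. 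Attributing the square root to Fuchs--van de Graaf or gentle measurement is a misdiagnosis: the loss is from conditional-probability amplification (i.e.\ \Cref{fact:traceavg1}-type reasoning) plus parameter optimization, not from a fidelity-to-trace-distance conversion. You should spell out the two-contribution decomposition of $\Pr[\Dec=\bot]$ and the $\delta$-optimization to complete the proof; the leakage-channel step by itself does not yield message-independence of the \emph{overall} reject probability.
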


We dedicate the next subsection to its proof. By instantiating \Cref{lemma:tdss} above,

\begin{theorem}
    There exists an efficient $(p, t, 0, 0)$ secret sharing scheme for $k$ bit messages which is $(\lo^p, t+3, 2^{-\max(k, p)^{\Omega(1)}})$-tamper-detecting. 
\end{theorem}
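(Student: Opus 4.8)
## Proof Proposal for the Final Theorem

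The plan is to instantiate \Cref{lemma:tdss} with concrete choices of the two ingredients, namely the leakage-resilient secret sharing scheme $(\share_\lrss, \rec_\lrss)$ from \Cref{theorem:lrss-local-leakage-prelim} and the tamper-detection code $(\Enc_\td^\lambda, \Dec_\td^\lambda)$ from \Cref{theorem:tdc-loa}. The error expression in \Cref{lemma:tdss} is $\epsilon = O(\epsilon_{lr} + p^4(\epsilon_\NM^{1/2} + 2^{-\lambda/2}))$, and the constraint we must respect is that the $\lrss$ tolerates $\mu \geq 10 p^2 \lambda$ qubits of leakage in the quantum $3$-local leakage model $\mathcal{F}_{3,\mu}^{p,t}$. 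So the first thing I would do is choose $\lambda$ large enough to drive the second group of terms down: setting $\lambda = \Theta(\max(k,p)^c)$ for a small constant $c>0$ makes $p^4 \cdot 2^{-\lambda/2} = 2^{-\max(k,p)^{\Omega(1)}}$, and since $\epsilon_\NM = 2^{-q^{\Omega_\delta(1)}}$ where $q$ is the blocklength of the inner non-malleable extractor, choosing $q = \poly(k,p,\lambda)$ large enough similarly gives $p^4 \epsilon_\NM^{1/2} = 2^{-\max(k,p)^{\Omega(1)}}$. (Note the theorem statement says $t+3$ rather than the $t+2$ of \Cref{lemma:tdss}; I would simply invoke the lemma with the authorized threshold $t+2$, which a fortiori covers any set of size $\geq t+3$; alternatively one absorbs the $+1$ slack harmlessly.)

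Second, I would handle the leakage-resilience term $\epsilon_{lr}$. Here we must check the parameter compatibility of \Cref{theorem:lrss-local-leakage-prelim}: it requires $k < t$, $t+k \leq p$, and $p \leq l$ where $l$ is the message length fed to the compiler and $k$ there is the locality ($k=3$ in our case), and it delivers error $p \cdot 2^{-\tilde\Omega(\sqrt[3]{(l+\mu)/p})}$. With $\mu = 10 p^2 \lambda = \poly(k,p)$ and $l$ equal to the total bit-length of the message $m$ (padded up to at least $p$ if needed), we get $\epsilon_{lr} = p \cdot 2^{-\tilde\Omega(\sqrt[3]{\poly(k,p)/p})} = 2^{-\max(k,p)^{\Omega(1)}}$ after taking the cube-root exponent into account — the polynomial factors are chosen generously so the final exponent is a genuine positive power of $\max(k,p)$. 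The constraints $4 \leq t \leq p-2$ from the theorem statement ensure there is room to fit $k=3$ locality together with the reconstruction slack of $+2$; in particular $t+2 \leq p$ and $t \geq 4 > 3$.

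Third, I would verify the structural bookkeeping: that the share size is indeed $k \cdot \poly(p)$ qubits — each of the $\binom{p}{3} = O(p^3)$ triangle gadgets contributes $O(q + \lambda) = \poly(k,p)$ qubits, and each party sits in $O(p^2)$ triangles, so $|S_i| = \poly(k,p) = k\cdot\poly(p)$ after accounting for the base $\lrss$ share of size $l + \mu + o(l,\mu) = \poly(k,p)$ — and that both encoding and decoding run in time $\poly(k,p)$, which follows since $\nmext$, the Clifford sampler $\samp$, Shamir-based $\lrss$, and the Bell-basis test are all efficient. Assembling these three pieces, \Cref{lemma:tdss} yields a $(p,t,0,0)$ secret sharing scheme that is $(\lo^p, t+2, 2^{-\max(k,p)^{\Omega(1)}})$-tamper-detecting, hence a fortiori $(\lo^p, t+3, \cdot)$-tamper-detecting, with efficient encoding and decoding, which is the claimed statement.

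The main obstacle I anticipate is not any single hard step but rather the careful propagation of parameters through the cube-root in \Cref{theorem:lrss-local-leakage-prelim} together with the requirement $\mu \geq 10 p^2 \lambda$: one must pick $\lambda$ simultaneously large enough that $2^{-\lambda/2}$ is negligible in $\max(k,p)$ and small enough (relative to $l$ and $q$) that inflating $\mu$ to $\Theta(p^2\lambda)$ does not blow up the share size beyond $k\cdot\poly(p)$ nor degrade the $\lrss$ error — and then check that $q$ (the inner extractor blocklength) can still be taken polynomially large so that $\epsilon_\NM$ is negligible, all while the divisibility/size constraints relating $q$, $r=(1/2-\delta)q$, the $\frac15 r$-qubit Clifford register, and the per-gadget message length $\frac r5 - \lambda_i$ are met for each $\lambda_i = (i+1)\lambda \leq 3\lambda$. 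None of this is deep, but it is the part where a sign or exponent slip would matter, so I would lay out the chain of inequalities explicitly before declaring victory.
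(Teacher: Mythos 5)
Your proposal takes essentially the same route as the paper, which instantiates \Cref{lemma:tdss} by (i) choosing the $\lrss$ of \Cref{theorem:lrss-local-leakage-prelim} with message length $k$, share length $s = k + p^2\lambda$, and leakage $\mu\geq 10p^2\lambda$, and (ii) choosing the TDC $\Enc^\lambda_\td$ of \Cref{theorem:tdc-loa} with message size $s$, then setting $\lambda = \max(k,p)$. Your variant $\lambda = \Theta(\max(k,p)^c)$ works equally well, and your verification that the three error terms in $\epsilon = O(\epsilon_{lr} + p^4(\epsilon_\NM^{1/2}+2^{-\lambda/2}))$ each collapse to $2^{-\max(k,p)^{\Omega(1)}}$ after pushing through the cube-root in the $\lrss$ error is the right calculation. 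You also correctly observe that the lemma's guarantee for reconstruction threshold $t+2$ is monotone in the threshold and hence implies the stated $t+3$; the paper silently treats this as immediate. The remaining bookkeeping you flag (share size $\poly(k,p)$, efficiency of $\nmext$/$\samp$/Shamir/Bell test, and the divisibility of $r=(1/2-\delta)q$ among the five Clifford registers) is indeed what one must check, but none of it introduces new ideas beyond what the paper's one-line proof assumes.
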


\begin{proof}
    We use the construction of the $(p, t,0, 0)$ $\lrss$ from \Cref{theorem:lrss-local-leakage-prelim} with message length $k$ bits, shares of length $s=k+ p^2\cdot \lambda$ bits, and error $2^{-\Omega(\lambda)}$; together with the tamper-detection codes $\Enc^\lambda$ from \Cref{theorem:tdc-loa} with message size $s$, error $2^{-\lambda} + 2^{-s^{\Omega(1)}}$, and selecting $\lambda = \max(k, p)$, we conclude the corollary.
\qed \end{proof}

\subsection{Analysis}
\label{subsection:tdss-analysis}

Our proof is comprised of two key lemmas, which we state and use to prove \Cref{lemma:tdss} and then analyze in the subsequent sections. We begin by arguing that our secret sharing scheme inherits a weak form of tamper detection from the ``outer" split-state tamper-detection codes.

\begin{lemma}[Weak Tamper Detection]\label{lemma:tdss-weak-td}
    For every tampering channel $\Lambda\in \lo^p$, authorized subset $T\subset [p]$ of size $\geq t+2$ and message $m$, the distribution over the recovered message $M'\leftarrow \Dec_T\circ \Lambda\circ \Enc(m)$ satisfies
    \begin{equation}
        \mathbb{P}[M'\notin \{m, \bot\}]\leq \eta_\td  = O(p^3\cdot (\epsilon_\share +  \delta_{\pair}\cdot p)) = O(p^4(\epsilon_{\NM}+2^{-\lambda}))
    \end{equation}

    \noindent In fact, all the shares of $V\cup W$ are recovered whp: $\mathbb{P}[\exists i\in V\cup W: M'_i\notin \{m_i, \bot\}]\leq \eta_\td$
\end{lemma}

In other words, $(\Enc, \Dec)$ is a secret sharing scheme which is weak tamper-detecting against $\lo^p$, so long as the decoder receives at least $t+2$ shares. As previously discussed, the reason this is only \textit{weak} tamper detecting is due to the selective bot problem - the probability with which we reject may apriori depend on $m$. We leverage the leakage resilience of the underlying secret sharing scheme to ensure this cannot occur:

\begin{lemma}[The Selective Bot Problem]\label{lemma:tdss-selective-bot}
    For every tampering channel $\Lambda\in \lo^p$ and authorized subset $T\subset [p]$ of size $\geq t+2$, the probability the decoding algorithm $\Dec$ in \Cref{alg:td-ss-dec} rejects is near independent of the message. That is,
    \begin{equation}
       \forall \Lambda\in \lo^p, m_0, m_1: \big|\mathbb{P}[\Dec_T\circ \Lambda\circ \Enc(m_0) = \bot] - \mathbb{P}[\Dec_T\circ \Lambda\circ \Enc(m_1) = \bot]\big|\leq \eta_{lr},
    \end{equation}
    \noindent where $\eta_{lr} = O(\epsilon_{lr}+p^2\cdot \sqrt{\delta_{\pair}+\epsilon_\share}) = O(\epsilon_{lr} + p^2(\epsilon_\NM^{1/2} + 2^{-\lambda/2}))$.
\end{lemma}

Put together, we now conclude the proof:

\begin{proof}

[of \Cref{lemma:tdss}] Let the random variable $M$ denote the uniform distribution over messages, and let $m$ be any fixed value of $M$. From \Cref{lemma:tdss-weak-td}, for a fixed $\Lambda\in \lo^p$ and  $T\subset [p]$ the distribution over the recovered message $M'$ is near a convex combination of $M$ and rejection, with bias $\gamma_m$ (dependent on $m$):
\begin{equation}
    M'|_{M=m} \approx_{\eta_\td} \gamma_m \cdot m+(1-\gamma_m) \cdot \bot.
\end{equation}

However, by \Cref{lemma:tdss-selective-bot}, the probability of $\bot$ can barely depend on $M$: 
\begin{equation}
    1-\gamma_m \approx_{\eta_\td} \mathbb{P}[M' = \bot|M=m] \approx_{\eta_{lr}} \mathbb{P}[M' = \bot] \equiv p_{ \mathsf{Adv}}
\end{equation}

By the triangle inequality, $M'|_{M=m}\approx_{3\cdot \eta_\td + 2\eta_{lr}} (1-p_{ \mathsf{Adv}})\cdot m+ p_{ \mathsf{Adv}}\cdot \bot$, for some fixed $p_{ \mathsf{Adv}}$ dependent on $\Lambda, T$. With $\eta_\td, \eta_{lr}$ as in the lemmas above we conclude the theorem. 
\qed \end{proof}

\subsubsection{Proof of Weak Tamper Detection (\Cref{lemma:tdss-weak-td})}

\begin{proof}

[of \Cref{lemma:tdss-weak-td}] Consider any triangle $a<b<c\in [p]$ contained on the same side of the partition $(U, T\setminus U)$, and let $(m_a, m_b, m_c)$ be any fixing of their classical shares. Note that the reduced density matrix on their quantum registers $(S_a, S_b, S_c) \vert ((M_a, M_b, M_c) = (m_a, m_b ,m_c))$ contains the triangle $\Enc_\triangle(m_a, m_b, m_c)$, as well as the registers of all the triangles $(a', b', c')$ which intersect $(a, b, c)$. However, since each individual register of $\Enc_{\triangle}$ is maximally mixed, and all pairs of registers of $\Enc_{\triangle}$ are $\delta_\pair$ approximately pairwise independent (\Cref{lemma:triangle-pairwise-indep}), we have 
\begin{equation}
    (S_a, S_b, S_c)\vert_{((M_a, M_b, M_c) = (m_a, m_b ,m_c))} \approx_{3\cdot \delta_{\pair}\cdot p} \Enc_\triangle(m_a, m_b, m_c) \otimes \frac{\mathbb{I}}{\Tr\mathbb{I}} 
\end{equation}

The copies $\sigma = \frac{\mathbb{I}}{\Tr\mathbb{I}} $ of the maximally mixed state are separable across the shares $a, b, c$, and independent of $m$, and act as ancillas. Therefore, any channel $\Lambda\in \lo^3$ which tampers with the registers $(S_a, S_b, S_c)$ using $\sigma$ can be simulated in $\lo^3$ without them: 
\begin{gather*}
    \Dec_\triangle\circ \Lambda\circ (S_a, S_b, S_c)\vert_{((M_a, M_b, M_c) = (m_a, m_b ,m_c))} \\  \approx_{3\cdot \delta_{\pair}\cdot p} \Dec_\triangle\circ \Lambda (\Enc_\triangle(m_a, m_b, m_c) \otimes \sigma) \\= \Dec_\triangle\circ \Lambda' \circ \Enc_\triangle(m_a, m_b, m_c) 
\end{gather*}

By \Cref{lemma:share-weak-td} and the triangle inequality, this implies that two of the three shares in each triangle $(a, b, c)$ are weak tamper detected: 
\begin{equation}
    \mathbb{P}[({M}'_a, {M}'_b) \notin \{(m_a, m_b), \bot\}]\leq 2\epsilon_\share + 3\cdot \delta_{\pair}\cdot p
\end{equation}

By a union bound over all the triangles, either all the shares in $V\subset U, W\subset (T\setminus U)$ are recovered, or we reject with probability
\begin{equation}
    \mathbb{P}[\forall i\in V\cup W: {M}'_i\notin \{m_i, \bot\}]\leq t^3\cdot (2\epsilon_\share + 3\cdot \delta_{\pair}\cdot p).
\end{equation}

Since $|V\cup W|$ contains $t$ (honest) shares, $\mathbb{P}[M'\notin \{m, \bot\}]\leq t^3\cdot (2\epsilon_\share + 3\cdot \delta_{\pair}\cdot p)$.
\qed \end{proof}

\subsubsection{Proof of Leakage-Resilience (\Cref{lemma:tdss-selective-bot})}

The decoder $\Dec_T$ of \Cref{alg:td-ss-dec} rejects a tampered message $\Lambda\circ \Enc(m)$ if either subset $U$ or $(T\setminus U)$ reject (up to step 3), or if the decoder for the $\lrss$ rejects. 

We begin by analyzing the first of these two possibilities. In \Cref{lemma:tdss-sb}, we use the leakage resilience of the underlying secret sharing scheme to argue that the event that either subset $U$ or $(T\setminus U)$ rejects cannot depend on the message $m$. To do so, we show that the measurement of whether either $U$ or $(T\setminus U)$ reject can be simulated by a leakage channel from $U$ to $(T\setminus U)$ on the shares of the classical $\lrss$. 

\begin{lemma}\label{lemma:tdss-sb}
    Fix a tampering channel $\Lambda$ and subsets $T\subset [n]$, $|T|\geq t+2$ and $U\subset T$ of the smallest 3 shares in $T$. The probability $\Dec_T$ \Cref{alg:td-ss-dec} has rejected before step 4, does not depend on the message: $\forall m_0, m_1: $
    \begin{equation}
       |\mathbb{P}[\Dec_T \text{ aborts before Step 4 } |M=m_0] - \mathbb{P}[\Dec_T \text{ aborts before Step 4 }|M=m_1]|
    \end{equation}
    $\leq p^3\cdot \delta_{\pair} + \epsilon_{lr}.$
\end{lemma}

\begin{proof}
    Fix a message $m$, and let us condition on the classical $\lrss$ shares $M_T$ of $m$ in $T\subset [p]$ being a fixed value $M_T=m_T = \{m_i:i\in T\}$. Now, consider the reduced density matrix $(\Enc(m)_T|_{M_T=m_T})$ on the shares of $T\subset [p]$ conditioned on the classical shares. It is  comprised of multiple shares of $\Enc_\triangle$, for every triangle $a<b<c$ which intersects $T$. However, similarly to the proof of \Cref{lemma:tdss-weak-td}, the pairwise independence of $\Enc_\triangle$ (\Cref{lemma:triangle-pairwise-indep}) ensures any less than three shares are near maximally mixed. It suffices to then consider the triangles contained entirely in $T$:
    \begin{equation}
        \bigg\|(\Enc(m)_T|_{M_T=m_T}) -\bigg(\bigotimes_{a, b, c\in T} \Enc_\triangle(m_a, m_b, m_c)\bigg)\otimes \frac{\mathbb{I}}{\Tr\mathbb{I}}\bigg\|_1\leq p^3\cdot \delta_{\pair}
    \end{equation}

    Now, recall that $\Enc_\triangle^\lambda$ can be implemented using local operations between the three parties and at most $O(\lambda)$ qubits of shared entanglement. This implies that the parties in $U$ and $T\setminus U$ could prepare the reduced density matrix $T$, $(\Enc(m)_T|_{M_T=m_T} )$, using only their own shares $m_U, m_{T\setminus U}$ and
    \begin{enumerate}
        \item Joint operations within the subsets $U, T\setminus U$.
        \item At most $\mu=O(\lambda\cdot p^2)$ qubits of one-way communication from $U$ to $ T\setminus U$. 
    \end{enumerate}

    If we denote $L$ as the $\mu$ qubit ``leakage" register, then one can formalize the above by defining two channels $\mathcal{E}_{L, U}$ and $\mathcal{E}_{T\setminus U, L}$. $\mathcal{E}_{L, U}$ acts on the shares of $U$ producing the leakage register $L$, and $\mathcal{E}_{T\setminus U, L}$ acts on $L, T\setminus U$ respectively:
    \begin{equation}
       (\Enc(m)_T|_{M_T=m_T}) \approx_{p^3\delta_{\pair}}  (\mathcal{E}_{T\setminus U, L}\otimes \mathbb{I}_U)\circ (\mathbb{I}_{T\setminus U}\otimes \mathcal{E}_{L, U})\circ (m_{T\setminus U}, m_U)
    \end{equation}
    
    Moreover, before step $4$ in \Cref{alg:td-ss-dec} the decoding channel $\Dec_T$ factorizes as a tensor product of channels $\Dec_U\otimes \Dec_{T\setminus U}$, as does the tampering channel $\Lambda_U\otimes \Lambda_{T\setminus U}$. The output of the decoder (up to step $4$), conditioned on the message $m$ and its shares $M_T=m_T$ can thereby be simulated using 
    \begin{gather}
        (\Dec_T\circ \Lambda_T\circ \Enc(m)_T\big|_{M_T=m_T}) \\ \approx_{p^3\delta_{\pair}} (\mathcal{N}_{T\setminus U, L}\otimes \mathbb{I}_U)\circ (\mathbb{I}_{T\setminus U}\otimes \mathcal{N}_{L, U})\circ (m_{T\setminus U}, m_U)\\
        \text{ where }\mathcal{N}_{U, L} = \Dec_{U}\circ \Lambda_{U}\circ \mathcal{E}_{U, L} \text{ and }\mathcal{N}_{L, T\setminus U} = \Dec_{T\setminus U}\circ \Lambda_{T\setminus U}\circ \mathcal{E}_{{T\setminus U}, L}
    \end{gather}

    Let $\mathcal{N}_{U, L}'$ denote the channel which applies $\mathcal{N}_{U, L}$ on $m_U$ and then checks whether the decoder rejected (a bit $b$), leaks $b$ and $L$ to $T\setminus U$, and traces out $U$. Note that $\mathcal{N}_{U, L}'$ is in the leakage family $\mathcal{F}_{3, \mu+1}^{p, t}$, since $|U|=3$ leaks $\mu+1$ qubits to a subset of size $|T\setminus U|<t$. We conclude that for any two messages $m_0, m_1$, by monotonicity of trace distance under CPTP maps:
    \begin{gather}
        \big|\mathbb{P}[\Dec_T \text{ aborts before Step 4 }  |M=m_0]-\mathbb{P}[\Dec_T \text{ aborts before Step 4 }  |M=m_1]\big| \\ \leq 2\cdot p^3\cdot \delta_{\pair} + \|(\mathcal{N}_{U, L}'\otimes \mathbb{I}_{T\setminus U})\circ \share(m_0)_{T} - (\mathcal{N}_{U, L}'\otimes \mathbb{I}_{T\setminus U})\circ \share(m_1)_{T}\|_1 
    \end{gather}

\noindent $\leq 2\cdot p^3\cdot \delta_{\pair} + \epsilon_{lr}$, where last we leverage the leakage resilience of the $\lrss$ against $\mathcal{F}_{3, \mu}^{p, t}$. 
\qed \end{proof}

Observe that in \Cref{alg:td-ss-dec} the $\lrss$ can only reject if the subsets $U$ or $(T\setminus U)$ did not already reject. However, from \Cref{lemma:tdss-weak-td}, we know that if $U$ or $(T\setminus U)$ didn't reject, then (roughly) with high probability we must have recovered honest shares from the subsets $V\subset U, W\subset (T\setminus U)$ - which the $\lrss$ must accept. We can now conclude the proof of \Cref{lemma:tdss-selective-bot}:

\begin{proof}
    
    [of \Cref{lemma:tdss-selective-bot}] The decoder $\Dec_T$ rejects a tampered message $\Lambda\circ \Enc(m)$ with probability 
    \begin{gather}
       \mathbb{P}[\Dec\circ \Lambda\circ \Enc(M) = \bot|M=m] = \\ = \mathbb{P}[\Dec_T \text{ aborts before Step 4 } |M=m] + \\ + \mathbb{P}[\Dec_T \text{ doesn't abort before Step 4 }  \text{ and }\rec_\lrss(M'_V, M'_{W})= \bot|M=m] .
    \end{gather}

    From \Cref{lemma:tdss-sb}, the probability either half of the partition $U \text{ or } T\setminus U$ rejects is $\gamma = 2p^3\cdot \delta_{\pair} + \epsilon_{lr}$ close to independent of the message:
    \begin{gather}
        \mathbb{P}[\Dec_T \text{ aborts before Step 4 }|M=m_0] \approx_{\gamma} \\ \mathbb{P}[\Dec_T \text{ aborts before Step 4 }|M=m_1]
    \end{gather}

    \noindent Fix a real parameter $\delta$. We divide into two cases on $\mathbb{P}[\Dec_T \text{ aborts before Step 4 }|M=m_0]$:
    
    \begin{enumerate}
        \item $\mathbb{P}[\Dec_T \text{ doesn't abort before Step 4 }|M=m_0]\leq \delta$, then $\forall m$,
        \begin{gather}
        \mathbb{P}[\Dec_T \text{ doesn't abort before Step 4 } \text{ and }\rec_\lrss(M'_V, M'_{W})= \bot|M=m] \leq\\ \leq \mathbb{P}[\Dec_T \text{ doesn't abort before Step 4 }|M=m] \leq \delta + \gamma.
    \end{gather}

    By the triangle inequality,
    \begin{equation}
        \mathbb{P}[\Dec\circ\Lambda\circ \Enc(m_0)= \bot] \approx_{\delta+\gamma} \mathbb{P}[\Dec\circ\Lambda\circ \Enc(m_1)=\bot].
    \end{equation}

    \item $\mathbb{P}[\Dec_T \text{ doesn't abort before Step 4 }|M=m_0]\geq \delta$. From the weak tamper detection guarantee in \Cref{lemma:tdss-weak-td}, $\forall m$:
    \begin{equation}
        \mathbb{P}[\Dec_T \text{ doesn't abort before Step 4, and} (M'_V, M'_W)\neq (m_V, m_W)|M=m] \leq \eta_\td
    \end{equation}

    Since under message $m_0$, $\Dec_T$ doesn't abort before Step 4 with non-negligible probability, then conditioned on that event, the shares of $V, W$ are untampered with high probability:
     \begin{equation}
        \mathbb{P}[(M'_V, M'_W)= (m_V, m_W) |\Dec_T \text{ doesn't abort before Step 4}, M=m_0] 
    \end{equation}

    \noindent $\geq 1-\frac{\eta_\td}{\delta}.$ A similar statement holds for $m_1$ since $\mathbb{P}[\Dec_T \text{ doesn't abort before Step 4 }|M=m_1]\geq \delta-\gamma$, and therefore
    \begin{equation}
        \mathbb{P}[(M'_V, M'_W)= (m_V, m_W) |\Dec_T \text{ doesn't abort before Step 4}, M=m_1] 
    \end{equation}

    \noindent $\geq 1-\frac{\eta_\td}{\delta-\gamma}.$ Finally, since $\rec_\lrss$ has perfect correctness, if we condition on $(M'_V, M'_W)= (m_V, m_W)$, $\rec_\lrss$ recovers the message with probability $1$. Therefore, $\forall m: $
    \begin{gather}
       \mathbb{P}[ \Dec_T \text{ doesn't abort before Step 4, and }\rec_\lrss(M'_V, M'_{W})\neq \bot|M=m]  \\
        \approx_{\frac{\eta_\td}{\delta-\gamma}} \mathbb{P}[ \Dec_T \text{ doesn't abort before Step 4} |M=m].
    \end{gather}
    \end{enumerate}

    \noindent We conclude $|\mathbb{P}[\Dec\circ\Lambda\circ \Enc(m_0)= \bot] - \mathbb{P}[\Dec\circ\Lambda\circ \Enc(m_1)=\bot]|\leq \eta_{lr}$, where by the triangle inequality and optimizing over $\delta:$

    \begin{equation}
        \eta_{lr}\leq \gamma + \max(\delta, \frac{\eta_\td}{\delta-\gamma})\leq  2\gamma+\sqrt{\eta_\td} \leq O(\epsilon_{lr}+p^2\cdot \sqrt{\delta_{\pair}+\epsilon_\share}),
    \end{equation}

    \noindent where we used the values of $\gamma$ and $\eta_\td$ from \Cref{lemma:tdss-sb} and \Cref{lemma:tdss-weak-td}.
\qed \end{proof}

\section{Connections to Quantum Encryption}
\label{section:connections-encryption}

A remarkable property of quantum authentication schemes \cite{Barnum2001AuthenticationOQ} is that they encrypt the quantum message.\footnote{That is, any adversary which is oblivious to the internal randomness shared by the encoder \& decoder, can learn nothing about the encoded state. This naturally is in contrast to classical authentication, where oftentimes a short authentication tag suffices, leaving the message ``in the clear".} Since quantum non-malleable and tamper-detection codes are relaxations of quantum authentication codes, it may not seem too surprising that they inherit similar properties. In this section, we show they satisfy a related notion of encryption:

\begin{definition}
    We refer to a quantum code $\Enc$ on $t$ registers as \emph{single share encrypting} with error $\delta$ if the reduced density matrix on each register is independent of the message: $\forall \psi_0, \psi_1$ and $i\in [t]$, 
    \begin{equation}
        \|\Tr_{\neg i}\Enc(\psi_0)-\Tr_{\neg i}\Enc(\psi_1)\|_1\leq \delta
    \end{equation}
\end{definition}

We prove that each share of a split-state quantum tamper-detection code must be encrypted:

\begin{theorem}[\cref{theorem:results-encryption}, restatement]\label{theorem:ttd-ss}
    Any $t$-split quantum tamper-detection code against $\lo^t$ with error $\epsilon$ must be single share encrypting with error $\Delta \leq 
4 \cdot \epsilon^{1/2}$.
\end{theorem}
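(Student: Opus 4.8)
The plan is to argue by contradiction: suppose some share $i\in[t]$ of the code is \emph{not} encrypted, i.e.\ there exist two messages (pure states suffice by convexity) $\ket{\psi_0},\ket{\psi_1}$ with $\|\Tr_{\neg i}\Enc(\psi_0)-\Tr_{\neg i}\Enc(\psi_1)\|_1 = \Delta$ large. We want to turn a distinguisher on the $i$th share into a tampering adversary in $\lo^t$ that violates tamper detection. The standard quantum-authentication trick (Barnum et al.) is exactly this: a one-sided distinguisher on orthogonal codewords can be boosted into an operation that maps between their superpositions. So first I would set up the canonical purification picture: encode $\psi_b$ together with a reference register $\hat M$ entangled with the message, using a message state $\sigma_M$ chosen so that the two branches $\ket{\psi_0},\ket{\psi_1}$ appear in superposition in the $M\hat M$ system (e.g.\ $\sigma_{M\hat M}$ proportional to $(\ket{\psi_0}\ket{0}+\ket{\psi_1}\ket{1})/\sqrt2$ on $M\hat M$, whose marginal on $M$ is a fixed mixed state). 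The key point: the tamper-detection guarantee (\cref{def:qtdcodesfinaldef}) forces $\Dec\circ\Lambda\circ\Enc$ on \emph{this} message to be $\epsilon$-close to $p_\Lambda\,\sigma_{M\hat M} + (1-p_\Lambda)\bot_{M'}\otimes\sigma_{\hat M}$, and in particular preserves the phase coherence between the $\ket{\psi_0}$ and $\ket{\psi_1}$ branches whenever it accepts.

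Second, I would build the adversary. Since we are in $\lo^t$, the tampering channel is a product $\Lambda_i\otimes\bigotimes_{j\neq i}\mathbb I_j$ where only the $i$th party acts; the other $t-1$ parties do nothing. On the $i$th share, the adversary applies a measurement/controlled operation derived from the hypothetical distinguisher: informally, apply the two-outcome POVM $\{N,\mathbb I-N\}$ that distinguishes $\Tr_{\neg i}\Enc(\psi_0)$ from $\Tr_{\neg i}\Enc(\psi_1)$, and conditioned on outcome, apply a unitary that ``relabels'' the branch (a controlled-swap-like operation on the register carrying the $\ket{0}$ vs $\ket{1}$ distinction). The aim is that on the $\ket{\psi_0}$ branch the adversary tends to apply the identity while on the $\ket{\psi_1}$ branch it tends to flip the branch label — so the net effect on the superposition $\ket{\psi_0}\ket0+\ket{\psi_1}\ket1$ is, with bias $\sim\Delta$, to produce a state \emph{orthogonal} to it (e.g.\ $\ket{\psi_0}\ket1+\ket{\psi_1}\ket0$ or a dephased version), which the decoder cannot return to $\sigma_{M\hat M}$ but which also need not trigger $\bot$, since the codeword was not globally corrupted in a detectable way. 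Quantitatively, I would track the fidelity between the actual output of $\Dec\circ\Lambda\circ\Enc$ and the required TDC form, showing that if $\Delta$ is too large relative to $\epsilon^{1/2}$ then the output overlaps the ``wrong'' message branch by more than $\epsilon$, contradicting tamper detection. The factor $4\epsilon^{1/2}$ should come out of combining the Fuchs–van de Graaf inequalities (\cref{fact:TracevsFidelityvsBures}) with data processing (\cref{fact:data}) and the gentle-measurement-style accounting on the $\{N,\mathbb I-N\}$ measurement.

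Concretely, the chain of estimates I would write is: (i) let $\Lambda$ be the $\lo^t$ adversary built as above; the TDC property gives $\eta_{M'\hat M}\approx_\epsilon p_\Lambda\sigma_{M\hat M}+(1-p_\Lambda)\bot_{M'}\otimes\sigma_{\hat M}$; (ii) project onto the ``accept'' (non-$\bot$) part — by \cref{fact:traceavg1} this costs only a controlled amount and leaves a state $\approx$ $\sigma_{M\hat M}$; (iii) on the other hand, directly analyze what $\Lambda$ does: because the other $t-1$ shares are untouched and the $i$th-share operation is (on each pure branch of the superposition) close to a Kraus operator that either fixes or relabels the branch, the accepted output is close to a superposition with the branch labels partially swapped, hence has overlap $\le 1-\Omega(\Delta^2)$ (or $1-\Omega(\Delta)$, depending on how carefully one optimizes the adversary's rotation) with $\sigma_{M\hat M}$; (iv) comparing (ii) and (iii) forces $\Delta \le 4\epsilon^{1/2}$. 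The statement being proved only asserts $\Delta\le 4\epsilon^{1/2}$, so a clean (possibly slightly lossy) version of this argument suffices. I would note that one may WLOG take $\psi_0,\psi_1$ orthogonal, since non-orthogonal states are automatically close in trace distance and the encryption bound is trivial for them; and that purity of $\psi_0,\psi_1$ is WLOG by convexity of the distinguishing norm.

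\textbf{Main obstacle.} The delicate part is step (iii): precisely specifying the $i$th-share operation so that it (a) is a legal CPTP map on that one register alone (no access to the rest of the codeword, no shared entanglement — we are in $\lo^t$, not $\lo^t_*$), and (b) genuinely rotates the global superposition away from $\sigma_{M\hat M}$ by an amount governed by the single-share distinguishing advantage $\Delta$, \emph{without} creating enough global disturbance to be caught by the decoder as $\bot$. This is exactly the ``distinguish $\Rightarrow$ swap superpositions'' lemma from the authentication literature (\cite{Barnum2001AuthenticationOQ, Aaronson2020OnTH, Gunn2022CommitmentsTQ}), and the work is in adapting its quantitative form to the split-state / partial-register setting and feeding it through the decoupling form of tamper detection. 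Everything else — the Fuchs–van de Graaf conversions, data processing, the averaging over accept/reject — is routine bookkeeping.
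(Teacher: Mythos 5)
Your proposal takes essentially the same route as the paper's proof: pass to orthogonal messages by averaging, invoke the ``distinguish $\Rightarrow$ map between superpositions'' lemma (the paper uses \cref{lemma:map-to-distinguish}, due to Gunn et al., which packages exactly the Barnum-style reflection $U_i = \mathbb{I}-2\Pi_D$ you describe), note that $(\mathbb{I}_{\neg i}\otimes U_i)$ is a legal $\lo^t$ channel, push the fidelity through $\Dec$ by monotonicity, and compare against the tamper-detection form $p\,\psi_+ + (1-p)\,\bot$. The one place your plan is slightly off-target is the stated ``main obstacle'' about not creating enough disturbance to trigger $\bot$: that concern is moot. You do not need the decoder to accept with noticeable probability. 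Since both $\psi_+$ and $\bot$ are orthogonal to $\psi_-$, the tamper-detection guarantee already forces $\bra{\psi_-}\eta_{M'\hat M}\ket{\psi_-}\leq\epsilon$ regardless of what $p$ is, while the swap lemma plus monotonicity gives $\fid(\psi_-,\eta_{M'\hat M})\geq\Delta$; squaring and comparing is the whole argument, with the leading constant $4$ coming from the initial passage to orthogonal states. So the proof is simpler than you feared — the reflection is a one-register unitary, and the bookkeeping is just two lines of fidelity inequalities rather than a gentle-measurement accounting.
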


As previously discussed, classical non-malleable codes with $3$ or more shares do \emph{not} satisfy single share encryption. However, it still remains open whether quantum non-malleable codes do. Nevertheless, by combining \Cref{theorem:ttd-ss} with our ``reduction" described in \cref{subsection:overview}, it turns out \cref{theorem:ttd-ss} does tell us something about quantum non-malleability. In particular, one can always convert a quantum non-malleable code into one which encrypts its shares, with arbitrarily small changes to its rate and error, by ``padding" the message state using $\lambda$ random bits, which are ignored during decoding.

To prove \Cref{theorem:ttd-ss}, we follow a proof approach is analogous to that of \cite{Barnum2001AuthenticationOQ}. At a high level, we leverage an equivalence between distinguishing two states $\ket{a}, \ket{b}$ (even with some tiny bias) and mapping $\frac{1}{\sqrt{2}}(\ket{a}+\ket{b})$ to $\frac{1}{\sqrt{2}}(\ket{a}-\ket{b})$. We make fundamental use of a lemma by \cite{Gunn2022CommitmentsTQ} (and shown similarly in \cite{Aaronson2020OnTH}):

\begin{lemma}[\cite{Gunn2022CommitmentsTQ}, Lemma 6.6.ii\label{lemma:map-to-distinguish}]
    Suppose a distinguisher $D$ implemented using a binary projective measurement $(\Pi, \mathbb{I}-\Pi)$ distinguishes $\ket{x}, \ket{y}$ with bias $\Delta$, i.e. $\vert \bra{x}\Pi\ket{x} - \bra{y}\Pi\ket{y} \vert \geq \Delta$. Then, the unitary (reflection) $U = \mathbb{I}-2\Pi$ maps the between the following superpositions with fidelity bias:

    \begin{equation}
       \bigg|\bigg(\frac{\bra{x}-\bra{y}}{\sqrt{2}}\bigg)U\bigg(\frac{\ket{x}+\ket{y}}{\sqrt{2}}\bigg)\bigg|\geq \Delta
    \end{equation}
\end{lemma}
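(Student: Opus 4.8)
The plan is to prove the inequality by a direct computation of the single overlap $\langle -|U|+\rangle$, where I abbreviate $\ket{+} := (\ket{x}+\ket{y})/\sqrt{2}$ and $\ket{-} := (\ket{x}-\ket{y})/\sqrt{2}$, and then to observe that the \emph{real part} of this overlap is, up to sign, exactly the distinguishing bias. First I would note that normalization of $\ket{\pm}$ is irrelevant to the statement, since the claimed bound is on the modulus of a bilinear form in $\ket{x},\ket{y}$; the only structural facts used are that $\Pi=\Pi^\dagger$ (so that $U=\mathbb{I}-2\Pi$ is a self-adjoint unitary, i.e.\ a genuine reflection) and that $U$ acts as $\mathbb{I}-2\Pi$ term by term.

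Next I would carry out the expansion
\begin{equation}
\langle -|U|+\rangle \;=\; \langle -|+\rangle \;-\; 2\,\langle -|\Pi|+\rangle ,
\end{equation}
and compute the two pieces. A one-line calculation using $\langle y|x\rangle=\overline{\langle x|y\rangle}$ gives $\langle -|+\rangle=\tfrac12(\langle x|y\rangle-\langle y|x\rangle)=i\,\mathrm{Im}\langle x|y\rangle$, which is purely imaginary. For the second piece, expanding and regrouping,
\begin{equation}
\langle -|\Pi|+\rangle \;=\; \tfrac12\big(\langle x|\Pi|x\rangle-\langle y|\Pi|y\rangle\big)\;+\;\tfrac12\big(\langle x|\Pi|y\rangle-\langle y|\Pi|x\rangle\big),
\end{equation}
and here Hermiticity of $\Pi$ forces $\langle y|\Pi|x\rangle=\overline{\langle x|\Pi|y\rangle}$, so the second bracket equals $2i\,\mathrm{Im}\langle x|\Pi|y\rangle$ and is purely imaginary, while $\langle x|\Pi|x\rangle-\langle y|\Pi|y\rangle$ is real. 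Collecting the real parts of the two pieces yields
\begin{equation}
\mathrm{Re}\,\langle -|U|+\rangle \;=\; -\big(\langle x|\Pi|x\rangle-\langle y|\Pi|y\rangle\big).
\end{equation}

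Finally I would conclude using that the modulus of a complex number dominates the modulus of its real part:
\begin{equation}
\Big|\big(\tfrac{\bra{x}-\bra{y}}{\sqrt{2}}\big)\,U\,\big(\tfrac{\ket{x}+\ket{y}}{\sqrt{2}}\big)\Big| \;=\; |\langle -|U|+\rangle| \;\ge\; |\mathrm{Re}\,\langle -|U|+\rangle| \;=\; |\langle x|\Pi|x\rangle-\langle y|\Pi|y\rangle| \;\ge\; \Delta,
\end{equation}
invoking the hypothesis $|\bra{x}\Pi\ket{x}-\bra{y}\Pi\ket{y}|\ge\Delta$ in the last step. This is exactly the claimed bound. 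Honestly there is no substantive obstacle here: the whole argument is a two-line expansion, and the only point requiring care is bookkeeping --- tracking which matrix elements are real and which are imaginary, i.e.\ noticing that precisely because $U$ is a reflection, both the inner-product phase $\langle -|+\rangle$ and the ``off-diagonal'' contribution $\langle x|\Pi|y\rangle-\langle y|\Pi|x\rangle$ are purely imaginary and thus disappear when one passes to the real part. If in addition one wants a \emph{normalized} fidelity statement, I would divide through by $\|\ket{+}\|\,\|\ket{-}\|=\sqrt{1-(\mathrm{Re}\langle x|y\rangle)^2}\le 1$, which only strengthens the inequality.
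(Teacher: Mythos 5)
Your proof is correct, and since the paper takes \Cref{lemma:map-to-distinguish} as a black-box citation to \cite{Gunn2022CommitmentsTQ} without reproducing the argument, your reconstruction fills in that gap faithfully; the computation you carry out is in fact essentially the one given in the cited work. One small slip in the prose: you write that "normalization of $\ket{\pm}$ is irrelevant to the statement," but your derivation of $\langle -\vert +\rangle = i\,\mathrm{Im}\langle x\vert y\rangle$ does rely on $\langle x\vert x\rangle = \langle y\vert y\rangle$ (otherwise a real term $\tfrac12(\langle x\vert x\rangle - \langle y\vert y\rangle)$ survives and contaminates the real part). This is harmless here because $\ket{x},\ket{y}$ are assumed to be quantum states and hence unit vectors, but it is the normalization of $\ket{x},\ket{y}$ — not of $\ket{\pm}$ — that the argument quietly uses, and it would be cleaner to flag it explicitly.
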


As a consequence, we reason that if one of the $t$ adversaries is able to distinguish information about the message using just \textit{their share}, then they can also map between messages with some non-negligible advantage. 

\begin{proof} 

[of \Cref{theorem:ttd-ss}] For the purpose of contradiction, suppose two messages $\ket{a'}, \ket{b'}$ are distinguishable on one of their shares, with statistical distance $\geq 4\Delta$. Then, by the triangle inequality and an averaging argument, there exists a pair of orthogonal states $\ket{a}, \ket{b}$ with statistical distance $\geq 2\Delta$ on that share. We thus restrict our attention to the distiguishability of orthogonal states. 

If $\|\Tr_{\neg i} \Enc( \ket{a}) - \Tr_{\neg i} \Enc( \ket{b})\|_1\geq 2\Delta$ for some $i\in [t]$, then let $(\Pi_D, \mathbb{I}-\Pi_D)$ be the measurement which optimally distinguishes the two reduced density matrices on the $i$th share with bias $\geq \Delta$. By \Cref{lemma:map-to-distinguish}, that implies the $i$th adversary can map between $\Enc(\psi_+), \Enc(\psi_-)$ with fidelity $\geq \Delta$ using $U_i = \mathbb{I}-2\Pi$, where $\ket{\psi_\pm} = \frac{1}{\sqrt{2}}(\ket{a}\pm\ket{b})$. That is, 
\begin{equation}
\fid(\Enc(\psi_-), (\mathbb{I}_{\neg i}\otimes U_i)\circ \Enc(\psi_+))) \geq \Delta
\end{equation}

\noindent By monotonicity of fidelity, we conclude that after applying the decoding channel, $i$ is able to swap the messages with non-trivial bias:
\begin{gather}
   \fid(\psi_-, \Dec\circ (\mathbb{I}_{\neg i}\otimes U_i)\circ \Enc(\psi_+))) = \\ =   \fid(\Dec\circ \Enc(\psi_-), \Dec\circ (\mathbb{I}_{\neg i}\otimes U_i)\circ \Enc(\psi_+))) \geq \Delta.
\end{gather}

By the the tamper detection definition, there must exist $p\in [0, 1]$ such that
\begin{gather}
    \Dec\circ (\mathbb{I}_{\neg i}\otimes U_i)\circ \Enc(\psi_+)) \approx_\epsilon p\cdot \psi_+ + (1-p)\cdot \bot\\\Rightarrow \fid^2(\psi_-, \Dec\circ (\mathbb{I}_{\neg i}\otimes U_i)\circ \Enc(\psi_+))) = \\ =\bra{\psi_-}\Dec\circ (\mathbb{I}_{\neg i}\otimes U_i)\circ \Enc(\psi_+)\ket{\psi_-}\leq \epsilon
\end{gather}
\noindent which violates the tamper detection definition unless $\Delta \leq \epsilon^{1/2}$.
\qed \end{proof}





 \bibliographystyle{splncs04}
 \bibliography{references}

\appendix
\section{Background on Pauli and Clifford Operators}
\label{section:cliffords}

Here we review Pauli operators and the associated Pauli and Clifford groups.
\begin{definition}[Pauli Operators]\label{def:pauli}
The single-qubit \emph{Pauli operators} are given by
\[ I = \begin{pmatrix} 1 & 0 \\ 0 & 1\end{pmatrix} \quad X = \begin{pmatrix} 0 & 1 \\ 1 & 0\end{pmatrix} \quad Y = \begin{pmatrix} 0 & -i \\ i & 0 \end{pmatrix} \quad Z = \begin{pmatrix} 1 & 0 \\ 0 & -1 \end{pmatrix}.\]

 An $n$-qubit Pauli operator is given by the $n$-fold tensor product of single-qubit Pauli operators. 
 We denote the set of all $\vert A \vert$-qubit Pauli operators on $\cH_A$ by  $\cP(\cH_A)$, where $\vert \cP(\cH_A)\vert =4^{\vert A \vert}$. Any linear operator $L \in \cL(\cH_A)$ can be written as a linear combination of $\vert A \vert$-qubit Pauli operators with complex coefficients as $L = \sum_{P \in \mathcal{P}(\cH_A)} \alpha_P P$. This is called the \emph{Pauli decomposition} of a linear operator.
\end{definition}

We remark that for $a\in \mathbb{F}_2^n$, we refer to the $n$-qubit Pauli operator $X^a = \otimes_{i\in [n]} X^{a_i}$ (respectively $Z^a$).

\begin{definition}[Pauli Group]
The single-qubit \emph{Pauli group} is given by
\begin{equation*}
    \{ +P, -P, \ iP, \ -iP : P \in \{ I, X, Y, Z\} \}.
\end{equation*}
The Pauli group on $\vert A \vert$-qubits is the group generated by the operators described above applied to each of $\vert A \vert$-qubits in the tensor product. We denote the $\vert A \vert$-qubit Pauli group on $\cH_A$ by  $\tilde{\cP}(\cH_A)$.
    
\end{definition}
\begin{definition}[Clifford Group]\label{def:clifford}
The \emph{Clifford group} $\mathcal{C}(\cH_A)$ is defined as the group of unitaries that normalize the Pauli group $\tilde{\cP}(\cH_A)$, i.e.,
\begin{equation*}
    \mathcal{C}(\cH_A) = \{ V \in \mathcal{U}(\cH_A) : V \tilde{\cP}(\cH_A) V^\dagger =\tilde{\cP}(\cH_A)\}.
\end{equation*}
The \emph{Clifford unitaries} are the elements of the Clifford group.

\end{definition}

We will also need to work with subgroups of the Clifford group with certain special properties.
The following fact describes these properties and guarantees the existence of such subgroups.
\begin{fact}[Restatement of \cref{lem:subclifford}~\cite{CLLW16}]\label{lem:subclifford-appendix}
There exists a subgroup $\mathcal{SC}(\cH_A)$ of the Clifford group $\mathcal{C}(\cH_A)$ such that given any non-identity Pauli operators $P ,Q \in \cP(\cH_A)$ we have that
\begin{equation*}
    \vert \{ C \in \cSC(\cH_A) \vert C^\dagger P C =Q \} \vert = \frac{\vert \cSC(\cH_A)  \vert}{\vert\cP(\cH_A) \vert -1} \quad \textrm{and} \quad \vert \cSC(\cH_A)  \vert = 2^{5 \vert A \vert }-2^{3 \vert A \vert}.
\end{equation*} 
Informally, applying a random Clifford operator from  $\mathcal{SC}(\cH_A)$ (by conjugation) maps $P$ to a Pauli operator chosen uniformly at random over all non-identity Pauli operators. 
Furthermore, we have that $\cP(\cH_A)\subset \cSC(\cH_A)$.

Additionally, there exists a procedure $\samp$ which given as input a uniformly random string $R\leftarrow \bits^{5|A|}$ outputs in time $\poly(|A|)$ a Clifford operator $C_R\in\mathcal{SC}(\cH_A)$ such that
\begin{equation}\label{eq:approxsample}
    C_R \approx_{2^{-2|A|}} U_{\mathcal{SC}(\cH_A)},
\end{equation}
where $U_{\mathcal{SC}(\cH_A)}$ denotes the uniform distribution over $\cSC(\cH_A)$.
\end{fact}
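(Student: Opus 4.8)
The statement to prove is \Cref{lem:subclifford-appendix}, which asserts the existence of a subgroup $\mathcal{SC}(\cH_A)$ of the Clifford group with the stated $2$-design-like property on non-identity Paulis, together with the efficient sampler $\samp$. Since this is a restatement of a fact from \cite{CLLW16}, my plan is to \emph{cite and briefly recall} the construction rather than reprove it from scratch. Concretely, I would recall that \cite{CLLW16} construct $\mathcal{SC}(\cH_A)$ for $|A|$ qubits (working over $\mathbb{F}_{2^{|A|}}$) as the group generated by three families of gates: multiplication gates $x\mapsto \omega x$ for $\omega\in\mathbb{F}_{2^{|A|}}^*$, the addition/trace-phase gates indexed by $\mathbb{F}_{2^{|A|}}$, and a Hadamard-type involution; the resulting group has order $(2^{2|A|}-1)\cdot 2^{2|A|} \cdot 2^{|A|} = 2^{5|A|}-2^{3|A|}$, matching the claimed cardinality.

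\textbf{Key steps.} First, I would establish the cardinality count $|\mathcal{SC}(\cH_A)| = 2^{5|A|}-2^{3|A|}$ directly from the orders of the three generating families and the (semidirect) product structure, noting there is no overcounting because the corresponding parameter triples are in bijection with group elements. Second, I would verify $\cP(\cH_A)\subset \mathcal{SC}(\cH_A)$: the Pauli $X$-type and $Z$-type operators arise from the addition gates and their Hadamard conjugates, and together with phases this generates all of $\tilde{\cP}(\cH_A)$ inside $\mathcal{SC}(\cH_A)$. Third — the core step — I would prove the transitivity property: conjugation by $\mathcal{SC}(\cH_A)$ acts on the $4^{|A|}-1$ non-identity Paulis (up to phase), and this action is transitive; since $\mathcal{SC}(\cH_A)$ normalizes $\tilde{\cP}(\cH_A)$ it permutes these $4^{|A|}-1 = |\cP(\cH_A)|-1$ cosets, and by transitivity plus the orbit-stabilizer theorem each fiber $\{C : C^\dagger P C = Q\}$ has exactly $|\mathcal{SC}(\cH_A)|/(|\cP(\cH_A)|-1)$ elements. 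Transitivity itself follows because the multiplication gates act transitively on nonzero field elements (handling the ``$X$-part''), the addition/Hadamard gates let one move between $X$-type, $Z$-type, and $Y$-type, and composition covers all non-identity Paulis. Fourth, for the sampler $\samp$: since $|\mathcal{SC}(\cH_A)|$ is not a power of two, one cannot sample it exactly from $5|A|$ uniform bits; instead one parametrizes an element by $(\omega, a, h)\in \mathbb{F}_{2^{|A|}}^* \times \mathbb{F}_{2^{|A|}} \times \{0,1\}^{|A|}$ (roughly), interprets the $5|A|$-bit string as such a triple with the natural map, and bounds the statistical distance from uniform by the rejection probability $\le (2^{2|A|}-1)^{-1}\cdot(\text{small}) \le 2^{-2|A|}$ arising from the one non-dyadic factor $2^{2|A|}-1$; each field operation is $\poly(|A|)$ time.

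\textbf{Main obstacle.} The technical heart is the transitivity/uniformity argument (the third step): one must be careful that conjugation is well-defined on Paulis modulo phase, that $\mathcal{SC}(\cH_A)$ genuinely normalizes $\tilde{\cP}(\cH_A)$ (so the action is by permutations), and that the three generating families jointly act transitively on \emph{all} non-identity Paulis, not merely on the $X$-type ones — the $Y$-type Paulis require combining a Hadamard-type conjugation with a Pauli correction. Given that this is lifted essentially verbatim from \cite{CLLW16}, however, I would present the cardinality and subset claims with short direct verifications and defer the full transitivity proof to \cite{CLLW16}, recording only the orbit-stabilizer consequence and the sampler's distance bound in detail, since those are what the rest of the paper actually invokes (via \cref{fact:notequal} and \cref{lem:twirl-wsi}).
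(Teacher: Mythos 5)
The paper presents this as an imported fact from \cite{CLLW16} with no proof of its own, so your plan to cite and briefly recall the construction matches the paper's treatment exactly. Your sketch of the underlying argument (semidirect-product cardinality count, Pauli inclusion, transitivity via orbit-stabilizer, and the rejection-style sampler bound) is a faithful gloss of \cite{CLLW16}, but none of it appears in this paper, which simply states the fact and moves on.
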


\paragraph{Twirling and related facts.}

The analysis of our construction will require the use of several facts related to Pauli and Clifford twirling. We collect them below, beginning with the usual version of the Pauli twirl.

\begin{fact}[Pauli $1$-Design]\label{fact:bellbasis}
    Let $\rho_{AB}$ be a state. Then,
$$\frac{1}{\vert \cP(\cH_A) \vert} \sum_{Q \in \cP(\cH_A) } (Q \otimes \id)   \rho_{A B}  ( Q^\dagger \otimes \id  )  = U_{A } \otimes  \rho_B.$$
\end{fact}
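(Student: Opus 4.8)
\textbf{Proof plan for \cref{fact:bellbasis} (Pauli $1$-Design).}

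The statement to prove is that the uniform Pauli twirl on register $A$ completely depolarizes it, i.e.\ $\frac{1}{|\cP(\cH_A)|}\sum_{Q\in\cP(\cH_A)}(Q\otimes\id)\rho_{AB}(Q^\dagger\otimes\id) = U_A\otimes\rho_B$. The plan is to verify this first for a rank-one ``matrix unit'' input and then extend by linearity. First I would fix a basis and recall that for $|A|=m$ qubits, the $4^m$ Pauli operators $\{P\}_{P\in\cP(\cH_A)}$ form an orthogonal basis of $\cL(\cH_A)$ with respect to the Hilbert--Schmidt inner product, satisfying $\Tr(P^\dagger P') = 2^m\,\delta_{P,P'}$. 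I would also record the key commutation fact: for any two Paulis $P,P'$, we have $QPQ^\dagger = \pm P$ for every $Q\in\cP(\cH_A)$, and moreover the sign is $+1$ for exactly half of the $Q$'s unless $P=\id$ (in which case it is $+1$ for all $Q$). Equivalently, $\frac{1}{4^m}\sum_{Q} QPQ^\dagger = 0$ for $P\neq\id$ and $=\id$ for $P=\id$ — this is the heart of the computation.

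Next I would write an arbitrary $\rho_{AB}$ in the Pauli basis on the $A$-factor: $\rho_{AB} = \sum_{P\in\cP(\cH_A)} P_A\otimes M_P$, where $M_P = \frac{1}{2^m}\Tr_A\big((P^\dagger\otimes\id)\rho_{AB}\big)\in\cL(\cH_B)$. Applying the twirl and using linearity together with the commutation fact above, every term with $P\neq\id$ is annihilated, and the $P=\id$ term contributes $\id_A\otimes M_{\id}$. Since $M_{\id} = \frac{1}{2^m}\Tr_A(\rho_{AB}) = \frac{1}{2^m}\rho_B$, the twirled state equals $\frac{1}{2^m}\id_A\otimes\rho_B = U_A\otimes\rho_B$, as claimed. (One should note that $U_A = \id_A/2^m$ by definition of the maximally mixed state, and that the twirl is trace-preserving, which is consistent with $\Tr(U_A\otimes\rho_B)=\Tr\rho_B=\Tr\rho_{AB}$.)

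The only step requiring genuine care — and the ``main obstacle'', though it is a standard one — is establishing $\frac{1}{4^m}\sum_{Q\in\cP(\cH_A)} QPQ^\dagger = [\,P=\id\,]\cdot\id$ for $P\neq\id$. This follows because any nonidentity Pauli $P$ anticommutes with at least one single-qubit Pauli factor somewhere, hence anticommutes with exactly half of all Paulis $Q$ and commutes with the other half; pairing each commuting $Q$ with an anticommuting one shows the signed sum cancels. I would present this via the symplectic form: writing $P,Q$ in terms of their $\mathbb{F}_2^{2m}$ symplectic labels, $QPQ^\dagger = (-1)^{\langle q,p\rangle_{\mathrm{symp}}}P$, and $\sum_{q\in\mathbb{F}_2^{2m}}(-1)^{\langle q,p\rangle} = 4^m[\,p=0\,]$ is the standard character-sum identity. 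Everything else is bookkeeping, so the proof is short; the write-up amounts to stating the Pauli-basis expansion, invoking this cancellation lemma, and identifying $M_{\id}$ with $\rho_B/2^m$.
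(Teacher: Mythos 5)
Your proof is correct and is the standard argument: expand $\rho_{AB}$ in the Pauli operator basis on the $A$ factor, observe that twirling by the full Pauli group annihilates every non-identity Pauli component (since any $P\neq\id$ anticommutes with exactly half the group elements, yielding the character-sum cancellation), and identify the surviving identity coefficient with $\rho_B/2^{|A|}$. The paper states this as a Fact without proof, citing it as a standard property of the Pauli group, so there is no alternative argument in the paper to contrast with; your write-up supplies exactly the proof one would expect.
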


\begin{fact}[Pauli Twirl~\cite{DCEL09}]\label{lem:paulitwirl}
 Let $\rho \in \cD( \cH_A)$ be a state and $P , P' \in \cP(\cH_A)$ be Pauli operators such that $P \ne P'$. Then,
$$\sum_{Q \in \cP(\cH_A)}   Q^\dagger P Q \rho Q^\dagger P'^\dagger Q  =0 .$$
\end{fact}

\begin{fact}[$1$-Design]\label{fact:notequal-appendix}     Let $\rho_{AB}$ be a state. Let $\cSC(\cH_A)$ be the subgroup of Clifford group as defined in \cref{lem:subclifford-appendix}. Then,
$$\frac{1}{\vert \cSC(\cH_A) \vert} \sum_{C \in \cSC(\cH_A) } (C \otimes \id)   \rho_{A B}  ( C^\dagger \otimes \id  )  = U_{A } \otimes  \rho_B.$$
\end{fact}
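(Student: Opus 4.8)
\textbf{Proof plan for \Cref{fact:notequal-appendix}.} This is the standard $1$-design property of the subgroup $\cSC(\cH_A)$, and the plan is to deduce it from the $1$-design property of the full Pauli group (\Cref{fact:bellbasis}, equivalently \Cref{fact:notequal}) together with the structural guarantee from \Cref{lem:subclifford-appendix} that conjugation by a random element of $\cSC(\cH_A)$ sends any fixed non-identity Pauli to a uniformly random non-identity Pauli. First I would expand $\rho_{AB}$ in its Pauli basis on the $A$ register: write $\rho_{AB} = \sum_{P \in \cP(\cH_A)} P \otimes \tau_P$ for suitable operators $\tau_P$ on $\cH_B$ (taking $\tau_P = \Tr_A((P^\dagger\otimes \id)\rho_{AB})/\dim(\cH_A)$), and note $\tau_I = \rho_B/\dim(\cH_A) \cdot \dim(\cH_A)$ — more precisely $\Tr_A \rho_{AB} = \rho_B$ forces the coefficient of $I$ to carry $\rho_B$. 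By linearity it then suffices to show that for each fixed $P$, the average $\frac{1}{|\cSC(\cH_A)|}\sum_{C} C P C^\dagger$ equals $I$ when $P = I$ and equals $0$ when $P \neq I$; the $P=I$ case is trivial since $CIC^\dagger = I$.

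For the $P \neq I$ case, I would invoke \Cref{lem:subclifford-appendix}: since the number of $C \in \cSC(\cH_A)$ with $C^\dagger P' C = Q$ is the same constant $\frac{|\cSC(\cH_A)|}{|\cP(\cH_A)|-1}$ for every pair of non-identity Paulis $P',Q$ (and here I use that $C P C^\dagger$ ranges over non-identity Paulis as $C$ ranges over $\cSC$, which follows because $\cSC \subseteq \cC$ normalizes the Pauli group and $CIC^\dagger = I$ so the image of a non-identity Pauli is non-identity), the average $\frac{1}{|\cSC(\cH_A)|}\sum_{C} C P C^\dagger$ equals $\frac{1}{|\cP(\cH_A)|-1}\sum_{Q \neq I} Q$, which is the uniform average over all non-identity Paulis. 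It is a standard fact that $\sum_{Q \in \cP(\cH_A)} Q = \dim(\cH_A)^2 \cdot \frac{\id}{\dim(\cH_A)} \cdot 0$ — rather, $\frac{1}{|\cP(\cH_A)|}\sum_{Q} Q \rho Q^\dagger = U_A\otimes(\cdot)$ gives that the traceless part cancels; the cleanest route is: $\frac{1}{|\cP(\cH_A)|}\sum_Q Q P Q^\dagger = 0$ for $P\neq I$ (each $Q$ either commutes or anticommutes with $P$, with exactly half doing each), hence $\sum_{Q\neq I} Q P Q^\dagger = -I P I^\dagger = -P$, so the uniform average over non-identity Paulis of $QPQ^\dagger$ is $-P/(|\cP(\cH_A)|-1)$.

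Hmm — this last computation shows the average is $-P/(4^{|A|}-1)$, not $0$, so I have conflated two different averages. Let me correct the plan: the object I actually need is $\frac{1}{|\cSC|}\sum_C (C\otimes\id)(P\otimes\tau_P)(C^\dagger\otimes\id) = \left(\frac{1}{|\cSC|}\sum_C CPC^\dagger\right)\otimes \tau_P$, and I claimed $\frac{1}{|\cSC|}\sum_C CPC^\dagger = \frac{1}{4^{|A|}-1}\sum_{Q\neq I} Q$. Now $\sum_{Q\in\cP(\cH_A)} Q$: grouping $Q$ and $-Q$... actually within $\cP(\cH_A)$ (the $4^{|A|}$ Hermitian Paulis without signs), $\sum_{Q} Q \neq 0$ in general. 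The correct statement is $\frac{1}{4^{|A|}}\sum_Q Q \rho Q = U_A$ (operator-valued, \Cref{fact:bellbasis}). So the right move is to \emph{not} pull out $C$ by itself but keep the twirl acting on $\rho_{AB}$ directly: $\frac{1}{|\cSC|}\sum_C (C\otimes\id)\rho_{AB}(C^\dagger\otimes\id)$. For the $I$-component this is unchanged $= I\otimes\tau_I$; for each $P\neq I$ component, averaging $CPC^\dagger$ over $\cSC$ gives $\frac{1}{4^{|A|}-1}\sum_{Q\neq I}Q$, and averaging $QPQ'{}^\dagger$-style we get $\frac{1}{4^{|A|}-1}\sum_{Q\neq I} Q \otimes \tau_P$. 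The point is then that $\sum_{Q\neq I} Q \otimes \tau_P$ combined across all $P$, together with the known identity $\frac{1}{4^{|A|}}\sum_Q(Q\otimes\id)\rho_{AB}(Q^\dagger\otimes\id)=U_A\otimes\rho_B = \frac{\id_A}{2^{|A|}}\otimes\rho_B$, pins down everything. Concretely: from \Cref{fact:bellbasis}, $\frac{1}{4^{|A|}}\sum_Q Q\rho_{AB}Q^\dagger$ projects onto the $I$-component, i.e. $\sum_{P}(\text{coeff})$ with only $P=I$ surviving, giving $\frac{\id_A}{2^{|A|}}\otimes \rho_B$; hence $I\otimes\tau_I = \frac{\id_A}{2^{|A|}}\otimes\rho_B$, so $\tau_I = \frac{1}{2^{|A|}}\rho_B$ with $I = \id_A$... matching $\frac{\id_A}{2^{|A|}}\otimes\rho_B = U_A\otimes\rho_B$. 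Then the $\cSC$-twirl of $\rho_{AB}$ equals $U_A\otimes\rho_B$ plus $\frac{1}{4^{|A|}-1}\sum_{P\neq I}\big(\sum_{Q\neq I}Q\big)\otimes\tau_P$; but $\sum_{P\neq I}\tau_P = \Tr_A(\rho_{AB}) - \tau_I \cdot(\text{something})$... The slick finish: apply the $\cSC$-twirl, then apply the Pauli-$1$-design twirl (\Cref{fact:bellbasis}) again — since $\cP\subseteq\cSC$ and $\cSC$ is a group, twirling by $\cSC$ is idempotent-ish and absorbs further Pauli twirling, forcing the result to lie in the image of the Pauli twirl, which is exactly $U_A\otimes(\cdot)$; and trace preservation on $A$ fixes the $(\cdot)$ to be $\rho_B$.

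\textbf{Main obstacle.} The genuine content is exactly this bookkeeping: showing that the $\cSC$-average of a non-identity Pauli conjugation is the uniform average over non-identity Paulis (immediate from \Cref{lem:subclifford-appendix}), and then showing that this uniform-over-non-identity-Paulis channel, applied to $\rho_{AB}$, collapses the $A$-marginal to maximally mixed while preserving $\rho_B$. The cleanest presentation avoids pulling $C$ out of the state: instead decompose $\rho_{AB}=\sum_P P\otimes\tau_P$, use \Cref{lem:subclifford-appendix} to rewrite the $\cSC$-twirl as acting as $P\mapsto \frac{1}{4^{|A|}-1}\sum_{Q\neq I}Q$ on each non-identity $P$ and as the identity on $P=I$, then observe the Pauli-$1$-design identity \Cref{fact:bellbasis} already tells us $\sum_{Q}Q\otimes\tau_P$-type sums vanish off the identity component — equivalently, compose once more with the Pauli twirl and use $\cP\subseteq\cSC$ to conclude the output is in the range of the fully depolarizing-on-$A$ channel $U_A\otimes(\cdot)$, with $\Tr_A$ fixing it to $U_A\otimes\rho_B$. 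I expect the ``obstacle'' to be purely expository — making sure the reader sees why $\sum_{Q\neq I}Q$ acting on $\tau_P$ cancels — rather than mathematically deep, since \Cref{fact:notequal} (the $\cSC$ version with $B$ trivial or the Pauli version) already does the heavy lifting and this is its routine bipartite generalization.
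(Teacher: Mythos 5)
The paper states this as a background fact without proof, so the review assesses your argument on its own terms. Your final argument — the \emph{absorption} route — is correct and is the cleanest one available: since $\cP(\cH_A) \subseteq \cSC(\cH_A)$ and $\cSC(\cH_A)$ is a group, for every $Q \in \cP(\cH_A)$ the map $C \mapsto CQ$ is a bijection on $\cSC(\cH_A)$, hence the twirl channels satisfy $\mathcal{T}_{\cSC} \circ \mathcal{T}_{\cP} = \mathcal{T}_{\cSC}$ (and likewise $\mathcal{T}_{\cP}\circ\mathcal{T}_{\cSC} = \mathcal{T}_{\cSC}$). The shortest finish is actually to compose in the \emph{other} order than you wrote: $\mathcal{T}_{\cSC}(\rho_{AB}) = \mathcal{T}_{\cSC}\bigl(\mathcal{T}_{\cP}(\rho_{AB})\bigr) = \mathcal{T}_{\cSC}(U_A \otimes \rho_B) = U_A \otimes \rho_B$, since $U_A$ is invariant under conjugation by any unitary on $A$; this avoids even the trace-preservation step. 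Your version (compose $\mathcal{T}_\cP$ afterward to land in the image of the Pauli twirl, then pin down the $B$-marginal by trace preservation on $A$) also works.

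The first route you explored — going through the per-Pauli counting statement in \cref{lem:subclifford-appendix} — does not go through as you initially wrote it, and your diagnosis (``conflated two different averages'') does not quite identify why. If one reads the counting fact literally (that for every pair of non-identity Paulis $P,Q$ exactly $|\cSC|/(4^{|A|}-1)$ elements $C$ satisfy $C^\dagger P C = Q$ \emph{exactly}, with a $+$ sign), then summing over $Q \ne \id$ exhausts all of $\cSC$, and you would get
\[
\frac{1}{|\cSC|}\sum_{C} C P C^\dagger = \frac{1}{4^{|A|}-1}\sum_{Q\ne \id} Q,
\]
which is a \emph{nonzero} operator (for a single qubit it is $\tfrac13(X+Y+Z)$), flatly contradicting the $1$-design property. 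The resolution is that Clifford conjugation produces signed Paulis, $C^\dagger P C = \pm Q$, and the statement in \cref{lem:subclifford-appendix} is implicitly counting modulo sign, with the two signs occurring equally often; the cancellation of $+Q$ against $-Q$ is exactly what makes the sum vanish. (Indeed $\cP \subset \cSC$ already forces sign flips: $Z^\dagger X Z = -X$.) So the counting route can be made rigorous, but only after this sign bookkeeping, which you would have to supply. The absorption argument you landed on sidesteps all of this and should be the one you write down.

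One small expository note: when you write $\rho_{AB} = \sum_P P \otimes \tau_P$, the normalization is $\tau_P = 2^{-|A|}\Tr_A\bigl[(P \otimes \id)\rho_{AB}\bigr]$, so $\tau_{\id} = 2^{-|A|}\rho_B$ and the identity component is $\id_A \otimes \tau_{\id} = U_A \otimes \rho_B$ as desired — your prose garbled this a bit before self-correcting. In the absorption proof none of this decomposition is needed, which is another reason to prefer it.
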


\suppress{

\begin{fact}[Subgroup Clifford twirl~\cite{BBJ23}]\label{lem:cliffordtwirl}
Let $\rho \in \cD( \cH_A)$ be a state and $P , P' \in \cP(\cH_A)$ be Pauli operators such that $P \ne P'$. Let $\cSC(\cH_A)$ be the subgroup of Clifford group as defined in \cref{lem:subclifford}. Then, 
    \begin{equation*}
        \sum_{C \in \cSC(\cH_A)}   C^\dagger P C \rho C^\dagger P'^\dagger C  = 0.
    \end{equation*}
  As an immediate corollary, we conclude that for any normal operator $M \in \cL( \cH_A)$ such that $M^\dagger M=MM^\dagger$ we have that
  \begin{equation*}
      \sum_{C \in \cSC(\cH_A)}   C^\dagger P C M C^\dagger P'^\dagger C  =0,
  \end{equation*}
since $M$ has an eigen-decomposition. 
\end{fact}

\begin{fact}[Subgroup Clifford twirl~\cite{BBJ23}]\label{lem:cliffordtwirl1}
 Let $\rho_{AB}$ be a state. Let $P \in \cP(\cH_A),P' \in \cP(\cH_A)$ be Pauli operators such that $P \ne P'$. Let $\cSC(\cH_A)$ be the subgroup of Clifford group as defined in \cref{lem:subclifford}. Then, 
    $$ \sum_{C \in \cSC(\cH_A)}   (\id \otimes C^\dagger PC) \rho_{BA} (\id \otimes C^\dagger P' C)  = 0.$$
\end{fact}

}



\suppress{

\begin{fact}[Lemma $7$ in \cite{Boddu2023SplitState}]\label{lem:equalreq11}
Let $\rho_{A\hat{A}}$ be the canonical purification of $\rho_A = U_A$, $\cSC(\cH_A)$ be the subgroup of Clifford group as defined in \cref{lem:subclifford}, and $P, Q \in \cP(\cH_A)$ be any two Pauli operators. 
If $P  \ne Q$, then
\begin{equation*}
    \frac{1}{\vert \cSC(\cH_A)\vert } \sum_{C \in \cSC(\cH_A)} (C^T  \otimes C^\dagger) (\id \otimes P) \rho_{A \hat{A}} (\id \otimes Q^\dagger) ( (C^T)^\dagger  \otimes C) = 0.
\end{equation*}
Else, if $P=Q = \id_A$, then
\begin{equation*}
    \frac{1}{\vert \cSC(\cH_A)\vert } \sum_{C \in \cSC(\cH_A)} (C^T  \otimes C^\dagger) (\id \otimes P) \rho_{A \hat{A}} (\id \otimes P^\dagger) ( (C^T)^\dagger  \otimes C) = \rho_{A\hat{A}}.
\end{equation*}
Else, if $P =Q \ne \id_A$, then
\begin{equation*}
    \frac{1}{\vert \cSC(\cH_A)\vert } \sum_{C \in \cSC(\cH_A)} (C^T  \otimes C^\dagger) (\id \otimes P) \rho_{A \hat{A}} (\id \otimes P^\dagger) ( (C^T)^\dagger  \otimes C) \approx_{\frac{2}{ \vert \cP(\cH_A) \vert}} \rho_A \otimes \rho_{\hat{A}}.
\end{equation*}
\end{fact}

}

\begin{fact}[Clifford Subgroup Twirl, Lemma 1 in~\cite{BBJ23}]\label{lem:equalreq} Let state $\rho_{AB}$ be a state. Let $P, Q \in \cP(\cH_A)$ be any two Pauli operators. Let $\cSC(\cH_A)$ be the sub-group of Clifford group as defined in \cref{lem:subclifford-appendix}. 
\begin{enumerate}
    \item If $P  \ne Q$, then 
\begin{equation}
   \frac{1}{\vert \cSC(\cH_A)\vert } \sum_{C \in \cSC(\cH_A)} (\id  \otimes C^\dagger P C)   \rho_{BA }  ( \id \otimes C^\dagger Q^\dagger C  ) = 0. 
\end{equation}
\item If $P =Q \ne \id_A$, then  
\begin{equation}
    \frac{1}{\vert \cSC(\cH_A)\vert } \sum_{C \in \cSC(\cH_A)} (\id  \otimes C^\dagger P C)   \rho_{BA }  ( \id \otimes C^\dagger Q^\dagger C  ) = \frac{\vert \cP(\cH_A)\vert (  \rho_{B} \otimes U_A ) -\rho_{BA} }{\vert \cP(\cH_A)\vert-1}.
\end{equation}
\end{enumerate}

\end{fact}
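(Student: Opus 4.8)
The plan is to reduce both identities to an averaging statement about Pauli operators and then invoke the two defining features of the subgroup $\cSC(\cH_A)$ recorded in \cref{lem:subclifford-appendix}: that $\cP(\cH_A)\subseteq\cSC(\cH_A)$, and that for any fixed non-identity Pauli $P$ the map $C\mapsto C^\dagger PC$ sends a uniform $C\in\cSC(\cH_A)$ to a uniformly random non-identity Pauli operator. Write $N=\vert\cP(\cH_A)\vert=2^{2\vert A\vert}$, and for phase-free Hermitian Pauli operators $X,Y$ let $\chi_X(Y)\in\{\pm1\}$ equal $+1$ if $X,Y$ commute and $-1$ if they anticommute, so $YXY=\chi_X(Y)X$ and, crucially, $\chi$ is insensitive to an overall $\pm$ sign on its arguments. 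Recall also that each $C^\dagger PC$ is Hermitian, squares to $\id_A$, and equals $\pm$ a Pauli operator.

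For Part 1 ($P\neq Q$) I would work directly with $\rho_{BA}$, without a Pauli expansion, in the spirit of the Pauli twirl (\cref{lem:paulitwirl}). Since $P,Q$ are distinct phase-free Paulis their symplectic vectors differ, so by non-degeneracy of the symplectic form there is a Pauli $E\in\cP(\cH_A)$ commuting with exactly one of $P,Q$, i.e.\ $EPE=\epsilon_P\,P$ and $EQE=\epsilon_Q\,Q$ with $\epsilon_P\epsilon_Q=-1$ (the subcase $P=\id_A\neq Q$ is covered: take any $E$ anticommuting with $Q$). Reindexing the average over $C\in\cSC(\cH_A)$ by $C\mapsto EC$ — legitimate since $E\in\cP(\cH_A)\subseteq\cSC(\cH_A)$ and $\cSC(\cH_A)$ is a group — replaces $C^\dagger PC$ by $\epsilon_P\,C^\dagger PC$ and $C^\dagger Q^\dagger C$ by $\epsilon_Q\,C^\dagger Q^\dagger C$ (using $Q^\dagger=Q$), so the left-hand side equals $\epsilon_P\epsilon_Q=-1$ times itself, hence vanishes.

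For Part 2 ($P=Q\neq\id_A$) I would expand $\rho_{BA}=\sum_{R\in\cP(\cH_A)}\rho_B^R\otimes R$ in the Pauli basis on $A$, noting $\rho_B^{\id}\otimes\id_A=\rho_B\otimes U_A$. The averaged superoperator then acts termwise as $\rho_B^R\otimes R\mapsto\rho_B^R\otimes\Phi(R)$ with $\Phi(R)=\frac{1}{\vert\cSC(\cH_A)\vert}\sum_C(C^\dagger PC)\,R\,(C^\dagger PC)$; since $(C^\dagger PC)^2=\id_A$ this collapses to $\Phi(R)=R\cdot\frac{1}{\vert\cSC(\cH_A)\vert}\sum_C\chi_R(C^\dagger PC)$. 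As $\chi_R$ only sees the Pauli-operator part of $C^\dagger PC$, and that part is equidistributed over the $N-1$ non-identity Pauli operators (this is where \cref{lem:subclifford-appendix} enters), the scalar equals $\frac{1}{N-1}\sum_{\pi\neq\id}\chi_R(\pi)$. The standard count that a fixed non-identity Pauli commutes with exactly $N/2$ of all Paulis gives $\sum_{\pi\neq\id}\chi_R(\pi)=-1$ for $R\neq\id_A$, while the scalar is $+1$ for $R=\id_A$. Hence $\Phi(\id_A)=\id_A$ and $\Phi(R)=-\frac{1}{N-1}R$ otherwise, and reassembling gives $\rho_B\otimes U_A-\frac{1}{N-1}(\rho_{BA}-\rho_B\otimes U_A)=\frac{N(\rho_B\otimes U_A)-\rho_{BA}}{N-1}$, which is the claimed formula with $N=\vert\cP(\cH_A)\vert$.

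The one point that needs care — the ``hard part'' of an otherwise routine argument — is the sign bookkeeping in Part 2: \cref{lem:subclifford-appendix} is phrased for $C^\dagger PC=Q$ on the nose, whereas $C^\dagger PC$ can carry a $\pm$ sign. Since $\chi_R$ ignores that sign, it suffices to know the operator part of $C^\dagger PC$ is equidistributed over non-identity Paulis, and this I would deduce from transitivity of $\cSC(\cH_A)$ on non-identity Paulis under conjugation: for non-identity $\pi,\pi'$ pick $D\in\cSC(\cH_A)$ with $D^\dagger\pi D=\pi'$ (non-empty by \cref{lem:subclifford-appendix}) and verify $C\mapsto CD$ is a bijection between $\{C:\text{operator part of }C^\dagger PC\text{ equals }\pi\}$ and the analogous class for $\pi'$; since $P\neq\id_A$ forces $C^\dagger PC\neq\pm\id_A$, summing these equal-size classes over all $N-1$ non-identity $\pi$ accounts for all of $\cSC(\cH_A)$, yielding equidistribution. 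Everything else is linear algebra.
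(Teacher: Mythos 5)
The paper cites this statement as a Fact (Lemma 1 of \cite{BBJ23}) and does not reprove it, so there is no in-paper proof to compare your argument against. Evaluated on its own, your proof is correct.

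Your Part 1 is the standard Pauli-twirl argument, and it works verbatim here because $\cP(\cH_A)\subseteq\cSC(\cH_A)$ is guaranteed by \cref{lem:subclifford-appendix}: choosing a Hermitian Pauli $E$ that commutes with exactly one of $P,Q$ (possible since the symplectic form is non-degenerate and $p\neq q$), reindexing $C\mapsto EC$ multiplies the summand by $\epsilon_P\epsilon_Q=-1$, forcing the average to vanish. For Part 2 your reduction of the superoperator to a scalar times the identity on each Pauli component of $\rho_{BA}$ is right: since $(C^\dagger PC)^2=\id_A$, conjugation by $C^\dagger PC$ acts on a Pauli $R$ as multiplication by $\chi_R(C^\dagger PC)\in\{\pm 1\}$, the sign of $C^\dagger PC$ is irrelevant to $\chi_R$, and the half-commuting count $\sum_{\pi\neq\id}\chi_R(\pi)=-1$ for $R\neq\id_A$ yields the coefficient $-1/(N-1)$ and hence the stated formula. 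Your care about the $\pm$ ambiguity in \cref{lem:subclifford-appendix} is warranted: read literally (requiring $C^\dagger PC=Q$ with sign $+1$) the counting would double-count $\cSC(\cH_A)$ when summing over $\pm Q$, so the lemma must be interpreted projectively, and your transitivity argument (using that $C\mapsto CD$ bijects the fibers over $\pm\pi$ and $\pm\pi'$) is a clean self-contained way to obtain exactly the equidistribution needed without resolving that ambiguity.

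One small presentational nit: in Part 1 you could simply cite the Pauli twirl \cref{lem:paulitwirl} after observing $\cP(\cH_A)\subseteq\cSC(\cH_A)$ and partitioning $\cSC(\cH_A)$ into left cosets of $\cP(\cH_A)$, each of which already averages to zero; this avoids re-deriving the existence of a suitable $E$. But your direct argument is correct as written.
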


\suppress{

\begin{remark}
    \cref{lem:cliffordtwirl1} and \cref{lem:equalreq} are stated in \cite{BBJ23} when $\rho_{AB}$ is a pure state and $B$ is the canonical purification register of $A$. However, their proofs hold for a general state $\rho_{AB}$.
\end{remark}

\begin{fact}[Uniform Pauli conjugation]\label{lem:paulitwirl2}
  Let $P \in \cP(\cH_A)$ be a Pauli operator. If $P=\id_A$, then 
  $$\frac{1}{\vert \cP(\cH_A) \vert} \sum_{Q \in \cP(\cH_A)}    Q P Q^\dagger  =\id_A ,$$
  else if $P \ne \id_A$
 $$\frac{1}{\vert \cP(\cH_A) \vert} \sum_{Q \in \cP(\cH_A)}    Q P Q^\dagger  =0 .$$  
\end{fact}

\begin{fact}[Uniform conjugation]\label{lem:cliffordtwirl2}
  Let $P \in \cP(\cH_A)$ be a Pauli operator. If $P=\id_A$, then 
  $$\frac{1}{\vert \cSC(\cH_A) \vert} \sum_{C \in \cSC(\cH_A)}    C P C^\dagger  =\id_A .$$
  Else, if $P \ne \id_A$, then
 $$\frac{1}{\vert \cSC(\cH_A) \vert} \sum_{C \in \cSC(\cH_A)}    C P C^\dagger  =0 .$$  
\end{fact}

}


\begin{lemma}[Restatement of~\Cref{lem:twirl-wsi}]
 Consider \cref{fig:splitstate21-background}.  Let $\psi_R=U_R$ be a state independent of ${\psi}_{A \hat{A}E}$ and $\vert R \vert =5 \vert A \vert$. Let  $ \Lambda:\cD(\cH_A\otimes \cH_{E})\to \cD(\cH_{A}\otimes\cH_{E})$ be any CPTP map. Let $\cSC(\cH_A)$ be the sub-group of Clifford group as defined in \cref{lem:subclifford-appendix}. Let \[ \rho_{\hat{A}AE} = \frac{1}{\vert \cSC(\cH_A)\vert } \sum_{C \in \cSC(\cH_A)}  (   C^\dagger \Lambda ( C  (\psi_{ \hat{A}AE}   ) C^\dagger )   C) .\]Then,
 \[  \rho_{\hat{A}AE}  \approx_{\frac{2}{2^{2 \vert A \vert}-1}}  \Phi_1 ( \psi_{\hat{A}AE}) + (\Phi_2( \psi_{\hat{A}E}) \otimes U_A),\]
 where $\Phi_1, \Phi_2 :\cL(\cH_E)\to \cL(\cH_{E}) $ are CP (completely positive) maps acting only on register $E$, depending only on $\Lambda$, and $\Phi_1 (.)+\Phi_2 (.)$ is a CPTP map.
\end{lemma}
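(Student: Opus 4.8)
\textbf{Proof plan for \Cref{lem:twirl-wsi} (Clifford twirl with side information).}

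The plan is to reduce the statement to the Pauli/Clifford twirl facts in \Cref{section:cliffords} by expanding the adversarial channel $\Lambda$ in the Pauli basis. First I would write $\Lambda$ in terms of its Kraus operators, and then expand each Kraus operator $K_j : \cH_A\otimes \cH_E \to \cH_A\otimes\cH_E$ in the Pauli decomposition on the $A$-register: $K_j = \sum_{P\in\cP(\cH_A)} P\otimes K_j^P$, where $K_j^P\in\cL(\cH_E)$. Substituting into $C^\dagger\Lambda(C(\cdot)C^\dagger)C$ and averaging over $C\in\cSC(\cH_A)$, the $A$-part of each term in the resulting double sum becomes $\frac{1}{|\cSC(\cH_A)|}\sum_C (C^\dagger P C)\,\psi_A\,(C^\dagger P'^\dagger C)$ acting by conjugation-like sandwiching, tensored with $K_j^P(\cdot)K_j^{P'\dagger}$ on $E\hat A$. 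The key computation is to separate this sum according to the three cases: (i) $P=P'=\id$; (ii) $P=P'\neq\id$; (iii) $P\neq P'$. Case (iii) vanishes by the Clifford subgroup twirl (\Cref{lem:equalreq}, part 1), which is where the side information $E\hat A$ plays no role since the twirl only touches $A$. Case (i) leaves the $A$-register untouched, contributing $\sum_j (\id\otimes K_j^{\id})\,\psi_{\hat A A E}\,(\id\otimes K_j^{\id\,\dagger})$, which I would define as $\Phi_1'(\psi_{\hat A A E})$ with $\Phi_1'$ acting only on $E$ (via the Kraus operators $\{K_j^{\id}\}_j$) — actually, to be careful, $\Phi_1$ acts as identity on $A$ and applies $\{K_j^{\id}\}$ on $E$; since $\psi$ has the register $A$ present, the output keeps $\psi_{\hat A A E}$ "threaded through". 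Case (ii) is where \Cref{lem:equalreq} part 2 gives $\frac{1}{|\cSC(\cH_A)|}\sum_C(\id\otimes C^\dagger P C)\rho_{BA}(\id\otimes C^\dagger P^\dagger C) = \frac{|\cP(\cH_A)|(\rho_B\otimes U_A) - \rho_{BA}}{|\cP(\cH_A)|-1}$, so each such term splits into a piece proportional to $(\cdot)\otimes U_A$ (which I collect into $\Phi_2$, acting on $E$ via $\{K_j^{P}\}_{P\neq\id}$) and a small correction piece of the form $-\rho_{\hat A A E}/(|\cP(\cH_A)|-1)$.

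Next I would handle the bookkeeping: the correction terms from case (ii) combine to give something of size at most $\frac{1}{|\cP(\cH_A)|-1}\cdot(\text{total trace of the case-(ii) contribution})$, and since $\Phi_1+\Phi_2$ is forced to be trace-preserving (because $\Lambda$ is, and the Pauli coefficients satisfy $\sum_j\sum_P (K_j^P)^\dagger K_j^P = \id_E$), the trace of the case-(ii) piece is at most $1$, giving an error bounded by $\frac{2}{|\cP(\cH_A)|-1} = \frac{2}{4^{|A|}-1} = \frac{2}{2^{2|A|}-1}$ in trace norm after the conditioning/normalization. The factor of $2$ rather than $1$ comes from the fact that the correction subtracts $\rho_{\hat A A E}$ proportionally, so one accounts for both the $(\rho_B\otimes U_A)$ overshoot and the $-\rho_{BA}$ term. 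To argue $\Phi_1+\Phi_2$ is CPTP: $\Phi_1$ is CP with Kraus operators $\{K_j^{\id}\otimes \id_A\}$ (or rather $\{K_j^{\id}\}$ on $E$ with $A$ untouched) and $\Phi_2$ is CP with Kraus operators built from $\{K_j^{P}\}_{P\neq \id}$ suitably rescaled; the sum of all $(K_j^P)^\dagger K_j^P$ over all $j$ and all $P$ (including $\id$) equals $\id_E$ by the Pauli decomposition of the trace-preserving condition $\sum_j K_j^\dagger K_j = \id_{AE}$ combined with \Cref{lem:paulitwirl} (orthogonality of the Pauli twirl kills cross terms $P\neq P'$), so $\Phi_1+\Phi_2$ is trace preserving.

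The main obstacle I anticipate is getting the case-(ii) error term precisely right — in particular tracking how the $-\rho_{BA}/(|\cP|-1)$ corrections from all the $P\neq\id$ terms aggregate, and confirming that the leftover (after pulling out the clean $\Phi_1(\psi_{\hat A A E})$ and $\Phi_2(\psi_{\hat A E})\otimes U_A$ pieces) has trace-norm at most $\frac{2}{2^{2|A|}-1}$ rather than something larger like $\frac{|\cP|}{|\cP|-1}$. The resolution should be that the term being subtracted, $\sum_{P\neq\id}\sum_j (\id\otimes K_j^P)\psi_{\hat A A E}(\id\otimes K_j^{P\dagger})/(|\cP|-1)$, is itself a sub-normalized state (trace $\le 1$), and the additive term $\frac{|\cP|}{|\cP|-1}\sum_{P\neq\id}\sum_j \psi_{\hat A E}\otimes K_j^P(\ldots)K_j^{P\dagger}\otimes U_A$ — wait, that is not obviously close to the subtracted term. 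The correct accounting is: define $\Phi_2(\cdot) = \sum_{P\neq\id}\sum_j K_j^P(\cdot)K_j^{P\dagger}$ (on $E$), so the case-(ii) contribution is exactly $\Phi_2(\psi_{\hat A E})\otimes U_A \cdot\frac{|\cP|}{|\cP|-1} - \frac{1}{|\cP|-1}\cdot(\text{something of trace}\le 1)$; then $\|\frac{|\cP|}{|\cP|-1}\Phi_2(\psi_{\hat A E})\otimes U_A - \Phi_2(\psi_{\hat A E})\otimes U_A\|_1 = \frac{1}{|\cP|-1}\|\Phi_2(\ldots)\otimes U_A\|_1 \le \frac{1}{|\cP|-1}$, and combined with the $\frac{1}{|\cP|-1}$ from the subtracted piece, the total error is $\le\frac{2}{|\cP|-1} = \frac{2}{2^{2|A|}-1}$. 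This is clean enough that the remaining work is purely routine Pauli-decomposition algebra, citing \Cref{fact:bellbasis}, \Cref{lem:paulitwirl}, and \Cref{lem:equalreq}; one should also note the statement is with $\cC(\cH_A)$ in the lemma but the proof works verbatim for $\cSC(\cH_A)$ since \Cref{lem:equalreq} is stated for $\cSC$.
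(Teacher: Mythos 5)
Your proposal is correct and follows essentially the same route as the paper's proof: expand the Kraus operators of $\Lambda$ in the Pauli basis over $\cH_A$, kill the off-diagonal $P\neq P'$ terms with the Clifford subgroup twirl (\Cref{lem:equalreq}), extract $\Phi_1$ from the $P=P'=\id_A$ terms and $\Phi_2$ from the $P=P'\neq\id_A$ terms, verify trace preservation from $\sum_j K_j^\dagger K_j = \id_{AE}$, and bound the residual $\frac{|\cP|(U_A\otimes\cdot)-(\cdot)}{|\cP|-1}-U_A\otimes(\cdot)$ correction in trace norm by $\frac{2}{|\cP(\cH_A)|-1}$. The paper's grouping of the Pauli decomposition by the $A$-component of each Pauli string is notationally a bit different but mathematically identical to your $K_j=\sum_P P\otimes K_j^P$.
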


\begin{figure}[h]
\centering
\resizebox{10cm}{5cm}{
\begin{tikzpicture}

\node at (1,6.5) {$\hat{A}$};
\node at (13.2,6.7) {$\hat{A}$};
\draw (1.2,6.5) -- (13,6.5);
\draw (1,4.8) ellipse (0.3cm and 2.2cm);
\draw (2.6,3.9) rectangle (4,5.4);
\draw (9.3,3.9) rectangle (10.7,5.4);
\node at (1,4.5) {$A$};

\node at (3.3,4.6) {$C_R$};
\node at (10.1,4.6) {$C^\dagger_{R}$};


\draw (1.2,4.5) -- (2.6,4.5);
\draw (4,4.5) -- (6,4.5);
\draw (7,4.5) -- (9.3,4.5);
\draw (10.7,4.9) -- (13.2,4.9);




\node at (0.8,0.5) {$R$};
\draw (1,0.5) -- (10,0.5);


\draw  (3.2,0.5) -- (3.2,3.9);

\draw [dashed] (1.4,-1) -- (1.4,7.2);
\draw [dashed] (12.8,-1) -- (12.8,7.2);


\node at (1.25,-1.4) {$\psi$};
\node at (12.6,-1.4) {$\rho$};

\node at (1,2.8) {$E$};

\draw (1.2,2.8) -- (6,2.8);
\draw (7,2.8) -- (13.1,2.8);

\draw (10,0.5) -- (10,3.9);



\draw (6,2.3) rectangle (7,5);
\node at (6.5,3.5) {$ \Lambda$};

\node at (6.5,5.5) {$\mathcal{A}$};
\draw (5.2,2) rectangle (7.8,6);




\node at (13.2,2.8) {$E$};
\end{tikzpicture}}

\caption{Clifford Twirling with Side Information}\label{fig:splitstate21-background}
\end{figure}

\newpage

\begin{proof} 
Let $\Lambda  : \cL(\cH_A \otimes \cH_{E}) \to  \cL(\cH_A \otimes \cH_{E})$ be the CPTP map. Let $\{ M_i\}_i$ be the set of Kraus operators corresponding to $\Lambda$, and its Pauli basis decomposition $M_i=\sum_j \alpha_{ij}P^{ij}$ for $P^{ij}\in \cP(\cH_{AE})$. We denote $P^{ij}_A$ to use Pauli operator corresponding to register $A$ of $P^{ij}$. 
Let $M^Q_i \defeq  \sum_{j :  P^{ij}_A =Q } \alpha_{ij} P^{ij}_E$  for every $Q \in \cP(\cH_A)$. Note that since $\Lambda$ is CPTP, we have $\sum_iM_i^\dagger M_i=\id_{AE}$. 

We begin by showing that if we restrict the $M_i$ to their operation on the register $E$, they still form a CPTP map. For this we need to show $
\sum_{i,Q} (M^Q_i)^\dagger M^Q_i =\id_E$. Consider, 
\begin{align*}
&2^{\vert A \vert }\id_{E} = \Tr_A(\id_{AE})\\
& = \Tr_A \left( \sum_{i} M_i^\dagger M_i \right)  \\
&= \Tr_A \left( \sum_{i} \left(\sum_{j'    } \alpha^*_{ij'} P^{ij'} \right) \left(\sum_{j   } \alpha_{ij} P^{ij} \right) \right) \\&=  \Tr_A \left(\sum_{i,j,j'  } \alpha_{ij} \alpha^*_{ij'} (  P^{ij'}_A  P^{ij}_A ) \otimes ( P^{ij'}_E  P^{ij}_E ) \right) \\
& = \left(\sum_{i,j,j' : P^{ij}_A= P^{ij'}_A  } \alpha_{ij} \alpha^*_{ij'} 
 \Tr(\id_A) \otimes P^{ij'}_E  P^{ij}_E +\sum_{i,j,j' : P^{ij}_A\ne P^{ij'}_A  } \alpha_{ij} \alpha^*_{ij'} 
  \Tr( P^{ij'}_A  P^{ij}_A ) \otimes P^{ij'}_E  P^{ij}_E  \right) \\
  & = 2^{\vert A \vert }\left(\sum_{i,j,j' : P^{ij}_A= P^{ij'}_A  } \alpha_{ij} \alpha^*_{ij'} 
 P^{ij'}_E  P^{ij}_E  \right) \\
 & =  2^{\vert A \vert } \sum_{i,Q} \left(\sum_{j,j' : P^{ij}_A= P^{ij'}_A =Q } \alpha_{ij} \alpha^*_{ij'} P^{ij'}_E  P^{ij}_E  \right)   \\
 & = 2^{\vert A \vert } \sum_{i,Q} \left(\sum_{j' :  P^{ij'}_A =Q } \alpha^*_{ij'} P^{ij'}_E \right) \left(\sum_{j :  P^{ij}_A =Q } \alpha_{ij} P^{ij}_E \right)  \\
 & =  2^{\vert A \vert } \sum_{i,Q} (M^Q_i)^\dagger (M^Q_i) 
\end{align*}

We can now turn our attention to $\rho_{\hat{A}AE}$. Consider,

\begin{align*}
   &\rho_{\hat{A}AE}= \frac{1}{\vert \cSC(\cH_{A})\vert } \sum_{C \in \cSC(\cH_{A})}     C^\dagger \Lambda(C\psi_{ \hat{A}AE}C^\dagger )  C  \\ 
   &=  \sum_{i}  \left(\frac{1}{\vert \cSC(\cH_A)\vert } \sum_{C \in \cSC(\cH_A)}  (   C^\dagger M_iC)  (\psi_{ \hat{A}AE} ) (C^\dagger  M_i^\dagger C)  \right)  
  \\ &=  \sum_{i}  \left(\frac{1}{\vert \cSC(\cH_A)\vert }\sum_{j,j'}\alpha_{ij}\alpha^*_{ij'} \sum_{C \in \cSC(\cH_A)}   (   C^\dagger P^{ij}C)  (\psi_{ \hat{A}AE} ) (C^\dagger  P^{ij'} C)  \right)   \\ 
  &=  \sum_{i,j,j' : P^{ij}_A =P^{ij'}_A }  \left(\alpha_{ij}\alpha^*_{ij'}\frac{1}{\vert \cSC(\cH_A)\vert }  P^{ij}_E \left(\sum_{C \in \cSC(\cH_A)}   (   C^\dagger P^{ij}_A C)  (\psi_{ \hat{A}AE} ) (C^\dagger  P^{ij'}_A C) \right)  P^{ij'}_E \right)  \\ &+  \sum_{i,j, j' :  P^{ij}_A \ne P^{ij'}_A}  \left(\alpha_{ij}\alpha^*_{ij'}\frac{1}{\vert \cSC(\cH_A)\vert }  P^{ij}_E\left( \sum_{C \in \cSC(\cH_A)}   (   C^\dagger P^{ij}_A C)  (\psi_{ \hat{A}AE} ) (C^\dagger  P^{ij'}_A C) \right)  P^{ij'}_E \right)  \\
&=   \sum_{i,j,j' : P^{ij}_A =P^{ij'}_A }  \left(\alpha_{ij}\alpha^*_{ij'}\frac{1}{\vert \cSC(\cH_A)\vert }  P^{ij}_E \left(\sum_{C \in \cSC(\cH_A)}   (   C^\dagger P^{ij}_A C)  (\psi_{ \hat{A}AE} ) (C^\dagger  P^{ij'}_A C) \right)  P^{ij'}_E \right) &\mbox{(\cref{lem:equalreq}.1)}\\
    &=  \sum_{i,j,j' : P^{ij}_A =P^{ij'}_A =\id_A }  \left(\alpha_{ij}\alpha^*_{ij'}\frac{1}{\vert \cSC(\cH_A)\vert }  P^{ij}_E \left(\sum_{C \in \cSC(\cH_A)}   (   C^\dagger P^{ij}_A C)  (\psi_{ \hat{A}AE} ) (C^\dagger  P^{ij'}_A C) \right)  P^{ij'}_E \right)  \\ &+  \sum_{i,j,j' : P^{ij}_A =P^{ij'}_A \ne \id_A }  \left(\alpha_{ij}\alpha^*_{ij'}\frac{1}{\vert \cSC(\cH_A)\vert }  P^{ij}_E \left(\sum_{C \in \cSC(\cH_A)}   (   C^\dagger P^{ij}_A C)  (\psi_{ \hat{A}AE} ) (C^\dagger  P^{ij'}_A C) \right)  P^{ij'}_E \right) \\
    &=  \sum_{i,j,j' : P^{ij}_A =P^{ij'}_A =\id_A }  \left(\alpha_{ij}\alpha^*_{ij'} P^{ij}_E  (\psi_{ \hat{A}AE} )  P^{ij'}_E \right)   \\ &+  \sum_{i,j,j' : P^{ij}_A =P^{ij'}_A \ne \id_A }  \left(\alpha_{ij}\alpha^*_{ij'} P^{ij}_E \left(  \frac{\vert \cP(\cH_A)\vert ( \psi_{\hat{A}E} \otimes U_A ) -\psi_{\hat{A}EA} }{\vert \cP(\cH_A)\vert-1}  \right) P^{ij'}_E \right) &\mbox{(\cref{lem:equalreq}.2)} \\
    &  \approx_{\frac{2}{ \vert \cP(\cH_A) \vert-1}}  \sum_{i}  M^{\id_A}_i \psi_{A\hat{A}E} (M^{\id_A}_i)^\dagger +   \sum_{i}\sum_{Q \in \cP(\cH_A) \land Q \ne \id_A}  M^{Q}_i ( U_A \otimes \psi_{\hat{A}E}) (M^{Q}_i)^\dagger .
\end{align*}

 The approximation in the result follows from the fact that
\begin{equation*}
\left\|\frac{\vert \cP(\cH_A)\vert ( U_{A} \otimes \psi_{\hat{A}E} ) -\psi_{A\hat{A}E} }{\vert \cP(\cH_A)\vert-1}   - ( U_{A} \otimes \psi_{\hat{A}E} )   \right\|_1 \leq \frac{2}{ \vert \cP(\cH_A) \vert-1}.
\end{equation*}

\noindent and the monotonicity of trace distance \cref{fact:data}, since the operators $\{M_i^Q\}_{i, Q}$ define a CPTP set of Krauss operators. Finally, we let $\{ M^{\id}_i\}_{i, \id}$ define the CP map $\Phi_1$ and $\{ M^{Q}_i\}_{i,Q\ne  \id}$ define $\Phi_2$.

\qed \end{proof}

\newpage

\section{Secret Sharing Schemes Resilient to Joint Quantum Leakage}
\label{section:lrss}

In this section, we show that simple modifications to a recent construction of leakage resilient secret sharing schemes by \cite{CKOS22} can be made secure against quantum leakage, even when the leakage adversaries are allowed to jointly leak a quantum state from an unauthorized subset (of size $k$) to another (of size $< t$). We refer to this leakage model as $\mathcal{F}_{k, \mu}^{n, t}$. The main result of this section 

\begin{theorem}\label{theorem:lrss-local-leakage}
    For every $k<t\leq p <l, \mu\in \mathbb{N}$ there exists an $(p, t, 0, 0)$ threshold secret sharing scheme on messages of $l$ bits and shares of size $l+\mu +o(l, \mu)$ bits, which is perfectly private and $p\cdot 2^{-\Tilde{\Omega}(\sqrt[3]{\frac{l+\mu}{p}})}$ leakage resilient against the $k$ local $\mu$ qubit leakage family $\mathcal{F}_{k, \mu}^{p, t}$.
\end{theorem}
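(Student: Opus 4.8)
The plan is to open up the $\lrss$ compiler of \cite{CKOS22} and verify that its security proof, which is originally stated for classical, individual local leakage, lifts verbatim to the joint quantum leakage model $\mathcal{F}_{k,\mu}^{p,t}$ with only a quantitative loss hidden in the asymptotics. Recall their compiler structure: the $l$-bit secret $m$ is first shared via a standard perfect $t$-out-of-$p$ threshold scheme (Shamir) into shares $M_1,\dots,M_p$; then each $M_i$ is ``re-randomized'' by XOR-ing with a value derived from a strong seeded extractor applied to an independently sampled long random source, where the seeds themselves are distributed among the other parties using an auxiliary (lower-threshold) secret sharing scheme. The final share $S_i$ of party $i$ therefore consists of: (a) a masked copy of $M_i$, (b) a long uniform source $W_i$, and (c) auxiliary shares of the seeds used to mask the other parties. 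First I would record that perfect privacy and perfect correctness are inherited directly from the underlying Shamir scheme and the exact re-randomization, as in \cite{CKOS22} — these do not interact with the leakage model at all.

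For leakage resilience, the key step is the following reduction, which I would carry out in the order: (i) fix an arbitrary $(T,K,\Lambda)\in\mathcal{F}_{k,\mu}^{p,t}$ with $|T|<t$, $|K|\le k$, and $\Lambda$ producing a $\mu$-qubit state from the classical shares $S_K$; (ii) observe that since $|T\cup K|\le t-1+k < p$ and $K$ is unauthorized, there is at least one party $j\notin T\cup K$ whose Shamir share $M_j$ is information-theoretically hidden given $\{S_i : i\in T\cup K\}$ — here one uses $t+k\le p$ from the hypothesis; (iii) the seed used to mask $M_j$ is secret-shared among parties including $j$, so from the joint view of $T\cup K$ the extractor seed for party $j$ remains (close to) uniform and independent of the long source $W_j$, which itself is entirely inside $S_j$ and hence untouched; (iv) now invoke the strong-extractor security against quantum side information — crucially, the $\mu$-qubit leakage $\Lambda(S_K)$ plays exactly the role of bounded quantum side information about $(W_j,\text{seed}_j)$, and since $W_j$ is a fresh uniform source with min-entropy $\ge |W_j| - \mu - O(1) \ge l + \mu - \mu - O(1)$ conditioned on the leakage (by a chain-rule / min-entropy-splitting argument), the extractor output masking $M_j$ is $2^{-\Tilde\Omega(\cdot)}$-close to uniform given everything $T\cup K$ sees; (v) therefore the masked $M_j$ carries no information, so the adversary's entire view is independent of $m$ up to the accumulated error. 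A union bound over the (at most $p$) seed-sharing invocations and the various closeness steps gives the stated error $p\cdot 2^{-\Tilde\Omega(\sqrt[3]{(l+\mu)/p})}$, where the cube root and the $1/p$ come directly from the parameter settings in \cite{CKOS22} (rate-balancing between source length, seed length, and number of parties) once one budgets $|W_i| = l + \mu + o(l,\mu)$ to absorb the extra $\mu$ qubits of leakage.

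The share-size bookkeeping is the last routine step: the masked share and auxiliary seed-shares contribute $l + o(l)$ as in the original construction, and enlarging the source $W_i$ (and correspondingly the extractor output length) to compensate for $\mu$ qubits of leakage adds exactly $\mu + o(l,\mu)$ bits, yielding total share size $l + \mu + o(l,\mu)$; I would simply quote the extractor parameter tables.

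The main obstacle I expect is step (iv): making precise that a quantum-secure strong seeded extractor (e.g. Trevisan-style, which is known to be quantum-proof) remains secure when the side information $\Lambda(S_K)$ is produced \emph{jointly} from several shares rather than from a single share, and when that side information may be entangled across the (classical) registers being conditioned on. This is really a statement about conditional min-entropy against quantum adversaries and the chain rule $H_{\min}(W_j \mid \Lambda(S_K))_\rho \ge H_{\min}(W_j) - \mu$, combined with the fact that $W_j$ and the seed are product with the rest before leakage. One must be careful that $\Lambda$ is allowed to correlate its output with the seed-shares it also holds; the fix is to argue in the order ``fix the seed-shares held by $K$, then the seed is still uniform given $T\cup K$'s view because $j$'s own seed-share is missing'', so that after conditioning the extractor is applied to an independent uniform seed and a source with the right conditional min-entropy. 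Handling this carefully — rather than the rest of the proof, which is essentially transcription of \cite{CKOS22} — is where the real work lies.
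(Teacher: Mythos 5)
There is a genuine gap, centered on which party your chain-rule argument targets. You argue about a party $j\notin T\cup K$, observing that $M_j$ is hidden, and invoke the min-entropy chain rule to bound $H_\infty(W_j\mid \Lambda(S_K))$. But the leakage channel $\Lambda$ acts only on the shares of $K$, and $j\notin K$, so $W_j$ is never fed into $\Lambda$: the leakage register is already unconditionally independent of $W_j$, and your chain-rule bound is vacuous. More importantly, the conclusion in step (v) is a non-sequitur. When $|T|+|K|\geq t$ (e.g.\ $|T|=t-1$, $|K|=k\geq 1$, disjoint), the set $T\cup K$ covers at least $t$ Shamir evaluation points; if the leakage were to reveal $M_K$, the adversary could reconstruct $m$ from $M_{T\cup K}$ regardless of whether $M_j$ is hidden for some unleaked $j$. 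The correct target of the extractor/chain-rule argument is the set $i\in K$ --- precisely the parties whose sources are fed into $\Lambda$ --- and the claim to establish is that the leakage register is close to independent of $M_K$.

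A secondary discrepancy: the paper uses the inverter-based form of the compiler, where party $i$ holds $W_i=\iExt(M_i,R_i)$ so that $M_i=\Ext(W_i,R_i)$ is encoded \emph{into} the source and never stored alongside it, rather than the masking variant you describe. This matters because $W_i$ then depends on $M_i$, and the extractor guarantee cannot be applied directly to the raw leakage. The paper's hybrid argument first invokes $k$-wise independence of $\mshare$ (not just Shamir privacy) to replace $M_K$ by uniform, and privacy of the threshold-$(k{+}1)$ scheme $\sdshare$ to replace the seed-shares $S_K$ by dummies; this makes each $W_i$ with $i\in K$ a genuinely fresh uniform source independent of $R_i$, after which the separable min-entropy chain rule $H_\infty(W_i\mid Z_i,L)\geq H_\infty(W_i\mid Z_i)-\mu$ together with quantum-proof strong-extractor security is applied sequentially over $i\in K$. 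Your identification of the chain rule as the crux and your quantitative expectations (cube-root error from Trevisan's extractor, share size $l+\mu+o(l,\mu)$) do agree with the paper; the argument simply needs to be aimed at $K$ rather than at its complement.
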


We organize the rest of this section as follows. In \Cref{subsubsection:lrss-prelim}, we present the relevant secret sharing definitions, and recall the relevant background on quantum secure extractors. In \Cref{subsection:lrss-construction}, we present the code construction, and in \cref{subsection:lrss-analysis} its proof of security. Finally, we instantiate our construction using specific secret sharing schemes and extractors in \Cref{subsubsection:lrss-params}.

\subsection{Preliminaries}
\label{subsubsection:lrss-prelim}

\subsubsection{Leakage Resilient Secret Sharing}
\label{subsubsection:lrss-prelim-local}

We refer the reader to \Cref{subsubsection:prelim-lrss} for a more comprehensive background on secret sharing. 

\begin{definition}
[Leakage-Resilient Secret Sharing] Let $(\share, \rec)$ be a secret sharing scheme with randomized sharing function $\Share:\mathcal{M}\rightarrow \{\{0, 1\}^{l'}\}^p$, and let $\mathcal{F}$ be a family of leakage channels. Then $\share$ is said to be $(\mathcal{F}, \epsilon_{lr})$ leakage-resilient if, for every channel $\Lambda\in \mathcal{F}$, 
    \begin{equation}
       \forall m_0, m_1\in \mathcal{M}: \text{ } \Lambda(\Share(m_0)) \approx_{\epsilon_{lr}}  \Lambda(\Share(m_1))
    \end{equation}
\end{definition}

\begin{definition}
[Quantum $k$ local leakage model]
    For any integer sizes $p, t, k$ and leakage length (in qubits) $\mu$, we define the $(p, t, k, \mu)$-local leakage model to be the collection of channels specified by
    \begin{equation}
    \mathcal{F}_{k, \mu}^{p, t} =\bigg\{ (T, K, \Lambda): T, K\subset [p],|T|< t, |K|\leq k, \text{ and }\Lambda:\{0, 1\}^{l'\cdot |K|}\rightarrow \cL(\mathcal{H}_\mu)\bigg\}
\end{equation}

\noindent Where $\log \dim(\mathcal{H}_\mu) = \mu$. A leakage query $(T, K, \Lambda)\in \mathcal{F}_{k, \mu}^{p, t}$ on a secret $m$ is the density matrix:
\begin{equation}
    (\mathbb{I}_{T}\otimes \Lambda_K)(\Share(m)_{T \cup K})
\end{equation}
\end{definition}

In their constructions \cite{CKOS22} leverage secret sharing schemes augmented to satisfy a ``local uniformity" property, where individual shares given out by the Share function are statistically close to the uniform distribution over the share space. To extend their construction to $k$ local tampering, we require $k$ wise independence:

\begin{definition}
A randomized sharing function $\Share:\{0, 1\}^l\rightarrow \{\{0, 1\}^{l'}\}^p$ is $\epsilon_u$-approximately $k$ wise independent if for every message $m\in \{0, 1\}^l$ and subset $S\subset [p]$ of size $k$:
    \begin{equation}
        \Share(m)_S \approx_{\epsilon_u}  U_{l'}^{\otimes k}
    \end{equation}
\end{definition}

We note that $(p, t)$ Shamir Secret Sharing \cite{Sha79} is exactly $(t-1)$ wise independent.

\subsubsection{Quantum Min Entropy and Randomness Extractors}
\label{subsubsection:lrss-prelim-extractors}

\begin{definition}
    [Quantum conditional min-entropy] Let $X, Y$ be registers with state space $\mathcal{X} , \mathcal{Y}$
and joint state $\rho$. We define the conditional min-entropy of $X$ given $Y$ as

\begin{equation}
    H_\infty (X|Y )_\rho = - \log \min_{\sigma\in \mathcal{Y}} \{\min_{\lambda\in \mathbb{R}} \lambda\cdot \mathbb{I}\otimes \sigma \geq \rho\}.
\end{equation}
\end{definition}

When $\rho$ is a cq-state, $H_\infty (X|Y )_\rho$ has an operational meaning in terms of the optimal guessing probability for $X$ given $Y$. We remark that product states $\rho=\tau_X \otimes \sigma_Y$ have conditional min entropy $H_\infty (X|Y )_\rho = H_\infty (X)_\rho =-\log \lambda_{max}(\tau)$ equal to the log of the largest eigenvalue of $\tau_X$. When $\rho$ is separable, it satisfies a Chain rule:

\begin{lemma}
    [Separable chain rule for quantum min-entropy \cite{Desrosiers2007QuantumES}, Lemma 7]\label{lemma:sep-chain-rule} Let $A, B, C$ be registers with some joint, separable state $\rho = \sum_i \tau^{AB}_i\otimes \sigma^C_i$. Then,
    \begin{equation}
        H_\infty (A|B, C )_\rho\geq H_\infty (A|B )_\rho -  \log |C|
    \end{equation}
\end{lemma}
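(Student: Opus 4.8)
The plan is to prove \Cref{lemma:sep-chain-rule} directly from the semidefinite-programming characterization of the conditional min-entropy, which states that for any bipartite state $\rho_{AB}$,
\begin{equation}
    H_\infty(A|B)_\rho = -\log \min\big\{ \lambda \geq 0 \,:\, \exists \text{ a state } \sigma_B \text{ with } \rho_{AB} \leq \lambda\,(\mathbb{I}_A\otimes \sigma_B)\big\}.
\end{equation}
First I would set $h := H_\infty(A|B)_\rho$ and apply this characterization to the marginal $\rho_{AB}$ to obtain a density operator $\sigma_B$ with $\rho_{AB} \leq 2^{-h}\,(\mathbb{I}_A\otimes \sigma_B)$.

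The heart of the argument, and the only place separability enters, is the operator inequality $\rho_{ABC} \leq \rho_{AB}\otimes \mathbb{I}_C$. To establish it I would write $\rho_{ABC} = \sum_i p_i\, \tau_i^{AB}\otimes \sigma_i^C$ as in the hypothesis; since each $\sigma_i^C$ is a state we have $\mathbb{I}_C - \sigma_i^C \geq 0$, hence $\tau_i^{AB}\otimes (\mathbb{I}_C - \sigma_i^C) \geq 0$, i.e.\ $\tau_i^{AB}\otimes \sigma_i^C \leq \tau_i^{AB}\otimes \mathbb{I}_C$; taking the convex combination with weights $p_i$ gives $\rho_{ABC}\leq \big(\sum_i p_i \tau_i^{AB}\big)\otimes \mathbb{I}_C = \rho_{AB}\otimes \mathbb{I}_C$. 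I would stress that this step fails for general (entangled) $\rho_{ABC}$ — already a maximally entangled state across the $AB$--$C$ cut violates it — which is exactly why the separability hypothesis is indispensable.

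Chaining the two facts, and writing $d_C := \dim(\mathcal{H}_C)$, I get
\begin{equation}
    \rho_{ABC} \;\leq\; \rho_{AB}\otimes \mathbb{I}_C \;\leq\; 2^{-h}\,(\mathbb{I}_A\otimes \sigma_B\otimes \mathbb{I}_C) \;=\; (2^{-h}\, d_C)\,\Big(\mathbb{I}_A\otimes \big(\sigma_B\otimes \tfrac{\mathbb{I}_C}{d_C}\big)\Big),
\end{equation}
and since $\omega_{BC} := \sigma_B\otimes \tfrac{\mathbb{I}_C}{d_C}$ is a valid state on $BC$, the same SDP characterization applied to $\rho_{ABC}$ certifies $H_\infty(A|BC)_\rho \geq -\log(2^{-h}\, d_C) = h - \log d_C = H_\infty(A|B)_\rho - \log d_C$, which is the asserted bound (the subtracted quantity being the $\log$-dimension of $C$). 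I do not anticipate a genuine obstacle: the only things to be careful about are invoking the operator form of $H_\infty$ rather than an entropic identity, and keeping the normalization of $\omega_{BC}$ straight so the loss comes out to exactly $\log\dim(\mathcal{H}_C)$; the conceptual content is entirely in the one-line use of separability to pass from $\rho_{ABC}$ to $\rho_{AB}\otimes \mathbb{I}_C$.
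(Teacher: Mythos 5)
Your proof is correct. The paper does not actually prove this lemma---it cites it as Lemma~7 of \cite{Desrosiers2007QuantumES}---so there is no in-house argument to compare to, but the SDP/operator-inequality route you take is the standard one. The key step $\rho_{ABC}\leq\rho_{AB}\otimes\mathbb{I}_C$ is exactly right: since each $\sigma_i^C$ is a normalized state it satisfies $\sigma_i^C\leq\mathbb{I}_C$, so $\tau_i^{AB}\otimes(\mathbb{I}_C-\sigma_i^C)\geq 0$ (a tensor product of PSD operators is PSD), and summing over $i$ gives the inequality; your observation that this step fails for states entangled across the $AB{:}C$ cut correctly pinpoints why separability is essential. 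One notational nit: under the paper's convention $|C|:=\log\dim(\cH_C)$, the printed bound $\log|C|$ is a typo for $|C|$, i.e.\ $\log\dim(\cH_C)$, which is exactly the quantity you derive and which matches the way the lemma is invoked in \cref{claim:lrss-hybrid} (where the loss is $\mu$, the number of leaked qubits, not its logarithm).
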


\begin{definition}
    [Quantum-proof seeded extractor \cite{DPVR09}]\label{definition:quantum-proof-extractor} A function $\Ext :\{0, 1\}^\eta\times \{0, 1\}^d\rightarrow \{0, 1\}^l$ is said to be a $(\eta, \tau, d, l, \epsilon_{\Ext})$-strong quantum-proof seeded extractor if for any cq-state $\rho\in \mathcal{H}^{\otimes n}\otimes \mathcal{Y}$ of the registers $X, Y$ with $H_\infty (X|Y )_\rho\geq \tau$, we have 

    \begin{equation}
        \Ext(X, S), Y, S\approx_\epsilon U_l, Y, S\text{ where } S \leftarrow \{0, 1\}^d
    \end{equation}

    Morover, if $\Ext(\cdot, s)$ is a linear function for all $s\in \{0, 1\}^d$, then $\Ext$ is called a linear seeded extractor.
\end{definition}

\begin{lemma}
    [\cite{Trevisan01,DPVR09}]\label{lemma:quantum-proof-trevisan} There is an explicit $(\eta, \tau, d, l, \epsilon)$-strong quantum-proof linear seeded extractor with $d = O(log^3(\eta/\epsilon)$ and $l = \tau - O(d)$.
\end{lemma}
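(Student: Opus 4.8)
The plan is to reprove the classical theorem of Trevisan \cite{Trevisan01} that his extractor is a strong \emph{linear} seeded extractor, and then to upgrade the analysis to tolerate quantum side information following De, Portmann, Vidick and Renner \cite{DPVR09}. The construction I would present is the standard one. Fix an explicit \emph{linear} binary code $\ecc : \{0,1\}^\eta \to \{0,1\}^{\bar\eta}$ with $\bar\eta = \mathrm{poly}(\eta/\epsilon)$ that is locally list-decodable from agreement $\tfrac12+\delta$ with list size $\mathrm{poly}(1/\delta)$ (for instance a Reed--Muller code, or a Reed--Solomon $\circ$ Hadamard concatenation), and let $C : \{0,1\}^\eta \times \{0,1\}^{d_0} \to \{0,1\}$, $C(x,i) = \ecc(x)_i$, be the associated one-bit extractor, with $d_0 = \log\bar\eta = O(\log(\eta/\epsilon))$. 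Using an explicit \emph{weak design} $S_1,\dots,S_l \subseteq [d]$, each $|S_i| = d_0$, with $\sum_{j<i} 2^{|S_i \cap S_j|} \le l$ for all $i$ --- such designs exist with $d = O(d_0^2 \log l) = O(\log^3(\eta/\epsilon))$ since $l \le \eta$ --- define
\[
\Ext(x,s) \;=\; C(x, s_{S_1}) \,\big\|\, C(x, s_{S_2}) \,\big\|\, \cdots \,\big\|\, C(x, s_{S_l}).
\]
Linearity of $\Ext(\cdot, s)$ for every fixed seed $s$ is immediate because $\ecc$, hence each coordinate $\ecc(x)_i$, is linear in $x$; and the output length $l$ and seed length $d = O(\log^3(\eta/\epsilon))$ are as claimed. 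It remains to prove the extraction property with error $\epsilon$ for conditional min-entropy $\tau = l + O(d)$ against a quantum adversary.

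I would prove this by the contrapositive, via the predictor-to-reconstructor paradigm, checking that each link survives quantum side information. Suppose $\rho_{XE}$ is a cq-state with $H_\infty(X|E)_\rho \ge \tau$ and a uniform seed register $S$ such that $\Ext(X,S), S, E$ is at trace distance $>\epsilon$ from $U_l, S, E$. A hybrid argument --- which involves only the classical output bits together with arbitrary measurements on $E$, so it is insensitive to the quantumness of $E$ --- yields an index $i \in [l]$ and a quantum next-bit predictor that, given $S$, $E$ and the first $i-1$ output bits, guesses the $i$-th output bit $C(X, S_{S_i})$ with advantage $\ge \epsilon/l$. Freezing the seed coordinates outside $S_i$, each earlier output bit $C(x, s_{S_j})$ becomes, as a function of the remaining free block $y := s_{S_i} \in \{0,1\}^{d_0}$, a function of only the $|S_i \cap S_j|$ coordinates in $S_i \cap S_j$, hence is described by a lookup table of $2^{|S_i \cap S_j|}$ bits; by the weak-design bound the total table length is $\le l$. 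Folding these tables, the $\le d$ frozen seed bits, and the index $i$ (costing $\log l = O(d)$ bits) into a classical advice string $\alpha(x)$ of length $l + O(d)$, we obtain: with advice $\alpha(X)$ and access to $E$, a quantum algorithm predicts $\ecc(X)_Y = C(X,Y)$ for a \emph{uniformly random} coordinate $Y \in \{0,1\}^{d_0}$ with advantage $\delta := \epsilon/l \ge \epsilon/\eta$.

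The technical core --- and the step I expect to be the main obstacle --- is the quantum one-bit lemma: if a quantum register $E$ allows predicting a uniformly random coordinate of the list-decodable codeword $\ecc(X)$ with advantage $\delta$, then $X$ can be recovered from $E$ together with $O(\log(1/\delta))$ additional advice bits with probability $\Omega(\mathrm{poly}(\delta))$. Classically this is exactly the local list-decodability of $\ecc$; the quantum statement is what \cite{DPVR09} establish, which for Hadamard-based instantiations amounts to a quantum Goldreich--Levin / Bernstein--Vazirani argument and for the concatenated construction is obtained by composing quantum list-decoding of the Hadamard layer with ordinary list-decoding of the outer code. Granting this lemma, the remainder is bookkeeping: composing the two reconstructions, $X$ is recoverable from $E$ and advice of total length $a = l + O(d)$ (using $\log(1/\delta) = O(\log(\eta/\epsilon)) = O(d)$) with probability $q = \Omega(\mathrm{poly}(\epsilon/l))$; averaging over the $2^a$ advice values turns this into an advice-free guessing strategy for $X$ from $E$ with success probability $\ge 2^{-a} q$, so by the operational meaning of conditional min-entropy $H_\infty(X|E)_\rho \le a + \log(1/q) = l + O(d)$. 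Choosing the constant hidden in $\tau = l + O(d)$ to exceed the constant produced by this reconstruction yields a contradiction, which proves the extraction property and hence the lemma; explicitness is inherited from the explicitness of $\ecc$ and of the weak design.
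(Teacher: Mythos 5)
The paper does not prove this lemma at all: it is stated as a black-box citation to Trevisan~\cite{Trevisan01} and De--Portmann--Vidick--Renner~\cite{DPVR09}, and is used only to instantiate the leakage-resilient secret sharing compiler in \Cref{subsection:lrss-construction}. There is therefore no ``paper's proof'' to compare against.

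That said, your reconstruction is a faithful and correct sketch of the argument from those references. You correctly isolate the three modular pieces: (i) Trevisan's combinatorial machinery --- weak designs, the hybrid argument reducing a distinguisher to a next-bit predictor, and the advice-counting step using $\sum_{j<i} 2^{|S_i\cap S_j|}\le l$ --- carries over to a quantum side-information register $E$ verbatim, since it only manipulates classical output bits while $E$ is dragged along; (ii) linearity of $\Ext(\cdot,s)$ is inherited directly from linearity of the underlying code $\ecc$; (iii) the one genuinely quantum step is the one-bit reconstruction (quantum Goldreich--Levin / quantum local list-decoding), which is precisely what \cite{DPVR09} establish and which you rightly flag as the technical core rather than trying to reprove. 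The advice-free guessing conclusion and the resulting bound $H_\infty(X|E)\le l+O(d)$ are also correct, giving $\tau = l + O(d)$. One small bookkeeping remark: what you freeze is only the $d-d_0$ seed coordinates outside $S_i$ (not all $d$), and the Markov step that fixes them to a good value costs a constant factor in the advantage; both are absorbed in the $O(\cdot)$'s and do not affect the claimed parameters.
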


We require the extractor to support efficient \textit{pre-image sampling}. Given a seed $s$ and some $y \in \{0, 1\}^l$, the inverting function $\iExt$ needs to sample an element uniformly from the set $\Ext(\cdot, s)^{-1}(y) = {w : \Ext(w; s) = y}$. \cite{CKOS22} showed that linear extractors always admit such sampling:

\begin{lemma}
    [\cite{CKOS22}]\label{lemma:ckos-inverter} For every efficient linear extractor $\Ext$, there exists an efficient randomized function $\iExt : \{0, 1\}^l \times \{0, 1\}^d \rightarrow \{0, 1\}^\eta \cup {\bot}$ (termed inverter) such that
    \begin{enumerate}
        \item $U_\eta, U_d, \Ext(U_\eta;U_d) \equiv \iExt(\Ext(U_\eta;U_d), U_d), U_d, \Ext(U_\eta;U_d)$
        \item For each $(s, y) \in \{0, 1\}^d \times \{0, 1\}^l$:
        \begin{enumerate}
            \item $\mathbb{P}[\iExt(y, s) = \bot] = 1$, if and only if there exists no $w \in \{0, 1\}^\eta$ such that $\Ext(w; s) = y$.
            \item $\mathbb{P}[\Ext(\iExt(y, s), s) = y] = 1$, if there exists some $w \in \{0, 1\}^\eta$ such that $\Ext(w; s) = y$
        \end{enumerate}
    \end{enumerate}
\end{lemma}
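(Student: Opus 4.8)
\textbf{Proof plan for \Cref{lemma:ckos-inverter}.}

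The plan is to construct $\iExt$ directly from the linear structure of $\Ext$. Since $\Ext(\cdot\,; s):\{0,1\}^\eta\to\{0,1\}^l$ is $\F_2$-linear for each fixed seed $s$, it is represented by a matrix $A_s\in \F_2^{l\times \eta}$, i.e.\ $\Ext(w;s)=A_s w$. First I would describe $\iExt(y,s)$ as the following randomized procedure: solve the linear system $A_s w = y$ over $\F_2$ using Gaussian elimination; if the system is infeasible (i.e.\ $y\notin \mathrm{Im}(A_s)$), output $\bot$; otherwise, compute a particular solution $w_0$ with $A_s w_0 = y$, compute a basis $b_1,\dots,b_\kappa$ of the kernel $\ker(A_s)$ (where $\kappa = \eta - \mathrm{rank}(A_s)$), sample $c\leftarrow \{0,1\}^\kappa$ uniformly at random, and output $w_0 + \sum_{i=1}^\kappa c_i b_i$. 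All of these steps run in time $\poly(\eta)$, so $\iExt$ is efficient; this is where I would invoke that $\Ext$ itself is \emph{efficient and linear}, which lets us read off the matrix $A_s$ from $\poly(\eta)$ evaluations (e.g.\ $A_s$'s $j$-th column is $\Ext(e_j;s)$).

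Next I would verify the three claimed properties. Property 2(a) is immediate: $\iExt(y,s)=\bot$ exactly when $A_s w = y$ has no solution, i.e.\ when there is no $w$ with $\Ext(w;s)=y$; and conversely Gaussian elimination always finds a solution when one exists, so $\bot$ is never output spuriously. Property 2(b) is also immediate from the construction, since every output $w_0 + \sum_i c_i b_i$ satisfies $A_s(w_0+\sum_i c_i b_i) = A_s w_0 + 0 = y$, hence $\Ext(\iExt(y,s),s)=y$ with probability $1$ whenever a preimage exists. The key structural fact here is that the fiber $\Ext(\cdot\,;s)^{-1}(y)$ is an affine subspace $w_0 + \ker(A_s)$, so $\iExt(y,s)$ with uniform $c$ produces a \emph{uniformly random} element of that fiber: the map $c\mapsto w_0 + \sum_i c_i b_i$ is a bijection from $\{0,1\}^\kappa$ onto $w_0+\ker(A_s)$ since $\{b_i\}$ is a basis.

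For property 1 — the distributional identity $U_\eta, U_d, \Ext(U_\eta;U_d) \equiv \iExt(\Ext(U_\eta;U_d), U_d), U_d, \Ext(U_\eta;U_d)$ — I would argue by conditioning. Fix any seed value $s$ and any $y\in \mathrm{Im}(A_s)$. Conditioned on $U_d=s$ and $\Ext(U_\eta;s)=y$, the left side's first coordinate is $U_\eta$ conditioned on lying in the fiber $\Ext(\cdot\,;s)^{-1}(y)$, which (since $U_\eta$ is uniform on $\{0,1\}^\eta$) is precisely the uniform distribution on that fiber. By the previous paragraph, the right side's first coordinate, conditioned on the same event, is also uniform on that fiber. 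The second and third coordinates ($s$ and $y$) are held fixed on both sides and are trivially equal. Since $\Ext(U_\eta;U_d)$ always lands in $\mathrm{Im}(A_{U_d})$ (the $\bot$ case has probability zero under $(U_\eta,U_d)$), integrating over the joint distribution of $(s,y)$ gives the claimed equality of the full triples.

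The main obstacle — and it is mild — is bookkeeping around the $\bot$ case in property 1: one must note that when the input to $\iExt$ is generated as $\Ext(U_\eta;U_d)$ there is always at least one preimage (namely $U_\eta$ itself), so $\iExt$ never returns $\bot$ on this distribution and the identity is between genuine $\{0,1\}^\eta$-valued random variables. Everything else is standard linear algebra over $\F_2$; no probabilistic or information-theoretic machinery is needed beyond the elementary observation that a uniform distribution restricted to a coset of a subspace is the uniform distribution on that coset.
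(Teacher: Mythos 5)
The paper states this lemma as a citation to \cite{CKOS22} and does not reprove it, so there is no in-paper proof to compare against. Your proof is correct and is essentially the canonical construction used in the cited source: read off the matrix $A_s$ representing the $\F_2$-linear map $\Ext(\cdot\,;s)$, use Gaussian elimination to detect infeasibility and otherwise output a particular solution plus a uniformly random kernel element, giving a uniform sample from the affine fiber $w_0 + \ker(A_s)$. Your conditioning argument for property 1 — that $U_\eta$ restricted to any fiber is uniform on that fiber, matching the output distribution of $\iExt$, and that the $\bot$ branch never fires since $U_\eta$ itself is a witness — is exactly the right bookkeeping. Nothing is missing.
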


\subsection{Code Construction}
\label{subsection:lrss-construction}

Our code construction uses essentially the same ingredients as \cite{CKOS22}, with small modifications to the locality of their privacy parameters and to their compiler.

\begin{enumerate}
    \item $(\mshare, \mrec)$, an $(p, t,\epspriv, 0)$ threshold secret sharing scheme which is $(k, \epsilon_u)$-locally uniform over the message space $\{0, 1\}^l$ and with share size $l'$.
    \item $(\sdshare, \sdrec)$, an $(p, k+1, \epspriv', 0)$ threshold secret sharing scheme over the message space $\{0, 1\}^d$ and share size $d'$.
    \item $\Ext$, a quantum-proof $(\eta, \tau\leq \eta-\mu, d, l', \epsilon_{ext})$-strong linear extractor. Let $\iExt$ be the inverter function corresponding to $\Ext$ given by \Cref{lemma:ckos-inverter} .
\end{enumerate}

\noindent \textbf{Share} To share a message $m$, we begin by encoding it into $(M_1,\cdots, M_p)\leftarrow \mshare(m)$. Then, for each party $i\in [p]$, we sample a random seed $R_i\in \{0, 1\}^d$, and then use $\iExt$ to get the source $W_i \leftarrow \iExt(M_i, R_i)$. If any of the $W_i=\bot$ rejects, output each of the share to be $(\bot, M_i)$. Else, concatenate the randomness $R = (R_1, \cdots, R_p)$ and share it using $\sdshare(R)$ to get $(S_1, \cdots, S_p)$, and finally set the $i$-th share to be $(W_i, S_i)$. \\

\noindent \textbf{Rec} Assuming the encoding doesn't reject, an authorized party $T\subset [p]$ of size $\geq t> k$ begins by recovering the randomness $R = (R_1, \cdots, R_p)$ using $\sdrec$ on any $k+1$ (honest) shares of $T$. Then, they recover the message shares $M_T =\{M_i:i\in T\}$ by running the extractor on $(W_i, R_i)$ for each $i\in T$. Finally, using $\mrec$ on $M_T$ it decodes the message $m$.

\begin{theorem}\label{theorem:lrss-klocal}
    $(\share, \rec)$ defines a $(p, t, \epspriv, 0)$ secret sharing scheme, which is $\leq 2(\epspriv +\epspriv') +2 p\cdot (\epsilon_\Ext+\epsilon_u)$ leakage resilient against $\mathcal{F}_{ k, \mu}^{p, t}$.
\end{theorem}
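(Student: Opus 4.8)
The plan is to follow the structure of \cite{CKOS22}'s analysis but carefully track where the leakage becomes \emph{joint} (over a set $K$ of size $\leq k$) and \emph{quantum} (a $\mu$-qubit state). First I would establish correctness and perfect reconstruction: assuming the share algorithm does not abort, any authorized $T$ of size $\geq t > k$ recovers $R$ via $\sdrec$ on $k+1$ honest shares, then recovers each $M_i$ by linearity/invertibility of $\Ext$ (\Cref{lemma:ckos-inverter}, item 2(b)), and finally $m$ via $\mrec$; this is syntactic. Privacy (that any $\leq t-1$ shares hide $m$) follows from the privacy of $(\mshare,\mrec)$, since the shares $W_i$ are deterministic functions (via $\iExt$) of $(M_i,R_i)$ and the $R_i$/$S_i$ are independent of $m$ — this contributes the $\epspriv$ term. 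The main content is leakage resilience.

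For leakage resilience, fix a leakage channel $(T,K,\Lambda)\in\mathcal{F}_{k,\mu}^{p,t}$ with $|T|<t$, $|K|\leq k$, and two messages $m_0,m_1$. The leakage consists of the full shares $\{(W_i,S_i)\}_{i\in T}$ together with $\Lambda$ applied jointly to $\{(W_i,S_i)\}_{i\in K}$. I would argue in hybrids. First, replace the $S$-shares seen by $K$ (i.e.\ $\{S_i\}_{i\in K}$, which is $\leq k$ shares of the $(p,k+1)$-scheme $\sdshare$) by shares of a fixed value; this costs $\epspriv'$ by privacy of $(\sdshare,\sdrec)$, and crucially removes all information about the seeds $\{R_i\}_{i\in K}$ from the view of $K$ except what is carried in $W_i$ itself. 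Actually, the cleaner route (matching \cite{CKOS22}) is: after this replacement, the seeds $R_K=\{R_i\}_{i\in K}$ are uniform and independent given everything $K$ holds besides $W_K$. Then the sources $W_K$ have high min-entropy even conditioned on the $T$-side and the $\mu$-qubit leakage register: the $W_i$ are sampled by $\iExt$ so that $(W_i, \Ext(W_i;R_i))\equiv (W_i, M_i)$ with $W_i$ near-uniform of length $\eta$; the key point is that $M_K=\{M_i\}_{i\in K}$ is $\epsilon_u$-close to uniform and independent of the rest by $k$-wise independence of $(\mshare,\mrec)$, so conditioning on $M_K$ (which is what $\rec$ and the adversary effectively learn) leaves $W_K$ with min-entropy $\geq \eta - |M_K| \geq \tau + \mu$ roughly. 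The separable chain rule \Cref{lemma:sep-chain-rule} then lets me subtract the $\mu$-qubit leakage register: $H_\infty(W_K \mid \text{leakage}, T\text{-side}) \geq \tau$. Applying the quantum-proof strong extractor \Cref{definition:quantum-proof-extractor} seed-by-seed (a union bound over $|K|\leq k \leq p$ coordinates, each costing $\epsilon_\Ext$, plus $\epsilon_u$ per coordinate for the $k$-wise-independence approximation) shows $\{\Ext(W_i;R_i)\}_{i\in K} = \{M_i\}_{i\in K}$ is close to uniform and independent of the adversary's entire view. Hence the adversary's view is (approximately) a function of $\{M_i\}_{i\in T}$ alone, which by privacy of $(\mshare,\mrec)$ is $\epspriv$-independent of the message. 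Symmetrizing $m_0 \leftrightarrow m_1$ and summing the hybrid errors gives $2(\epspriv+\epspriv') + 2p(\epsilon_\Ext+\epsilon_u)$.

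The main obstacle I anticipate is the joint/quantum nature of the leakage on $K$: in the classical single-share setting of \cite{CKOS22} one argues share-by-share, but here $\Lambda$ can entangle information across the $k$ sources $W_K$ and across $W$ and $S$ simultaneously, so I must be careful that (i) replacing $S_K$ is legitimate even though $\Lambda$ acts jointly on $(W_K,S_K)$ — handled by monotonicity of trace distance under the CPTP map $\Lambda$ composed with everything downstream — and (ii) the min-entropy bookkeeping uses the \emph{separable} chain rule, which requires the state to be separable across the cut (source registers) vs.\ (leakage register); this holds because the $\mu$-qubit leakage is produced by a channel from classical inputs, so the joint state of $(W_K, \text{leakage})$ is a cq-state and in particular separable in the required sense, letting me invoke \Cref{lemma:sep-chain-rule}. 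A secondary subtlety is that the extractor is applied with \emph{correlated} seeds (the $R_i$ for $i\in K$ are all parts of one $\sdshare(R)$), but after the hybrid that replaces $S_K$, the relevant seeds are uniform and the strong-extractor guarantee is seed-wise, so a union bound suffices. Once these points are in place the error accounting is routine.
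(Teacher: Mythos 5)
Your proposal captures the paper's core mechanism — the hybrid that replaces $S_K$ using privacy of $\sdshare$, the appeal to $k$-wise uniformity of $M_K$, the separable chain rule (\Cref{lemma:sep-chain-rule}) to discount the $\mu$-qubit leakage register, and the coordinate-by-coordinate application of the quantum-proof extractor — and the ordering of these steps matches the paper's proof of \Cref{claim:lrss-hybrid}. But there is a concrete gap: you never address the abort branch of $\share$. The sharing function outputs $(\bot, M_i)$ for \emph{every} party whenever any single call $\iExt(M_i, R_i)$ returns $\bot$, i.e.\ when $M_i$ lies outside the image of $\Ext(\cdot;R_i)$. In that event the message shares are in the clear and no leakage argument applies; the scheme is only secure conditioned on the encoder not aborting. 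The paper handles this via \Cref{claim:lrss-enc-bot}, bounding the abort probability by $p(\epsilon_\Ext + \epsilon_u)$ using \Cref{lemma:ckos-inverter}(2a) together with near-uniformity of each $(M_i,R_i)$, and \emph{this} is the source of the $2p(\epsilon_\Ext + \epsilon_u)$ term in the theorem.

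Because you omit that step, your accounting is also off. You credit $p(\epsilon_\Ext+\epsilon_u)$ per side to the union bound over the $|K|\le k$ extractor applications, but that union bound only ranges over $k$ coordinates (so it gives $k\cdot\epsilon_\Ext$), and the $\epsilon_u$ cost there is paid once per hybrid step for the whole tuple $M_K$ — $k$-wise independence already bounds the \emph{joint} distribution of $M_K$ — not once per coordinate. So the hybrid you describe costs on the order of $2(\epspriv'+\epsilon_u+k\epsilon_\Ext)+2\epspriv$, which does not match the claimed bound. You need to add the $\bot$-conditioning hybrid up front (contributing $2p(\epsilon_\Ext+\epsilon_u)$) and recompute. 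The rest of your argument — in particular the observation that the separable chain rule is what lets you treat joint quantum leakage from $K$ as a $\mu$-bit min-entropy loss — is correct and is exactly what the paper does.
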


\subsection{Analysis}
\label{subsection:lrss-analysis}

The correctness and privacy of the scheme are inherited from that of $\mshare, \mrec$. We analyze its rate in the next subsection \Cref{subsubsection:lrss-params}, and dedicate the rest of this subsection to a proof of security. 

Fix a leakage channel $(T, K, \Lambda)$ and a message $m$. We assume $|T|\leq t-1$ and $|K|\leq k$ are both unauthorized subsets. We proceed in a sequence of hybrids, where within the encoding map $\Share$ we replace the shares of $K$, $W_i\leftarrow \iExt(M_i, R_i)$, by a uniformly random source $W_i\leftarrow U$:\\

\noindent $\share^0(m)$:  To share a message $m$, we simply encode it into $\share(m)$.\\

\noindent $\share^1(m)$: To share a message $m$, we begin by encoding it into $(M_1,\cdots, M_p)\leftarrow \mshare(m)$. For each $i\in [n]$, sample a random seed $R_i\in \{0, 1\}^d$ and use $\iExt$ to get the source $W_i \leftarrow \iExt(M_i, R_i)$. Then, concatenate the randomness $R = (R_1, \cdots, R_p)$ and share it using $\sdshare(R)$ to get $(S_1, \cdots, S_p)$, and finally set the $i$-th share to be $(W_i, S_i)$.\\

\noindent $\share^2(m)$: To share a message $m$, we begin by encoding it into $(M_1,\cdots, M_p)\leftarrow \mshare(m)$. For each $i\in K$, sample a random seed $R_i\in \{0, 1\}^d$ and source $W_i$. For each $i\in [p]\setminus K$, sample a random seed $R_i\in \{0, 1\}^d$ and use $\iExt$ to get the source $W_i \leftarrow \iExt(M_i, R_i)$. Then, concatenate the randomness $R = (R_1, \cdots, R_p)$ and share it using $\sdshare(R)$ to get $(S_1, \cdots, S_p)$, and finally set the $i$-th share to be $(W_i, S_i)$.\\

Note that $\share^0$ differs from $\share^1$ in that it conditions on the extractor pre-image sampling succeeding. \cite{CKOS22} begin by proving that it succeeds with high probability, and thus $\share(m)\approx \share^1(m)$.

\begin{lemma}
    [\cite{CKOS22}]\label{lemma:lrss-enc-bot} For any message $m$, $\share(m) = ((\bot, M_1), \cdots,(\bot, M_p))$ with probability $\leq p(\epsilon_\Ext+\epsilon_u)$.
\end{lemma}

Moreover, note that $\share^2(m)$ is completely independent of the shares $M_i:i\in K$, and thus the reduced density matrix $\share^{2}_{T\cup K}$ only depends on the shares of $M_i:i\in T$ - where $|T|\leq t-1$. By the privacy of $\mshare$,

\begin{lemma}
    [\cite{CKOS22}]\label{lemma:lrss-privacy} For any pair of messages $m, m'$, 
    \begin{equation}
        \share^2_{T\cup K}(m)\approx_{\epspriv}\share^2_{T\cup K}(m')
    \end{equation}
\end{lemma}

It remains to argue that $\share^1(m)$ and $\share^2(m)$ are indistinguishable, given the shares in the unauthorized subset $T$ and the leakage $L$. 

\begin{lemma}
    \label{lemma:lrss-hybrid} For any message $m$,
    \begin{equation}
        (\mathbb{I}_T\otimes \Lambda_K)(\share^1_{T\cup K}(m))\approx_{\delta}(\mathbb{I}_T\otimes \Lambda_K)(\share_{T\cup K}^{2}(m))
    \end{equation}

    \noindent where $\delta \leq 2(\epspriv'+\epsilon_u+k\cdot \epsilon_\Ext)$.
\end{lemma}

By the triangle inequality and the claims above, we conclude that for all $m, m'$:
\begin{equation}
        (\mathbb{I}_T\otimes \Lambda_K)(\share_{T\cup K}(m))\approx_{\delta'} (\mathbb{I}_T\otimes \Lambda_K)(\share_{T\cup K}(m'))
\end{equation}

\noindent For some $\delta' \leq \delta + 2\epspriv + 2\cdot p (\epsilon_\Ext+\epsilon_u)\leq 2(\epspriv +\epspriv') +2 p\cdot (\epsilon_\Ext+\epsilon_u)$, which is exactly \Cref{theorem:lrss-klocal}. Now we prove \Cref{lemma:lrss-hybrid}:

\begin{proof}

    [of \Cref{lemma:lrss-hybrid}] Fix a message $m$. Consider the quantum-classical mixed state comprised of the classical shares $M_K = \{M_i:i\in K\}$ of $\mshare$, the seed $R$ and its shares $S_K = \{S_i:i\in K\}$, and the quantum leakage register $L$. Our goal will be to show that this cq density matrix is nearly independent of the ``source" register in the shares of $K$. That is, if we denote as $W_i, i\in K$ as uniformly random sources on $\eta$ bits, then it suffices to show that for some $\delta$,
    \begin{equation}\label{equation:main-ckos-lemma}
        \Lambda\big(\{S_i, \iExt(M_i, R)\}_{i\in K}\big), R, S_K, M_K \approx_{\delta} \Lambda\big(\{S_i, W_i\}_{i\in K}\big), R, S_K, M_K.
    \end{equation}

    This is since there is a CPTP map $\mathcal{N}^m$ (dependent on the message) which given $R, S_K, M_K$ and the leakage $L$ produces 
    \begin{gather}
        \mathcal{N}^m(\Lambda\big(\{S_i, \iExt(M_i, R)\}_{i\in K}\big), R, S_K, M_K) = (\mathbb{I}_T\otimes \Lambda_K)(\share^1_{T\cup K}(m)) \text{ and } \\ \mathcal{N}^m(\Lambda\big(\{S_i, W_i\}_{i\in K}\big), R, S_K, M_K) = (\mathbb{I}_T\otimes \Lambda_K)(\share^2_{T\cup K}(m)).
    \end{gather}

    \noindent Note that $\mathcal{N}^m$ simply samples shares $M_T, S_T$ consistent with $m, R$ and $M_K, S_K$.\footnote{This is also known as ``consistent resampling" \cite{CKOS22}} Thereby by monotonicity of trace distance we obtain the desired 
    \begin{equation}
        (\mathbb{I}_T\otimes \Lambda_K)(\share^1_{T\cup K}(m))\approx_{\delta}(\mathbb{I}_T\otimes \Lambda_K)(\share_{T\cup K}^{2}(m))
    \end{equation}

    It remains to show \cref{equation:main-ckos-lemma}. We begin by replacing the shares in $K$ by fixed shares independent of $R$, $\hat{S}_K\leftarrow \sdshare(0^d)_K$ using the privacy of $\sdshare$, and leveraging the $\epsilon_u$ approximate $k$-wise independence of the shares $M_K$ to replace them by the uniform distribution:
    \begin{gather}
        \Lambda\big(\{S_i, \iExt(M_i, R)\}_{i\in K}\big), R, S_K, M_K \approx_{\epspriv'}\\ \Lambda\big(\{\hat{S}_i, \iExt(M_i, R)\}_{i\in K}\big), R, \hat{S}_K, M_K, \text{ and }\\
         \Lambda\big(\{\hat{S}_i, \iExt(M_i, R)\}_{i\in K}\big), R, \hat{S}_K, M_K \approx_{\epsilon_u}\\ \Lambda\big(\{\hat{S}_i, \iExt(U_{l'}^i, R_i)\}_{i\in K}\big), R, \hat{S}_K, (U_{l'})^{\otimes k}
    \end{gather}

    Recall $W_1, \cdots, W_K$ are uniformly random $\eta$ bit sources. By \Cref{definition:quantum-proof-extractor}, since $\Ext$ is a strong linear seeded extractor, $\Ext(W_i, R_i)\approx_{\epsilon_\Ext} U_{l'}$, and moreover by \Cref{lemma:ckos-inverter}(b) from \cite{CKOS22} we have $W_i = \iExt(\Ext(W_i, R_i), R_i)$. Thus,
    \begin{gather}
        \Lambda\big(\{\hat{S}_i, \iExt(U_{l'}^i, R_i)\}_{i\in K}\big), R, \hat{S}_K, (U_{l'})^{\otimes k}  \approx_{k\cdot \epsilon_{\Ext}}\\ \Lambda\big(\{\hat{S}_i, W_i\}_{i\in K}), R, \hat{S}_K, \{\Ext(W_i, R_i)\}_{i\in K}
    \end{gather}
    
   We now replace each $\Ext(W_j, R_j)$ by $U_l$ in a sequence of hybrids, evoking \textit{quantum-proof} extractor security. This is the main modification to the proof of \cite{CKOS22}: Fix $0\leq j\leq k$, and define the collection of classical registers $Z_j$:
    \begin{equation}
        Z_j = R_{[n]\setminus \{j\}}, \hat{S}_{K}, (U_l)^{\otimes (j-1)}, \{\Ext(W_i, R_i)\}_{j< i\leq k}
    \end{equation}

    Note that $Z_j$ is independent of $W_j$. If $L$ denotes the $\mu$ qubit leakage register, then from the chain rule for the min entropy of separable states, \Cref{lemma:sep-chain-rule} \cite{Desrosiers2007QuantumES}, $H_\infty(W_j|Z_j, L)\geq \eta-\mu\geq \tau$. By \Cref{definition:quantum-proof-extractor}, 
    \begin{equation}
       \Lambda(Z_j, W_j), Z_j, R_j, \Ext(W_i, R_i) \approx_{\epsilon_\Ext} \Lambda(Z_j, W_j), Z_j, R_j, U_{l}
    \end{equation}
    
    \noindent Which implies through the triangle inequality,
    \begin{gather}
        \Lambda\big(\{\hat{S}_i, W_i\}_{i\in K}), R, \hat{S}_K, \{\Ext(W_i, R_i)\}_{i\in K} \approx_{k\cdot \epsilon_\Ext} \\\Lambda\big(\{\hat{S}_i, W_i\}_{i\in K}), R, \hat{S}_K, U^{\otimes k}.
    \end{gather}

    \noindent By once again appealing to $k$ wise independence of $M_K$ and the privacy of $S_K$, we conclude
    \begin{gather}
        \Lambda\big(\{S_i, \iExt(M_i, R)\}_{i\in K}\big), R, S_K, M_K \approx_{2(\epspriv'+\epsilon_u+k\cdot \epsilon_\Ext)} \\\Lambda\big(\{S_i, W_i\}_{i\in K}\big), R, S_K, M_K
    \end{gather}

    \noindent That is, $\delta \leq 2(\epspriv'+\epsilon_u+k\cdot \epsilon_\Ext)$. 
\qed \end{proof}

\subsubsection{Parameters}
\label{subsubsection:lrss-params}

We combine 

\begin{enumerate}
    \item $(\mshare, \mrec)$ is a $(p, t, \epspriv = 0, \epsilon_c = 0)$-Shamir secret sharing scheme for $l$ bit messages and $l$ bit shares, which is perfectly $t-1$ wise independent. 
    
    \item We set $\epsilon_\Ext = 2^{-\lambda^{-1/3}}$, and let $\Ext$ be the $(\eta= l+\mu+O(d), \tau = l+O(d), d, l, \epsilon_\Ext)$ quantum proof strong linear extractor guaranteed by \Cref{lemma:quantum-proof-trevisan}, where $d= O(\log^3 \frac{\eta}{\epsilon_\Ext}) = O(\lambda+\log^3 (l+\mu))$.

    \item $(\sdshare, \sdrec)$ is a $(p, k+1, \epspriv'=0, \epsilon'_c= 0)$-Shamir secret sharing scheme for $p\cdot d$ bit messages and $p\cdot d$ bit shares. 
\end{enumerate}

The resulting share size of $\share$ to handle $\mu$ qubits of leakage is $p\cdot d+\eta = l+\mu+O(p\cdot \lambda + p\cdot \log^3(l+\mu))$, which is $l+\mu+o(l, \mu)$ whenever $\lambda = o(\frac{l+\mu}{p})$ and $p = O(\frac{l+\mu}{\log^3(l+\mu)})$.

\newpage

%
%
%

%
\end{document}